\pdfoutput=1
\documentclass[11pt]{article}

\usepackage{fullpage}
\usepackage[english]{babel}
\usepackage[utf8x]{inputenc}
\usepackage[T1]{fontenc}
\usepackage{amsmath}
\usepackage{amssymb}
\usepackage{amsthm}
\usepackage{bbm}
\usepackage{mdframed}
\usepackage{bm}
\usepackage{bbold}
\usepackage{parskip}
\usepackage{thm-restate}
\usepackage{booktabs}
\usepackage{array}
\usepackage[lined,boxed,ruled,norelsize,linesnumbered]{algorithm2e}
\usepackage{hyperref}
\hypersetup{
	colorlinks=true,
	linkcolor=blue!70!black,
	citecolor=blue!70!black,
	urlcolor=blue!70!black
}
\usepackage{cleveref}
\usepackage{graphicx}
\usepackage{subcaption}

\usepackage{mathtools}
\usepackage{xcolor}

\newtheorem{theorem}{Theorem}
\newtheorem{lemma}{Lemma}
\newtheorem{fact}{Fact}
\newtheorem{corollary}{Corollary}
\newtheorem{proposition}{Proposition}
\newtheorem{definition}{Definition}
\newtheorem{model}{Model}

\newtheorem{remark}{Remark}

\newcommand{\defeq}{:=}

\newcommand{\sech}{\operatorname{sech}}
\newcommand{\norm}[1]{\left\lVert#1\right\rVert}
\newcommand{\norms}[1]{\lVert#1\rVert}
\newcommand{\normop}[1]{\left\lVert#1\right\rVert_{\textup{op}}}
\newcommand{\normf}[1]{\left\lVert#1\right\rVert_{\textup{F}}}

\newcommand{\normsop}[1]{\lVert#1\rVert_{\textup{op}}}

\newcommand{\inprod}[2]{\left\langle#1, #2\right\rangle}

\newcommand{\bK}{\mathbf K}
\newcommand{\br}{\mathbf r}

\newcommand{\eps}{\epsilon}
\newcommand{\lam}{\lambda}
\newcommand{\R}{\mathbb{R}}

\newcommand{\N}{\mathbb{N}}

\newcommand{\diag}[1]{\textbf{\textup{diag}}\left(#1\right)}
\newcommand{\half}{\frac{1}{2}}

\newcommand{\1}{\mathbf{1}}
\newcommand{\0}{\mathbf{0}}
\newcommand{\ind}{\mathbb{I}}
\newcommand{\E}{\mathbb{E}}

\newcommand{\Var}{\textup{Var}}

\newcommand{\Nor}{\mathcal{N}}

\newcommand{\dham}{\Delta_{\textup{Ham}}}

\newcommand{\id}{\mathbf{I}}

\newcommand{\dd}{\textup{d}}
\newcommand{\tO}{\widetilde{O}}

\newcommand{\Par}[1]{\left(#1\right)}
\newcommand{\Brack}[1]{\left[#1\right]}
\newcommand{\Brace}[1]{\left\{#1\right\}}
\newcommand{\Abs}[1]{\left|#1\right|}
\newcommand{\Cov}{\mathbf{Cov}}

\newcommand{\alg}{\mathcal{A}}
\newcommand{\mzero}{\mathbf{0}}
\newcommand{\Sym}{\mathbb{S}}
\newcommand{\PSD}{\Sym_{\succeq \mzero}}
\newcommand{\nnz}{\mathrm{nnz}}
\newcommand{\poly}{\mathrm{poly}}

\newcommand{\bk}{\color{black}}

\newcommand{\supp}{\mathrm{supp}}

\newcommand{\vth}{\boldsymbol{\theta}}

\newcommand{\vxi}{\boldsymbol{\xi}}
\newcommand{\sig}{\sigma}
\newcommand{\vX}{\bm{X}}
\newcommand{\vY}{\bm{Y}}
\newcommand{\sCov}{\mathrm{Cov}}

\newcommand{\twonorm}[1]{\norm{#1}_2}
\newcommand{\onenorm}[1]{\norm{#1}_1}

\newcommand{\msig}{\mathbf{\Sigma}}
\newcommand{\vtau}{\boldsymbol{\tau}}
\newcommand{\vmu}{\boldsymbol{\mu}}
\newcommand{\vlam}{\boldsymbol{\lambda}}
\newcommand{\vzero}{\mathbf{0}}
\newcommand{\vone}{\mathbf{1}}
\newcommand{\vths}{\vth^\star}
\newcommand{\vhth}{\widehat{\vth}}

\newcommand{\pisupp}{\pi_{\supp}}

\newcommand{\TV}[1]{\mathrm{TV}\left(#1\right)}
\newcommand{\TVs}[1]{\mathrm{TV}(#1)}

\newcommand{\CS}[2]{\chi^2\left(#1\|#2\right)}

\newcommand{\lazy}{\mathsf{lazy}}

\newcommand{\Bern}{\mathrm{Bern}}

\newcommand{\MN}{\mathrm{Multinomial}}
\newcommand{\Law}{\textup{Law}}

\newcommand{\BMGS}{\mathsf{BMGibbsSampler}}
\newcommand{\SAS}{\mathsf{SASPosteriorSampler}}

\newcommand{\simiid}{\sim_{\textup{i.i.d.}}}

\newcommand{\simu}{\sim_{\textup{unif.}}}

\newcommand{\Prob}{\mathbb{P}}
\renewcommand{\Pr}{\Prob}

\newcommand{\met}{\mathrm{m}}

\newcommand{\bark}{\bar{k}}
\newcommand{\hpi}{\hat{\pi}}
\newcommand{\hpisupp}{\hpi_{\textup{supp}}}

\newcommand{\ma}{{\mathbf{A}}}

\newcommand{\md}{{\mathbf{D}}}
\newcommand{\me}{{\mathbf{E}}}

\newcommand{\mg}{{\mathbf{G}}}
\newcommand{\mh}{{\mathbf{H}}}

\newcommand{\mj}{{\mathbf{J}}}
\newcommand{\mk}{{\mathbf{K}}}
\newcommand{\ml}{{\mathbf{L}}}
\newcommand{\mm}{{\mathbf{M}}}
\newcommand{\mn}{{\mathbf{N}}}

\newcommand{\mpp}{{\mathbf{P}}}
\newcommand{\mq}{{\mathbf{Q}}}
\newcommand{\mr}{{\mathbf{R}}}

\newcommand{\mw}{{\mathbf{W}}}
\newcommand{\mx}{{\mathbf{X}}}

\newcommand{\mxtx}{\mx^\top \mx}

\newcommand{\normkop}[2]{{\left\lVert#1\right\rVert_{#2, \textup{op}}}}

\newcommand{\down}{\mathrm{down}}
\newcommand{\up}{\mathrm{up}}

\newcommand{\va}{{\mathbf{a}}}
\newcommand{\vb}{{\mathbf{b}}}
\newcommand{\vc}{{\mathbf{c}}}
\newcommand{\vd}{{\mathbf{d}}}
\newcommand{\ve}{{\mathbf{e}}}

\newcommand{\vh}{{\mathbf{h}}}

\newcommand{\vm}{{\mathbf{m}}}

\newcommand{\vq}{{\mathbf{q}}}
\newcommand{\vr}{{\mathbf{r}}}
\newcommand{\vs}{{\mathbf{s}}}

\newcommand{\vu}{{\mathbf{u}}}
\newcommand{\vv}{{\mathbf{v}}}
\newcommand{\vw}{{\mathbf{w}}}
\newcommand{\vx}{{\mathbf{x}}}
\newcommand{\vy}{{\mathbf{y}}}
\newcommand{\vz}{{\mathbf{z}}}
\newcommand{\vsig}{\boldsymbol{\sigma}}

\newcommand{\calA}{{\mathcal{A}}}

\newcommand{\calC}{{\mathcal{C}}}

\newcommand{\calE}{{\mathcal{E}}}

\newcommand{\calI}{{\mathcal{I}}}

\newcommand{\calN}{{\mathcal{N}}}

\newcommand{\calP}{{\mathcal{P}}}

\newcommand{\calT}{{\mathcal{T}}}
\newcommand{\calU}{{\mathcal{U}}}

\newcommand{\calX}{{\mathcal{X}}}

\newcommand{\bbE}{{\mathbb{E}}}

\newcommand{\ba}[1]{\begin{align}#1\end{align}}

\newcommand{\cbra}[1]{\left\{#1\right\}}

\newcommand{\hZ}{\widehat{Z}}
\newcommand{\DU}{\mathsf{DU}}

\newcommand{\Lam}{\Lambda}
\newcommand{\EstZ}{\mathsf{EstimateZ}}

\crefname{assumption}{assumption}{assumptions}

\newcommand{\sign}{\mathrm{sign}}

\definecolor{burntorange}{rgb}{0.8, 0.33, 0.0}

\newcommand{\TDU}[1]{\calT^{\mathsf{DU}, #1}}

\newcommand{\set}{\textup{set}}
\newcommand{\vvec}{\mathbf{vec}}

\newcommand{\wAT}{_{\rm wAT}}
\newcommand{\AT}{_{\rm AT}}
\renewcommand{\emptyset}{\varnothing}
\newcommand{\OC}{_{\rm OC}}
\newcommand{\HM}{_{\rm HM}}

\title{High-Magnetization Sampling at Low Temperatures: \\ Ising Models and Bayesian Sparse Linear Regression}
\date{}
\author{Syamantak Kumar\thanks{University of Texas at Austin, \texttt{syamantak@utexas.edu}} \and Purnamrita Sarkar\thanks{University of Texas at Austin, \texttt{purna.sarkar@austin.utexas.edu}} \and Kevin Tian\thanks{University of Texas at Austin, \texttt{kjtian@cs.utexas.edu}} \and Yusong Zhu\thanks{University of Texas at Austin, \texttt{zhuys@utexas.edu}}}
\begin{document}
\allowdisplaybreaks
\pagenumbering{gobble}
\maketitle

\begin{abstract}
Sparsity is a powerful structural resource in optimization and statistics.
We develop frameworks for leveraging sparsity in sampling problems over the Hamming slice $\calX^d_k \defeq \{\vx \in \{\pm 1\}^d: |\{i: \vx_i = 1\}| = k\}$, in high-dimensional regimes where $k \ll d$ (i.e., where $\calX^d_k$ is \emph{highly-magnetized}). We use our frameworks to design improved samplers for canonical problems in the study of \emph{Ising models} and \emph{Bayesian sparse linear regression}.

Our first main result considers the \emph{Sherrington-Kirkpatrick} (SK) model, restricted to fixed-magnetization slices $\calX^d_k$. We give a polynomial-time sampler for fixed-magnetization SK models at any inverse temperature $\beta > 0$, under arbitrary external fields, provided $k \le c_\beta d$ for an appropriate constant $c_\beta$. By combining this result with an annealing strategy for estimating normalizing constants, this yields polynomial-time samplers for the SK model at arbitrarily low temperatures, under a sufficiently strong external field strength $h$. In the large $\beta$ limit, our framework permits sampling at field strengths within constant factors of the \emph{Almeida-Thouless line} delineating the replica symmetric and replica symmetry breaking regions \cite{deAlmeidaThouless78}, improving polynomially over the $h(\beta)$ required by recent work of \cite{BandeiraElAlaouiRodder26}. 

Our second main result concerns the measurement complexity of polynomial-time Bayesian sparse linear regression. Recent work by \cite{KumarSTZ25} shows how to sample from the canonical \emph{Gaussian spike-and-slab posterior} model with expected sparsity $k$, at any signal-to-noise ratio, given $n \gtrsim k^3 \log^3 d$ Gaussian measurements. We improve this to $n \gtrsim k^{1.5} \log^2 d + k\log^3 d$, using a common sparsity-aware framework underlying our results on Ising models.
\end{abstract}
\thispagestyle{empty}
\newpage
\tableofcontents
\thispagestyle{empty}
\newpage

\pagenumbering{arabic}

\section{Introduction}\label{sec:intro}
Harnessing sparsity is a central theme in modern high-dimensional optimization and statistics. 
From a \emph{sample complexity} perspective, sparsity is often a blessing:
classical results on Gelfand widths~\cite{Kashin77,GarnaevG84} imply that,
in the well-studied \emph{sparse linear regression (SLR)} problem, only \(n \approx k\log d\) noisy
measurements $
\vy=\mx\vths+\vxi$ 
are information-theoretically sufficient to estimate a \(k\)-sparse signal
\(\vths\in\R^d\) up to the noise level \(\norms{\vxi}_2\). From an \emph{algorithm design} perspective, however, imposing sparsity can create highly complex, nonconvex landscapes. Seminal work in compressed sensing overcame this tension by identifying structural conditions, such as the restricted isometry property, under which sparse recovery is possible in polynomial time \cite{CandesT05,CandesT06, CandesRT06,Donoho06}, leading to a broad theory of sparsity-aware optimization \cite{Wainwright19}.

Our goal is to develop an analogous framework for exploiting sparse structure in high-dimensional sampling. In the problems we study, sparsity restricts the number of simultaneously ``active'' coordinates without fixing their locations. A central motivating problem in this work is sampling from the \emph{Sherrington-Kirkpatrick (SK) model} (cf.\ Model~\ref{model:sk}) 
under sparsity constraints. 
Interestingly, understanding sampling algorithms for the SK model under sparsity has other consequences, including improved algorithms for posterior sampling in Bayesian sparse linear regression.

\textbf{SK model.} The SK model is a well-studied special case of the \emph{Ising model}, a measure over vectors of spins from the hypercube $\vx \in \calX^d \defeq \{\pm 1\}^d$. Such $\vx$ can be identified with a set $S \subseteq [d]$, the locations of positive spins $\vx_i = 1$. In Ising models, the underlying measure is
\begin{equation}\label{eq:ising_intro}
    \mu(\vx)
    \propto
    \exp\Par{
        \beta\Par{
            \half \vx^\top \mj \vx
            +
            \vh^\top \vx
        }
    },\quad \vx \in \calX^d,
\end{equation}
governed by an interaction matrix \(\mj \in \R^{d \times d}\), an external field \(\vh \in \R^d\), and an inverse temperature \(\beta>0\). Ising models are a canonical testbed across statistical physics, machine learning, and theoretical computer science \cite{WainwrightJ08,LevinPW09,talagrand2010mean}. For suitable $\mj$, the Gibbs landscape in \eqref{eq:ising_intro} is known to undergo qualitative phase transitions as  \(\beta\) varies \cite{talagrand2010mean}.  
Efficient algorithms exist for sampling under the SK model, where \(\mj\) is drawn from the Gaussian orthogonal ensemble (GOE),  for $\beta$ up to a universal constant $c$ \cite{EldanKZ22, AnariJKPV22, AlaouiMS22, AnariKV24, DaviesLSS26}, while conditional hardness has been demonstrated at a higher constant threshold $c' > c$ \cite{AlaouiMS22}.

\textbf{Sparsity from field strength.} As intuition for our central \emph{high-magnetization SK model}, to be introduced next, we describe a well-studied analog: the SK model under a strong positive external
field
\(  \mathbf{h}=\theta \mathbf{1}_d
\).
When $\theta$ is large,
\eqref{eq:ising_intro} is biased toward $\1_d$, driving the minority-spin set to become sparse.  The high-field strength model thus exhibits a soft form of sparsity. 

On the
algorithmic side,~\cite{BandeiraElAlaouiRodder26} proved polynomial-time
mixing of the Glauber dynamics for the SK model at \emph{every inverse temperature} $\beta > 0$, under a high enough field strength $h \defeq \beta\theta$. They derive their sampler as a consequence of a more general result that proves mixing in Ising models \eqref{eq:ising_intro},
under a condition on the sparse operator norm of $\mj$ at a constant sparsity scale $k = \Theta(n)$.

On the geometric side, the Almeida--Thouless (AT) line is the
standard benchmark for replica symmetry in the
\((\beta,h)\)-plane. This line, derived by \cite{deAlmeidaThouless78} using the replica method (and described in Appendix~\ref{sec:sk_at_line}), predicts a qualitative shift in the behavior of \eqref{eq:ising_intro} as $h$ grows compared to $\beta$. This line provides a
natural field strength scale against which to compare \cite{BandeiraElAlaouiRodder26}
to our results.

The role of sparsity in \cite{BandeiraElAlaouiRodder26} is implicit: a large field strength $h$ causes
independent replicas to have high overlap, so their disagreement set is
typically sparse. The relevant
interaction depends on a principal submatrix of $\mj$, giving a dependence on sparse operator norms. This suggests a complementary question: can high magnetization itself, imposed as a hard constraint rather than induced by an external field, make sampling in the SK model tractable? 

\textbf{Sparsity from high magnetization.}
High-magnetization regimes  have long played an important role in statistical physics, e.g., the study of spontaneous magnetization and large deviations \cite{yang1952,Bonati2014,Aizenman_2014,Ellis12}. For measures on $\calX^d$, high magnetization is a natural analog of sparsity. Under our convention, the positive spins are the active coordinates: an element $\vx$ of
\ba{\label{eq:Xdk}
    \calX^d_k
    \defeq
    \Brace{
        \vx\in\calX^d:
        \bigl|\{i:\vx_i=1\}\bigr|=k
    }
}
has exactly \(k\) positive spins and, for \(k\leq \frac d 2\), magnetization magnitude \(d-2k\). Taking \(k\ll d\) imposes a sparsity constraint, while maintaining an outcome space of exponential size \(\binom{d}{k} \approx \exp(k\log \tfrac{d}{k})\). We study the SK model restricted to both the fixed-magnetization slice $\calX^d_k$, and its bounded-magnetization counterpart
    $\calX^d_{\leq k}
    \defeq
    \bigcup_{i=0}^k \calX^d_i.$ 
We now state our first central problem.
\begin{gather*}
\textit{For every fixed } \beta < \infty, \textit{ is there a polynomial-time sampler from the} \\
\textit{fixed-magnetization SK model  whenever } k\le c_\beta d \textit{ for a fixed } c_\beta > 0?
\end{gather*}
Theorem~\ref{thm:informal-trickledown} answers this question affirmatively, taking the algorithm as the \emph{down-up walk}. 

\textbf{Bayesian SLR.}
Our second motivating example is a Bayesian variant of SLR. SLR is often phrased as an optimization problem: given noisy measurements $(\mx, \vy = \mx \vths + \vxi)$, return a $k$-sparse $\vhth \in \R^d$ (approximately) minimizing the residual norm $\norms{\mx\vhth - \vy}_2$. When $\mx$ is RIP, low residual error implies accurate estimation \cite{CandesRT06}, so optimization produces a good \emph{point estimate} of $\vths$. 

In many applications, however, it is preferable to sample $\vhth$ from a \emph{distribution} over plausible signals, e.g., to model uncertainty in support estimation (variable selection) \cite{MitchellB88, GeorgeM93}. In the well-established \emph{Bayesian SLR} model, the noise $\vxi$ is Gaussian with coordinatewise variance $\sig^2$, where $\sig^{-1} > 0$ is a \emph{signal-to-noise ratio}. For an appropriate prior $\pi$ over sparse signals $\vths$, the goal is then to sample from the posterior induced by the observations:
\begin{equation}\label{eq:post_intro}\vhth \sim \pi\Par{\cdot \mid \mx, \vy}, \text{ where } \vy = \mx \vths + \vxi,\quad \vths \sim \pi,\quad \vxi \sim \Nor\Par{\0_n, \sig^2 \id_n}.\end{equation}
Samples from the posterior density can then be used in downstream tasks, e.g., constructing credible intervals.
In the statistics literature, $\pi$ is often taken to be the \emph{spike-and-slab} prior, 
\begin{equation}\label{eq:sas_intro}
\pi = \bigotimes_{i \in [d]} \Par{1 - \frac k d} \delta_0 + \frac k d \Nor(0, 1),
\end{equation}
where each signal coordinate $\vths_i$ is independently set to $0$ except with low probability (so the expected sparsity is $k$). The induced \emph{spike-and-slab posterior sampling} problem is often called the ``theoretical gold standard'' for modeling uncertainty in variable selection \cite{JohnstoneS04, CarvalhoPS09, IshwaranS11, castillo2012needles, Roc18, PolsonS19}. Unfortunately, this task poses a notorious computational challenge \cite{CSHVdV15}, and many heuristics have been developed as approximations \cite{BaiRG21}. 

Recently, several works developed provable methods for this sampling problem \cite{YangWJ16, MontanariW26}, including an algorithm by \cite{KumarSTZ25} which samples from the posterior density \eqref{eq:post_intro}, \eqref{eq:sas_intro} given a sublinear-in-$d$, $n \gtrsim k^3 \log^3 d$ measurements. The counterpart optimization problem is known to be feasible even when $n \approx k\log d$, motivating our second central problem.
\begin{gather*}
\textit{What is the measurement threshold $n$ at which spike-and-slab posterior sampling} \\
\textit{admits polynomial-time algorithms, for any signal-to-noise ratio?}
\end{gather*}
We show that the complex SLR posterior density admits enough structure to be captured by our analysis framework for the high-magnetization SK model, and give a polynomial-time posterior sampling algorithm at $n\gtrsim  k^{1.5}\,\textup{polylog}(d)$ measurements (Theorem~\ref{thm:informal-slr}).

\subsection{Our results}\label{ssec:results}
In this section, we overview our main results. To obtain these results, we
develop a suite of technical tools for exploiting sparsity in sampling, described at more depth in Section~\ref{ssec:techniques}.

\textbf{High-magnetization SK models.} Our first main result concerns sampling in high-magnetization SK models, where $\mj$ is GOE and the external field $\vh$ is arbitrary. As a benchmark, \cite{AlaouiMS22} showed conditional hardness for sampling in SK models at large inverse temperatures $\beta > c'$ for a constant $c'$. We show that after fixing the magnetization $k \le c_\beta d$ of the spin vector $\vx \in \calX^d_k$, the SK model admits polynomial-time sampling at any temperature.

\begin{theorem}[Informal; see Theorem~\ref{thm:sk_trickle}]\label{thm:informal-trickledown}
For any fixed $\beta > 0$, $\vh \in \R^d$, and $\mj \sim \textup{GOE}(d)$, there is a constant $c_\beta > 0$ such that if $k \le c_\beta d$, we can sample from the SK model \eqref{eq:ising_intro} restricted to the Hamming slice $\calX^d_k$ in polynomial time, with high probability over $\mj$.
\end{theorem}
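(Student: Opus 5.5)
We will analyze the down-up walk on the slice $\calX^d_k$: identify $\vx \in \calX^d_k$ with its positive set $S$, $|S| = k$; remove a uniformly random element of $S$; then add an element $j \notin S \setminus \{i\}$ with probability proportional to $\mu$. The target measure is $\mu(S) \propto \exp(\beta(\half \vx^\top \mj \vx + \vh^\top \vx))$. Write $\vx = 2\1_S - \1_d$. Then the log-density is, up to constants, $2\beta \1_S^\top \mj \1_S + \beta\langle \1_S, 2\vh - 2\mj\1_d\rangle$. This is a quadratic interaction supported on $k$-sparse indicators, plus an arbitrary linear field. Linear fields are harmless for down-up walks on the uniform matroid: they correspond to external fields, and product-type measures on the slice are strongly log-concave. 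So everything hinges on the quadratic part $\beta\,\mj_{S\times S}$.

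The plan is to prove spectral independence, or directly a trickle-down/local-to-global bound, for the simplicial complex of $k$-subsets with weights $\mu$. By the local-to-global theorem (Alev--Lau / Anari--Liu--Oveis Gharan), it suffices to bound, for every link $T$ with $|T| = t \le k-2$, the second eigenvalue of the local walk. Concretely, we bound $\lambda_2(\text{link } T) \le \frac{1 + \eta}{k - t - 1}$ with $\sum_t \eta_t = O(1)$, or at worst $O(\log d)$. This yields a polynomial spectral gap $\ge d^{-O(1)}$ and polynomial mixing from a warm enough start. A start such as the greedy or uniform state has $\log(1/\mu_{\min})$ polynomial, provided $\normop{\mj} = O(\sqrt d)$.

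The local walk at link $T$ is a distribution over pairs $\{a,b\} \notin T$. Its influence matrix is, to first order, $\Psi_{ab} \approx \tanh$-like terms in $\beta \mj_{ab}$ times the marginals $p_a p_b$, where the marginals come from the conditional measure. The key structural input is that under the conditional measure the marginals are spread out, $p_a \lesssim (k-t)/(d-t)\cdot e^{O(\beta \cdot \text{local field})}$. The centered influence matrix is then roughly $\beta\,\mathrm{diag}(\sqrt p)\,\mj\,\mathrm{diag}(\sqrt p)$, so its operator norm is governed by $\beta \max_{|U| \lesssim k} \normop{\mj_{U\times U}}$. Here $U$ is the (soft) support, and the bound must be weighted. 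For GOE, the standard union bound plus $\eps$-net argument gives, with high probability,
\[\max_{|U|\le m}\normop{\mj_{U\times U}} \lesssim \sqrt{m\log(ed/m)/d}\]
under the normalization $\mj_{ij} \sim \Nor(0,1/d)$. Taking $m \asymp k - t$ and requiring $\beta\sqrt{(k/d)\log(d/k)} \le c$ fixes $c_\beta \approx \beta^{-2}/\log\beta$. The arbitrary external field is handled by conditioning. A large field only concentrates the marginals on fewer coordinates, which shrinks the effective sparsity and so can only help the sparse-norm bound. Making this rigorous is where I expect the most work.

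The main obstacle is precisely this step: controlling the spectral independence of the conditional measures uniformly over links. The difficulty is that the marginals $p_a$ under an adversarial field $\vh$ and the random $\mj$ can be very nonuniform, so a naive bound $\normop{\Psi} \le \beta\normop{\mj}$ loses a factor of $\sqrt{d/k}$. My first attempt is a trickle-down argument (Oppenheim's theorem), which needs only the top links to be good expanders. At links with $k - t = 2$, the local walk is a weighted graph on $[d] \setminus T$ with weights $p_a p_b e^{2\beta \mj_{ab}}$. Its spectral gap can be bounded using the sparse operator norm at scale $2$ and a row-sum bound for GOE. Trickle-down then propagates a bound $\lambda \le \frac{1}{k-t-1}(1 + O(\beta\sqrt{k\log d/d}))$ downward; the accumulated multiplicative error stays bounded when $k \le c_\beta d$. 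If the field-induced nonuniformity breaks trickle-down, the fallback is to decompose $\mj_{S\times S}$ with the Hubbard--Stratonovich transform into a mixture of product (field-only) measures on the slice. By the sparse-norm bound, that mixture is over a log-concave Gaussian whenever $\beta \max_{|U|\le k}\normop{\mj_{U\times U}} < \half$, and one then applies the stochastic localization or entropic independence results for such mixtures.
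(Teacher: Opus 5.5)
\textbf{There is a genuine gap: the core spectral bound is left open, and the tools you name for it would not reach $k=\Theta(d)$.}

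Your framework is the right one: the down-up walk on the slice, trickle-down from the top links $|T|=k-2$, and sparse operator norms of $\mj$ with the $\log(ed/m)$ refinement. Your heuristic scaling $c_\beta\approx 1/(\beta^2\log\beta)$ also matches what the paper proves. But the step that \emph{is} the proof is missing: a bound $\lambda_2(\mpp_T)\le\frac{\alpha}{k-1}$ with $\alpha<1$ on every top link, uniformly over arbitrary $\vh$.

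\emph{Your trickle-down bounds have the wrong form.} Trickle-down needs an $O(1/k)$ bound at the top links. Your propagated bound $\frac{1}{k-t-1}(1+O(\beta\sqrt{k\log d/d}))$ exceeds $1$ at $t=k-2$, so it is vacuous exactly where it is needed. Similarly, $\frac{1+\eta}{k-t-1}$ makes the $t=k-2$ factor of the local-to-global product nonpositive. Also, $\beta\sqrt{k\log d/d}$ is not $O(1)$ when $k=c_\beta d$.

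\emph{Your top-link tools are too weak.} At a top link the weights are $\va_a\va_b e^{4\beta\mj_{ab}}$, where $\va_a=\exp(2\beta(\vh+\mj\vr)_a)$ absorbs both the field and the pinned core. The scale-$2$ sparse norm is just $\max_{a\neq b}|\mj_{ab}|\approx\sqrt{\log d/d}$. That only gives $\lambda_2\lesssim\beta\sqrt{\log d/d}$, i.e.\ the sparse-Dobrushin regime $k\lesssim\beta^{-1}\sqrt{d/\log d}$. A row-sum bound on $\mj$ says nothing once $\va$ is very nonuniform. The claim that a strong field ``only helps'' is a heuristic. The Hubbard--Stratonovich fallback does not name a theorem whose hypothesis is a sparse-operator-norm condition.

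\textbf{The missing idea is that the field can be removed exactly, rather than argued to be benign.}
\begin{itemize}
\item Write $\md^{-1/2}\mw\md^{-1/2}$ as the rank-one matrix $\md^{-1/2}\va\va^\top\md^{-1/2}$ plus $\md^{-1/2}\ma(\mx-\id)\ma\md^{-1/2}$, with $\ma=\diag{\va}$ and $\mx_{ab}=e^{\mk_{ab}}-1$. This gives $\lambda_2(\mpp_T)\le\sup_{\vu}\frac{\vu^\top(\mx-\id)\vu}{\vu^\top\ma^{-2}\md\vu}$.
\item Cauchy--Schwarz yields $\vu^\top\ma^{-2}\md\vu\ge e^{-m(\mk)}(\norm{\vu}_1^2-\norm{\vu}_2^2)$ for \emph{every} $\va>0$.
\item So it suffices to prove a field-free mixed-norm bound $\vu^\top\mx\vu\le\half\norm{\vu}_2^2+\rho\norm{\vu}_1^2$ with $\rho=O(\bar{\beta}^2\log(e\bar{\beta})/d)$. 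The $-\id$, coming from the absence of self-loops, absorbs the $\ell_2$ part.
\end{itemize}
The relevant sparsity is therefore not the link dimension $k-t$. It is the effective sparsity $\norm{\vu}_1^2/\norm{\vu}_2^2$ of the test vector, at every scale.

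The paper proves the mixed-norm bound in two parts.
\begin{enumerate}
\item \emph{Linear term.} A shelling decomposition uses $\normop{\mj_{S\times S}}\lesssim\sqrt{(s\log(ed/s)+\log(d/\delta))/d}$ at \emph{all} scales $s$. This gives $|\vu^\top\mj\vu|\lesssim\norm{\vu}_1^2(\sqrt{q\log(edq)/d}+q\sqrt{\log(d/\delta)/d})$ for $q=\norm{\vu}_2^2/\norm{\vu}_1^2$. The scalar inequality $4\beta C\sqrt{s\log(es)}\le\frac s4+C''\bar{\beta}^2\log(e\bar{\beta})$ then splits this into the $\ell_2$ and $\ell_1$ parts.
\item \emph{Second-order term.} The term $\mx-\mk$ is controlled by a matrix Bernstein bound on $\normop{\mj\circ\mj-\frac1d(\1_d\1_d^\top-\id_d)}$. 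Bounding $\mj_{ab}^2$ entrywise instead would cost a $\log d$ factor.
\end{enumerate}
Together these give $\lambda_2(\mpp_T)\le\frac{1}{2(k-1)}$ for $k\le c\,d/(\bar{\beta}^2\log(e\bar{\beta}))$, hence a $\frac{1}{2k}$-Poincar\'e inequality. Polynomial mixing then follows from a bound on $\log(1/\pi_{\min})$.
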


The algorithm in Theorem~\ref{thm:informal-trickledown} is the canonical \emph{down-up walk} over the Hamming slice $\calX^d_k$, and we bound its runtime by a relatively mild $\tO(d^2 + dk^2\max(1, \beta\norm{\vh}_\infty))$.\footnote{In this introduction only, we use $\tO$ to suppress logarithmic factors.} Beyond Theorem~\ref{thm:informal-trickledown}, Section~\ref{sec:trickledown} develops a more general framework for proving Poincar\'e inequalities on the down-up walk for Ising models, using the trickle down (``local-to-global'') theorem of \cite{oppenheim2018local, alev2020improved}. For example, we state an analogous result for \emph{Gaussian Hopfield} models in Corollary~\ref{cor:trickle_ghop}.

Due to Theorem~\ref{thm:informal-trickledown}, we recover a variant of the main result of \cite{BandeiraElAlaouiRodder26} by exploiting the relationship between high field strength and high magnetization.\footnote{Formally, this follows by combining Corollary~\ref{cor:fast_mixing_sparse_set_dobrushin_gen} and Lemma~\ref{lem:gaussian_cut_concentration}.} We do note that \cite{BandeiraElAlaouiRodder26} directly analyze the \emph{Glauber dynamics}, perhaps the simplest sampling algorithm over $\calX^d$, whereas we use an \emph{annealing scheme} (Section~\ref{sec:bm_sampling_anneal}) to reduce bounded-magnetization sampling to fixed-magnetization sampling.

Our framework yields other interesting consequences beyond the linear sparsity regime in Theorem~\ref{thm:informal-trickledown}. For example, for $k = O(1)$ independent of $d \to \infty$, our results imply polynomial-time sampling in high-magnetization SK models for polynomial inverse temperatures $\beta$, up to $O(\sqrt{d/\log(d)})$ (cf.\ Corollary~\ref{cor:near_proportional_sk}). Prior results exploiting notions of sparsity (albeit different from ours) to sample from Ising models at subconstant temperatures only tolerated $\beta \approx \log d$ \cite{Carlson2022, KuchukovaPappikPerkinsYap2025}. This comparison is discussed at more length in Section~\ref{ssec:related} and Appendix~\ref{app:critical}.

\textbf{The Almeida-Thouless line.} From a quantitative perspective, the AT line is a useful benchmark for comparing our sampling result with \cite{BandeiraElAlaouiRodder26}. The AT line predicts a phase transition in the geometry of the SK model under an external field $\vh = \frac h \beta \1_d$, once the field strength $h \ge h\AT(\beta)$ is large enough as a function of $\beta$, where $h\AT(\beta) = (1 + o(1))\beta\sqrt{2\log\beta}$ (Lemma~\ref{lem:at_line}).

Leveraging a simple reduction from high field strength sampling to high-magnetization sampling (Lemma~\ref{lem:gaussian_cut_concentration}), we show in Theorem~\ref{thm:unrestricted_high_field_sk} that Theorem~\ref{thm:informal-trickledown} implies sampling from the SK model whenever $h \ge h_{\rm{HM}}(\beta)$ for a threshold $h_{\rm{HM}}(\beta) = \sqrt{2}(1 + o(1)) h\AT(\beta)$, i.e., within constant factors of the AT line. We also show in Theorem~\ref{thm:oracle_centered_sampling} that, if given access to the signs of the mean vector $\E[\vx]$ under the SK model, a modification of our sampler succeeds at any field strength $h \ge (1 + o(1)) h\AT(\beta)$. By comparison, the result of \cite{BandeiraElAlaouiRodder26} applies whenever $h = \Omega(\beta^2 \sqrt{\log\beta})$, a polynomial factor larger (see discussion in Appendix~\ref{sec:sk_at_line}). The tighter range of $h$ tolerated by our framework is a result of our basic sampler in Theorem~\ref{thm:informal-trickledown} applying for an arbitrary external field $\vh$.
\bk

\textbf{Bayesian SLR.} 
Finally, we apply our sparse sampling frameworks to spike-and-slab posterior sampling. We obtain a state-of-the-art measurement complexity for a canonical variant of the problem, stated in \eqref{eq:post_intro}, \eqref{eq:sas_intro}, and studied by \cite{KumarSTZ25, MontanariW26}.

\begin{theorem}[Informal; see Theorem~\ref{thm:sas_post_sampler}]\label{thm:informal-slr}
Let $\mx\in\R^{n\times d}$ have i.i.d.\ entries distributed as
$\calN(0,\tfrac{1}{n})$. If
\begin{equation*}
    n
    =
    \Omega\left(
        \Par{k + \log\Par{\frac 1 \delta}}^{1.5}
        \log^2\left(\frac{d}{\delta}\right)
        +
        \Par{k + \log\Par{\frac 1 \delta}}
        \log^3\left(\frac{d}{\delta}\right)
    \right),
\end{equation*}
then, for every $\sig>0$, there is a polynomial-time algorithm that samples
from $\pi(\cdot\mid\mx,\vy)$, defined in~\eqref{eq:post_intro}
and~\eqref{eq:sas_intro}, within total variation distance $\delta$, with
probability at least $1-\delta$ over the model.
\end{theorem}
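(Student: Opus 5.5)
The plan is to reduce spike-and-slab posterior sampling to sampling the posterior over supports $S \subseteq [d]$, and then to sample that support posterior with the high-magnetization machinery built for the SK model. Given $S$, the posterior on $\vth_S$ is an explicit Gaussian, $\Nor\bigl((\mx_S^\top\mx_S + \sig^2\id)^{-1}\mx_S^\top\vy,\ \sig^2(\mx_S^\top\mx_S+\sig^2\id)^{-1}\bigr)$. So once $S$ is drawn, the signal is drawn exactly. Integrating out $\vth_S$ gives the support marginal
\[
\pi(S\mid \mx,\vy)\propto \Par{\tfrac{k}{d-k}}^{|S|}\det\Par{\id+\sig^{-2}\mx_S^\top\mx_S}^{-1/2}\exp\Par{\tfrac{1}{2\sig^2}\vy^\top\mx_S(\mx_S^\top\mx_S+\sig^2\id)^{-1}\mx_S^\top\vy}.
\]
This is a measure on $\calX^d$ in which $S$ is identified with the positive spins. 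It is Ising-like: when $\mx_S^\top\mx_S \approx \id$, the log-density is approximately quadratic in $\1_S$.

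The first step handles the magnetization. With probability $1-\delta$ over the prior, $|S^\star| \le 2k + O(\log(1/\delta))$. I will show that the posterior also concentrates on $\calX^d_{\le \bar k}$ for some $\bar k = O(k+\log(1/\delta))$. Conditioned on the slice, the sampler then uses the annealing strategy of Section~\ref{sec:bm_sampling_anneal}. That reduction runs a fixed-magnetization sampler on each slice $\calX^d_j$, $j\le\bar k$, and estimates the normalizing constant of each slice so that the slices can be mixed with the correct weights.

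The second step is to verify, with probability $1-\delta$ over $\mx$ and $\vxi$, the trickle-down conditions of Section~\ref{sec:trickledown} for the down-up walk on each slice $\calX^d_j$. These conditions concern link measures: the measure obtained after fixing $j-2$ coordinates, restricted to pairs of remaining coordinates. I need the second-largest eigenvalue of each link's transition matrix to be at most about $1/(j-1)$ plus small corrections, with the corrections summing well across levels. This requires controlling the log-density ratio when one index is added to $S$ and when two are added. The ratio for adding $i$ involves the residual correlation $\langle \mx_i, \vy - \mx_S\hat\vth_S\rangle$. The pairwise interaction between $i$ and $i'$ is driven by $\langle \mx_i, \mx_{i'}\rangle$ together with cross terms with the residual.

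For Gaussian $\mx$ with $n$ rows, these inner products are of size $\sqrt{\log d/n}$ after union bounds over all $S$ of size at most $\bar k$. The union bound over supports costs $\exp(\bar k\log d)$, which is where the $\log^2 d$ and $\log^3 d$ factors come from. The effective interaction matrix has a sparse operator norm of order $\bar k\sqrt{\log d/n}$ times the field-like magnitudes. The trickle-down criterion asks that the resulting weighted interaction, measured in a $\bar k$-sparse operator norm, be $o(1)$ summed over the $\bar k$ levels. I expect this to give $n \gtrsim \bar k^{1.5}\log^2 d$, since a sparse norm of order $\sqrt{\bar k}$ multiplies entrywise bounds, compared with the $k^3$ that \cite{KumarSTZ25} needed for a cruder Dobrushin-type argument.

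The main obstacle is that the external-field terms are arbitrary and can be huge at high SNR ($\sig\to 0$). The local-to-global analysis has to be insensitive to field size, exactly as in Theorem~\ref{thm:informal-trickledown}, which allows arbitrary $\vh$. I will therefore write each link measure as a product-field measure perturbed by interactions whose size is governed only by off-diagonal Gram entries and residual correlations. These must be controlled uniformly in $\sig$. The delicate point is the $\sig\to0$ limit, where $(\mx_S^\top\mx_S+\sig^2\id)^{-1}$ approaches a projection and the interaction stays bounded by RIP. I would try first to prove a uniform-in-$\sig$ bound via RIP of $\mx$ at sparsity $2\bar k$. The $k\log^3 d$ term is meant to cover that RIP, together with the concentration of the normalizing-constant estimates used in annealing. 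Finally, I combine the spectral gap with a warm start, for example from the empty set with a bounded ratio, to get mixing in polynomially many steps. The errors from truncation, annealing, and mixing are then summed to a total variation error of $\delta$.
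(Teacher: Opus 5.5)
There is a genuine gap in your second step. You expect the large data-dependent terms to drop out of the link interactions, as the external field does for the SK model, where $\mk_{ij}=4\beta\mj_{ij}$ whatever $\vh$ is. In the support posterior, however, the data enter through the quadratic form $\frac12\vb_S^\top\ma_S^{-1}\vb_S$, and this couples pairs of coordinates. The Schur-complement computation (Lemma~\ref{lem:slr-schur-reduction} and Item~\ref{item:entry_perturb_bound} of Lemma~\ref{lem:combine_params}) gives a link interaction $\mk_{ij}\approx-\mr_{ij}\vr_i\vr_j$, i.e., a residualized Gram entry times a product of residualized linear terms.

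Your own $\sig\to0$ picture shows the problem. Let $\Pi_S$ be the orthogonal projection onto the column span of $\mx_S$, and let $L\defeq\log\frac d\delta$. At a core $R$, the second difference of $\frac{1}{2\sig^2}\norm{\Pi_S\vy}_2^2$ is approximately $-\inprod{\vw_i}{\vw_j}a_ia_j/\sig^2$, where $\vw_i\defeq(\id-\Pi_R)\mx_{:i}$ and $a_i\defeq\inprod{\vw_i}{\vy}$. If $R$ omits a strong coordinate $i\in\supp(\vths)$, then $a_i\approx\vths_i$, so $|\mk_{ij}|$ is of order $\sqrt{L/n}\,|\vths_i\vths_j|/\sig^2$. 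RIP controls the Gram factor, but nothing controls $a_ia_j/\sig^2$ uniformly in $\sig$. So the claim that the interaction ``stays bounded by RIP'' is false. Link graphs with weights $\va_i\va_j\exp(\pm M)$ for $M\gg1$ can have $\lambda_2(\mpp_R)$ close to $1$; two heavy disjoint edges already suffice. The same problem hits the slices whose size is below the number of strong coordinates. The annealing must sample these at every $\beta\in[0,1]$, and they carry pairwise couplings of this unbounded size.

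The missing ingredient is the sparse-recovery preprocessing of Proposition~\ref{prop:approx_post_i}, following \cite{KumarSTZ25}:
\begin{enumerate}
\item Compute an $O(k)$-sparse $\vhth$ with $\norm{\vhth-\vths}_\infty=O(\sig\sqrt L)$.
\item Pin $\calC\defeq\supp(\vhth)$ into every support, at a cost of $\delta$ in total variation.
\item Recenter: for $S\supseteq\calC$, $\vb_S^\top\ma_S^{-1}\vb_S=\vz_S^\top\ma_S^{-1}\vz_S+\text{const}$, where $\vz\defeq\frac{1}{\sig^2}\mx^\top(\vy-\mx\vhth)-\vhth$.
\end{enumerate}
Every link core then contains $\calC$. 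After the $(\gamma,\tau)$ rescaling of \eqref{eq:gamma-tau}, the linear term splits into a flat part with entries $O(\sqrt L)$ and a short part with $ak$-sparse $\ell_2$ norm $O(kL/\sqrt n)$, both uniformly in $\sig$. Only with this does Lemma~\ref{lem:combine_params} give the mixed-norm control needed for Lemma~\ref{lem:main_sgt}.

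This also corrects your accounting of $n$. The $k\log^3\frac d\delta$ term does not come from RIP, which needs only $n\gtrsim kL$, nor from annealing, which affects only runtime. It comes from the $O(\sqrt L)$ flat field entering quadratically against Gram entries of size $\sqrt{L/n}$, which forces $L^{1.5}/\sqrt n\lesssim k^{-1/2}$. The $k^{1.5}\log^2\frac d\delta$ term comes from the worst-case-over-cores Schur-complement corrections. These include the $\alpha^2 k$ term in $\max_{i,j}|\mr_{ij}|$, with $\alpha=\sqrt{L/n}$, and the $kL/\sqrt n$ short part of $\vr$. Together they make the effective parameter in \eqref{eq:mixed_norm_bound} of order $\alpha+\alpha^2k$ up to logarithms, and it must be $O(k^{-1/2})$.
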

As in prior work, there are two sources of failure in Theorem~\ref{thm:informal-slr}. Namely, the model \eqref{eq:post_intro}, \eqref{eq:sas_intro} may fail to produce a sparse signal $\vths$ or bounded noise $\vxi$ (inhibiting tractability of the problem), and the sampling algorithm itself has an approximation error. Our formal result, Theorem~\ref{thm:sas_post_sampler}, is more general and can handle nonuniform inclusion weights in the prior (see Model~\ref{model:sas_basic}). Moreover, the runtime of Theorem~\ref{thm:informal-slr} is relatively practical, e.g., it scales linearly in $nd$.

Our $n \approx k^{1.5} \log^2 d$ requirement improves quadratically in its dependence on $k$ upon the previous state-of-the-art sampler by \cite{KumarSTZ25}, which uses $n = \Omega(k^3 \log^3 d)$ measurements (see also \cite{MontanariW26}, who gave a result in the regime $n = \Omega(d)$). Interestingly, the $k^{1.5}$ bottleneck appears inherent to our approach (discussed in the following Section~\ref{ssec:techniques}).
This motivates the tantalizing open question of whether spike-and-slab posterior sampling is tractable at $n = \Omega(k\log d)$ measurements, which would close the gap between frequentist and Bayesian SLR.

\subsection{Our techniques}\label{ssec:techniques}

In this section, we overview the main proof ideas behind our sparse
Dobrushin (Section~\ref{sec:dobrushin}) and trickle down frameworks (Section~\ref{sec:trickledown}), our annealing reduction from bounded-magnetization to
fixed-magnetization sampling (Section~\ref{sec:bm_sampling_anneal}), and our application to Bayesian SLR (Section~\ref{sec:slr}).  

\textbf{Sparse Dobrushin condition.}
In Section~\ref{sec:dobrushin}, we give a warm-up path coupling analysis of the down-up walk illustrating why restricting $k \ll d$ can make low-temperature sampling easier. For a measure supported on $k$-sized subsets, we compare the conditional laws of the up step from neighboring $(k-1)$-sized cores, after excluding the coordinates on which the two cores differ. If the resulting total variation discrepancy is $O(\frac 1 k)$,
the walk contracts in Hamming distance and mixes in
$
O\!\left(k \log \frac{k}{\varepsilon}\right)
$
steps (Lemma~\ref{lem:du_coupling}). This framework depends on a sparse variant of the classical \emph{Dobrushin influence matrix} \cite{Dobruschin68}, so we term it a sparse Dobrushin condition (Definition~\ref{def:sparse_dobrushin}).

For fixed-magnetization Ising models, this discrepancy is controlled by
$
\beta \max_{i \neq j} |\mj_{ij}|,
$
independently of the external field $\vh$. Sparsity sets the required discrepancy bound at $\approx \frac 1 k$, i.e., a relaxed bound at higher magnetizations. The maximum entry magnitude under the SK model scales as $\approx \sqrt{\log d/ d}$, so this warm-up result already shows a variant of Theorem~\ref{thm:informal-trickledown} at the higher magnetization level $k \le c_\beta \sqrt{d/\log d}$. The sparse Dobrushin condition is simple and broadly applicable, but it cannot exploit cancellation among signed interactions. This motivates our main technique for extending to the range $k = \Theta(d)$, based on spectral expansion and the trickle down theorem.

\textbf{Spectral mixing via trickle down.}
To obtain sharper parameter ranges in fixed-magnetization Ising models,
we leverage the trickle
down theorem \cite{oppenheim2018local,alev2020improved}, a foundational result in the study of high-dimensional expansion. For a measure
$\pi$ over
$\binom{\calU}{k}$, and a core
$R\in\binom{\calU}{k-2}$, define the link graph of $R$ on
$\calU\setminus R$ to have edge weights
$\mw_{ij}:=\pi(R\cup\{i,j\})$, and let $\mpp$ be the corresponding random walk matrix. The trickle down theorem (Lemma~\ref{lem:trickledown}) shows that if we can show $\lam_2(\mpp)= O(\frac 1 k)$ for all cores $R$, 
then the down-up walk satisfies a $O(\frac 1 k)$-Poincar\'e inequality.
Thus, the global mixing problem reduces to proving uniform spectral
expansion of all the induced $\mpp$.

For a fixed-magnetization Ising model, each link has a particularly useful
form. Fix a core $R$, and let $\br\in\{\pm1\}^d$ be the spin vector whose
positive coordinates are $R$. For some interaction matrix $\bK$, Lemma~\ref{lem:ising_trickle} shows that the link
weights satisfy, for an appropriate vector $\va \in \R^{\calU \setminus R}$,
\begin{equation*}
    \mw_{ij}
    \propto
    \va_i \va_j\exp(\bK_{ij}),
    \quad
    \mk_{ij}:=4\beta\mj_{ij}
    \quad (i\neq j).
\end{equation*}
We view $\mw$ as a bounded perturbation (parameterized by $\mk$) of the rank-one weights $\va\va^\top$: when $\mk$ is the all-zeroes matrix, $\lam_2(\mw) \le 0$ follows simply by a rank argument.\footnote{Formally, $\mw = \va\va^\top - \diag{\va}^2$ after removing self-loops, but removing $\diag{\va}^2$ can only improve $\lam_2(\mpp)$.} Our main trickle down framework gives a tighter characterization of $\lam_2(\mpp)$ as a function of $\mk$. Specifically, Lemma~\ref{lem:mixed_quadratic} shows using a second-order Taylor expansion of the exponential that if
\begin{equation}\label{eq:mixed_norm_intro}\Abs{\vu^\top \mk \vu} \le \tau\norm{\vu}_1 \norm{\vu}_2 + \tau^2\norm{\vu}_1^2,\end{equation}
then $\lam_2(\mpp) = O(\tau^2)$. 
This sets the required bound on $\tau$ at $\approx k^{-1/2}$ for the trickle down argument. As a point of comparison, the $\tau^2 \approx \frac 1 k$ term in this argument, along with $\vu^\top \mk \vu \le (\max |\mk_{ij}|)\norm{\vu}_1^2$, already qualitatively recovers the sparse Dobrushin condition (Lemma~\ref{lem:max_entry_recover}).

We next show that the mixed norm condition \eqref{eq:mixed_norm_intro} allows more fine-grained control of $\mk$, in terms of its \emph{sparse operator norms} $\normkop{\mk}{s}$, i.e., the largest $\normsop{\mk_{S \times S}}$ among any $s$-sized sets $S$. This argument proceeds by applying a \emph{shelling decomposition}, a classic technique from the sparse recovery literature \cite{CandesRT06} that places the ``effective sparsity'' of a vector $\vu$ on the scale of $\frac{\|\vu\|_1^2}{\|\vu\|_2^2}$. 

A basic application of this strategy (Lemma~\ref{lem:comparison_mixednorm}) shows that if $\normkop{\mk}{s} \lesssim \tau\sqrt{s} + \tau^2 s$ at all scales $s \in [d]$, then  \eqref{eq:mixed_norm_intro} holds. Plugging in standard bounds on the sparse operator norms of various matrix ensembles now already gives Theorem~\ref{thm:sk_trickle} up to a logarithmic loss in the tolerated $k$ range (Corollary~\ref{cor:near_proportional_sk}), as well as our strongest conclusion for \emph{Gaussian Hopfield} Ising models (Corollary~\ref{cor:trickle_ghop}). Our final application to SK models in the linear magnetization regime $k = \Theta(d)$ (Theorem~\ref{thm:sk_trickle}) uses more fine-grained estimates of sparse operator norms for GOE matrices.

\textbf{Annealing.} We take a brief detour to discuss a complementary part of our framework: a technique for lifting fixed-magnetization samplers to the bounded-magnetization setting (measures supported on $S \subseteq \calU : |S| \le k$). 
Our approach is based on the fact that \emph{bounded-magnetization measures are mixtures of fixed-magnetization measures}, with weights proportional to \emph{normalizing constants}. That is, to approximate a global density $\pi$ on $\calU_{\le k}$ to total variation $\delta$, it is enough to estimate
\[Z_i \defeq \sum_{\omega \in \calU_i} \pi(\omega) \text{ for all } 0 \le i \le k\]
to multiplicative error $O(\delta)$.
We formalize this argument in Lemma~\ref{lem:tv_bound_mixtures}.

Conveniently, there is a rich literature in theoretical computer science reducing between the problems of \emph{counting} (e.g., normalization constant estimation) and \emph{sampling}. An existing result, Theorem 6 in \cite{Kolmogorov18}, is essentially black-box applicable to our setting, giving $\delta$-multiplicative estimates to any $Z_i$ by using oracle calls to \emph{fixed-magnetization} samplers. By building upon the estimator of \cite{Kolmogorov18}, we state our generic reduction from bounded-magnetization sampling to fixed-magnetization sampling in Lemma~\ref{lem:annealing_sampler_gen_ada}, and give an example of its use for the SK model in Corollary~\ref{cor:fast_mixing_sparse_set_dobrushin_gen}.

\textbf{Spike-and-slab posterior sampling.} We finally turn to our second main application: Bayesian SLR with a spike-and-slab prior, as studied by \cite{KumarSTZ25, MontanariW26}. The primary challenge is to correctly sample the support $S \defeq \supp(\vths) \subseteq [d]$ from the posterior density 
\[\pisupp(S \mid \mx, \vy) \defeq \pi(\{\vth: \supp(\vth) = S\} \mid \mx, \vy).\]
As derived in prior work (cf.\ Fact~\ref{fact:pisupp}), this density is proportional to a closed-form expression:
\begin{equation}\label{eq:pisupp_intro}\pisupp(S) \propto \Par{\frac k {d - k}}^{|S|}\exp\Par{\half\norm{\vb_S}^2_{\ma_S^{-1}}}\frac 1 {\sqrt {\det \ma_S}},\end{equation}
where $\ma, \vb$ are induced by the measurements $(\mx, \vy)$ and defined in \eqref{eq:pisupp_def}. Note that the first term in the above expression can be absorbed into an external field.

Our posterior sampler has three components. The first follows
\cite{KumarSTZ25} and uses a sparse recovery preprocessing step (Proposition~\ref{prop:approx_post_i}) that
identifies coordinates whose inclusion is nearly deterministic from $(\mx, \vy)$, providing regularity to the residual posterior density. The second applies our annealing procedure from Section~\ref{sec:bm_sampling_anneal} to reduce the problem to a fixed-magnetization variant of \eqref{eq:pisupp_intro}.

The remaining step is to use our trickle down framework to demonstrate mixing of the down-up walk on fixed-magnetization slices of the residual posterior. The induced weight matrices $\mw$ after pinning a core are more complex than in the SK setting, because of the inverse and determinantal terms in \eqref{eq:pisupp_intro}. In particular, the resulting perturbation $\mk$ depends on the pinned core through a Schur complement correction, which yields various dependencies. Due to a union bound over all possible cores, our uniform estimate of $\tau$ in \eqref{eq:mixed_norm_intro} scales as $n^{-1/2} + \frac k n$ (e.g., see \eqref{eq:entrywise} in Lemma~\ref{lem:combine_params}), and setting this to $k^{-1/2}$ as required by the trickle down framework gives a $n \gtrsim k^{1.5}$ bottleneck.
Removing this $\sqrt{k}$ factor beyond the measurement complexity of (frequentist) sparse recovery is an exciting problem, that we leave open as a testbed of ``average-case'' trickle down theorems avoiding the union bound over worst-case dependencies suffered by our approach.

\subsection{Related work}\label{ssec:related}

\textbf{Fixed- and bounded-magnetization sampling.} The most conceptually relevant prior algorithmic works are by \cite{Carlson2022, KuchukovaPappikPerkinsYap2025}, both of which study fixed-magnetization problems that exhibit improved phase transitions or critical $\beta$ as the sparsity (positive spin count) $k$ becomes small. For example, Theorem 2 in \cite{Carlson2022} shows that for  ferromagnetic Ising models with bounded degree $\Delta$, there is a critical ``tree threshold'' $\beta_c(\Delta) = O(\frac 1 \Delta)$ such that for $\beta > \beta_c(\Delta)$, the model undergoes a computational phase transition at a certain magnetization level. Notably, the sparsity tradeoff required by their paper beyond $\beta > \beta_c(\Delta)$ decays \emph{exponentially} in $\beta$: we give a more formal derivation in Appendix~\ref{app:critical}, but e.g., for constant $\Delta$, they require
\[k = d \exp\Par{-\Theta_\Delta(\beta)},\]
which only implies a nontrivial setting ($k \neq 0$) up to $\beta \approx \log d$. Similarly, Theorem 1 of \cite{KuchukovaPappikPerkinsYap2025} shows that the Kawasaki dynamics \cite{Kawasaki66} (a standard magnetization-conservative dynamics) exhibits a phase transition at $\beta > \beta_c(\Delta)$, but their magnetization thresholds inherit the same exponential decay in $\beta$, and hence also apply only up to $\beta \approx \log d$. On the other hand, our results are derived through more general sparsity-aware analysis frameworks (Definition~\ref{def:sparse_dobrushin}) that continue to improve as $k \to 1$, tolerating $\beta$ up to $\poly(d)$. This improvement is closer in spirit to existing results in the sparse recovery literature, where general structural conditions (e.g., RIP) yield sample complexity requirements that improve monotonically with the sparsity level $k$.

There have been other works that study sampling in high-magnetization regimes, that are motivated by giving analysis frameworks for natural sampling dynamics at fixed magnetization (e.g., the down-up walk or Kawasaki dynamics), rather than obtaining improved thresholds as $k \to 1$. For example, \cite{BauerschmidtBodineauDagallier2024} study the local Kawasaki dynamics on random $\Delta$-regular graphs, and prove rapid mixing for $\beta = O(\Delta^{-1/2})$; notably, their thresholds do not improve as $k \to 1$. Similarly, a line of work has obtained estimates for the mixing time of the down-up walk that improve as $k$ decreases \cite{AnariLGV19, CryanGM19, AnariKV24}, e.g., that it mixes in $O(k\log k)$ steps for \emph{strongly log-concave} measures, but their focus was not the relationship between $k$ and the allowable inverse temperature $\beta$.

\textbf{High-field strength spin glasses.} The Almeida-Thouless line \cite{deAlmeidaThouless78} is the canonical benchmark for the replica symmetry phase transition in the SK model with a homogeneous external field $\theta\1_d$. From a geometric perspective, a recent work of \cite{Lopatto26} established replica symmetry throughout the AT region, and \cite{kusuoka2026quantitative} obtained quantitative overlap concentration in the strict AT region. Stronger forms of quantitative overlap concentration in a more restrictive region (cf.\ Definition~\ref{def:wat_cond}) have also appeared in the literature \cite{Talagrand11, JagannathTobasco17}; in particular, we use a bound from \cite{rajaraman2026markov} in our reduction from high-field strength sampling to high-magnetization sampling (Appendix~\ref{sec:sk_at_line}). On the algorithmic side, \cite{BandeiraElAlaouiRodder26} prove polynomial-time mixing of the Glauber dynamics in the SK model at sufficiently high field strength, by exploiting replica overlap and sparse operator norm bounds. We quantitatively improve upon the field strength tolerance of \cite{BandeiraElAlaouiRodder26} to within constant factors of the AT line, or within $1 + o(1)$ if granted the signs of the mean vector.

\textbf{Other applications of sparsity in sampling.} Recent works by \cite{BandeiraElAlaouiRodder26, DaviesLSS26} both use sparse operator norm bounds to derive mixing times for sampling from Ising models, related to the analytical core of our work (e.g., Lemma~\ref{lem:comparison_mixednorm}). Theorem 1 of \cite{BandeiraElAlaouiRodder26} shows that if the sparse operator norm is bounded at a sparsity level $k = \Theta(d)$, then the Glauber dynamics mixes in polynomial time under a sufficiently large external field. In a similar spirit, Section 7 of \cite{DaviesLSS26} uses sparse operator norm bounds to prove rapid mixing of the SK model on small Hamming balls, but only tolerates $\beta < \half$, as opposed to the arbitrary $\beta > 0$ handled by our Theorem~\ref{thm:informal-trickledown}.

More generally, a recurring theme in high-dimensional sampling is that sparsity and fixed-cardinality
constraints can influence algorithmic landscapes. 
Recent work on discrete sampling has obtained sharp thresholds and improved mixing guarantees for fixed-size
structures such as independent sets and matchings. In particular, \cite{DP21} identified the
computational threshold for approximately counting and sampling independent sets of a given
size in bounded-degree graphs, \cite{JMPV23} proved optimal \(O(k\log n)\) mixing of the
down-up walk for independent sets of size \(k\), and
\cite{JM24} established polynomial-time mixing of the down-up walk for matchings of size $k$. Recent progress on Ising models has also extended beyond dense mean-field settings: for example, \cite{LiuMRW24} proves near-linear time mixing of the Glauber dynamics for sparse random Ising models, including the Viana--Bray spin glass, and also treats certain interaction matrices arising from stochastic block models. 

\textbf{Bayesian sparse linear regression.} Spike-and-slab priors and their induced posteriors are classical tools for Bayesian variable selection \cite{MitchellB88,GeorgeM93}. A
large statistical literature studies posterior contraction and uncertainty quantification for sparse
priors, including spike-and-slab formulations and closely related continuous relaxations
\cite{CSHVdV15,Roc18}. 
Notably, in the moderate signal-to-noise ratio (SNR) regime, the support posterior measure is known to be highly multimodal, which poses an algorithmic challenge \cite{CSHVdV15}. In comparison, the modes collapse at a high SNR, and the posterior nearly reduces to the prior under a low SNR.

On the computational side, many works target posterior modes or
tractable approximations, including expectation maximization, Lasso-based procedures, and variational Bayes
approximations \cite{RG14,RG18,RS22,MS22}. While these works demonstrate the empirical performance of their algorithms, they do
not yield end-to-end algorithmic guarantees for the sampling problem we consider. Among provable posterior
samplers, \cite{YangWJ16} analyzes a Metropolis-Hastings chain under a truncated sparsity prior, which applies only under high or low signal-to-noise ratio regimes (see discussion after their Eq.\ (10)), while \cite{MontanariW26} gives a
polynomial-time sampler via measure decomposition when the number of measurements $n$ grows at
least linearly with the ambient dimension $d$. A complementary diffusion-based approach for linear inverse
problems was recently developed in \cite{BrunaH24} (and analyzed in continuous time), though their framework does not seem to directly apply to spike-and-slab posterior sampling at $n \ll d$.
 The closest prior work is \cite{KumarSTZ25}, which
gave the first provable spike-and-slab posterior samplers that apply at arbitrary SNRs while
allowing $n$ to remain sublinear in \(d\). Our result quadratically improves upon the $k$ dependence of \cite{KumarSTZ25}.

\section{Preliminaries}
\label{sec:prelim}

In this section we develop preliminaries for the rest of the paper. Section~\ref{ssec:notation} provides notation used throughout. Section~\ref{ssec:mcmc} gives basic notation and facts about Markov chains used in our analysis. Section~\ref{ssec:models} introduces the main statistical models we consider for our applications.

\subsection{Notation}\label{ssec:notation}

\textbf{General notation.} We use $\lesssim$, $\gtrsim$, and $\approx$ in informal exposition only to suppress polylogarithmic factors in problem parameters; formal statements have all dependences explicitly stated.

For $n \in \N$ we let $[n] \defeq \{i \in \N: i \le n\}$.
We reserve uppercase and lowercase boldface for matrices and vectors respectively. We use $\1_d$ and $\0_d$ to denote the all-ones and all-zeroes vectors in $\R^d$, $\id_d$ to denote the identity in $\R^d$, and $\0_{m \times n}$ is the $m \times n$ all-zeroes matrix. We denote the entrywise (Hadamard) product of equal-length vectors $\va$, $\vb$ by $\va \circ \vb$. $\Sym^{d \times d}$ and $\PSD^{d \times d}$ respectively denote the symmetric and positive semidefinite $d \times d$ matrices, where $\preceq$ is the Loewner partial order. We use $\nnz$ to denote the number of nonzero entries of a vector or matrix, and $\supp$ denotes the corresponding index set. $\ve_i$ denotes the $i^{\text{th}}$ standard basis vector, and we define the row and column selectors $\mm_{:i} \defeq \mm \ve_i$, $\mm_{i:} \defeq \mm^\top \ve_i$. For $\mm \in \R^{m \times n}$ and $(S, T) \subseteq [m] \times [n]$, $\mm_{S \times T}$ means the appropriate submatrix; our convention is to transpose before indexing, so $\mm^\top_{T \times S} = (\mm_{S \times T})^\top$.

For $1 \le p \le \infty$, $\norm{\cdot}_p$ denotes the vector $\ell_p$ norm, and for $1 \le p, q \le \infty$ and a matrix $\mm$, the associated operator norm is $\norm{\mm}_{p \to q} \defeq \max_{\vv \in \R^d: \norm{\vv}_p \le 1} \norm{\mm\vv}_{q}$. For any $\mm \in \PSD^{d \times d}$ we define $\norm{\vv}_{\mm}^2 \defeq \vv^\top \mm \vv$. We use $\normf{\cdot}$ and $\normop{\cdot}$ to denote the Frobenius and $(2 \to 2)$ operator norms. For $\mm \in \Sym^{d \times d}$ we let $\vlam(\mm)$ be its eigenvalues sorted so $\vlam_1(\mm) \ge \ldots \ge \vlam_d(\mm)$; we define $\vsig$ to similarly return the sorted singular values of its input. We define $\omega < 2.373$ \cite{AlmanDWXXZ25} so that multiplying, inverting, and eigendecomposing $d \times d$ matrices takes $O(d^\omega)$ time \cite{Strassen69, PanC99}.

We frequently use the following ``restricted'' or ``sparse'' quantities to parameterize our results: for $k \in [d]$, the top-$k$ norm of a vector $\vv \in \R^d$ and matrix $\mm \in \Sym^{d \times d}$ are
\begin{equation}\label{eq:sparse_norm_def}\begin{gathered}\norm{\vv}_{k, p} \defeq \sup_{\substack{S \subseteq [d]: |S| = k}} \norm{\vv_S}_p\text{ for all } p \ge 1,\\ \normkop{\mm}{k} \defeq \sup_{\substack{\vv \in \R^d \\ \norm{\vv}_2 \le 1,\nnz(\vv) \le k}} \Abs{\vv^\top \mm \vv} = \sup_{S \subseteq [d]: |S| = k} \normop{\mm_{S \times S}}. \end{gathered}\end{equation}
For two subsets $S, T$ with the same size of the same universe $\calU$, we use $\dham(S, T) \in \N \cup \{0 \}$ to mean the Hamming distance between $S$, $T$, which we define as half the number of differing elements.

\textbf{Probability.} For a state space $\Omega$, we let $\calP(\Omega)$ denote all probability measures over $\Omega$. For an event $\calE \subseteq \Omega$, we let $\ind(\calE)$ denote the corresponding $0$-$1$ indicator random variable, and $\mu(\calE)$ denote the probability of the event.
 We let $\E[\cdot]$ and $\Var[\cdot]$ denote the expectation and variance. For jointly distributed scalar random variables $X, Y$, $\sCov(X, Y) \defeq \E[XY] - \E[X]\E[Y]$ denotes their covariance; if the random variables are instead vector-valued, $\Cov(\vX, \vY)$ is a matrix of appropriate dimension. We abbreviate $\Cov(\vX) \defeq \Cov(\vX, \vX)$. We frequently use the following distances between $\mu, \nu \in \calP(\Omega)$, where $\dd \omega$ denotes a counting measure if $\Omega$ is discrete:
\begin{align*}
\TV{\mu, \nu} &\defeq \half \int_\Omega \Abs{\mu(\omega) - \nu(\omega)} \dd \omega = \sup_{\calE \subseteq \Omega} \mu(\calE) - \nu(\calE),\\
\CS{\mu}{\nu} &\defeq \int_\Omega \Par{\frac{\mu(\omega)}{\nu(\omega)} - 1}^2\nu(\omega) \dd \omega.
\end{align*}

We denote the set of \emph{couplings} of $\mu \in \calP(\Omega)$, $\nu \in \calP(\Omega')$ (joint measures on $\Omega \times \Omega'$ whose marginals agree with $\mu, \nu$), by $\Gamma(\mu, \nu)$. It is standard that when $\Omega = \Omega'$, an alternative definition of the total variation distance is $\TV{\mu, \nu} = \inf_{\gamma \in \Gamma(\mu, \nu)}\Pr_{(\omega, \omega') \sim \gamma}[\omega \neq \omega']$ (Proposition 4.7, \cite{LevinPW09}).

We denote the multivariate normal distribution with specified mean and covariance by $\Nor(\vmu, \msig)$. We let $\Bern(p)$ be the distribution on $\{0, 1\}$ with $\E_{X \sim \Bern(p)}[X] = p$. We use $\bigotimes_{i \in [n]} \pi_i$ to denote a product measure with specified marginals, and we use $\delta_\omega$ to mean a Dirac measure at $\omega$. When $Z \in \R_{\ge 0}^{\calI}$ are indexed by a set $\calI$, we let $\MN(Z)$ denote a draw $i \in \calI$ where $\Law(i) \propto Z$.

\subsection{Markov chains}\label{ssec:mcmc}

Let $\calT = \{\calT_{\omega}\}_{\omega \in \Omega}$ be a set of transition distributions for a Markov chain on a state space $\Omega$. For an arbitrary measure $\mu \in \calP(\Omega)$ we let $\calT\mu$ denote the marginal law of $\omega'$ where $\omega \sim \mu$ and $\omega' \sim \calT_\omega$.
We say that $\pi$ is a stationary measure for $\calT$ if $\calT \pi = \pi$, i.e.,
\[\int \calT_{\omega'}(\omega) \pi(\omega') \dd \omega' = \pi(\omega), \text{ for all } \omega \in \Omega.\]
We call a Markov chain $\calT$ reversible if it has stationary measure $\pi$, and
\[\pi(\omega) \calT_\omega(\omega') = \pi(\omega')\calT_{\omega'}(\omega), \text{ for all } (\omega, \omega') \in \Omega \times \Omega.\]

We often consider sampling from discrete measures over the hypercube with \emph{fixed-magnetization} or \emph{bounded-magnetization}. For shorthand we always let $\calX \defeq \{\pm 1\}$, and $\calX^d_k \defeq \{\vx \in \calX^d: \sum_{i \in [d]} \vx_i = -d + 2k\}$, where the quantity $\sum_{i \in [d]} \vx_i = \1_d^\top \vx$ is the magnetization. In other words, $\calX^d_k$ is the slice of the hypercube $\calX^d$ corresponding to elements with exactly $k$ copies of $1$ and $d- k$ copies of $-1$. For the analogous \emph{bounded-magnetization} problem, we similarly define $\calX^d_{\le k} \defeq \bigcup_{j = 0}^k \calX^d_j$.

It is often helpful to associate elements of $\calX^d_{ k}$ with subsets of $[d]$ with size $k$ (the locations of the $1$s). We define $\set(\vx) \subseteq [d]$ for $\vx \in \calX^d$ and $\vvec(S) \in \calX^d$ for $S \subseteq [d]$ in the natural way.

We frequently consider the ($k$-)\emph{down-up walk} Markov chain. This Markov chain can be defined for any measure $\pi$ supported on $\calX^d_k$ where $k \in [d]$. We denote its transitions by $\TDU{\pi}$ and $k$ is inferred from the definition of $\pi$. The transition $\TDU{\pi}_{\vx}$ is defined as follows for $\vx \in \calX^d_k$, where $S \defeq \set(\vx)$.
\begin{enumerate}
    \item (Down step.) A uniformly random $T \subseteq S$ with $|T| = k - 1$ is chosen.
    \item (Up step.) A set $S' \supseteq T$ with $|S'| = k$ is sampled $\propto \pi(S')$, and we step to $\vx' = \vvec(S')$.
\end{enumerate}

It is standard that $\TDU{\pi}$ is reversible with stationary measure $\pi$ (Definitions 6 and 7, and Corollary 11, \cite{KaufmanO20}). More formal pseudocode is provided in Algorithm~\ref{alg:DU_walk}.

\textbf{Spectral theory.} Let $\calT$ be a reversible Markov chain, and have stationary measure $\pi \in \calP(\Omega)$. We define the associated \emph{Dirichlet form} by its action on two functions $f, g: \Omega \to \R$:\footnote{For brevity we will omit discussion of integrability issues throughout, but always restrict the functions under consideration to a class where the integration makes sense.}
\[\calE_{\calT}(f, g) \defeq \int f(\omega) g(\omega)\pi(\omega) \dd \omega - \iint f(\omega) g(\omega') \pi(\omega) \calT_\omega(\omega') \dd\omega \dd \omega', \]
and note that the following identity holds:
\begin{equation}\label{eq:variance_form_dir}\calE_{\calT}(f, f) = \half \iint (f(\omega) - f(\omega'))^2 \pi(\omega)\calT_{\omega}(\omega') \dd \omega \dd \omega'.\end{equation}
For $\lam \in (0, 2]$, we say that $\calT$ satisfies a $\lam$-\emph{Poincar\'e inequality} if, for all $f: \Omega \to \R$,
\[\Var_\pi[f] \le \frac 1 \lam \calE_{\calT}(f, f).\]
Bounding the Poincar\'e constant $\lam$ results in an estimate of the mixing time of the Markov chain through the comparison inequality $\CS{\mu}{\pi} \ge 4\TV{\mu, \pi}^2$, and the following fact.
\begin{lemma}[Chapters 12 and 13, \cite{LevinPW09}]\label{lem:variance_decay}
Let $\calT$ be a reversible Markov chain with stationary measure $\pi$, and let $\lazy(\calT)$ denote the Markov chain that in each step transitions according to $\calT$ with probability $\half$, and otherwise does not transition. Then if $\calT$ satisfies a $\lam$-Poincar\'e inequality, if we let $\pi_t$ be the law of an iterate taking $t \in \N$ steps of $\lazy(\calT)$ starting from $\pi_0$, we have
\[\chi^2(\pi_t \| \pi) \le \Par{1 - \frac \lam 2}^t \chi^2(\pi_0 \| \pi).\]
\end{lemma}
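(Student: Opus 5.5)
The plan is the standard spectral argument for the lazy chain. Write $\mpp$ for the transition operator of $\calT$, viewed as a self-adjoint operator on $L^2(\pi)$ with inner product $\inprod{f}{g}_\pi \defeq \int f g \,\dd\pi$; this self-adjointness is exactly reversibility. The lazy chain has operator $\mpp_L \defeq \half(\id + \mpp)$. Since $\mpp$ is a Markov operator, its spectrum lies in $[-1,1]$, so the spectrum of $\mpp_L$ lies in $[0,1]$ and $\mpp_L$ is positive semidefinite on $L^2(\pi)$.

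\textbf{Reduction to the density ratio.} Let $h_t \defeq \pi_t/\pi$. By reversibility, $h_{t+1} = \mpp_L h_t$, because
\[\pi_{t+1}(\omega) = \int \pi_t(\omega')\calT^L_{\omega'}(\omega)\,\dd\omega' = \pi(\omega)\int h_t(\omega')\calT^L_{\omega}(\omega')\,\dd\omega'.\]
Set $g_t \defeq h_t - 1$. Then $\chi^2(\pi_t\|\pi) = \norm{g_t}_\pi^2$, $\E_\pi[g_t] = 0$, and $g_{t+1} = \mpp_L g_t$. It therefore suffices to show $\norm{\mpp_L g}_\pi^2 \le (1-\frac\lam2)\norm{g}_\pi^2$ for every mean-zero $g$, and then iterate.

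\textbf{The contraction step.} On mean-zero functions, the Poincar\'e inequality reads $\calE_\calT(g,g) = \inprod{g}{(\id-\mpp)g}_\pi \ge \lam\norm{g}_\pi^2$. The Dirichlet form of the lazy chain is half of this, so $\inprod{g}{(\id-\mpp_L)g}_\pi \ge \frac\lam2\norm{g}_\pi^2$. Hence every spectral value $\mu$ of $\mpp_L$ on the mean-zero subspace (which is invariant under $\mpp_L$, since $\pi$ is stationary) satisfies $\mu \le 1-\frac\lam2$. Because $\mpp_L \succeq 0$, we also have $\mu \ge 0$. It follows that $\mu^2 \le \mu \le 1-\frac\lam2$, and so
\[\norm{\mpp_L g}_\pi^2 = \inprod{g}{\mpp_L^2 g}_\pi \le \inprod{g}{\mpp_L g}_\pi \le \Par{1-\frac\lam2}\norm{g}_\pi^2.\]
Iterating this bound gives the claim.

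\textbf{Main obstacle.} The only delicate point is the role of laziness. Without it, negative eigenvalues of $\mpp$ near $-1$ would prevent the squared contraction, and laziness is exactly what guarantees positivity. Combining laziness with $\mu^2 \le \mu$ avoids any spectral theorem subtleties in infinite state spaces: only the operator inequality $\mpp_L^2 \preceq \mpp_L$ is needed, and this follows from $0 \preceq \mpp_L \preceq \id$.
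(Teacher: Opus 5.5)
Your proof is correct and is the standard spectral argument that the paper relies on. The paper gives no proof of its own, only the citation to Chapters 12--13 of \cite{LevinPW09}, and you have written that argument in operator form: the lazy-chain Poincar\'e bound gives $\inprod{g}{\mpp_L g}_\pi \le (1-\frac\lam2)\norm{g}_\pi^2$ on mean-zero $g$, and this is combined with $\mpp_L^2 \preceq \mpp_L$ in place of an explicit eigendecomposition. (The latter inequality is even more elementary than you indicate: $\mpp_L - \mpp_L^2 = \frac14(\id - \mpp^2) \succeq 0$ because $\mpp$ is a self-adjoint contraction on $L^2(\pi)$.)
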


\textbf{Path coupling.} In Section~\ref{sec:dobrushin}, we develop a new tool for bounding mixing times on fixed-magnetization or bounded-magnetization measures. This tool is based on \emph{path coupling}, which we introduce here.

\begin{lemma}[Path coupling]
    \label{lem:mixing_time_path coupling}
Let $\calT$ be a Markov chain with stationary measure $\pi \in \calP(\Omega)$.
Let $\met:\Omega\times \Omega\to \mathbb{N} \cup \{0\}$ be an integer-valued metric.
Assume that for any $\omega ,\omega'\in \Omega$ with $\met(\omega,\omega')=k$, there exists a path
$\omega=\omega_0,\omega_1,\ldots,\omega_k=\omega'$ such that for every $i\in[k]$,
\begin{equation}\label{eq:path_conds}
\met(\omega_{i-1},\omega_i)=1
\quad\text{and}\quad
\calT_{\omega_{i-1}}(\omega_i)>0.
\end{equation}
Assume furthermore that there exists $\alpha \in (0, 1)$ such that for any $(\omega,\omega')\in \Omega \times \Omega$ with $\met(\omega,\omega')=1$, there exists $\gamma \in \Gamma(\calT_\omega,\calT_{\omega'})$ with
$
\mathbb{E}_{(\psi, \psi') \sim \gamma}[\met(\psi, \psi')]\le 1-\alpha$.
Then for any $\pi_0 \in \calP(\Omega)$ and $\eps \in (0, \half)$,
\[\TV{\calT^T \pi_0, \pi} \le \eps, \text{ for } T \ge \frac 1 \alpha\Par{\log\frac{\mathrm{diam}(\Omega)}{\eps}},\text{ where } \mathrm{diam}(\Omega):=\sup_{\omega,\omega'\in\Omega \times \Omega} \met(\omega,\omega').\]
\end{lemma}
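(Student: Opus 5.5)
The plan is to run the standard path coupling argument: extend the one-step contraction on adjacent pairs to all pairs using the path hypothesis, iterate it, and then turn the bound on expected distance into a total variation bound.

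\emph{Step 1: contraction for all pairs.} For arbitrary $\omega, \omega'$ with $\met(\omega,\omega') = k$, take the path $\omega_0, \ldots, \omega_k$ given by \eqref{eq:path_conds}. For each adjacent pair $(\omega_{i-1},\omega_i)$ we have a coupling $\gamma_i \in \Gamma(\calT_{\omega_{i-1}}, \calT_{\omega_i})$ with expected distance at most $1-\alpha$. Glue these couplings sequentially (gluing lemma; conditional laws exist since everything is discrete or standard Borel) to get a joint law of $(\psi_0,\ldots,\psi_k)$ with $\psi_i \sim \calT_{\omega_i}$. The pair $(\psi_0,\psi_k)$ is then a coupling of $\calT_\omega$ and $\calT_{\omega'}$. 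By the triangle inequality,
\[\E[\met(\psi_0,\psi_k)] \le \sum_{i\in[k]} \E[\met(\psi_{i-1},\psi_i)] \le (1-\alpha)k = (1-\alpha)\met(\omega,\omega').\]
The positivity condition $\calT_{\omega_{i-1}}(\omega_i)>0$ is not needed for this step; it only guarantees that the path lies in the relevant state space. Since $\met$ is a metric taking values $\ge 1$ off the diagonal, any path from $\omega$ to $\omega'$ has length at least $\met(\omega,\omega')$. So a path of exactly length $k$ with unit steps is precisely what makes the estimate tight.

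\emph{Step 2: iterate.} Let $\pi_0$ be the starting law and let $\omega'_0 \sim \pi$. Run two chains from an arbitrary initial coupling of $(\pi_0,\pi)$. At each step, couple the transitions using the coupling from Step 1, chosen measurably in the current pair of states. Stationarity of $\pi$ keeps the second chain distributed as $\pi$, while the first chain has law $\calT^t\pi_0$. Conditioning and applying Step 1 at each step gives
\[\E[\met(\omega_T,\omega'_T)] \le (1-\alpha)^T \E[\met(\omega_0,\omega'_0)] \le (1-\alpha)^T \mathrm{diam}(\Omega).\]

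\emph{Step 3: convert to total variation.} Because $\met$ is integer-valued and $\met(x,y)\ge 1$ whenever $x\ne y$, Markov's inequality gives
\[\TV{\calT^T\pi_0,\pi} \le \Pr[\omega_T \ne \omega'_T] \le \E[\met(\omega_T,\omega'_T)] \le e^{-\alpha T}\mathrm{diam}(\Omega),\]
which is at most $\eps$ once $T \ge \frac1\alpha\log\frac{\mathrm{diam}(\Omega)}{\eps}$.

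The only delicate point is the measurable gluing and selection of couplings in Steps 1 and 2. In the discrete settings used in this paper it is immediate, since each coupling is just a finite table. I expect no real obstacle beyond that bookkeeping.
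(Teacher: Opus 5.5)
Your proposal is correct and follows the paper's proof step for step. You glue the adjacent-pair couplings along the path and use the triangle inequality to get $(1-\alpha)$-contraction for every pair; you then iterate from a coupling of $(\pi_0,\pi)$ (the paper uses the independent one) and finish with $\ind(\omega_T\neq\omega'_T)\le \met(\omega_T,\omega'_T)$ and the coupling characterization of total variation. Your side remarks are also accurate: the positivity condition in \eqref{eq:path_conds} is never used, and measurable selection of couplings is trivial in the discrete setting.
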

\begin{proof}
Fix any pair of initial states $\omega,\omega'\in\Omega$ and let $k:=\met(\omega,\omega')$.
Choose a path $\omega=\omega_0,\omega_1,\ldots,\omega_k=\omega'$ meeting the conditions \eqref{eq:path_conds}. For all $i \in [k]$ let $\gamma_i \in \Gamma(\calT_{\omega_{i - 1}}, \calT_{\omega_i})$ satisfy
\[\E_{(\psi, \psi') \sim \gamma_i}\Brack{\met(\psi, \psi')} \le 1 - \alpha.\]
From these couplings we can define a joint measure $\mu$ over the product space of $\psi_i \sim \calT_{\omega_i}$ for all $0 \le i \le k$, such that each marginal on adjacent pairs $(\omega_{i - 1}, \omega_i)$ agrees with $\gamma_i$. This construction is standard and follows from the ``gluing lemma'' on couplings: see e.g., Lemma 14.3, \cite{LevinPW09}.

Now draw $(\psi_0, \ldots, \psi_k) \sim \mu$. We have
\begin{align*}
\E_\mu\Brack{\met(\psi_0, \psi_k)} \le \E_\mu\Brack{\sum_{i \in [k]} \met(\psi_{i - 1}, \psi_i)} = \sum_{i \in [k]} \E_{\gamma_i}\Brack{\met(\psi_{i - 1}, \psi_i)} \le (1 - \alpha)\met(\omega, \omega').
\end{align*}
This gives a one-step coupling showing a contraction in $\met$. Thus, after $T \ge \frac 1 \alpha \log \frac{\textrm{diam}(\Omega)}{\eps}$ steps, if we independently draw $(\omega_0, \omega'_0) \sim \pi_0 \times \pi$, and then iterate the above construction to produce coupled iterates $(\omega_t, \omega'_t)$ for all $t \in [T]$ such that $\omega_t \sim \calT_{\omega_{t - 1}}$ and $\omega'_t \sim \calT_{\omega'_{t - 1}}$, we have
\[\E\Brack{\met(\omega_T, \omega'_T)} \le (1 - \alpha)^T \E\Brack{\met(\omega_0, \omega'_0)} \le (1 - \alpha)^T \textrm{diam}(\Omega) \le \eps.\]

Because $\met$ takes values in $\mathbb{N} \cup \{0\}$, and we have $\met(\omega,\omega')\ge 1$ whenever $\omega\neq \omega'$,
\[
\ind(\omega_T\neq \omega_T') \le\ \met(\omega_T,\omega_T').
\]
The conclusion follows from the coupling definition of total variation, as $\omega_T \sim \calT^T \pi_0$, $\omega'_T \sim \calT^T \pi$.
\end{proof}

\subsection{Statistical models}\label{ssec:models}

We describe the statistical models that induce the main structured distributions we consider.

\textbf{Ising model.} An \emph{Ising model} is specified by an \emph{interaction matrix} $\mj \in \Sym^{d \times d}$ and optionally, an \emph{external field} $\vh \in \R^d$. It induces a \emph{Gibbs measure} $\pi$ over $\calX^d$, with an additional parameter $\beta > 0$ governing the \emph{inverse temperature}:
\begin{equation}\label{eq:ising_def}\pi(\vx) \propto \exp\Par{\beta\Par{\half \vx^\top \mj \vx + \vh^\top \vx}} \cdot \ind(\vx \in \calX^d).\end{equation}
We often refer to \emph{fixed-magnetization} or \emph{bounded-magnetization} Ising models, where the magnetization level $k \in [d]$ is clear from context. Under these models, our goal is to sample from the Gibbs measure $\pi$ in \eqref{eq:ising_def}, further conditioned on $\vx \in \calX^d_k$ or $\vx \in \calX^d_{\le k}$ respectively.

The literature on Ising models considers a variety of statistical models for the interaction matrix. We summarize a few standard parameterizations here.

\begin{model}[Sherrington-Kirkpatrick model, \cite{SherringtonK75}]\label{model:sk}
In the \emph{Sherrington-Kirkpatrick (SK) model}, $\mj$ is drawn from the \emph{Gaussian orthogonal ensemble} $\textup{GOE}(d)$: $\mj \in \Sym^{d \times d}$ has $\mj_{ij} \simiid \Nor(0, \frac 1 d)$ for all $(i, j) \in [d] \times [d]$ with $i < j$, and $\mj_{ii} \simiid \Nor(0, \frac 2 d)$ for all $i \in [d]$.\footnote{Different sources parameterize the diagonal of $\mj$ differently in the SK model, but measures on $\calX^d$ are invariant to the diagonal of $\mj$, as $\vx_i^2 = 1$ for all $i \in [d]$, $\vx_i \in \calX$, so $\sum_{i \in [d]} \mj_{ii}\vx_i^2$ is constant.}
\end{model}

\begin{model}[Gaussian Hopfield model, \cite{BovierEN99, BarraG08, HamzeRPBK20}]\label{model:hopfield}
In the \emph{Gaussian Hopfield model},\footnote{Hopfield models \cite{Hopfield82} describe Ising models where $\mj$ is a Gram matrix, and have been enormously influential in machine learning \cite{AmitGS85, AmitGS87}. We follow this naming convention for the case when $\mj$ is Wishart.} $\mj$ is drawn from the \emph{Wishart ensemble} with $n$ degrees of freedom: $\mj \in \Sym^{d \times d}$ has $\mj = \frac 1 n \mg^\top \mg$, where $\mg \in \R^{n \times d}$ has $\mg_{ij} \simiid \Nor(0, 1)$ for all $(i, j) \in [n] \times [d]$.
\end{model}

Our algorithms' guarantees will depend on an appropriate norm of $\mj$ (and sometimes, $\vh$). Here we present some standard estimates for the random matrix ensembles in Models~\ref{model:sk},~\ref{model:hopfield}.

\begin{fact}[Section 2.5, \cite{Vershynin18} and Theorem 2.3.5, \cite{AndersonGZ10}]\label{fact:sk_facts}
For any $\delta \in (0, \half)$, the following hold under Model~\ref{model:sk} for an appropriate constant $C > 0$, simultaneously with probability $\ge 1 - \delta$.
\begin{enumerate}
    \item $\max_{(i,j) \in [d] \times [d]} |\mj_{ij}| \le C \sqrt{\log(d/\delta)/d}$. \label{item:sk_entrywise}
    \item $\normsop{\mj} \le 2 + C \sqrt{\log(1/\delta)/d}$. \label{item:sk_opnorm}
\end{enumerate}
\end{fact}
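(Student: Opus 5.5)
The statement is Fact~\ref{fact:sk_facts}. It has two parts: a uniform entrywise bound and an operator norm bound for $\mj \sim \textup{GOE}(d)$. I will prove each part with failure probability at most $\delta/2$ and then take a union bound.

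\textbf{Entrywise bound.} Each off-diagonal entry satisfies $\mj_{ij} \sim \Nor(0, \frac 1 d)$, and each diagonal entry satisfies $\mj_{ii} \sim \Nor(0, \frac 2 d)$. The standard Gaussian tail gives $\Pr[|g| \ge t\sigma] \le 2\exp(-t^2/2)$ for $g \sim \Nor(0,\sigma^2)$. With $\sigma^2 \le \frac 2 d$ and at most $d^2$ distinct entries, a union bound gives
\[\Pr\Brack{\max_{i,j}|\mj_{ij}| \ge \sqrt{\tfrac 2 d}\, t} \le 2d^2\exp(-t^2/2).\]
Choosing $t = \sqrt{2\log(4d^2/\delta)} \le C'\sqrt{\log(d/\delta)}$ makes the right-hand side at most $\delta/2$. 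This uses $\delta < \half$ and $d \ge 1$ to absorb constants.

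\textbf{Operator norm bound.} I will combine two ingredients.
\begin{enumerate}
\item The expectation bound $\E\normsop{\mj} \le 2 + O(d^{-1/2})$, or even the cruder $\E\normsop{\mj} \le 2$ up to lower-order terms. This is the content cited from Theorem 2.3.5 of \cite{AndersonGZ10}, or follows from Gordon/Slepian comparison for GOE.
\item Gaussian concentration. The map from the independent standard Gaussians $(g_{ij})_{i\le j}$ to $\normsop{\mj}$ is Lipschitz with constant $O(d^{-1/2})$. Indeed, $\normsop{\cdot} \le \normf{\cdot}$, and writing $\mj_{ij} = g_{ij}/\sqrt d$ for $i<j$ and $\mj_{ii} = \sqrt{2/d}\, g_{ii}$, we get
\[\normf{\mj - \mj'}^2 \le \tfrac{2}{d}\sum_{i\le j}(g_{ij}-g'_{ij})^2.\]
So the Lipschitz constant is at most $\sqrt{2/d}$. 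By Gaussian concentration for Lipschitz functions (Section 2.5, \cite{Vershynin18} / Tsirelson--Ibragimov--Sudakov),
\[\Pr\Brack{\normsop{\mj} \ge \E\normsop{\mj} + s} \le \exp\Par{-\tfrac{ds^2}{4}}.\]
\end{enumerate}
Setting $s = 2\sqrt{\log(2/\delta)/d}$ gives failure probability at most $\delta/2$. Combining with the expectation bound yields $\normsop{\mj} \le 2 + C\sqrt{\log(1/\delta)/d}$. The $O(d^{-1/2})$ slack in the expectation is absorbed into the constant because $\log(1/\delta) \ge \log 2$.

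\textbf{Main obstacle.} The only nonroutine input is the sharp constant $2$ in $\E\normsop{\mj}$, which is exactly what the cited random matrix result supplies. If I had to prove it independently, I would use Slepian's comparison of $\sup_{\norm{\vu}_2=1}\vu^\top\mj\vu$ against $\frac{2}{\sqrt d}\langle \vg, \vu\rangle$ with $\vg \sim \Nor(\0_d,\id_d)$, which yields $\E\vlam_1(\mj) \le 2$. Applying the same argument to $-\mj$ controls $\vlam_d$; the resulting factor of $2$ in the union bound is absorbed into the constants.
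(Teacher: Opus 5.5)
Your proof is correct and follows the standard argument that the paper's citations stand for. For Item~1 it is a Gaussian tail bound plus a union bound over the $O(d^2)$ entries, and for Item~2 it is a bound on the mean of the extreme eigenvalues combined with Gaussian concentration for the $\sqrt{2/d}$-Lipschitz map to $\normsop{\mj}$. One small correction: $\vu^\top \mj \vu$ and $\frac{2}{\sqrt d}\inprod{\mathbf{g}}{\vu}$ have different variances ($\frac 2 d$ versus $\frac 4 d$), so the comparison that gives $\E \vlam_1(\mj) \le 2$ is the Sudakov--Fernique inequality rather than Slepian's lemma in its equal-variance form. Sudakov--Fernique only needs the increment bound $\E(X_{\vu} - X_{\vv})^2 \le \frac 4 d \norm{\vu - \vv}_2^2$, which you verified.
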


\begin{fact}[Lemma 6.26, \cite{Wainwright19} and Exercise 4.7.3, \cite{Vershynin18}]\label{fact:ghop_facts}
For any $\delta, \eps \in (0, \half)$, the following hold under Model~\ref{model:hopfield} for an appropriate constant $C > 0$, simultaneously with probability $\ge 1 - \delta$.
\begin{enumerate}
    \item If $n \ge C\log(\frac d \delta) \cdot \frac 1 {\eps^2}$, $\max_{(i, j) \in [d] \times [d]} |\mj_{ij} - \ind(i = j)| \le \eps$.
    \item If $n \ge C (k\log(\frac {ed} k) + \log (\frac 1 \delta)) \cdot \frac 1 {\eps^2}$ for $k \in [d]$, $\normkop{\mj}{k} \le 1 + \eps$.
\end{enumerate}
\end{fact}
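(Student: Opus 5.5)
Both items are standard concentration bounds for Gaussian matrices. In each case I would prove a bound for one fixed index pair or one fixed support, and then take a union bound. The only point needing care is the exact union bound in the second item.

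\emph{Entrywise bound.} Write $\mj_{ij} = \frac 1 n \sum_{\ell \in [n]} \mg_{\ell i}\mg_{\ell j}$.
\begin{itemize}
\item For $i = j$, $n\mj_{ii}$ is a $\chi^2_n$ variable. Each summand $\mg_{\ell i}^2 - 1$ is centered and sub-exponential with $O(1)$ parameters.
\item For $i \neq j$, each summand $\mg_{\ell i}\mg_{\ell j}$ is a product of independent standard Gaussians. It is centered and sub-exponential with $O(1)$ parameters, since a product of two sub-Gaussians is sub-exponential (Section 2.7, \cite{Vershynin18}).
\end{itemize}
Bernstein's inequality then gives, for every fixed pair and every $\delta' \in (0,1)$, with probability $\ge 1-\delta'$,
\[
\Abs{\mj_{ij} - \ind(i=j)} \le C_0\Par{\sqrt{\tfrac{\log(2/\delta')}{n}} + \tfrac{\log(2/\delta')}{n}}.
\]
I would take $\delta' = \delta/(2d^2)$ and union bound over all $d^2$ pairs, which costs failure probability $\delta/2$. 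Under $n \ge C\log(\frac d\delta)/\eps^2$ with $C$ large, the square-root term is at most $\eps/2$. Because $\eps < \half$, the linear term is at most the square of that quantity up to constants, so it is also at most $\eps/2$.

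\emph{Sparse operator norm bound.} Since $\mj \succeq \mzero$, the definition \eqref{eq:sparse_norm_def} gives
\[
\normkop{\mj}{k} = \max_{|S| = k} \normop{\mj_{S\times S}} = \max_{|S| = k} \tfrac 1 n \normop{\mg_{:S}}^2 .
\]
For a fixed $S$, the matrix $\mg_{:S}$ is an $n \times k$ Gaussian matrix. By the two-sided singular value bound (Theorem 4.6.1 / Exercise 4.7.3, \cite{Vershynin18}), with probability $\ge 1 - 2e^{-t^2}$,
\[
\vsig_1(\mg_{:S}) \le \sqrt n + C_1(\sqrt k + t).
\]
Squaring and dividing by $n$ gives $\normop{\mj_{S\times S}} \le (1+\eta)^2$ with $\eta = C_1(\sqrt k + t)/\sqrt n$.

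There are at most $\binom dk \le (ed/k)^k$ supports. I would set
\[
t^2 = k\log\Par{\tfrac{ed}k} + \log\Par{\tfrac 2\delta},
\]
so that the union bound over all supports has failure probability at most $\delta/2$. The hypothesis on $n$ then gives $\eta \le \eps/3$ once $C$ is large enough. Since $\eps < \half$, this yields $(1+\eta)^2 \le 1 + 3\eta \le 1+\eps$.

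If the two-sided singular value theorem is not taken as given, I would prove the fixed-$S$ bound directly. Take a $\frac14$-net $\mathcal N_S$ of the unit sphere in $\R^S$, of size at most $9^k$. For each fixed unit $\vv$, $\norm{\mg\vv}_2^2$ is $\chi^2_n$, so Bernstein gives $\frac 1n\norm{\mg\vv}_2^2 \le 1 + C_0\Par{\sqrt{u/n}+u/n}$ with probability $\ge 1 - e^{-u}$. The standard net argument for PSD quadratic forms, $\normop{\mm} \le 2\max_{\vv\in\mathcal N_S}\vv^\top\mm\vv$, loses a factor of $2$ on the whole norm rather than on the deviation. So I would instead use the sharper form: write $\vv = \vv_0 + \vw$ with $\vv_0 \in \mathcal N_S$ and $\norm{\vw}_2 \le \frac14$, and bound $\vv^\top\mm\vv$ by $\vv_0^\top\mm\vv_0$ plus error terms controlled by the $\frac14$-net approximation. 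Equivalently, one can apply the net argument to the centered matrix $\mj_{S\times S} - \id$, which gives deviation $2\max_{\mathcal N_S}|\vv^\top(\mj_{S\times S}-\id)\vv|$ and loses only a constant factor. Either way, the extra entropy $k\log 9$ is absorbed into $k\log(ed/k)$.

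The main obstacle is this bookkeeping: the union bound over $\binom dk$ supports and the net must combine into exactly the stated requirement $n \gtrsim (k\log(\frac{ed}k) + \log\frac1\delta)/\eps^2$, and the deviation must appear as an additive $\eps$ rather than a multiplicative loss. Finally, I would union bound the two items at level $\delta/2$ each, so that both hold simultaneously with probability $\ge 1-\delta$.
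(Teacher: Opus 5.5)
Your proposal is correct and follows the standard argument behind the paper's citations: sub-exponential Bernstein plus a union bound over the $d^2$ entries for the first item, and a fixed-support bound on $\normop{\frac{1}{n}[\mg^\top\mg]_{S\times S} - \id_S}$ (Exercise 4.7.3 of \cite{Vershynin18}, which your singular value bound is equivalent to) plus a union bound over the $\binom{d}{k} \le (ed/k)^k$ supports for the second. One bookkeeping nit: with $t^2 = k\log(\frac{ed}{k}) + \log(\frac{2}{\delta})$ the union bound gives total failure $\delta$ rather than the $\delta/2$ you claim, so take $\log(\frac{4}{\delta})$ instead; also, your net fallback bounds an absolute deviation and so needs the two-sided $\chi^2$ tail rather than only the upper tail you wrote.
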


\begin{remark}\label{rem:subg}
We focus on Gaussian $\mg$ in Model~\ref{model:hopfield} for simplicity, although Fact~\ref{fact:ghop_facts} holds for any $\mj = \frac 1 n \mg^\top \mg$ where the entries of $\mg$ are drawn i.i.d.\ from a $1$-sub-Gaussian distribution (see Section 2.5, \cite{Vershynin18}). Our analyses only rely on the properties of Model~\ref{model:hopfield} in Fact~\ref{fact:ghop_facts}, so they apply to any sub-Gaussian ensemble as well. This captures other common Hopfield model instances, e.g., Rademacher $\mg$ as often considered in the associative memory literature \cite{Little74, PasturF77, Hopfield82}.
\end{remark}

\textbf{Bayesian sparse linear regression.} Our second main application considers Bayesian sparse linear regression, i.e., sampling from the posterior distribution of a sparse linear model. We focus on a canonical parameterization of the problem, induced by the \emph{spike-and-slab prior} \cite{MitchellB88, GeorgeM93} (see also \cite{Chipman96, Geweke96}) and Gaussian measurement noise, a standard formulation recently studied by the sampling algorithms community \cite{KumarSTZ25, MontanariW26}.

\begin{model}[Spike-and-slab posterior sampling]\label{model:sas_basic}
Let $\vq \in (0, 1)^d$ and $\sigma > 0$ be known, and let $\bark \defeq \onenorm{\vq}$. Let $\mx \in \R^{n \times d}$ have entries $\simiid \calN(0, \frac 1 n)$, and suppose that we observe $(\mx, \vy)$ where 
\begin{equation}\label{eq:slr}\vths \sim \pi \defeq \bigotimes_{i \in [d]} \Par{(1 - \vq_i) \delta_0 + \vq_i \Nor(0, 1)},\quad \vxi \sim \Nor(\vzero_n, \sigma^2 \id_n),\quad\vy = \mx \vths + \vxi,\end{equation}
and $\vth^\star$ and $\vxi$ are independent. Our goal is to sample from the posterior $\pi(\cdot \mid \mx, \vy)$.
\end{model}

When designing algorithms for Model~\ref{model:sas_basic}, there is a reparameterization in terms of $\pisupp$ the distribution of $\supp(\vths) \mid \mx, \vy$. Indeed, the main algorithmic challenge is sampling from $\pisupp$. 

\begin{fact}[Lemma 8, \cite{KumarSTZ25}]\label{fact:pisupp}
$\vths \sim \pi(\cdot \mid \mx, \vy)$ in Model~\ref{model:sas_basic} can be equivalently generated as follows.
\begin{enumerate}
    \item First, $S \subseteq [d]$ is sampled from $\pisupp$ where     \begin{equation}\label{eq:pisupp_def}\begin{gathered}\pisupp(S) \propto \Par{\prod_{i \in S} \frac{\vq_i}{1 - \vq_i}}\exp\Par{\half\norm{\vb_S}_{\ma_S^{-1}}^2} \frac 1 {\sqrt{\det \ma_S}}, \\
\ma_S \in \R^{S \times S} \defeq \frac 1 {\sig^2}\Brack{\mxtx}_{S \times S} + \id_S,\quad  \vb_S \in \R^S \defeq \frac 1 {\sig^2}\mx_{S:}^\top \vy.\end{gathered}\end{equation}
    \item Second, $\vths \mid S, \mx, \vy$ is sampled from $\Nor(\ma_S^{-1} \vb_S, \ma_S^{-1})$.
\end{enumerate}
\end{fact}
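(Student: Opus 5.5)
The plan is to marginalize out the Gaussian slab variables by direct Bayesian computation, so that the posterior on supports falls out together with the conditional Gaussian on signals. Write the joint law of $(\vths, \vy)$ given $\mx$ under Model~\ref{model:sas_basic}. The prior decomposes over supports: $\vths$ has support $S$ with probability $\prod_{i \in S} \vq_i \prod_{i \notin S}(1 - \vq_i)$, and conditional on $S$, $\vths_S \sim \Nor(\0_S, \id_S)$ with $\vths_{S^c} = \0$. Because $\vths$ is supported on $S$, we have $\mx \vths = \mx_{:S}\vths_S$, so the likelihood is $\vy \mid \vths \sim \Nor(\mx_{:S}\vths_S, \sig^2 \id_n)$. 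The joint density is therefore proportional to
\[\prod_{i\in S}\vq_i\prod_{i\notin S}(1-\vq_i)\,(2\pi)^{-|S|/2}\exp\Par{-\half\norm{\vths_S}_2^2 - \frac{1}{2\sig^2}\norm{\vy - \mx_{:S}\vths_S}_2^2}.\]

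Expanding the quadratic in $\vths_S$ gives $-\half \vths_S^\top \ma_S \vths_S + \vb_S^\top \vths_S - \frac{1}{2\sig^2}\norm{\vy}_2^2$, where $\ma_S$ and $\vb_S$ are exactly as in \eqref{eq:pisupp_def}. Note that $\mx_{:S}^\top \mx_{:S} = [\mxtx]_{S\times S}$, and that $\mx_{:S}^\top \vy$ is what the paper denotes $\mx_{S:}^\top\vy$ under its transpose-then-index convention. Completing the square shows that, conditional on $S$, the density of $\vths_S$ is proportional to $\exp(-\half(\vths_S - \ma_S^{-1}\vb_S)^\top \ma_S(\vths_S - \ma_S^{-1}\vb_S))$. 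This proves item 2, since $\ma_S \succeq \id_S$ is invertible.

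For item 1, integrate over $\vths_S \in \R^S$. The Gaussian integral gives $(2\pi)^{|S|/2}(\det \ma_S)^{-1/2}\exp(\half \vb_S^\top \ma_S^{-1}\vb_S)$. The factors $(2\pi)^{\pm|S|/2}$ cancel, and the factor $\exp(-\norm{\vy}_2^2/(2\sig^2))$, together with the Gaussian normalization of the noise, does not depend on $S$. Finally, divide the support weight by the $S$-independent constant $\prod_{i\in[d]}(1-\vq_i)$ to obtain $\prod_{i\in S}\frac{\vq_i}{1-\vq_i}$. This yields \eqref{eq:pisupp_def}.

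Combining the two items, drawing $S \sim \pisupp$ and then $\vths_S \mid S$ from the conditional Gaussian (with zeros off $S$) reproduces the joint posterior. The one point that needs care is that the posterior is a mixture of measures singular with respect to each other, one for each support. To handle this, I would work with densities with respect to the reference measure $\sum_S \delta_{\0_{S^c}} \otimes \mathrm{Leb}_S$, under which the computation above is a legitimate Bayes' rule calculation. The empty support is also covered, with the convention $\det \ma_\emptyset = 1$. Beyond this, there is no real obstacle; the proof is bookkeeping of constants.
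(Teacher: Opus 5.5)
Your proposal is correct and matches the standard conjugate-Gaussian computation behind the cited Lemma~8 of \cite{KumarSTZ25}; the paper itself does not reprove this fact and only imports it. Your constant bookkeeping, your reading of $\mx_{S:}^\top \vy$ as $\mx_{:S}^\top \vy$, the reference-measure remark, and the empty-support convention are all fine, and the one assumption worth stating explicitly is that $\mx$ is independent of $(\vths, \vxi)$, so conditioning on $\mx$ leaves the prior unchanged.
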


\section{Sparse Dobrushin Condition}\label{sec:dobrushin}

In this section, we give our first technical result: a simple sufficient condition for rapid mixing of the down-up walk, patterned off of the classical \emph{Dobrushin uniqueness condition} \cite{Dobruschin68, Wu06}.

The applications of our sparse Dobrushin framework to Ising models (Theorem~\ref{thm:fast_mixing_slice_general} and Corollary~\ref{cor:ising_fm_gen}) in this section generally obtain weaker parameter tradeoffs than our framework in Section~\ref{sec:trickledown} does. Nonetheless, this section serves as a useful proof-of-concept of the temperature improvements achievable in fixed-magnetization settings. We include Theorem~\ref{thm:fast_mixing_slice_general} both due to its ease of applicability, and because the samplers resulting from it are formally incomparable to those in Section~\ref{sec:trickledown}, as it trades off a faster mixing time for a stricter requirement on relevant parameters.

To ease notation, this section works in a more abstract formulation than in Section~\ref{ssec:mcmc}, where we use the down-up walk to sample subsets $S \subseteq \calU$, from a distribution $\pi$ supported on
\begin{equation}\label{eq:Uk_def}\calU_k \defeq \Brace{S \subseteq \calU: |S| = k}.\end{equation}
Here, $\calU$ is a discrete universe of candidate elements. This straightforwardly captures the setting of the down-up walk in Section~\ref{ssec:mcmc} by equating $\calU \equiv [d]$ and $S \equiv \vx \defeq \vvec(S)$. We provide pseudocode implementing a one-step transition of the down-up walk in Algorithm~\ref{alg:DU_walk}. For brevity, we define
\begin{gather*}
\down(S) \defeq \Brace{T \in \calU_{k - 1}: T \subset S} \text{ for all } S \in \calU_k, \\ 
\up(T) \defeq \Brace{S \in \calU_k: S \supset T} \text{ for all } T \in \calU_{k - 1}.
\end{gather*}

\begin{algorithm}[ht]
\DontPrintSemicolon
    \caption{$\DU(S, \calU, \pi)$}
    \label{alg:DU_walk}
    \textbf{Input:} $S \in \calU_k$, discrete universe $\calU$, $\pi \in \calP(\calU_k)$ \;
    \textbf{Output:} Sample $S' \in \calU_k$ from $\TDU{\pi}_S$\;
    $T \simu \down(S)$ \;\label{line:downstep}
    $S' \sim \pi(\cdot \mid \cdot \in \up(T))$\;\label{line:upstep}
    \Return $S'$
\end{algorithm}

We state our general framework in Section~\ref{ssec:basic_sparse} and apply it to Ising models in Section~\ref{ssec:gen_ising}.

\subsection{Basic analysis}\label{ssec:basic_sparse}

We begin by stating our new sparse Dobrushin condition.

\begin{definition}[Sparse Dobrushin condition]
    \label{def:sparse_dobrushin}
Let $\pi \in \calP(\calU_k)$ with full support where $\calU$ is a discrete universe with $|\calU| \ge k$, and let $\alpha \in (0, 1)$. We say $\pi$ satisfies an \emph{$\alpha$-sparse Dobrushin condition} if
\[\TV{\pi_{U \| V}, \pi_{V \| U}} \le \alpha,\text{ for all } (U, V) \in \calU_{k - 1} \times \calU_{k - 1} \text{ with } \dham(U, V) = 1, \]
where for all $(U, V) \in \calU_{k - 1} \times \calU_{k - 1}$, we define $\pi_{U \| V}(i) \in \calP(\calU \setminus (U \cup V))$ by
\begin{equation}\label{eq:piuv_def}
\pi_{U \| V}(i) \defeq \frac{\pi(U \cup \{i\})}{\sum_{j \in \calU \setminus (U \cup V)} \pi(U \cup \{j\})}
\end{equation}
\end{definition}

The utility of Definition~\ref{def:sparse_dobrushin} reveals itself through the following bound.

\begin{lemma}
\label{lem:du_coupling}
Assume that $\pi \in \calP(\calU_k)$ satisfies an $\alpha$-sparse Dobrushin condition, and let $(S, T) \in \calU_{k} \times \calU_{k}$ satisfy $\dham(S, T) = 1$. Then there exists a coupling $\gamma \in \Gamma(\TDU{\pi}_S, \TDU{\pi}_T)$ satisfying
\[\E_{(S', T') \sim \gamma}\Brack{\dham(S', T')} \le 1 - \frac 1 k + \alpha.\]
\end{lemma}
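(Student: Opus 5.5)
Write $S = C \cup \{a\}$ and $T = C \cup \{b\}$ with $C \in \calU_{k-1}$ and $a \neq b$. We first couple the down steps, then couple the up steps.

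\textbf{Down step.} The down step of $\TDU{\pi}_S$ removes a uniform element of $S$, and likewise for $T$. We match the removals as follows.
\begin{itemize}
\item If $S$ removes $a$, then $T$ removes $b$. Both chains land on the same core $C$.
\item If $S$ removes $c \in C$, then $T$ removes the same $c$. The cores are $U = (C\setminus\{c\}) \cup \{a\}$ and $V = (C \setminus\{c\})\cup\{b\}$, which satisfy $U, V \in \calU_{k-1}$ and $\dham(U,V) = 1$.
\end{itemize}
Since $|S| = |T| = k$, this is a valid coupling of the uniform down steps. The first case has probability $\frac1k$ and the second has probability $\frac{k-1}{k}$.

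\textbf{Up step, identical cores.} When both cores equal $C$, we use the identity coupling of the up step. Then $S' = T'$ and the distance is $0$.

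\textbf{Up step, differing cores.} Let $U, V$ be as above and $W = U \cap V$. The up step from $U$ samples $i \notin U$ with probability $\propto \pi(U \cup \{i\})$. This includes the choice $i = b$, which gives $U \cup \{b\} = W \cup \{a,b\}$. Similarly, $V$ can choose $i = a$, which gives the same set $W \cup\{a,b\}$.

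Let $p_U$ be the probability that $U$ picks $b$, and $p_V$ the probability that $V$ picks $a$. Conditioned on not picking $b$, the law of the element added to $U$ is exactly $\pi_{U\|V}$ on $\calU \setminus (U \cup V)$; the analogous statement holds for $V$ with $\pi_{V \| U}$. If both add the same $i \notin U \cup V$, the resulting sets $W\cup\{a,i\}$ and $W\cup\{b,i\}$ are at distance $1$. If they add different elements $i \neq j$, the distance is at most $2$.

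We build the coupling as follows.
\begin{itemize}
\item Couple the events "$U$ picks $b$" and "$V$ picks $a$" maximally. This is designed to ensure that $U$ picking $b$ and $V$ picking $a$ happen together as often as possible, in which case both land on $W \cup\{a,b\}$ and the distance is $0$.
\item Otherwise, couple the remaining outcomes optimally according to $\TVs{\pi_{U\|V}, \pi_{V\|U}} \le \alpha$.
\end{itemize}
The target is to show that the expected distance in this branch is at most $1 + \frac{k}{k-1}\alpha - (\text{gain from the }W\cup\{a,b\}\text{ event})$, or more simply that it is at most $1 + \alpha\cdot\frac{k}{k-1}$. Averaging over the two down-step cases then gives
\[
\frac{k-1}{k}\Bigl(1 + \frac{k}{k-1}\alpha\Bigr) = 1 - \frac1k + \alpha .
\]

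\textbf{Main obstacle.} The difficulty is that $p_U \neq p_V$ in general. On the mismatched mass $|p_U - p_V|$, one chain lands on $W\cup\{a,b\}$ while the other adds some $i$. The resulting sets are $W \cup \{a,b\}$ versus, say, $W\cup\{b,i\}$ (or $W \cup \{a, i\}$), which is still distance $1$, not $2$.

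Here is how I would handle it. Write each up-law as the mixture
\[
p\,\delta_{\text{special}} + (1-p)\,\pi_{\cdot\|\cdot}.
\]
I would first couple the shared $\min(p_U,p_V)$ mass as special-to-special, at distance $0$. The excess special mass is then paired with ordinary draws at distance $1$. The two conditional laws restricted to ordinary draws are still mixtures of $\pi_{U\|V}$ and $\pi_{V\|U}$, so a disagreement of $i \neq j$ (distance $2$, i.e.\ excess $1$ over baseline) occurs with probability at most $\alpha$.

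This yields expected distance at most $1 - \min(p_U,p_V) + \alpha \le 1 + \alpha$ in the differing-core branch. Combining with the down-step probabilities gives
\[
\frac{k-1}{k}(1+\alpha) \le 1 - \frac1k + \alpha ,
\]
which is the claimed bound. The delicate point is checking that the mixture-level coupling genuinely achieves excess distance at most $\alpha$. I would verify this via the standard bound that the total variation distance between mixtures with aligned weights is at most the total variation distance between their components.
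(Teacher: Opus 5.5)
Your proposal is correct and is essentially the paper's proof: you use the same down-step coupling (matching the removals of $a$ and $b$, or of a shared $c$), and then, taking $p_U \ge p_V$ without loss of generality, the same three-part up-step coupling. That coupling places the common $\min(p_U,p_V)$ mass on $W\cup\{a,b\}$, pairs the excess special mass with an \emph{independent} draw from $\pi_{V\|U}$ at distance $1$, and couples the remaining $1-\max(p_U,p_V)$ mass by the optimal coupling of $\pi_{U\|V}$ and $\pi_{V\|U}$; this yields $\frac{k-1}{k}(1+\alpha)\le 1-\frac1k+\alpha$. The independence in the second step is what keeps $V$'s leftover ordinary mass proportional to $\pi_{V\|U}$, and it is exactly the paper's second case. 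It is cleaner to state it that way than to appeal to a bound on mixtures.
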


\begin{proof}
Let $W \defeq S \cap T$ and assume $S = W \cup \{s\}$ and $T = W \cup \{t\}$. We construct the coupling explicitly. First, for the down step (Line~\ref{line:downstep}) we couple the transitions as follows.
\begin{enumerate}
    \item Draw $u \simu W$ and $r \simu [0, 1]$ independently.
    \item If $r < \frac{1}{k}$, we set $S^\downarrow = T^\downarrow = W$. If $r \ge \frac{1}{k}$, we set $S^\downarrow = S \setminus \{u\}$ and $T^\downarrow = T \setminus \{u\}$.
\end{enumerate}
If $S^\downarrow = T^\downarrow$, we can perfectly couple the subsequent up step (Line~\ref{line:upstep}), yielding $\dham(S', T') = 0$.

In the other case, $S^\downarrow \neq T^\downarrow$. Let $\rho$ denote the optimal coupling of $(\pi_{S^\downarrow \| T^\downarrow}, \pi_{T^\downarrow \| S^\downarrow})$ (see Definition~\ref{def:sparse_dobrushin}) inducing their TV distance. Define the marginal probabilities of selecting the disjoint elements as 
\[q_{S^\downarrow} = \frac{\pi(S^\downarrow \cup \{t\})}{\sum_{j \notin S^\downarrow} \pi(S^\downarrow \cup \{j\})},\quad q_{T^\downarrow} = \frac{\pi(T^\downarrow \cup \{s\})}{\sum_{j \notin T^\downarrow} \pi(T^\downarrow \cup \{j\})}.\]
Without loss of generality (by symmetry of the statement), assume $q_{S^\downarrow} \ge q_{T^\downarrow}$. 

For the up step (Line~\ref{line:upstep}) in the case $S^\downarrow \neq T^\downarrow$, we couple the transitions as follows.
\begin{enumerate}
    \item With probability $q_{T^\downarrow}$, set $S' = T' = S^\downarrow \cup \{t\} = T^\downarrow \cup \cbra{s}$.
    \item With probability $q_{S^\downarrow} - q_{T^\downarrow}$, set $S' = S^\downarrow \cup \{t\}$, draw $j \sim \pi_{T^\downarrow \| S^\downarrow}$, and set $T' = T^\downarrow \cup \{j\}$.
    \item With probability $1 - q_{S^\downarrow}$, draw $(j, j') \sim \rho$, and set $S' = S^\downarrow \cup \{j\}$ and $T' = T^\downarrow \cup \{j'\}$.
\end{enumerate}
We verify that this is a coupling. The first marginal sets $S' = S^\downarrow \cup \{t\}$ with probability $q_{S^\downarrow}$, and otherwise samples from the correct conditional distribution over $\calU \setminus T \cup \{u\}$. Similarly, the second marginal sets $T' = T^\downarrow \cup \{s\}$ correctly, and otherwise samples from the correct conditional distribution over $\calU \setminus S \cup \{t\}$. Also, in the third case, the probability $\dham(S', T') \neq 1$ is
\[\Pr\Brack{j \neq j'} \le \alpha.\]
Finally, we can compute the total expected Hamming distance:
\begin{align}\E_{(S', T') \sim \gamma}\Brack{\dham(S', T')} &\le \Par{1 - \frac 1 k} \Par{q_{T^\downarrow} \cdot 0 + (q_{S^\downarrow} - q_{T^\downarrow}) \cdot 1 + (1 - q_{S^\downarrow}) \cdot ((1 - \alpha) + 2\alpha)} \label{eq:expect_ham_diff}\\
&\le \Par{1 - \frac 1 k}\Par{1 + \alpha} \le 1 - \frac 1 k + \alpha.\nonumber
\end{align}
\end{proof}

It is clear that the down-up walk over $\calU_k$ satisfies the condition in \eqref{eq:path_conds} with $\met = \dham$. Thus, applying Lemma~\ref{lem:du_coupling} within the framework of Lemma~\ref{lem:mixing_time_path coupling} gives the following result.

\begin{proposition}
    \label{prop:fast_mixing_slice_general}
Let $\pi \in \calP(\calU_k)$ satisfy a $\alpha \le \tfrac{1}{2k}$-sparse Dobrushin condition, and let $\eps \in (0, \half)$. Then if $T = \Omega(k\log\tfrac{k}{\eps})$ for a sufficiently large constant, we have for any $\pi_0 \in \calP(\calU_k)$,
\[\TV{(\TDU{\pi})^T \pi_0, \pi} \le \eps.\]
\end{proposition}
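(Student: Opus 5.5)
The plan is to apply the path coupling lemma, Lemma~\ref{lem:mixing_time_path coupling}, to the down-up walk $\TDU{\pi}$ on $\Omega = \calU_k$ with metric $\met = \dham$. The one-step contraction will come from Lemma~\ref{lem:du_coupling}. This requires three checks: the path conditions \eqref{eq:path_conds}, the contraction constant, and the diameter of $\calU_k$.

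\textbf{Path conditions.} $\dham$ is an integer-valued metric on $\calU_k$. Take $S, T \in \calU_k$ with $\dham(S, T) = m$. Write $S \setminus T = \{s_1, \ldots, s_m\}$ and $T \setminus S = \{t_1, \ldots, t_m\}$, and set $\omega_i \defeq (S \setminus \{s_1, \ldots, s_i\}) \cup \{t_1, \ldots, t_i\}$. Consecutive sets in this path differ by a single swap, so $\dham(\omega_{i-1}, \omega_i) = 1$. The transition $\omega_{i-1} \to \omega_i$ has positive probability: the down step removes $s_i$ with probability $\frac 1 k$, and the up step then adds $t_i$ with positive probability because $\pi$ has full support on $\calU_k$, as assumed in Definition~\ref{def:sparse_dobrushin}.

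\textbf{Contraction.} For any adjacent pair $(S, T)$, Lemma~\ref{lem:du_coupling} gives a coupling with expected distance at most $1 - \frac 1 k + \alpha$. Under the hypothesis $\alpha \le \frac{1}{2k}$, this is at most $1 - \frac{1}{2k}$. So the contraction parameter in Lemma~\ref{lem:mixing_time_path coupling} is $\alpha' \defeq \frac{1}{2k} \in (0,1)$.

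\textbf{Diameter and conclusion.} Any two $k$-subsets differ in at most $k$ elements on each side, so $\mathrm{diam}(\calU_k) \le k$. Lemma~\ref{lem:mixing_time_path coupling} then gives $\TV{(\TDU{\pi})^T \pi_0, \pi} \le \eps$ once $T \ge 2k \log \frac{k}{\eps}$. This is exactly $T = \Omega(k \log \frac k \eps)$ with a sufficiently large constant. There is no real obstacle; the only mild point is the positivity of the path transitions, which full support handles.
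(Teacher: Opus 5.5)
Your proposal is correct and takes the same approach as the paper, which plugs the one-step contraction $1 - \frac 1 k + \alpha \le 1 - \frac{1}{2k}$ from Lemma~\ref{lem:du_coupling} into the path coupling lemma with $\met = \dham$. The paper simply asserts the path conditions, so your explicit verification via full support and your bound $\mathrm{diam}(\calU_k) \le k$ fill in the details that yield $T = \Omega(k \log \frac{k}{\eps})$.
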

Of course, the parameter $\alpha$ in Proposition~\ref{prop:fast_mixing_slice_general} can be taken to be any constant factor smaller than $\frac 1 k$.
Proposition~\ref{prop:fast_mixing_slice_general} also explains our naming choice, as its conclusion becomes stronger (as a function of the sparse Dobrushin condition parameter) as $k$ gets smaller, i.e., $\pi$ is supported on sparser sets.

\subsection{Fixed-magnetization Ising models}\label{ssec:gen_ising}

In this section, we demonstrate how to apply Proposition~\ref{prop:fast_mixing_slice_general} to fixed-magnetization Ising models:
\begin{equation}\label{eq:ising_slice}\pi(\vx) \propto \exp\Par{\beta\Par{\half \vx^\top \mj \vx + \vh^\top \vx}} \cdot \ind_{\vx \in \calX^d_k}.\end{equation}
We first require a helper tool to control the sparse Dobrushin condition parameter.

\begin{lemma}\label{lem:tv-tanh}
Let $\pi \in \calP(\Omega)$, $\mu \in \calP(\Omega)$ have $\pi \propto P$ and $\mu \propto Q$ for unnormalized densities $P, Q$. Then
\[\sup_{\omega \in \Omega} \Abs{\log \frac{P(\omega)}{Q(\omega)}} \le \Delta \implies \TV{\pi, \mu} \le \Delta.\]
\end{lemma}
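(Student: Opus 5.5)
The plan is to compare the two normalized densities pointwise and then sum. Write $Z_P \defeq \sum_\omega P(\omega)$ and $Z_Q \defeq \sum_\omega Q(\omega)$; for a non-discrete $\Omega$, replace sums by integrals. The hypothesis says that for every $\omega$,
\[e^{-\Delta} Q(\omega) \le P(\omega) \le e^{\Delta} Q(\omega).\]
Summing over $\omega$ gives $e^{-\Delta} Z_Q \le Z_P \le e^{\Delta} Z_Q$. Dividing the pointwise bound by the normalizer bound then yields
\[e^{-2\Delta} \le \frac{\pi(\omega)}{\mu(\omega)} \le e^{2\Delta}.\]
This is a two-sided ratio bound, but it carries a factor of $2\Delta$ rather than $\Delta$. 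The naive estimate $\TV{\pi,\mu} \le \frac 1 2 \sum_\omega \mu(\omega)\,|\pi(\omega)/\mu(\omega) - 1|$ would therefore give roughly $\frac 1 2(e^{2\Delta}-1)$, which is too weak both for large $\Delta$ and in its constant.

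To get the sharp constant, I would use the variational form $\TV{\pi,\mu} = \sup_{\calE}\, \pi(\calE) - \mu(\calE)$. Fix an event $\calE$ and set $a \defeq \sum_{\calE} Q$ and $b \defeq \sum_{\calE^c} Q$, so that $\mu(\calE) = a/(a+b)$. The hypothesis gives $\sum_\calE P \le e^\Delta a$ and $\sum_{\calE^c} P \ge e^{-\Delta} b$. The map $(x,y) \mapsto x/(x+y)$ is increasing in $x$ and decreasing in $y$, so
\[\pi(\calE) \le \frac{e^{\Delta} a}{e^{\Delta} a + e^{-\Delta} b} = \frac{p e^{2\Delta}}{p e^{2\Delta} + 1 - p}, \quad\text{where } p \defeq \mu(\calE).\]
Hence
\[\pi(\calE) - \mu(\calE) \le \sup_{p \in [0,1]} \Par{\frac{p e^{2\Delta}}{p e^{2\Delta} + 1 - p} - p}.\]

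The remaining step is a one-variable calculus problem. Put $t \defeq e^{2\Delta}$ and $g(p) \defeq \frac{pt}{1 + p(t-1)} - p$. Setting $g'(p) = 0$ gives $(1 + p(t-1))^2 = t$, so the maximizer is $p^\star = 1/(\sqrt t + 1)$. Substituting, the maximum value is
\[\frac{\sqrt t - 1}{\sqrt t + 1} = \tanh\Par{\frac \Delta 2}.\]
Finally, $\tanh(x) \le x$ for $x \ge 0$, so the bound is $\tanh(\Delta/2) \le \Delta/2 \le \Delta$. In fact this gives the stronger statement $\TV{\pi,\mu} \le \tanh(\Delta/2)$, which explains the lemma's label.

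The step that needs the most care is the monotonicity argument that reduces the bound on $\pi(\calE)$ to a single scalar inequality in $p$. It has to be done with the $e^{\pm\Delta}$ factors applied separately to $\calE$ and $\calE^c$, rather than through the cruder $e^{\pm 2\Delta}$ ratio bound; otherwise the bound comes out as $\tanh(\Delta)$. That would still be at most $\Delta$, so the weaker route also suffices for the stated claim.
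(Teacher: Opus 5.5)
Your proof is correct, and it actually proves the sharper bound $\TV{\pi,\mu} \le \tanh(\Delta/2)$. This bound is tight: it is attained on two points by $Q = (1, e^{\Delta})$, $P = (e^{\Delta}, 1)$.

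The paper reaches the statement by a different route. It first passes to the normalized ratio $L \defeq \pi/\mu$, using the crude range $L \in [e^{-2\Delta}, e^{2\Delta}]$ that you derived and then set aside. It writes $\TV{\pi,\mu} = \frac12 \E_\mu\Brack{|L-1|}$ and uses convexity of $u \mapsto |u-1|$ together with $\E_\mu[L] = 1$. This is a chord bound, i.e.\ a two-point extremal argument, and gives $\TV{\pi,\mu} \le \frac{(b-1)(1-a)}{b-a}$ for $[a,b] = [e^{-2\Delta}, e^{2\Delta}]$. Optimizing over the position of an interval with fixed ratio $r = b/a = e^{4\Delta}$ then yields $\frac{\sqrt r - 1}{\sqrt r + 1} = \tanh(\Delta)$. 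That is exactly the weaker constant you predicted for that route.

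Your event-based argument replaces the convexity step with monotonicity of $x/(x+y)$. It also keeps the $e^{\pm\Delta}$ factors on $\calE$ and $\calE^c$ separately. This amounts to using that $L$ in fact lies in $[c\,e^{-\Delta}, c\,e^{\Delta}]$, where $c = Z_Q/Z_P$ is unknown, so the interval has ratio $e^{2\Delta}$ rather than $e^{4\Delta}$. Feeding $r = e^{2\Delta}$ into the paper's optimization over $a$ would likewise recover $\tanh(\Delta/2)$.

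The paper's formulation has the merit of being a general statement about any mean-one likelihood ratio with bounded range. Yours is more elementary and sharp. Either is more than enough for the stated bound $\Delta$, which is all the later applications use.
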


\begin{proof}
First observe that
\begin{equation*}
    \sup_{\omega \in \Omega} \Abs{\log \frac{\pi(\omega)}{\mu(\omega)}} \le \sup_{\omega \in \Omega}\Abs{\log \frac{P(\omega)}{Q(\omega)}} + \Abs{\log\frac{\int_{\Omega} Q(\omega') \dd \omega'}{\int_{\Omega} P(\omega')\dd\omega'}} \le 2\sup_{\omega\in\Omega} \Abs{\log \frac{P(\omega)}{Q(\omega)}} \le 2\Delta.
\end{equation*}
Let $L(\omega):=\frac{\pi(\omega)}{\mu(\omega)}$ so $\exp(-2\Delta)\le L(\omega)\le \exp(2\Delta)$ for all $\omega \in \Omega$.
Using $\pi= L\,\mu$,
\[
\TV{\pi, \mu}
=\frac12\int_\Omega |\pi(\omega)-\mu(\omega)|\dd\omega
=\frac12 \int_\Omega \mu(\omega)\Abs{L(\omega) - 1}\dd\omega
=\frac12\,\E_\mu\bigl[|L-1|\bigr].
\]
Next we use the following convexity bound: if $\phi$ is convex on $[a,b]$ and random variable $X\in[a,b]$, then writing
$X=t a+(1-t)b$ with $t=\frac{b-X}{b-a}\in[0,1]$ gives
\[
\phi(X)\le t\phi(a)+(1-t)\phi(b).
\]
Taking expectations yields
\[
\E[\phi(X)]\le \frac{b-\E[X]}{b-a}\,\phi(a)+\frac{\E[X]-a}{b-a}\,\phi(b).
\]
We apply this bound with $X=L$, $\E_\mu[L]=1$, $\phi(u)=|u-1|$, and $[a, b] = [\exp(-2\Delta), \exp(2\Delta)]$. Since $a< 1 < b$,
$\phi(a)=1-a$ and $\phi(b)=b-1$, so
\[
\TV{\pi, \mu} = \frac{1}{2}\E_\mu[\phi(L)]\le \frac{(b-1)(1-a)}{b-a}.
\]
Now set $r:=\frac b a>1$ and write $b=ar$. Since $a\le 1\le ar$, we have $a\in[\frac 1 r,1]$. Define
\[
f_r(a):=\frac{(ar-1)(1-a)}{ar-a}=\frac{(ra-1)(1-a)}{a(r-1)}.
\]
A direct derivative computation shows that $f_r$ is maximized at $a=r^{-1/2}$, and hence
\[
\TV{\pi, \mu}\le f_r(r^{-1/2})
=\frac{\sqrt r-1}{\sqrt r+1} =\tanh\Par{\frac14\log r} \leq \Delta.
\]
\end{proof}

Lemma~\ref{lem:tv-tanh} lets us conclude that for fixed-magnetization Ising models (i.e., \eqref{eq:ising_def} restricted to $\calX^d_k$), the sparse Dobrushin condition parameter can be controlled by the largest off-diagonal entry of $\mj$.

\begin{lemma}
    \label{lem:sparse_dobrushin_ising_slice}
Let $\pi$ be a fixed-magnetization Ising model \eqref{eq:ising_slice} with 
\[\beta \max_{\substack{(i, j) \in [d] \times [d] \\ i \neq j}} \Abs{\mj_{ij}} \le \alpha.\]
Then $\pi$ satisfies a $8\alpha$-sparse Dobrushin condition.
\end{lemma}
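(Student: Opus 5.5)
Fix $(U, V) \in \calU_{k-1} \times \calU_{k-1}$ with $\dham(U, V) = 1$, where $\calU = [d]$. Write $W \defeq U \cap V$, $U = W \cup \{u\}$ and $V = W \cup \{v\}$. Both $\pi_{U \| V}$ and $\pi_{V \| U}$ are measures on the common set $\calU \setminus (W \cup \{u, v\})$. The plan is to apply Lemma~\ref{lem:tv-tanh} with unnormalized densities $P(i) \defeq \pi(U \cup \{i\})$ and $Q(i) \defeq \pi(V \cup \{i\})$. We drop the normalizing constant of $\pi$, since it cancels in the ratio. It then suffices to show $\sup_i |\log P(i) - \log Q(i)| \le 8\alpha$.

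The main step is an explicit computation of the log-ratio. Let $\vx \defeq \vvec(U \cup \{i\})$ and $\vx' \defeq \vvec(V \cup \{i\})$. These agree except at $u$ and $v$: we have $\vx_u = 1$, $\vx_v = -1$, $\vx'_u = -1$, $\vx'_v = 1$. Hence $\vx - \vx' = 2(\ve_u - \ve_v)$.

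We expand
\[\log\frac{P(i)}{Q(i)} = \beta\Par{\half \vx^\top \mj \vx - \half \vx'^\top \mj \vx' + \vh^\top(\vx - \vx')}.\]
The field term equals $2\beta(\vh_u - \vh_v)$. This does not depend on $i$, so it contributes the same constant factor to $P$ for every $i$ and disappears after normalization.

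To make this precise, I will instead apply Lemma~\ref{lem:tv-tanh} to $P$ and to $\tilde Q \defeq Q \cdot \exp(2\beta(\vh_u - \vh_v))$. This rescaling leaves $\pi_{V \| U}$ unchanged. For the quadratic part, I use $\half(\vx^\top \mj \vx - \vx'^\top \mj \vx') = \half(\vx - \vx')^\top \mj (\vx + \vx')$. Note that $\vx + \vx'$ vanishes at $u$ and $v$ and equals $2\vx_j$ at every other coordinate $j$.

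So the quadratic difference is $2\sum_{j \notin \{u, v\}} (\mj_{uj} - \mj_{vj})\vx_j$. Its dependence on $i$ enters only through $\vx_i = 1$ versus the $-1$ that $i$ would otherwise carry. Separating out the $i$-independent part $\sum_j (\mj_{uj} - \mj_{vj})\vx^{(0)}_j$, with $\vx^{(0)} \defeq \vvec(W)$, which again is absorbed into the normalization, the $i$-dependent remainder is $2 \cdot 2(\mj_{ui} - \mj_{vi})$. After multiplying by $\beta$, its magnitude is at most $4\beta(|\mj_{ui}| + |\mj_{vi}|) \le 8\alpha$, because $i \ne u, v$ means only off-diagonal entries appear.

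Lemma~\ref{lem:tv-tanh} then gives $\TV{\pi_{U\|V}, \pi_{V\|U}} \le 8\alpha$, which is the $8\alpha$-sparse Dobrushin condition. The only delicate point is the bookkeeping: separating the $i$-independent terms, which are absorbed by rescaling $Q$, from the $i$-dependent terms. Doing this carefully is what keeps the diagonal of $\mj$ and the field $\vh$ out of the final bound, so that the constant stays at most $8$.
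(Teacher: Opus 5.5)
Your proposal is correct and follows the same route as the paper: compute the log-ratio of $\pi(U\cup\{i\})$ to $\pi(V\cup\{i\})$, absorb every $i$-independent term (the field and the parts depending only on $W,u,v$) into the unnormalized densities, and apply Lemma~\ref{lem:tv-tanh} to the remaining term $4\beta(\mj_{ui}-\mj_{vi})$, whose magnitude is at most $8\alpha$. Your identity $\half(\vx-\vx')^\top\mj(\vx+\vx')$ organizes the same computation slightly more tidily, since the diagonal entries of $\mj$ drop out immediately, whereas the paper carries an $i$-independent $2(\mj_{uu}-\mj_{vv})$ term.
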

\begin{proof}
Write $\mj \gets \beta\mj$ for simplicity in this proof, so we will prove the result when $\beta = 1$ without loss of generality.
Let $(\vvec(U), \vvec(V)) \in \calX^d_{k - 1} \times \calX^d_{k - 1}$ have $\dham(U, V) = 1$, and denote $W \defeq U \cap V$, $U = W \cup \{u\}$, and $V = W \cup \{v\}$. Note that the distributions $\pi_{U \| V}$ and $\pi_{V \| U}$ defined in \eqref{eq:piuv_def} are supported on the same set $\Omega \defeq [d] \setminus (U \cup V)$, so we are in the setting of Lemma~\ref{lem:tv-tanh}.

Let $i \in [d] \setminus (U \cup V)$, $X \defeq U \cup \{i\}$, $Y \defeq V \cup \{i\}$, and $\vx \defeq \vvec(X) = 2\vone_X - \vone_d$, $\vy \defeq \vvec(Y) = 2\vone_Y - \vone_d$. Observe that by expanding definitions and cancelling similar terms,
\begin{align*}
\log\Par{\frac{\exp(\half \vx^\top \mj \vx + \vh^\top \vx)}{\exp(\half \vy^\top \mj \vy + \vh^\top \vy)}} &= \Par{\half \vx^\top \mj \vx + \vh^\top \vx} - \Par{\half \vy^\top \mj \vy + \vh^\top \vy} \\
&= \Par{2\1_X^\top \mj \1_X + 2\Par{\vh - \mj \1_d}^\top \1_X + \half \1_d^\top \mj \1_d - \vh^\top \1_d} \\
&- \Par{2\1_Y^\top \mj \1_Y + 2\Par{\vh - \mj \1_d}^\top \1_Y + \half \1_d^\top \mj \1_d - \vh^\top \1_d} \\
&= 4\Par{\ve_i + \ve_W}^\top \mj(\ve_u - \ve_v) + 2\Par{\mj_{uu} - \mj_{vv}} + 2(\vh - \mj \1_d)^\top(\ve_u - \ve_v) \\
&= 4\ve_i^\top \mj(\ve_u - \ve_v) + 2(\mj_{uu} - \mj_{vv}) + 2\Par{\vh + \mj(2\1_W - \1_d)}^\top(\ve_u - \ve_v).
\end{align*}
Further, we have
\begin{align*}
\pi_{U \| V}(X) \propto \exp\Par{\half \vx^\top \mj \vx + \vh^\top \vx} \propto \exp\Par{\half \vx^\top \mj \vx + \vh^\top \vx - 2\mj_{uu} - 2(\vh + \mj(2\1_W - \1_d))^\top \ve_u}, \\
\pi_{V \| U}(Y) \propto \exp\Par{\half \vy^\top \mj \vy + \vh^\top \vy} \propto \exp\Par{\half \vy^\top \mj \vy + \vh^\top \vy - 2\mj_{vv} - 2(\vh + \mj(2\1_W - \1_d))^\top \ve_v}.
\end{align*}
Above we used that $W$ and $u$ are fixed in the definition of $\pi_{U \| V}(\cdot)$, so we can bring terms involving only these indices into the unnormalized density; a similar argument holds for $\pi_{V \| U}(\cdot)$. Thus, Lemma~\ref{lem:tv-tanh} applies with $(P, Q)$ set to the right-hand sides above, so its conclusion holds with
\[\Delta = \max_{\substack{(u, v, i) \in [d] \times [d] \times [d] \\ (u, v,i) \ \text{distinct}}}\Abs{4\ve_i^\top \mj(\ve_u - \ve_v)} \le 8\alpha. \]
\end{proof}

Importantly, the bound in Lemma~\ref{lem:sparse_dobrushin_ising_slice} is independent of $\vh$. Intuitively, this follows because the conditional distributions in \eqref{eq:piuv_def} exclude all elements in $U \cup V$, including the non-shared elements, which induce the only difference in the external field. By combining Lemma~\ref{lem:sparse_dobrushin_ising_slice} and Proposition~\ref{prop:fast_mixing_slice_general}, we thus obtain a mixing time bound for fixed-magnetization Ising models.

\begin{theorem}\label{thm:fast_mixing_slice_general}
Let $\delta \in (0, \half)$ and let $\pi$ be induced by a fixed-magnetization Ising model \eqref{eq:ising_slice} satisfying
\begin{equation}\label{eq:offdiag_bound}\beta \max_{\substack{(i, j) \in [d] \times [d] \\ i \neq j}} \Abs{\mj_{ij}} \le \frac 1 {16k}.\end{equation}
Then if $T = \Omega(k \log \frac k \delta)$ for a sufficiently large constant, we have for any $\pi_0 \in \calP(\calX^d_k)$,
\[\TV{(\TDU{\pi})^T \pi_0, \pi} \le \delta.\]
\end{theorem}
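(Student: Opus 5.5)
The plan is to chain Lemma~\ref{lem:sparse_dobrushin_ising_slice} with Proposition~\ref{prop:fast_mixing_slice_general}, after identifying the fixed-magnetization Ising model \eqref{eq:ising_slice} with a measure on $\calU_k$ for $\calU \equiv [d]$ via $S \equiv \vvec(S)$. This identification is the one described at the start of Section~\ref{sec:dobrushin}. Under it, the down-up walk $\TDU{\pi}$ on $\calX^d_k$ is exactly Algorithm~\ref{alg:DU_walk} on $\calU_k$.

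First, I check the hypotheses of Definition~\ref{def:sparse_dobrushin}. The Gibbs density is a strictly positive exponential on every element of $\calX^d_k$, so $\pi$ has full support on $\calU_k$. We also need $|\calU| = d \ge k$, which holds whenever the slice is nonempty. Next, I apply Lemma~\ref{lem:sparse_dobrushin_ising_slice} with $\alpha \defeq \frac{1}{16k}$, which is permitted by \eqref{eq:offdiag_bound}. It shows that $\pi$ satisfies an $8\alpha = \frac{1}{2k}$-sparse Dobrushin condition, independently of $\vh$. The only care needed is that Lemma~\ref{lem:sparse_dobrushin_ising_slice} bounds the relevant log-ratios by $4|\mj_{iu}-\mj_{iv}| \le 8\beta\max_{i\ne j}|\mj_{ij}|$. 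This uses off-diagonal entries only, since $i,u,v$ are distinct, so the diagonal of $\mj$ plays no role.

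Finally, I invoke Proposition~\ref{prop:fast_mixing_slice_general} with $\alpha = \frac{1}{2k}$ and $\eps = \delta$. This is legitimate because $\delta \in (0,\half)$. It gives $\TV{(\TDU{\pi})^T\pi_0,\pi} \le \delta$ for every $T = \Omega(k\log\frac{k}{\delta})$ and every initial $\pi_0 \in \calP(\calX^d_k)$.

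For completeness, I also note where the constants in Proposition~\ref{prop:fast_mixing_slice_general} come from. Lemma~\ref{lem:du_coupling} yields a contraction $1 - \frac1k + \frac1{2k} = 1 - \frac{1}{2k}$ for adjacent pairs. The down-up walk connects any two $k$-sets at Hamming distance $m$ through $m$ single swaps, each of positive probability under full support, so \eqref{eq:path_conds} holds. The diameter is at most $k$. Lemma~\ref{lem:mixing_time_path coupling} then gives mixing after $2k\log\frac{k}{\delta}$ steps.

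There is no real obstacle here. The only step requiring attention is matching the constants: the factor $8$ in Lemma~\ref{lem:sparse_dobrushin_ising_slice} must be absorbed by the $\frac{1}{16k}$ in \eqref{eq:offdiag_bound}, so that the resulting condition meets the $\alpha \le \frac1{2k}$ threshold of Proposition~\ref{prop:fast_mixing_slice_general}.
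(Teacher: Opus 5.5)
Your proposal is correct and matches the paper's proof exactly: apply Lemma~\ref{lem:sparse_dobrushin_ising_slice} with $\alpha = \frac{1}{16k}$ to get a $\frac{1}{2k}$-sparse Dobrushin condition, then apply Proposition~\ref{prop:fast_mixing_slice_general} with $\eps = \delta$. Your additional checks (full support, off-diagonal entries only, and the path-coupling constants via Lemma~\ref{lem:du_coupling} and Lemma~\ref{lem:mixing_time_path coupling}) are consistent with the paper's supporting results.
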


We conclude by briefly stating example implications of Theorem~\ref{thm:fast_mixing_slice_general} and Lemma~\ref{lem:sparse_dobrushin_ising_slice} for sampling from the Gibbs distributions induced by Models~\ref{model:sk} and~\ref{model:hopfield}.

\begin{corollary}\label{cor:ising_fm_gen}
Let $\delta \in (0, \half)$ and let $\pi$ be defined as in \eqref{eq:ising_slice}. If $\pi_0 \in \calP(\calX^d_k)$ and $T = \Omega(k \log \frac k \delta)$ for an appropriate constant, $\TVs{(\TDU{\pi})^T\pi_0, \pi} \le \delta$, under any of the following conditions.
\begin{enumerate}
\item Under Model~\ref{model:sk} with probability $\ge 1 - \delta$, if $\beta = O(\frac 1 k\sqrt{d/\log(d/\delta)})$ for an appropriate constant.\label{item:gen_ising_2}
\item Under Model~\ref{model:hopfield} with probability $\ge 1 - \delta$, if $n = \Omega(\max(1, (\beta k)^2) \log \frac d \delta)$ for an appropriate constant.\label{item:gen_ising_3}
\end{enumerate}

\end{corollary}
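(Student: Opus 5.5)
Both items reduce to Theorem~\ref{thm:fast_mixing_slice_general}: it suffices to verify the off-diagonal condition \eqref{eq:offdiag_bound}, namely $\beta \max_{i\neq j}|\mj_{ij}| \le \frac{1}{16k}$, on an event of probability at least $1-\delta$. On that event the theorem gives $\TVs{(\TDU{\pi})^T\pi_0,\pi}\le\delta$ for $T=\Omega(k\log\frac k\delta)$ and any $\pi_0$. The diagonal of $\mj$ plays no role, since \eqref{eq:offdiag_bound} only involves off-diagonal entries; this also matches Lemma~\ref{lem:sparse_dobrushin_ising_slice}, which only uses $\ve_i^\top\mj(\ve_u-\ve_v)$ with distinct indices.

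For Item~\ref{item:gen_ising_2} (SK), apply Fact~\ref{fact:sk_facts}, Item~\ref{item:sk_entrywise}. With probability at least $1-\delta$ we have $\max_{i,j}|\mj_{ij}|\le C\sqrt{\log(d/\delta)/d}$. It follows that
\[\beta\max_{i\neq j}|\mj_{ij}|\le \beta C\sqrt{\log(d/\delta)/d}\le \frac{1}{16k}\]
as soon as $\beta\le \frac{1}{16Ck}\sqrt{d/\log(d/\delta)}$. This is exactly the assumed $\beta=O(\frac1k\sqrt{d/\log(d/\delta)})$ with the constant $\frac{1}{16C}$.

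For Item~\ref{item:gen_ising_3} (Hopfield), apply the first part of Fact~\ref{fact:ghop_facts} with $\eps\defeq \min(\frac14,\frac{1}{16\beta k})$; the cap at $\frac14$ keeps $\eps<\half$ as the fact requires. Since $|\mj_{ij}-\ind(i=j)|=|\mj_{ij}|$ for $i\neq j$, the fact gives $\max_{i\neq j}|\mj_{ij}|\le\eps$ with probability at least $1-\delta$, provided
\[n\ge C\log\tfrac d\delta\cdot\eps^{-2}=C\log\tfrac d\delta\cdot\max\bigl(16,(16\beta k)^2\bigr).\]
This is $\Omega(\max(1,(\beta k)^2)\log\frac d\delta)$, and on this event $\beta\max_{i\neq j}|\mj_{ij}|\le\beta\eps\le\frac1{16k}$.

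The only step needing care is this choice of $\eps$ in the Hopfield case. It must be small enough to certify \eqref{eq:offdiag_bound} yet stay below $\half$ so that Fact~\ref{fact:ghop_facts} applies, and the $\max(1,\cdot)$ in the statement accounts exactly for the cap. Beyond that, the argument only tracks constants.
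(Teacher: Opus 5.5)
Your proposal is correct and follows the paper's proof: verify the off-diagonal condition \eqref{eq:offdiag_bound} of Theorem~\ref{thm:fast_mixing_slice_general} (which rests on Lemma~\ref{lem:sparse_dobrushin_ising_slice}) using the entrywise bounds in Facts~\ref{fact:sk_facts} and~\ref{fact:ghop_facts}. Your explicit choice $\eps = \min(\tfrac14, \tfrac{1}{16\beta k})$ in the Hopfield case correctly accounts for the $\max(1,\cdot)$ factor, a detail the paper leaves implicit.
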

\begin{proof}
It suffices to use high-probability bounds on the maximum off-diagonal entry of $\mj$ (via Facts~\ref{fact:sk_facts} and~\ref{fact:ghop_facts}), combined with Theorem~\ref{thm:fast_mixing_slice_general} and Lemma~\ref{lem:sparse_dobrushin_ising_slice}.
\end{proof}

We give a brief discussion of the runtime of Algorithm~\ref{alg:DU_walk} in the Ising model setting.

\begin{remark}[Runtime of down-up walk for Ising model]\label{rem:runtime} For the Ising model,
Algorithm~\ref{alg:DU_walk} can be implemented in time $O(d)$ per iteration, after $O(d^2)$ time preprocessing. We believe this is folklore, but sketch a proof here. Our implementation maintains a set $S \subseteq [d]$ and applies Algorithm~\ref{alg:DU_walk} to it. Clearly, Line~\ref{line:downstep} is implementable in $O(k)$ time. Next,  the law of $\{i\} = S' \setminus T$ in Line~\ref{line:upstep} is
\begin{align*}&\propto \exp\Par{\frac \beta 2 \Par{2\1_{T \cup \{i\}} - \1_d}^\top \mj \Par{2\1_{T \cup \{i\}} - \1_d} + \beta \inprod{\vh}{2\1_{T \cup \{i\}} - \1_d}}\\
&\propto \exp\Par{\beta\Par{2\mj_{ii} + 4 \inprod{\mj \ve_i}{\1_T}- 2\inprod{\mj\ve_i}{\1_d} + 2\vh_i}}.
\end{align*}
Therefore, it is enough to maintain the quantities, for all $i \in [d] \setminus T$,
\[\inprod{\mj \ve_i}{\1_T} = \sum_{j \in T} \mj_{ij},\quad\mj_{ii},\quad \inprod{\mj\1_d}{\ve_i}, \quad \vh_i.\]
After $O(d^2)$ time preprocessing, we store the latter three (constant) quantities, and we can update the first in $O(d)$ time per call to Algorithm~\ref{alg:DU_walk}, since at most $2$ coordinates change in $T$. Finally, given these values the sampling on Line~\ref{line:upstep} can be performed in time $O(d)$.
\end{remark}
\section{Spectral Mixing via Trickle Down}\label{sec:trickledown}

In this section, we develop a second approach to prove mixing bounds on fixed-magnetization measures, based on the \emph{trickle down theorem} \cite{oppenheim2018local, alev2020improved} from the literature on high-dimensional expanders (we recommend Section 5 of the excellent survey \cite{gotlib2023nowhere} as an introduction to this topic). This framework, summarized abstractly in Section~\ref{ssec:prelims_trickle}, achieves tighter parameter tradeoffs in our applications to Ising models than its sparse Dobrushin counterpart in Section~\ref{sec:dobrushin}. 

In Section~\ref{ssec:log_sk}, we begin by giving a more interpretable sufficient condition for our framework, and two applications, as warmups. Specifically, we show that our trickle down framework qualitatively subsumes the sparse Dobrushin condition, and implies fast mixing for the SK model (Model~\ref{model:sk}) in the near-proportional magnetization regime, $k = O_\beta( \frac d {\log d})$. We also derive an analogous result for the Gaussian Hopfield model (Model~\ref{model:hopfield}). Finally, we conclude with our strongest result on the SK model, handling the proportional regime $k = O_\beta(d)$, in Section~\ref{ssec:d_sk}.

\subsection{Trickle down framework}\label{ssec:prelims_trickle}

We develop our framework for analyzing fast mixing on fixed-magnetization Ising models in three parts. We begin by recalling preliminaries on spectral graph theory, and a statement of the trickle down theorem of \cite{oppenheim2018local, alev2020improved}. We then state a sufficient condition \eqref{eq:rho_def} for applying the trickle down theorem, when the weights of the distribution in question are governed by a small perturbation of a product graph. Finally, we specialize this framework to Ising models.

\textbf{Trickle down.} Let $[d]$ index a finite vertex set. For an edge weight matrix $\mw \in \R^{d \times d}_{\ge 0}$, we define its associated random walk matrix $\mpp$ to be the degree-normalized $\mw$, i.e.,
\begin{equation}\label{eq:random_walk_matrix}\mpp \defeq \md^{-1} \mw,\text{ where } \md \defeq \diag{\mw\1_d}.\end{equation}
This section only considers \emph{reversible} $\mpp$, associated with \emph{symmetric} edge weight matrices $\mw$. 

We next state the \emph{trickle down theorem} \cite{oppenheim2018local, alev2020improved}, which bounds the Poincar\'e constant of the down-up walk induced by a measure in $\calP(\calX_k^d)$, in terms of the worst spectral gap among certain restrictions of the measure. It is proven by using the law of total variance, which gives recursive relationships among the spectral gaps of various restricted down-up walks. We defer additional background to \cite{oppenheim2018local, alev2020improved}, and simply state a sufficient form for our purposes here.

For a measure $\pi \in \calP(\calX^d_k)$ and a set $R \subseteq [d]$ with $|R| = k - 2$, define the \emph{link graph} of $R$ to be the weighted graph on vertices $[d] \setminus R$ with edge weight matrix $\mw$ given by
\[\mw_{ij} = \pi(R \cup \{i, j\})\text{ for all } (i, j) \in ([d] \setminus R) \times ([d] \setminus R),\; i \neq j.\]
Here we associate the set $R \cup \{i, j\}$ with an element of $\calX^d_k$ with those positive coordinates, per our convention.
We denote the associated random walk matrix by $\mpp_R$, following \eqref{eq:random_walk_matrix}.

\begin{lemma}[Theorem 2.5, \cite{oppenheim2018local} and Theorem 3.1, \cite{alev2020improved}]\label{lem:trickledown}
For $\pi \in \calP(\calX^d_k)$ with full support, if 
\[\lam_2\Par{\mpp_R} \le \frac \alpha {k - 1} \text{ for all } R \in \binom{[d]}{k - 2},\]
for some $\alpha < 1$, then $\TDU{\pi}$ satisfies a $\lam$-Poincar\'e inequality, for $\lam = \frac{1 - \alpha}k$.
\end{lemma}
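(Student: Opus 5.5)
The plan is to combine two standard ingredients from the high-dimensional expander literature. The first is Oppenheim's trickle down step, which transfers spectral bounds from links of larger faces down to links of smaller faces. The second is the Alev--Lau local-to-global theorem, which bounds the Poincar\'e constant of the down-up walk by a product of link spectral gaps. Throughout, I view $\pi \in \calP(\calX^d_k)$ as a weighted pure $(k-1)$-dimensional simplicial complex on $[d]$. Its top faces are the $k$-sets, and each face $F$ with $|F| = j \le k-2$ has a link random walk $\mpp_F$ on $[d]\setminus F$, with edge weights $\mw_{ij} \propto \sum_{G \supseteq F\cup\{i,j\}} \pi(G)$. For $|F| = k-2$ this is exactly the matrix $\mpp_R$ in the statement.

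\textbf{Step 1 (trickle down).} I would use Oppenheim's theorem in the following form. Let $F$ be a face with $|F| \le k-3$. Suppose the $1$-skeleton of the link of $F$ is connected, and suppose every vertex link inside it, i.e.\ every link of $F\cup\{v\}$, has $\lam_2 \le \gamma$ with $\gamma<1$. Then $\lam_2(\mpp_F) \le \frac{\gamma}{1-\gamma}$. Connectivity of all links follows from full support, since every pair $i\neq j$ outside $F$ then carries positive weight. Writing $\gamma_j$ for a uniform bound on $\lam_2$ of links of $j$-faces, I set $\gamma_{k-2} = \frac{\alpha}{k-1}$. Since $t\mapsto \frac{t}{1-t}$ is increasing, induction downward in $j$ gives $\frac{1}{\gamma_{j-1}} \ge \frac1{\gamma_j} - 1$. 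Hence I may take
\[\frac1{\gamma_j} = x_j \defeq \frac{k-1}{\alpha} - (k-2-j), \quad -1 \le j \le k-2.\]
Using $\alpha<1$, we have $x_j \ge \frac{j+1}{\alpha} > 1$ for $j \ge 0$, and also $x_{-1} = (k-1)(\frac1\alpha - 1) > 0$. So every $\gamma_j$ with $j\ge 0$ lies in $[0,1)$, and the induction is valid at each step.

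\textbf{Step 2 (local-to-global).} Next I would invoke Alev--Lau (Theorem 3.1 of \cite{alev2020improved}). It states that if every link of a $j$-face has $\lam_2 \le \gamma_j$ for $0\le j\le k-2$, then the $k$-down-up walk has spectral gap at least $\frac1k\prod_{j=0}^{k-2}(1-\gamma_j)$. I would reprove this only if needed, by induction on $k$ using the law of total variance: condition on the first removed element, and compare the down-up walk to the up-down walk on each link.

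\textbf{Step 3 (telescoping).} Finally I would compute the product. Since $1-\gamma_j = \frac{x_j-1}{x_j} = \frac{x_{j-1}}{x_j}$, the product telescopes:
\[\prod_{j=0}^{k-2}(1-\gamma_j) = \frac{x_{-1}}{x_{k-2}} = \frac{(k-1)(\frac1\alpha-1)}{\frac{k-1}{\alpha}} = 1-\alpha,\]
which gives $\lam = \frac{1-\alpha}{k}$ as claimed.

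The main obstacle is bookkeeping rather than new ideas. I need to match the indexing conventions of the cited theorems: face sizes versus dimensions, and whether Alev--Lau states the product as starting from the empty face. I also need to verify that the links defined there coincide with the link graphs used here, and that the down-up Poincar\'e constant there matches the Dirichlet-form normalization of Section~\ref{ssec:mcmc}. The telescoping is exact, so no slack is lost, but an off-by-one in the range of $j$ would change the constant.
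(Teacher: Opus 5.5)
Your proposal is correct and essentially matches the paper, which gives no separate proof beyond citing these two theorems: propagating the top-link bound $\frac{\alpha}{k-1}$ downward via Oppenheim's trickle down, then substituting the resulting $\gamma_j$ into the Alev--Lau product (which telescopes to $1-\alpha$), is exactly how the cited results combine to give $\lambda = \frac{1-\alpha}{k}$. One small nit: the parameterization $x_j = 1/\gamma_j$ and your bound $x_j \ge \frac{j+1}{\alpha}$ tacitly assume $\alpha \in (0,1)$, but the paper invokes the lemma with $\alpha = 0$ in Corollary~\ref{cor:cw_trickle}, where trickle down directly gives $\gamma_j = 0$ at every level and the product is $1$ (and for $\alpha < 0$ your algebra still goes through, since both ingredients only need $\gamma_j < 1$).
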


\textbf{Spectral gap for rank-one perturbations.} To apply Lemma~\ref{lem:trickledown}, we require tools for bounding the spectral gaps of the link graphs induced by each $R \in \binom{[d]}{k-2}$. The next piece of our framework, Lemma~\ref{lem:main_sgt}, shows such a spectral gap for graphs where the edge weight matrix $\mw$ is induced by an appropriately-bounded perturbation $\mk$ of a rank-one matrix $\va\va^\top$.

\begin{remark}\label{rem:no_k}
An instructive warmup is when $\mk$ is the all-zeroes matrix in Lemma~\ref{lem:main_sgt}, in which case
\begin{equation}\label{eq:rank_one_warmup}\mw = \va\va^\top - \diag{\va}^2\end{equation}
is a rank-one matrix with its diagonal removed. Because $\lam_2(\va\va^\top) = 0$ and $\diag{\va}^2 \in \PSD^{d \times d}$, the min-max characterization of eigenvalues gives $\lam_2(\mw) \le 0$. The same strategy, applied to the similar matrix $\md^{-1/2} \mw \md^{-1/2}$, implies $\lam_2(\mpp) \le 0$, following the notation \eqref{eq:random_walk_matrix}. Lemma~\ref{lem:main_sgt} robustly extends this bound to the case where $\mw$ in \eqref{eq:rank_one_warmup} is perturbed by a bounded matrix $\mk$.
\end{remark}

\begin{lemma}\label{lem:main_sgt}
Let $\va \in \R^d_{>0}$, and let $\mk \in \Sym^{d \times d}$ satisfy $\mk_{ii} = 0$ for all $i \in [d]$. Define
\begin{equation}\label{eq:induced_w}\mw_{ij} = \begin{cases} \va_i\va_j\exp\Par{\mk_{ij}} & i \neq j \\ 
0 & i = j\end{cases}, \text{ for all } (i, j) \in [d] \times [d].\end{equation}
Also, let $m(\mk) \defeq \max_{(i,j) \in[d]\times[d]}|\mk_{ij}|$, let $\mx \in \Sym^{d \times d}$ have $\mx_{ij} = \exp(\mk_{ij}) - 1$ entrywise, and let
\begin{equation}\label{eq:rho_def}\rho(\mk) \defeq \max\Par{0, \sup_{\vu \in \R^d: \nnz(\vu) > 1} \frac{\vu^\top\Par{\mx - \id_d}\vu}{\norm{\vu}_1^2 - \norm{\vu}_2^2}}.\end{equation}
Then, following the notation \eqref{eq:random_walk_matrix}, $\lam_2(\mpp) \le \exp(m(\mk))\rho(\mk)$.
\end{lemma}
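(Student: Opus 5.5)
Write $\mw = \md_a(\mx + \1\1^\top - \id)\md_a$ with $\md_a \defeq \diag{\va}$. Indeed, off the diagonal $\va_i\va_j(1+\mx_{ij}) = \va_i\va_j\exp(\mk_{ij})$. On the diagonal, $\mx_{ii} = e^0 - 1 = 0$, so the diagonal of the right-hand side is $\va_i^2(0+1-1) = 0$.

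Let $\mm \defeq \mx + \1\1^\top - \id$. Since $\mpp = \md^{-1}\mw$ is similar to the symmetric matrix $\md^{-1/2}\mw\md^{-1/2}$, we have $\lam_2(\mpp) = \lam_2(\md^{-1/2}\mw\md^{-1/2})$. By Courant--Fischer it suffices to find a single vector $\vz$ such that every $\vg \perp \vz$ satisfies
\[\vg^\top \md^{-1/2}\mw\md^{-1/2}\vg \le \exp(m(\mk))\rho(\mk)\norm{\vg}_2^2.\]

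Substitute $\vu \defeq \md_a\md^{-1/2}\vg$, so that $\vg^\top\md^{-1/2}\mw\md^{-1/2}\vg = \vu^\top\mm\vu = (\1^\top\vu)^2 + \vu^\top(\mx - \id)\vu$. Take $\vz \defeq \md^{-1/2}\va$, so that $\vg\perp\vz$ is equivalent to $\1^\top\vu = 0$, killing the rank-one term; this is the robust version of Remark~\ref{rem:no_k}. It remains to bound $\vu^\top(\mx-\id)\vu$ for $\vu$ with $\1^\top\vu = 0$.

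If $\nnz(\vu)\le 1$, then $\1^\top\vu=0$ forces $\vu=0$, which is trivial. Otherwise the definition of $\rho$ gives
\[\vu^\top(\mx-\id)\vu \le \rho(\mk)\bigl(\norm{\vu}_1^2 - \norm{\vu}_2^2\bigr),\]
using $\rho\ge0$ and that the denominator in \eqref{eq:rho_def} is positive when $\nnz(\vu)>1$.

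The key step is to relate $\norm{\vu}_1^2 - \norm{\vu}_2^2 = \sum_{i\ne j}|\vu_i||\vu_j|$ to $\norm{\vg}_2^2 = \sum_i \vu_i^2\,\md_{ii}/\va_i^2$, where $\md_{ii} = \va_i\sum_{j\ne i}\va_j e^{\mk_{ij}}$. I would first try bounding $\sum_{i\neq j}|\vu_i||\vu_j|$ via AM--GM with weights tied to $\va$, namely
\[|\vu_i||\vu_j| \le \tfrac12\Bigl(\tfrac{\va_j}{\va_i}\vu_i^2 + \tfrac{\va_i}{\va_j}\vu_j^2\Bigr),\]
which gives
\[\sum_{i\ne j}|\vu_i||\vu_j| \le \sum_i \frac{\vu_i^2}{\va_i}\sum_{j\ne i}\va_j.\]
Then use $e^{\mk_{ij}} \ge e^{-m(\mk)}$ to get $\md_{ii} \ge e^{-m(\mk)}\va_i\sum_{j\ne i}\va_j$. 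Hence
\[\sum_i \frac{\vu_i^2}{\va_i}\sum_{j\neq i}\va_j \le e^{m(\mk)}\sum_i\vu_i^2\frac{\md_{ii}}{\va_i^2} = e^{m(\mk)}\norm{\vg}_2^2.\]
Combining the two displays yields $\vg^\top\md^{-1/2}\mw\md^{-1/2}\vg\le e^{m(\mk)}\rho(\mk)\norm{\vg}_2^2$ on the orthogonal complement of $\vz$, hence the claimed bound on $\lam_2(\mpp)$.

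The main obstacle is this weighted comparison. If the AM--GM weighting fails to line up exactly, I would instead apply Cauchy--Schwarz as
\[\Bigl(\sum_{i\neq j}|\vu_i||\vu_j|\Bigr) \le \Bigl(\sum_{i\neq j}\va_i\va_j\Bigr)\Bigl(\sum_{i\neq j}\frac{\vu_i^2\vu_j^2}{\va_i\va_j}\Bigr)^{1/2}\]
and adjust, but the symmetric AM--GM split appears to give exactly the factor $e^{m(\mk)}$ needed.
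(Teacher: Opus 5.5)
Your proof is correct and is essentially the paper's argument. Where the paper splits $\md^{-1/2}\mw\md^{-1/2}$ into the rank-one matrix $\md^{-1/2}\va\va^\top\md^{-1/2}$ plus a correction and applies Weyl's inequality, you use Courant--Fischer with the explicit test vector $\md^{-1/2}\va$, which kills the rank-one term exactly and rules out $1$-sparse $\vu$; your pairwise AM--GM step gives precisely the paper's Cauchy--Schwarz bound $\norm{\vu}_1^2 - \norm{\vu}_2^2 \le \sum_i \frac{\vu_i^2}{\va_i}(\norm{\va}_1 - \va_i)$. The weighting does line up exactly, so the Cauchy--Schwarz fallback in your last paragraph is unnecessary (and as written it is missing a square root on the first factor).
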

\begin{proof}
Let $\mn \defeq \md^{-1/2}\mw\md^{-1/2}$, so that $\lam_2(\mpp) = \lam_2(\mn)$.
We first write $\mn$ as the sum of a rank-one matrix and a correction. For $\ma \defeq \diag{\va}$,
\[\mn = \md^{-\half}\va\va^\top \md^{-\half} + \md^{-\half}\ma \Par{\mx - \id_d}\ma\md^{-\half}.\]
Because the min-max characterization of eigenvalues gives
\[\lam_2\Par{\mn} \le \lam_2\Par{\md^{-\half}\va\va^\top\md^{-\half}} + \lam_1\Par{ \md^{-\half}\ma \Par{\mx - \id_d}\ma\md^{-\half}},\]
it is enough to bound the second term above.
We next have
\begin{equation}\label{eq:num_denom}\begin{aligned}\lam_1\Par{ \md^{-\half}\ma \Par{\mx - \id_d}\ma\md^{-\half}} &= \sup_{\vy \in \R^d: \vy \neq \0_d} \frac{\vy^\top\md^{-\half}\ma(\mx - \id_d)\ma\md^{-\half}\vy}{\norm{\vy}_2^2} \\ 
&= \sup_{\vu \in \R^d: \vu \neq \0_d} \frac{\vu^\top (\mx - \id_d) \vu}{\vu^\top \ma^{-2}\md\vu}.
\end{aligned}\end{equation}
If the supremum is achieved by a $1$-sparse $\vu$, then the lemma statement holds, because $\rho(\mk) \ge 0$, and any $1$-sparse $\vu$ has a negative numerator in \eqref{eq:num_denom}, because $\mx$ has an all-zeroes diagonal.

We now bound the denominator of \eqref{eq:num_denom} when $\nnz(\vu) > 1$:
\begin{align*}
\vu^\top \ma^{-2} \md \vu &= \sum_{i \in [d]} \frac{\vu_i^2 \md_{ii}}{\va_i^2} = \sum_{i \in [d]} \Par{\frac{\vu_i^2}{\va_i} \sum_{j \in [d]: j \neq i} \va_j \exp(\mk_{ij})} \\
&\ge \exp\Par{-m(\mk)} \sum_{i \in [d]} \frac{\vu_i^2 (\norm{\va}_1 - \va_i)}{\va_i} \\
&= \exp\Par{-m(\mk)} \Par{\sum_{i \in [d]} \frac{\vu_i^2\norm{\va}_1}{\va_i} - \norm{\vu}_2^2 } \ge \exp\Par{-m(\mk)} \Par{\norm{\vu}_1^2 - \norm{\vu}_2^2}.
\end{align*}
The last line applied the Cauchy-Schwarz inequality. Plugging this into \eqref{eq:num_denom} gives the result.
\end{proof}

As a simple application of Lemma~\ref{lem:main_sgt}, we rederive a standard mixing result on the \emph{Curie-Weiss model}.

\begin{model}[Curie-Weiss model, \cite{Weiss07, Ellis12}]\label{model:cw}
In the \emph{Curie-Weiss model}, $\mj = \frac 1 d \1_d\1_d^\top$.
\end{model}


\begin{corollary}\label{cor:cw_trickle}
Let $\pi \in \calP(\calX^d_k)$ be induced by a fixed-magnetization Ising model \eqref{eq:ising_slice}, under the Curie-Weiss model (Model~\ref{model:cw}). Then for any $\beta \in \R$, $\TDU{\pi}$ satisfies a $\frac 1 k$-Poincar\'e inequality.
\end{corollary}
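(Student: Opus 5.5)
The plan is to apply the trickle down theorem (Lemma~\ref{lem:trickledown}) with $\alpha = 0$. For that I need $\lam_2(\mpp_R) \le 0$ for every core $R \in \binom{[d]}{k-2}$. Lemma~\ref{lem:trickledown} then gives a $\frac{1-0}{k} = \frac 1 k$-Poincar\'e inequality.

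The key computation is the link weights. Fix $R$ and let $\br = \vvec(R) = 2\1_R - \1_d$. For distinct $i, j \notin R$, the spin vector of $R \cup \{i,j\}$ is $\vx = \br + 2\ve_i + 2\ve_j$. Under Model~\ref{model:cw}, $\half \vx^\top \mj \vx = \frac{1}{2d}(\1_d^\top \vx)^2$. On the slice $\calX^d_k$ the magnetization $\1_d^\top \vx = 2k - d$ is constant, so the interaction term is the same for every $\vx \in \calX^d_k$.

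It follows that $\pi(\vx) \propto \exp(\beta \vh^\top \vx)$ on the slice, which is valid for any real $\beta$. Hence
\[\mw_{ij} = \pi(R \cup \{i,j\}) \propto \exp(\beta \vh^\top \br)\exp(2\beta \vh_i)\exp(2\beta\vh_j) \propto \va_i \va_j, \quad \va_i \defeq \exp(2\beta \vh_i) > 0.\]
So each link weight matrix has exactly the form \eqref{eq:induced_w} with $\mk = \0$ (up to a positive scalar, which does not change $\mpp_R$).

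Next, Remark~\ref{rem:no_k} gives the spectral bound directly. Here $\mw = \va\va^\top - \diag{\va}^2$, and $\md^{-1/2}\mw\md^{-1/2}$ is a rank-one PSD matrix minus a PSD diagonal. By min-max, $\lam_2(\mpp_R) = \lam_2(\md^{-1/2}\mw\md^{-1/2}) \le 0$. Equivalently, Lemma~\ref{lem:main_sgt} applies with $m(\mk) = 0$ and $\mx = \0$. Then the numerator of \eqref{eq:rho_def} is $-\norm{\vu}_2^2 < 0$, so $\rho(\mk) = 0$.

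There are two small technicalities. First, Lemma~\ref{lem:trickledown} requires full support, which holds since all weights are positive. Second, it requires $\alpha < 1$, which $\alpha = 0$ satisfies. The only step needing care is confirming that the quadratic term really is constant on the slice, including the diagonal entries of $\mj$. This holds because $\mj = \frac 1 d \1_d\1_d^\top$ makes the quadratic form exactly $\frac 1 d(\1_d^\top\vx)^2$, with no separate diagonal correction. I expect no genuine obstacle.
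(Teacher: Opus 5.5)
Your proposal is correct and follows the paper's own proof. Constancy of $\1_d^\top \vx$ on the slice reduces $\pi$ to a density proportional to $\exp(2\beta\vh^\top\1_S)$. Every link then has weights of the form \eqref{eq:induced_w} with $\mk = \0$ and $\va_i = \exp(2\beta\vh_i)$, and Lemma~\ref{lem:main_sgt} (equivalently Remark~\ref{rem:no_k}) gives $\alpha = 0$ in Lemma~\ref{lem:trickledown}.
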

\begin{proof}
Recall that the Curie-Weiss interaction matrix is $\mj = \frac 1 d \1_d\1_d^\top$. Because $\1_d^\top \vx$ is constant over $\vx \in \calX^d_k$, identifying each $\vx$ with its set $S$ of positive coordinates, we have
\begin{equation}\label{eq:cw_is_product}\pi(S) \propto \exp\Par{2\beta \vh^\top \1_S}.\end{equation}
Now for the random walk matrix $\mpp_R$ associated with the link graph of $R \in \binom{[d]}{k - 2}$, we have that the associated edge weights $\mw$ follow \eqref{eq:induced_w} with $\mk$ set to the all-zeroes matrix, and
\[\va_i = \exp\Par{2\beta \vh_i} \text{ for all } i \in [d] \setminus R.\]
Therefore Lemma~\ref{lem:main_sgt} applies with $\mk = \vzero$ and gives a bound of $\alpha = 0$ for use with Lemma~\ref{lem:trickledown}.
\end{proof}

Corollary~\ref{cor:cw_trickle} rephrases the following proof: the fixed-magnetization Curie-Weiss model is a product distribution restricted to a Hamming slice \eqref{eq:cw_is_product}, which Theorem 1.1, \cite{AnariLGV19} proves a Poincar\'e inequality for. We include this example to illustrate a trivial case of our framework.

\textbf{Specialization to Ising model.} We next derive a generic application of the framework given by Lemmas~\ref{lem:trickledown} and~\ref{lem:main_sgt} to Ising models. Consider a fixed-magnetization Ising model \eqref{eq:ising_slice}, where
\[\pi(\vx) \propto \exp\Par{\beta\Par{\half \vx^\top \mj \vx + \vh^\top \vx}} \cdot \ind_{\vx \in \calX^d_k}.\]
Ising models are invariant to changes in the diagonal of $\mj$, so without loss of generality, we explicitly assume in this section that $\mj$ has zero diagonal, i.e., $\mj_{ii} = 0$ for all $i \in [d]$.

\begin{lemma}\label{lem:ising_trickle}
For a fixed-magnetization Ising model $\pi$ \eqref{eq:ising_slice}, and following the notation \eqref{eq:rho_def}, if 
\[\rho(4\beta\mj)\exp(m(4\beta\mj)) \le \frac 1 {2(k - 1)},\]
then $\TDU{\pi}$ satisfies a $\frac 1 {2k}$-Poincar\'e inequality.
\end{lemma}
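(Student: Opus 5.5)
The plan is to combine Lemma~\ref{lem:trickledown} with Lemma~\ref{lem:main_sgt}. The work is to show that every link graph of the fixed-magnetization Ising model has edge weights of the form \eqref{eq:induced_w}, with $\mk$ the principal submatrix of $4\beta\mj$ on $[d]\setminus R$.

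First, fix $R \in \binom{[d]}{k-2}$ and $i \neq j$ in $[d]\setminus R$, and set $\vx = 2\1_{R\cup\{i,j\}} - \1_d$. Using that $\mj$ has zero diagonal, expand the exponent:
\[
\tfrac12\vx^\top\mj\vx = 2\1_{R\cup\{i,j\}}^\top\mj\1_{R\cup\{i,j\}} - 2\1_d^\top\mj\1_{R\cup\{i,j\}} + \tfrac12\1_d^\top\mj\1_d .
\]
The first term equals $2\1_R^\top\mj\1_R + 4(\1_R^\top\mj\ve_i + \1_R^\top\mj\ve_j) + 4\mj_{ij}$. The field term is $2\vh^\top\1_{R\cup\{i,j\}} - \vh^\top\1_d$.

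Next, collect terms. Those depending only on $R$ form a constant $C_R$. The terms depending only on $i$ (and symmetrically only on $j$) combine into $\log\va_i$, where
\[
\va_i \defeq \exp\Par{\beta\Par{4\ve_i^\top\mj\1_R - 2\ve_i^\top\mj\1_d + 2\vh_i}} > 0 .
\]
The single cross term is $4\beta\mj_{ij}$. Altogether
\[
\mw_{ij} = \pi(R\cup\{i,j\}) \propto \va_i\va_j\exp(4\beta\mj_{ij}).
\]
Since $\mpp_R = \md^{-1}\mw$ is invariant to rescaling $\mw$, the constant $C_R$ and the partition function can be dropped. So $\mw$ has exactly the form \eqref{eq:induced_w} on the vertex set $[d]\setminus R$, with $\mk = (4\beta\mj)_{([d]\setminus R)\times([d]\setminus R)}$, which has zero diagonal.

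Then apply Lemma~\ref{lem:main_sgt} to get $\lam_2(\mpp_R) \le \exp(m(\mk))\rho(\mk)$. This needs monotonicity under restriction. The bound $m(\mk) \le m(4\beta\mj)$ is immediate. For $\rho(\mk) \le \rho(4\beta\mj)$, note that any $\vu$ supported on $[d]\setminus R$ extends by zeros to $\R^d$. This extension leaves both the numerator $\vu^\top(\mx - \id)\vu$ and the denominator $\norm{\vu}_1^2 - \norm{\vu}_2^2$ of \eqref{eq:rho_def} unchanged, so the supremum in \eqref{eq:rho_def} can only grow when passing to $[d]$.

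Combining these with the hypothesis gives $\lam_2(\mpp_R) \le \frac{1}{2(k-1)}$ for every $R$. Lemma~\ref{lem:trickledown} with $\alpha = \frac12$ then yields a $\frac{1-1/2}{k} = \frac{1}{2k}$-Poincar\'e inequality.

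The expected obstacle is purely the bookkeeping in the exponent expansion: confirming that the only genuine $(i,j)$ interaction is $4\beta\mj_{ij}$, and that the external field and the $\1_d$ terms separate into the $\va_i\va_j$ factors. The zero-diagonal assumption on $\mj$ guarantees $\mk_{ii} = 0$, as Lemma~\ref{lem:main_sgt} requires. Full support of $\pi$, which Lemma~\ref{lem:trickledown} needs, is clear because $\pi$ has positive density everywhere on $\calX^d_k$.
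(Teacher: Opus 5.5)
Your proposal is correct and follows essentially the same route as the paper: the paper writes $\vx = \vr + 2(\ve_i + \ve_j)$ with $\vr = 2\1_R - \1_d$ and reads off $\va_i = \exp(2\beta(\vh + \mj\vr)^\top\ve_i)$ (which agrees with your $\va_i$) and $\mk_{ij} = 4\beta\mj_{ij}$, then applies Lemma~\ref{lem:main_sgt} and Lemma~\ref{lem:trickledown} with $\alpha = \frac12$. Your zero-extension argument for $\rho(\mk) \le \rho(4\beta\mj)$ just spells out the monotonicity under restriction that the paper asserts in one line.
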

\begin{proof}
Following the notation of Lemma~\ref{lem:trickledown}, it is enough to show that for all $R \in \binom{[d]}{k - 2}$,
\[\lam_2\Par{\mpp_R} \le \frac 1 {2(k - 1)}.\]
We prove this using Lemma~\ref{lem:main_sgt}. Fix some $R$ for the remainder of this proof, and let $S \defeq [d] \setminus R$.
Let $\vr \in \calX^d_{k - 2}$ have positive coordinates with indices $R$, and consider some $\vx = \vr + 2(\ve_i + \ve_j) \in \calX^d_k$, i.e., corresponding to the set $R \cup \{i, j\}$. We have
\[\beta\Par{\half \vx^\top \mj \vx + \vh^\top \vx} = \beta\Par{\half \vr^\top \mj \vr + \vh^\top \vr} + 2\beta\Par{\vh+\mj\vr}^\top(\ve_i + \ve_j) + 4\beta\mj_{ij}. \]
The first term is a constant for all pairs $(i, j) \in S \times S$.
Therefore, up to a proportionality constant, the edge weights are given by \eqref{eq:induced_w}, where for all $(i, j) \in S \times S$,
\[\va_i = \exp\Par{2\beta\Par{\vh + \mj \vr}^\top \ve_i},\quad \mk_{ij} = 4\beta \mj_{ij}.\]
The conclusion now follows from Lemma~\ref{lem:main_sgt} and the assumption, because excluding $R$ from the coordinates can only decrease both $m(4\beta\mj)$ and $\rho(4\beta\mj)$.
\end{proof}

\subsection{Simple sufficient conditions for fast mixing}\label{ssec:log_sk}

Section~\ref{ssec:prelims_trickle} gives a generic strategy for sampling in fixed-magnetization Ising models. By combining Lemmas~\ref{lem:trickledown},~\ref{lem:main_sgt}, and~\ref{lem:ising_trickle}, our task reduces to bounding $\rho(\mk)$ and $m(\mk)$ for $\mk \gets 4\beta\mj$. The bottleneck is typically to control $\rho(\mk)$, whose definition \eqref{eq:rho_def} is somewhat opaque. 

In Lemma~\ref{lem:mixed_quadratic}, we give a more interpretable sufficient condition for applying this framework. We show that one specialization of this condition qualitatively recovers the sparse Dobrushin condition, and that another implies improvements in the same applications as considered in Corollary~\ref{cor:ising_fm_gen}.

\textbf{Mixed-norm quadratic form bound.} Our first strategy for controlling $\rho(\mk)$ decomposes $\mx_{ij} = \exp(\mk_{ij}) - 1$ into a linear term in $\mk_{ij}$, and a high-order term. The high-order contribution to the quadratic form in $\mx$ is folded into our assumption \eqref{eq:mixed_norm_bound}, which implies a bound on $\rho(\mk)$.

\begin{lemma}\label{lem:mixed_quadratic}
In the setting of Lemma~\ref{lem:ising_trickle}, define $\mk \defeq 4\beta\mj$. Then, if 
\begin{equation}\label{eq:mixed_norm_bound}\Abs{\vu^\top \mk \vu} \le \tau\norm{\vu}_1\norm{\vu}_2 + \tau^2 \norm{\vu}_1^2\text{ for all } \vu \in \R^d,\end{equation}
for some $\tau \in [0, \frac 1 4]$, we have $\rho(\mk)\exp(m(\mk)) \le 9\tau^2$.
\end{lemma}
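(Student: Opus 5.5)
The plan is to bound the numerator $\vu^\top(\mx - \id_d)\vu$ in \eqref{eq:rho_def} by splitting $\mx = \mk + \me$. Here $\me_{ij} \defeq \exp(\mk_{ij}) - 1 - \mk_{ij}$ is the second-order remainder, and it satisfies $\me_{ii} = 0$ because $\mk_{ii} = 0$.

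First I control $m(\mk)$. Apply \eqref{eq:mixed_norm_bound} to $\vu = \ve_i + \ve_j$ and to $\vu = \ve_i - \ve_j$ for $i \neq j$. Since $\mk$ has zero diagonal, this gives $2|\mk_{ij}| \le 2\sqrt2\tau + 4\tau^2$, so $m(\mk) \le \sqrt2\tau + 2\tau^2 \le \frac12$ for $\tau \le \frac14$. Hence $\exp(m(\mk)) \le e^{1/2}$, and more importantly every entry satisfies $|\me_{ij}| \le \frac{e^{m}}{2}\mk_{ij}^2 \le c\, m(\mk)^2 \lesssim \tau^2$. This yields the crude bound $\vu^\top\me\vu \le \sum_{i\neq j}|\me_{ij}||\vu_i||\vu_j| \le C\tau^2(\norm{\vu}_1^2 - \norm{\vu}_2^2)$. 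The point of this form is that the zero diagonal lets us use $\sum_{i \ne j}|\vu_i||\vu_j| = \norm{\vu}_1^2 - \norm{\vu}_2^2$, which is exactly the denominator in \eqref{eq:rho_def}.

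Next I handle the linear part together with $-\norm{\vu}_2^2$. Write $a = \norm{\vu}_1$ and $b = \norm{\vu}_2$. By \eqref{eq:mixed_norm_bound},
\[\vu^\top\mk\vu - b^2 \le \tau ab + \tau^2 a^2 - b^2 \le \frac{\tau^2 a^2}{4} + \tau^2 a^2,\]
where the second step maximizes the concave quadratic $\tau a b - b^2$ over $b$. The task is then to convert $a^2$ into $a^2 - b^2$. The issue is when $b$ is close to $a$, i.e.\ $\vu$ is nearly $1$-sparse. In that regime I instead keep the $-b^2$ term: if $b^2 \ge a^2/2$, then $\tau ab + \tau^2 a^2 - b^2 \le (\sqrt2\tau + 2\tau^2 - 1)b^2 < 0$ for $\tau \le \frac14$, so the ratio is negative and the $\max(0,\cdot)$ in \eqref{eq:rho_def} absorbs it. Otherwise $b^2 < a^2/2$, so $a^2 \le 2(a^2 - b^2)$, and the bound becomes $\frac52\tau^2(a^2-b^2)$.

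I expect this case split to be the main obstacle, because the $\me$ term must be added before the split. The remedy is that the $\me$ bound already carries the factor $a^2 - b^2$. In the case $b^2 \ge a^2/2$, the total numerator is at most $(\sqrt2\tau + 2\tau^2 - 1)b^2 + C\tau^2(a^2 - b^2)$, and I must check this stays below $9\tau^2(a^2-b^2)$ after multiplying by $e^{m}$. Combining, I would get $\rho(\mk) \le (\frac52 + C)\tau^2$ with $C \approx \frac{e^{1/2}}{2}(\sqrt2 + \frac12)^2$. I then multiply by $e^{m(\mk)} \le e^{\sqrt2\tau + 2\tau^2}$ and tune the constants, using the sharper bound $m(\mk) \le \sqrt2\tau + 2\tau^2$ for small $\tau$, to reach $9\tau^2$. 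If the constants do not close at $\tau = \frac14$, I would tighten the entrywise bound using $\vu = \ve_i \pm \ve_j$ more carefully, or bound $|\me_{ij}|$ via $\mk_{ij}^2 e^{|\mk_{ij}|}/2$ with exact constants.
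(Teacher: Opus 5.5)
Your proposal is correct, and it differs from the paper mainly in how the $-\norm{\vu}_2^2$ term is handled. The paper applies Young's inequality as $\tau\norm{\vu}_1\norm{\vu}_2 \le \frac12\norm{\vu}_2^2 + \frac12\tau^2\norm{\vu}_1^2$. It bounds the remainder crudely by $2\sqrt{e}\,\tau^2\norm{\vu}_1^2$, without using its zero diagonal, and so reaches $\vu^\top(\mx - \id_d)\vu \le -\frac12\norm{\vu}_2^2 + 5\tau^2\norm{\vu}_1^2$. Since $5\tau^2 \le \frac12$ for $\tau \le \frac14$, the reserved $-\frac12\norm{\vu}_2^2$ turns this into $5\tau^2(\norm{\vu}_1^2 - \norm{\vu}_2^2)$ in one line, with no case analysis, and $e^{2\tau} \le \frac95$ finishes.

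By maximizing $\tau ab - b^2$ over $b$, you use up all of $-b^2$. That is why you need the split at $b^2 = a^2/2$ and the zero-diagonal identity $\sum_{i\neq j}|\vu_i||\vu_j| = a^2 - b^2$ for the remainder. In your first case the check is immediate: the first term is negative, and the remainder already carries the factor $a^2 - b^2$.

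Your constants do close, but with little room. Uniformly over $\tau \le \frac14$ you get
$\rho(\mk)e^{m(\mk)} \le (\frac52 + \frac{e^{1/2}}{2}(\sqrt2 + \frac12)^2)\, e^{\sqrt2/4 + 1/8}\,\tau^2 \approx 5.52 \times 1.614\,\tau^2 \approx 8.91\,\tau^2$.
The cruder bound $e^{m(\mk)} \le e^{1/2}$ would give about $9.10\,\tau^2$. So on your route the sharper bound $m(\mk) \le \sqrt2\tau + 2\tau^2$ is genuinely needed, not just a refinement.

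The paper's absorption trick gives a shorter argument with slack (its bound is $5e^{1/2}\tau^2 \approx 8.24\,\tau^2$). Your route gives a slightly sharper remainder estimate, which this lemma does not require.
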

\begin{proof}
First, note that \eqref{eq:mixed_norm_bound} implies a bound on $m(\mk)$: taking $\vu \gets \ve_i + \ve_j$ gives
\[\Abs{\mk_{ij}} \le \sqrt 2 \tau + 2\tau^2 \le 2\tau.\]
Therefore, $m(\mk) \le 2\tau$. Further, if we decompose $\mx = \mk + \mr$, then
\[0 \le \mr_{ij} \le \sup_{|x| \le 2\tau} \exp(x) - 1 - x \le 2\sqrt{e}\tau^2, \text{ for all } (i, j) \in [d] \times [d].\]
Now by the triangle inequality, and Young's inequality applied to \eqref{eq:mixed_norm_bound},
\begin{align*}\Abs{\vu^\top \mx \vu} &\le \Abs{\vu^\top \mk \vu} + \Abs{\vu^\top \mr \vu} \\
&\le \half \norm{\vu}_2^2 + \Par{1 + \half + 2\sqrt e}\tau^2\norm{\vu}_1^2 \le \half\norm{\vu}_2^2 + 5\tau^2\norm{\vu}_1^2.
\end{align*}
We thus have
\[\vu^\top \mx \vu - \norm{\vu}_2^2 \le -\half\norm{\vu}_2^2 + 5\tau^2\norm{\vu}_1^2 \le 5\tau^2\Par{\norm{\vu}_1^2 - \norm{\vu}_2^2}.\]
Therefore, $\rho(\mk) \le 5\tau^2$, and the conclusion follows from
$\sup_{\tau \in [0, \frac 1 4]} \exp\Par{2\tau} \le \frac 9 5$.
\end{proof}

\textbf{Recovering a sparse Dobrushin condition.} We next observe that the $\tau^2 \norm{\vu}_1^2$ term alone in \eqref{eq:mixed_norm_bound} already qualitatively recovers the sparse Dobrushin condition of Theorem~\ref{thm:fast_mixing_slice_general}.

\begin{lemma}\label{lem:max_entry_recover}
In the setting of Lemma~\ref{lem:ising_trickle}, if $m(\beta \mj) \le \frac 1 {72k}$, $\TDU{\pi}$ satisfies a $\frac 1 {2k}$-Poincar\'e inequality.
\end{lemma}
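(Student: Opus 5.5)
The plan is to verify the mixed-norm hypothesis \eqref{eq:mixed_norm_bound} of Lemma~\ref{lem:mixed_quadratic} using only the entrywise bound. Then Lemma~\ref{lem:ising_trickle} gives the Poincar\'e inequality. Set $\mk \defeq 4\beta\mj$, which has zero diagonal by the convention of this section. The hypothesis $m(\beta\mj) \le \frac{1}{72k}$ gives $m(\mk) \le \frac{4}{72k} = \frac{1}{18k}$.

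For any $\vu \in \R^d$ we use the crude bound
\[\Abs{\vu^\top \mk \vu} \le \sum_{i \neq j} \Abs{\mk_{ij}}\Abs{\vu_i}\Abs{\vu_j} \le m(\mk)\norm{\vu}_1^2 \le \frac{1}{18k}\norm{\vu}_1^2.\]
This is dominated by the $\tau^2\norm{\vu}_1^2$ term of \eqref{eq:mixed_norm_bound} once $\tau^2 = \frac{1}{18k}$. The term $\tau\norm{\vu}_1\norm{\vu}_2$ is nonnegative, so it can be discarded. Since $k \ge 1$, we have $\tau = (18k)^{-1/2} \le \frac{1}{\sqrt{18}} < \frac14$, so $\tau$ lies in the range $[0,\frac14]$ that Lemma~\ref{lem:mixed_quadratic} requires.

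Lemma~\ref{lem:mixed_quadratic} then yields $\rho(\mk)\exp(m(\mk)) \le 9\tau^2 = \frac{9}{18k} = \frac{1}{2k} \le \frac{1}{2(k-1)}$. This is exactly the hypothesis of Lemma~\ref{lem:ising_trickle}, which gives the $\frac{1}{2k}$-Poincar\'e inequality for $\TDU{\pi}$.

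The only care needed is in the constants: the factor $4$ relating $\mk$ to $\beta\mj$, and the factor $9$ from Lemma~\ref{lem:mixed_quadratic}, together fix the choice $72$. Degenerate small $k$ (e.g.\ $k=1$, where the links are vacuous) needs no separate treatment, since the same chain of lemmas applies formally.
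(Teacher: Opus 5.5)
Your proposal is correct and matches the paper's proof: you verify \eqref{eq:mixed_norm_bound} with $\tau^2 = \frac{1}{18k}$ via $\Abs{\vu^\top \mk \vu} \le m(\mk)\norm{\vu}_1^2 \le \frac{1}{18k}\norm{\vu}_1^2$, then chain Lemmas~\ref{lem:mixed_quadratic} and~\ref{lem:ising_trickle}. Your explicit checks that $\tau \le \frac14$ and $9\tau^2 = \frac{1}{2k} \le \frac{1}{2(k-1)}$ fill in constants the paper leaves implicit.
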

\begin{proof}
By combining Lemmas~\ref{lem:ising_trickle} and~\ref{lem:mixed_quadratic}, it suffices to show that \eqref{eq:mixed_norm_bound} holds with $\tau^2 = \frac 1 {18k}$. Under the assumption on $m(\beta\mj)$, we have the desired
\[\Abs{\vu^\top \mk \vu} \le m(\mk)\norm{\vu}_1^2= m(4\beta\mj)\norm{\vu}_1^2 \le \frac 1 {18k}\norm{\vu}_1^2.\]
\end{proof}

We remark that compared to Theorem~\ref{thm:fast_mixing_slice_general}, Lemma~\ref{lem:max_entry_recover} loses a constant factor in the allowable temperature range, and only implies mixing in $\chi^2$ divergence, which typically loses a $\approx k$ factor compared to analogous mixing time bounds in Hamming distance.

\textbf{Applications to Models~\ref{model:sk} and~\ref{model:hopfield}.} Our second application of Lemma~\ref{lem:mixed_quadratic} controls the $\tau$ required in \eqref{eq:mixed_norm_bound} via the largest $\normsop{\mk_{S \times S}}$, appropriately normalized by $|S|$, over all $S \subseteq [d]$. As intuition for why, if $\vu$ is the $0$-$1$ indicator vector for some $S$,
\begin{align*}\frac{|\vu^\top \mk \vu|}{\norm{\vu}_1\norm{\vu}_2} \le |S|\normop{\mk_{S \times S}} \cdot \frac 1 {|S|^{1.5}} = \frac{\normop{\mk_{S \times S}}}{\sqrt{|S|}}, \\
\frac{|\vu^\top \mk \vu|}{\norm{\vu}_1^2} \le |S|\normop{\mk_{S \times S}} \cdot \frac 1 {|S|^{2}} = \frac{\normop{\mk_{S \times S}}}{|S|}.
\end{align*}
Lemma~\ref{lem:comparison_mixednorm} uses a \emph{shelling decomposition} to make this intuition rigorous. Interestingly, the shelling decomposition is a standard strategy for passing to continuous notions of sparsity \cite{CandesRT06}, further strengthening connections between our paper's toolkit and the broader literature.

\begin{lemma}\label{lem:comparison_mixednorm}
Let $\mk \in \Sym^{d \times d}$, and suppose that
\begin{equation}\label{eq:size_reg_bound}\normkop{\mk}{s} \le \alpha \sqrt{s} + \alpha^2 s \text{ for all } s \in [d].\end{equation}
Then \eqref{eq:mixed_norm_bound} holds with $\tau \defeq 32\alpha$.
\end{lemma}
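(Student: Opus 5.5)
Fix $\vu \in \R^d$; by homogeneity and symmetry assume $\norm{\vu}_1 = 1$ and that the coordinates are sorted by magnitude, $|\vu_{(1)}| \ge |\vu_{(2)}| \ge \cdots$. I will use a dyadic shelling decomposition in the style of \cite{CandesRT06}. Let $B_0$ be the indices of the top $1$ coordinate and $B_\ell$ those of ranks in $(2^{\ell-1}, 2^\ell]$ for $\ell \ge 1$, so $|B_\ell| \le 2^{\ell-1}$ (or $1$). Write $\vu = \sum_\ell \vu_\ell$ with $\vu_\ell \defeq \vu_{B_\ell}$. Every entry of $\vu_\ell$ for $\ell \ge 1$ is at most the smallest entry of $\vu_{\ell-1}$. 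This gives the two standard shelling estimates
\[
\norm{\vu_\ell}_\infty \le \frac{\norm{\vu_{\ell-1}}_1}{|B_{\ell-1}|}, \qquad \norm{\vu_\ell}_2 \le \sqrt{|B_\ell|}\,\norm{\vu_\ell}_\infty \lesssim \frac{\norm{\vu_{\ell-1}}_1}{\sqrt{|B_{\ell-1}|}}.
\]

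Next I expand $\vu^\top \mk \vu = \sum_{\ell,\ell'} \vu_\ell^\top \mk \vu_{\ell'}$. For each pair, the vectors $\vu_\ell,\vu_{\ell'}$ are supported on $B_\ell \cup B_{\ell'}$, which has size at most $s \defeq 2\max(|B_\ell|,|B_{\ell'}|)$. Polarization then gives $|\vu_\ell^\top \mk \vu_{\ell'}| \le \normkop{\mk}{s}\norm{\vu_\ell}_2\norm{\vu_{\ell'}}_2$, and by \eqref{eq:size_reg_bound} this is at most $(\alpha\sqrt{s}+\alpha^2 s)\norm{\vu_\ell}_2\norm{\vu_{\ell'}}_2$. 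Take $\ell \le \ell'$, so that $s \approx 2^{\ell'}$. Two pieces need to be controlled.
\begin{itemize}
\item The $\alpha^2 s$ piece: $s\norm{\vu_\ell}_2\norm{\vu_{\ell'}}_2 \lesssim \norm{\vu_{\ell'-1}}_1 \cdot \sqrt{2^{\ell'}}\norm{\vu_\ell}_2$. I bound $\sqrt{|B_\ell|}\norm{\vu_\ell}_2$ by $\norm{\vu_{\ell-1}}_1$ (and by $\norm{\vu_0}_1$ when $\ell=0$). This leaves a factor $2^{(\ell'-\ell)/2}$, which must be beaten. A crude per-pair bound loses here. I will instead handle the $\alpha^2$ part by pulling out the top shell: the whole tail $\sum_{\ell \ge 1}\vu_\ell$ against $\vu_{\ell'}$ is better handled by grouping. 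Concretely, apply the sparse norm bound at scale $s=2^{\ell'}$ to the pair $(\vu_{\le \ell'}, \vu_{\ell'})$ restricted to the top $2^{\ell'}$ coordinates. Since $\norm{\vu_{\le\ell'}}_2 \le \norm{\vu}_2$ and $\norm{\vu_{\ell'}}_2 \le 2^{-(\ell'-1)/2}\norm{\vu_{\ell'-1}}_1$, this yields terms $\alpha\sqrt{2^{\ell'}}\cdot 2^{-\ell'/2}\norm{\vu_{\ell'-1}}_1\norm{\vu}_2 = O(\alpha\norm{\vu_{\ell'-1}}_1\norm{\vu}_2)$ and $\alpha^2 2^{\ell'}\norm{\vu_{\le \ell'}}_2\norm{\vu_{\ell'}}_2$.
\item For the latter term, I use $\norm{\vu_{\le\ell'}}_2 \le \sqrt{\norm{\vu}_1\norm{\vu}_\infty}$. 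Alternatively, I use Cauchy--Schwarz as $2^{\ell'}\norm{\vu_{\ell'}}_2^2 \lesssim \norm{\vu_{\ell'-1}}_1^2$ together with AM-GM, splitting into $\alpha^2\norm{\vu}_1^2$ and $\alpha\norm{\vu}_1\norm{\vu}_2$ contributions.
\end{itemize}

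Concretely, I rewrite the quadratic form as $\sum_{\ell'} (2\vu_{<\ell'}+\vu_{\ell'})^\top\mk\vu_{\ell'}$, which telescopes, with each term's support inside the top $2^{\ell'}$ indices. Each term is bounded by $(\alpha 2^{\ell'/2} + \alpha^2 2^{\ell'})\cdot 2\norm{\vu}_2\cdot\norm{\vu_{\ell'}}_2$. Summing the $\alpha$ part with the shelling estimate gives $\lesssim \alpha\norm{\vu}_2\sum_{\ell'}\norm{\vu_{\ell'-1}}_1 \le O(\alpha)\norm{\vu}_1\norm{\vu}_2$.

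The $\alpha^2$ part, $\alpha^2 2^{\ell'}\norm{\vu}_2\norm{\vu_{\ell'}}_2$, is the main obstacle, since the factor $\norm{\vu}_2$ is too coarse. I will refine it by replacing $\norm{\vu_{\le \ell'}}_2$ with $\norm{\vu_{<\ell'}}_2+\norm{\vu_{\ell'}}_2$ and bounding $\norm{\vu_{<\ell'}}_2 \le \sqrt{\norm{\vu}_1\max|\vu|}$. If this stays lossy, I will fall back on Young's inequality, $\alpha^2 2^{\ell'} xy \le \alpha\, x\,\cdot\,(\alpha 2^{\ell'} y)$, and trade against the $\alpha\norm{\vu}_1\norm{\vu}_2$ budget: $2^{\ell'/2}\norm{\vu_{\ell'}}_2 \lesssim \norm{\vu_{\ell'-1}}_1$, so $\alpha^2 2^{\ell'}\norm{\vu_{\le\ell'}}_2\norm{\vu_{\ell'}}_2 \le \alpha^2 2^{\ell'/2}\norm{\vu}_2\norm{\vu_{\ell'-1}}_1$. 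I then split on whether $\alpha 2^{\ell'/2} \le 1$. If it is, the term is absorbed into $\alpha\norm{\vu}_1\norm{\vu}_2$. If it is not, I instead use $\normkop{\mk}{s} \le 2\alpha^2 s$ and $\norm{\cdot}_2 \le \norm{\cdot}_1$-type bounds on the shells, $2^{\ell'}\norm{\vu_{\ell'}}_2\norm{\vu_{\le\ell'}}_2 \lesssim \norm{\vu}_1\norm{\vu_{\ell'-1}}_1\cdot 2^{\ell'/2}/\sqrt{|B|}$, to get $\alpha^2\norm{\vu}_1^2$ after summation.

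Tracking constants loosely through these steps gives the claimed $\tau = 32\alpha$.
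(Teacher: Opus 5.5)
There is a genuine gap. It sits in the $\alpha^2 s$ part, at exactly the step you flag, and none of your proposed fixes closes it. The loss has already happened when you bound each telescoped term $\Abs{(2\vu_{<\ell'}+\vu_{\ell'})^\top\mk\vu_{\ell'}}$ by $\normkop{\mk}{2^{\ell'}}$ times two $\ell_2$ norms.

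Take $\vu$ with one coordinate equal to $\frac12$ and $m=2^M$ further coordinates equal to $\frac1{2m}$. Then $\norm{\vu}_1=1$, $\norm{\vu}_2\approx\frac12$, and the target $\tau\norm{\vu}_1\norm{\vu}_2+\tau^2\norm{\vu}_1^2$ is $\Theta(\alpha+\alpha^2)$. For $1\le\ell'\le M$ the shell $B_{\ell'}$ lies in the flat tail, so $\norm{\vu_{\ell'}}_2=2^{(\ell'-1)/2}/(2m)$. Meanwhile $\norm{2\vu_{<\ell'}+\vu_{\ell'}}_2\ge1$, and also $2\norm{\vu}_2\ge1$. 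Hence the $\alpha^2 2^{\ell'}$ parts of your per-term bounds are at least $\alpha^2 2^{(3\ell'-1)/2}/(2m)$. Already the $\ell'=M$ term is $\alpha^2\sqrt m/(2\sqrt2)$, which exceeds the target by a factor of order $\alpha\sqrt m/(1+\alpha)$, unbounded as $m\to\infty$.

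The dominant shells are the top ones, which fall exactly in your case $\alpha 2^{\ell'/2}>1$. The estimate you invoke there, $2^{\ell'}\norm{\vu_{\ell'}}_2\norm{\vu_{\le\ell'}}_2\lesssim\norm{\vu}_1\norm{\vu_{\ell'-1}}_1$, is false. It would require $\norm{\vu_{\le\ell'}}_2\lesssim 2^{-\ell'/2}\norm{\vu}_1$, whereas $\norm{\vu_{\le\ell'}}_2\ge\norm{\vu}_\infty$ can be comparable to $\norm{\vu}_1$; the bound $\sqrt{\norm{\vu}_1\norm{\vu}_\infty}$ does no better. Conceptually, dyadic shelling charges the interaction between a heavy head and a long shell at the large scale $2^{\ell'}$, where $\normkop{\mk}{2^{\ell'}}$ is governed by $\alpha^2 2^{\ell'}$. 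No later case split or AM--GM applied to those per-term bounds can undo this.

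The missing idea is a single block size adapted to $\vu$, as in the paper. Set $\sigma\defeq\lceil\norm{\vu}_1^2/\norm{\vu}_2^2\rceil$ and cut the sorted coordinates into consecutive blocks $B_1,B_2,\ldots$ of size $\sigma$. For $\ell\ge2$ you have $\norm{\vu_{B_\ell}}_2\le\sigma^{-1/2}\norm{\vu_{B_{\ell-1}}}_1$, which gives $\sum_\ell\norm{\vu_{B_\ell}}_2\le\norm{\vu}_2+\sigma^{-1/2}\norm{\vu}_1\le2\norm{\vu}_2$. Every union $B_\ell\cup B_{\ell'}$ is $2\sigma$-sparse, so
\[\Abs{\vu^\top\mk\vu}\le\normkop{\mk}{2\sigma}\Par{\sum_\ell\norm{\vu_{B_\ell}}_2}^2\le8\Par{\alpha\sqrt\sigma+\alpha^2\sigma}\norm{\vu}_2^2\le32\alpha\norm{\vu}_1\norm{\vu}_2+(32\alpha)^2\norm{\vu}_1^2,\]
using $\sigma\le2\norm{\vu}_1^2/\norm{\vu}_2^2$ in the last step. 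In the example above $\sigma=O(1)$, so the head meets the tail only at constant scale.

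Your treatment of the $\alpha\sqrt s$ part is fine. It is the $\alpha^2 s$ part that forces the uniform, $\vu$-adapted scale $\sigma$ in place of dyadic scales.
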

\begin{proof}
Fix a vector $\vu \neq \0_d$, and let
\[\sigma \defeq \left\lceil\frac{\norm{\vu}_1^2}{\norm{\vu}_2^2}\right\rceil\]
which can intuitively be thought of as a numerical analog of the sparsity of $\vu$. 

Sort the coordinates of $\vu$ (relabel $[d]$ by a permutation) so that $|\vu_1| \ge |\vu_2| \ge \ldots \ge |\vu_d|$. Now partition $[d]$ into consecutive blocks $B_1 = [\sigma]$, $B_2 = [2\sigma] \setminus B_1$, $\ldots$ of size at most $\sigma$ each. For each $\ell \ge 2$, monotonicity of the coordinates gives
\[\norm{\vu_{B_\ell}}_2 \le \sqrt{\sigma} \norm{\vu_{B_\ell}}_\infty \le \frac 1 {\sqrt \sigma} \norm{\vu_{B_{\ell - 1}}}_1, \]
so summing, we have
\[\sum_\ell \norm{\vu_{B_\ell}}_2 \le \norm{\vu}_2 + \frac 1 {\sqrt \sigma}\norm{\vu}_1 \le 2\norm{\vu}_2.\]
Now we decompose $\vu^\top \mk \vu$ blockwise, and control each block using $\alpha$: because each union $B_\ell \cup B_{\ell'}$ is $2\sigma$-sparse, and each blockwise contribution is only supported on this set,
\begin{equation}\label{eq:shell_quadform}\begin{aligned}\Abs{\vu^\top \mk \vu} &\le 2(\alpha \sqrt \sigma + \alpha^2 \sigma)\sum_{\ell, \ell'} \norm{\vu_{B_\ell}}_2\norm{\vu_{B_{\ell'}}}_2 \\
&\le 2(\alpha \sqrt \sigma + \alpha^2 \sigma) \Par{\sum_\ell \norm{\vu_{B_\ell}}_2}^2 \le 8(\alpha \sqrt \sigma + \alpha^2 \sigma) \norm{\vu}_2^2.
\end{aligned}\end{equation}
Finally, using $\norm{\vu}_2 \le 2\sigma^{-1/2} \norm{\vu}_1$ gives the claim.
\end{proof}

We now derive an application to the SK model (Model~\ref{model:sk}). This application only uses the $\alpha \sqrt{s}$ term in \eqref{eq:size_reg_bound}, due to the operator norm behavior of entrywise sub-Gaussian matrices.
Per our convention in this section, we use $\mj_{ii} = 0$ for all $i \in [d]$, which does not affect the Gibbs measure \eqref{eq:ising_slice}.

\begin{corollary}\label{cor:near_proportional_sk}
Let $\delta \in (0, \half)$ and let $\pi$ be induced by a fixed-magnetization Ising model \eqref{eq:ising_slice} under the SK model (Model~\ref{model:sk}). Further, assume that for an appropriate constant,
\[\beta = O\Par{\sqrt{\frac{d}{k\log \frac d \delta}}}.\]
Then with probability $\ge 1 - \delta$, $\TDU{\pi}$ satisfies a $\frac 1 {2k}$-Poincar\'e inequality.
\end{corollary}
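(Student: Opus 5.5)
The plan is to chain Lemma~\ref{lem:ising_trickle}, Lemma~\ref{lem:mixed_quadratic}, and Lemma~\ref{lem:comparison_mixednorm}. The target is to show that $\mk \defeq 4\beta\mj$ (with zeroed diagonal) satisfies \eqref{eq:size_reg_bound} for some $\alpha$ with $(32\alpha)^2 \le \frac{1}{18k}$ and $32\alpha \le \frac 1 4$. Lemma~\ref{lem:comparison_mixednorm} then gives \eqref{eq:mixed_norm_bound} with $\tau = 32\alpha$. Lemma~\ref{lem:mixed_quadratic} gives $\rho(\mk)\exp(m(\mk)) \le 9\tau^2 \le \frac{1}{2k} \le \frac{1}{2(k-1)}$. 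Finally, Lemma~\ref{lem:ising_trickle} yields the $\frac{1}{2k}$-Poincar\'e inequality. So it suffices to take $\alpha = c/\sqrt{k}$ for a small absolute constant $c$ and to verify $\normkop{\mk}{s} \le \alpha\sqrt{s}$ for all $s \in [d]$ with probability $\ge 1-\delta$. Only the first term of \eqref{eq:size_reg_bound} is used, as the remark before the corollary suggests.

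The key input is a sparse operator norm bound for GOE. For each fixed $S$ with $|S| = s$, the principal submatrix $\mj_{S\times S}$ (diagonal zeroed) is a $s\times s$ symmetric matrix with independent $\Nor(0,\frac1d)$ entries above the diagonal. By a standard $\epsilon$-net argument on the unit sphere of $\R^S$ (as in Section 4.4 of \cite{Vershynin18}), or by Gaussian concentration of $\normsop{\cdot}$ (which is $\sqrt{2/d}$-Lipschitz in the entries) around its mean $O(\sqrt{s/d})$, we get
\[\Pr\Brack{\normop{\mj_{S\times S}} \ge C\sqrt{\tfrac{s}{d}} + t} \le 2\exp\Par{-\tfrac{d t^2}{C}}.\]
We union bound over the $\binom{d}{s} \le \exp(s\log\frac{ed}{s})$ sets of size $s$, and then over all $s \in [d]$ with failure budget $\delta/d$ each. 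Taking $t^2 \asymp \frac{s\log(ed/s) + \log(d/\delta)}{d}$, we obtain simultaneously for all $s$ that
\[\normkop{\mj}{s} \le C'\sqrt{\frac{s\log(d/\delta)}{d}},\]
using $s\log(ed/s) + \log(d/\delta) \lesssim s\log(d/\delta)$ for $s \ge 1$.

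Multiplying by $4\beta$ gives $\normkop{\mk}{s} \le 4C'\beta\sqrt{\log(d/\delta)/d}\cdot\sqrt{s}$. Hence $\alpha = 4C'\beta\sqrt{\log(d/\delta)/d}$ works, and the requirement $\alpha \le c/\sqrt{k}$ is exactly the assumed $\beta = O(\sqrt{d/(k\log(d/\delta))})$ with a small enough constant. The condition $\tau \le \frac 1 4$ is implied since $k\ge 1$.

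The main obstacle is obtaining the uniform-in-$s$ bound with the correct $\sqrt{s}$ scaling and only a single logarithmic factor. Everything else is bookkeeping of constants. The union bound over supports is what produces the $\log(d/\delta)$, and hence the logarithmic loss relative to the proportional regime treated later. One also has to note that zeroing the diagonal does not hurt: the off-diagonal part is what the net argument controls, and in any case $\normsop{\diag{\mj}} \lesssim \sqrt{\log(d/\delta)/d}$ by Fact~\ref{fact:sk_facts}, which is absorbed into the bound.
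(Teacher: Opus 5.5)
Your proposal is correct and follows essentially the same route as the paper. You chain Lemmas~\ref{lem:comparison_mixednorm}, \ref{lem:mixed_quadratic}, and~\ref{lem:ising_trickle}, taking $\alpha \asymp \beta\sqrt{\log(d/\delta)/d}$ so that $32\alpha \le (18k)^{-1/2}$. The only difference is how you get the per-subset bound $\normop{\mj_{S \times S}} \lesssim \sqrt{s\log(d/\delta)/d}$: you use Gaussian Lipschitz concentration (or a net), where the paper cites Corollary~3.9 of \cite{BandeiraVH16}. Both then apply the same union bound over subsets and sizes.
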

\begin{proof}
We first claim that for Model~\ref{model:sk} with zero diagonal,
\begin{equation}\label{eq:alphabound}\max_{s \in [d]} \frac{\normkop{\mj}{s}}{\sqrt{s}} \le C\sqrt{\frac{\log \frac d \delta}{d}},\end{equation}
with probability $\ge 1 - \delta$, for a universal constant $C$.
To see this, fix $s \in [d]$. With probability $\ge 1 - \frac \delta {d^{s + 1}}$,  Corollary 3.9, \cite{BandeiraVH16} shows that for a fixed $S \in \binom{[d]}{s}$,
\begin{equation}\label{eq:set_size_opnorm}\normop{\mj_{S \times S}} \le C\sqrt{\frac{s\log \frac d \delta}{d}}.\end{equation}
Now a union bound over the $\le d^{s}$ possible $S$, and the $d$ possible $s \in [d]$, shows \eqref{eq:alphabound}. 

Finally, for $\mk = 4\beta\mj$, the assumed range on $\beta$ implies that, following the notation \eqref{eq:size_reg_bound}, $32\alpha \le (18k)^{-1/2}$. Also, Lemma~\ref{lem:comparison_mixednorm} implies that \eqref{eq:mixed_norm_bound} holds with $\tau = 32\alpha$. Combining gives $9\tau^2 \le \frac 1 {2k}$ in Lemma~\ref{lem:mixed_quadratic}, and then the claim follows from Lemma~\ref{lem:ising_trickle}.
\end{proof}

Corollary~\ref{cor:near_proportional_sk} directly improves the allowable temperature range in Corollary~\ref{cor:ising_fm_gen}'s SK model specialization by a $\sqrt{k}$ factor. Unfortunately, it does not permit taking arbitrary $\beta = O(1)$ unless $k$ is sufficiently sublinear in $d$, i.e., smaller than $\frac d {\log d}$. This is an inherent artifact of using the bound \eqref{eq:alphabound}, because the maximum of $\Omega(d)$ Gaussians grows with $\sqrt{\log d}$, causing an obstruction at $|S| = 2$. However, it is not inherent to the SK model, and in Section~\ref{ssec:d_sk}, we show how to further shave this extraneous logarithmic factor, by more directly controlling $\rho(4\beta\mj)$ in Lemma~\ref{lem:ising_trickle}.

We conclude the section with a similar improvement upon Corollary~\ref{cor:ising_fm_gen}'s Gaussian Hopfield model specialization. Crucially, to be compatible with the two-regime sub-exponential concentration of the operator norms of Wishart matrices, our proof uses both of the terms in \eqref{eq:size_reg_bound}.

\begin{corollary}\label{cor:trickle_ghop}
Let $\delta \in (0, \half)$ and let $\pi$ be induced by a fixed-magnetization Ising model \eqref{eq:ising_slice} under the Gaussian Hopfield model (Model~\ref{model:hopfield}). Further, assume that for an appropriate constant,
\[n = \Omega\Par{\max(\beta, \beta^2)k\log \frac d \delta}.\]
Then with probability $\ge 1 - \delta$, $\TDU{\pi}$ satisfies a $\frac 1 {2k}$-Poincar\'e inequality.
\end{corollary}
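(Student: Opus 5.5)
The plan is to follow the template of Corollary~\ref{cor:near_proportional_sk}: verify the sparse operator norm hypothesis \eqref{eq:size_reg_bound} of Lemma~\ref{lem:comparison_mixednorm} for $\mk = 4\beta\mj$, with $\alpha \approx k^{-1/2}$, and then chain Lemmas~\ref{lem:comparison_mixednorm}, \ref{lem:mixed_quadratic}, and~\ref{lem:ising_trickle}. By the convention of this section we replace $\mj$ by its off-diagonal part $\mj - \diag{\mj_{11},\ldots,\mj_{dd}}$, which does not change \eqref{eq:ising_slice}. The one new ingredient is a two-sided, two-regime bound on sparse principal submatrices of the centered Wishart matrix.

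\textbf{Step 1 (sparse norms of the zero-diagonal Wishart matrix).} Write $L \defeq \log\frac d\delta$. Fix $s \in [d]$ and $S \in \binom{[d]}{s}$. The matrix $\mj_{S\times S}$ is the empirical covariance of $n$ i.i.d.\ $\Nor(\0_s, \id_s)$ vectors. The two-sided covariance estimation bound (Exercise 4.7.3, \cite{Vershynin18}, the source of Fact~\ref{fact:ghop_facts}) gives, with probability $\ge 1 - 2e^{-t}$,
\[\normop{\mj_{S\times S} - \id_S} \le C\Par{\sqrt{\frac{s + t}{n}} + \frac{s+t}{n}}.\]
We take $t = s\log d + \log\frac{2d}{\delta}$ and union bound over the $\le d^s$ sets $S$ of size $s$ and the $d$ values of $s$. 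This gives, simultaneously for all $s$ and $S$,
\[\normop{\mj_{S\times S} - \id_S} \le C'\Par{\sqrt{\frac{sL}{n}} + \frac{sL}{n}}.\]
The diagonal is handled by item~1 of Fact~\ref{fact:ghop_facts} with $\eps \asymp \sqrt{L/n}$, which gives $\max_i |\mj_{ii} - 1| \le C\sqrt{L/n}$; this is absorbed into the $\sqrt{sL/n}$ term since $s \ge 1$. By the triangle inequality, the zero-diagonal matrix then satisfies
\[\normkop{\mj}{s} \le C''\Par{\sqrt{\frac{sL}{n}} + \frac{sL}{n}} \text{ for all } s \in [d].\]

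\textbf{Step 2 (matching the two regimes).} Multiplying by $4\beta$, we need $4\beta C''\sqrt{sL/n} \le \alpha\sqrt s$ and $4\beta C'' sL/n \le \alpha^2 s$, where $\alpha \defeq \frac{1}{32\sqrt{18k}}$. The first holds if $n \gtrsim \beta^2 k L$, and the second if $n \gtrsim \beta k L$. Both follow from the assumption $n = \Omega(\max(\beta,\beta^2) k L)$ with a large enough constant. This is exactly where both terms of \eqref{eq:size_reg_bound} are needed: the linear-in-$\beta$ requirement comes from the sub-exponential $sL/n$ regime of Wishart concentration.

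\textbf{Step 3 (conclude).} Lemma~\ref{lem:comparison_mixednorm} now gives \eqref{eq:mixed_norm_bound} with $\tau = 32\alpha = (18k)^{-1/2} \le \frac14$. Lemma~\ref{lem:mixed_quadratic} then gives $\rho(\mk)\exp(m(\mk)) \le 9\tau^2 = \frac1{2k} \le \frac1{2(k-1)}$, and Lemma~\ref{lem:ising_trickle} yields the $\frac1{2k}$-Poincar\'e inequality.

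The main obstacle is Step 1. Fact~\ref{fact:ghop_facts} as stated gives only a one-sided bound at a single sparsity level, so we must invoke the underlying two-sided deviation inequality and track the $\frac{s+t}{n}$ term. We must also make sure the union bound over all $s \le d$ (including $s$ near $d$, where $\binom ds$ is largest) only costs a $\log\frac d\delta$ factor per unit of $s$, so that a single $L$ appears uniformly in $s$.
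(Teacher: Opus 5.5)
Your proposal is correct and matches the paper's proof essentially step for step. The paper also uses a two-sided, two-regime bound $\normop{\mj_{S\times S}} \lesssim \sqrt{sL/n} + sL/n$ for the zero-diagonal Wishart matrix (Exercise 4.7.3 of \cite{Vershynin18} plus a diagonal correction), union bounds with failure probability $\delta d^{-s-1}$ per set, and then runs the template of Corollary~\ref{cor:near_proportional_sk}, with both terms of \eqref{eq:size_reg_bound} producing the $\beta^2$ and $\beta$ requirements.

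One small fix is needed. Item~1 of Fact~\ref{fact:ghop_facts} with $\eps \asymp \sqrt{L/n}$ requires $n \gtrsim L$, and the hypothesis does not guarantee this when $\beta < 1$. Bound the diagonal instead by the sub-exponential estimate $\max_i |\mj_{ii} - 1| = O(\sqrt{L/n} + L/n)$, as the paper does; this is absorbed by the same two-regime expression.
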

\begin{proof}
We claim that for Model~\ref{model:hopfield} with zero diagonal, for each fixed $S \subseteq [d]$ with $|S| = s$,
\begin{equation}\label{eq:sbound} \normop{\mj_{S \times S}} \le C\Par{\sqrt{\frac{s \log \frac d \delta}{n}} + \frac{s \log \frac d \delta}{n}},\end{equation}
with probability $\ge 1 - \frac \delta {d^{s + 1}}$, for a universal constant $C$. At this point, the proof follows identically to that of Corollary~\ref{cor:near_proportional_sk}, using the tighter bound in \eqref{eq:size_reg_bound}.
To see that \eqref{eq:sbound} holds, let $\mj = \frac 1 n \mg^\top \mg - \md$ for $\mg \in \R^{n \times d}$ where the $\mg_{ij}$ are i.i.d.\ Gaussian, and $\md$ is a diagonal matrix agreeing with the diagonal of $\frac 1 n \mg^\top \mg$. Then \eqref{eq:sbound} follows because with probability $\ge 1 - \delta$,
\begin{equation}\label{eq:two_term_conc}
\begin{aligned}\normop{\frac 1 n [\mg^\top \mg]_{S \times S} - \id_{S}} = O\Par{\sqrt{\frac{s + \log \frac 1 \delta}{n}} + \frac{s + \log \frac 1 \delta}{n}},\\
\normop{\md_{S \times S} - \id_{S}} = O\Par{\sqrt{\frac{\log \frac s \delta}{n}} + \frac{\log \frac s \delta}{n}},\end{aligned}
\end{equation}
where the first bound above uses Exercise 4.7.3 of \cite{Vershynin18}, and the second uses Theorem 3.1.1 of \cite{Vershynin18} with a union bound over all $s$ of the diagonal coordinates. Now \eqref{eq:sbound} follows by substituting $\delta \gets \frac \delta {d^{s + 1}}$ above and applying the triangle inequality.
\end{proof}
\subsection{Linear magnetization in low-temperature SK models}\label{ssec:d_sk}

We conclude with a tighter analysis of the quantities required by Lemma~\ref{lem:main_sgt} for the SK model, that removes the extraneous logarithmic factor from Corollary~\ref{cor:near_proportional_sk}. The results in this section hold assuming a high-probability event under Model~\ref{model:sk}, captured in the following lemma.

\begin{lemma}\label{lem:conditions_sk}
Let $\delta \in (0, \half)$ and $C > 0$ be a sufficiently large universal constant. Then the following events simultaneously hold with probability $\ge 1 - \delta$ over Model~\ref{model:sk}.
\begin{enumerate}
    \item For all $S \subseteq [d]$ with $|S| = s \in [d]$,
    \[\normop{\mj_{S \times S}} \le C\Par{\sqrt{\frac{s\log \frac {ed} s + \log \frac d \delta}{d}}}.\]\label{item:subset_opnorm}
    \item We have
    \[\max_{(i, j) \in [d] \times [d]} |\mj_{ij}| \le C\sqrt{\frac{\log \frac d \delta} d},\quad \norm{\mj \1_d}_\infty \le C\sqrt{\log \frac d \delta}.\]\label{item:entry_column_bound}
    \item For all nonzero $\vu \in \R^d$ with $q \defeq \frac{\norm{\vu}_2^2}{\norm{\vu}_1^2}$,
    \[\Abs{\vu^\top \mj \vu} \le C\norm{\vu}_1^2\Par{\sqrt{\frac{q\log(edq)}{d}} + q\sqrt{\frac{\log \frac d\delta}{d}}}.\]\label{item:l2l1_bound}
    \item Let $\mh \defeq \mj \circ \mj - \frac 1 d (\1_d\1_d^\top - \id_d)$. Then,
    \[\normop{\mh} \le C\Par{\sqrt{\frac{\log \frac d \delta}{d}} + \frac{\log \frac d \delta} d}\]\label{item:hadamard_bound}
\end{enumerate}
\end{lemma}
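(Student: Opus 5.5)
Each of the four items holds with probability at least $1-\delta/4$, and a union bound finishes (with $\delta \gets \delta/4$ absorbed into $C$). Items~\ref{item:subset_opnorm} and~\ref{item:entry_column_bound} are standard; item~\ref{item:l2l1_bound} is derived from them by a shelling argument; item~\ref{item:hadamard_bound} is a separate random-matrix estimate.

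\textbf{Item~\ref{item:subset_opnorm}.} Fix $s$ and $S$ with $|S|=s$. The matrix $\mj_{S\times S}$ is a (zero-diagonal) GOE block with entry variance $\frac 1 d$. Gaussian concentration of the operator norm (a $\frac{1}{\sqrt d}$-Lipschitz function of the entries), together with $\E\normsop{\mj_{S\times S}} \lesssim \sqrt{s/d}$, gives
\[\normsop{\mj_{S\times S}} \le C\sqrt{(s+t)/d}\quad\text{with probability } \ge 1-e^{-t}.\]
Take $t = s\log\frac{ed}{s} + \log\frac{d}{\delta} + \log 4$ and union bound over the $\binom{d}{s}\le (ed/s)^s$ sets and the $d$ values of $s$.

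\textbf{Item~\ref{item:entry_column_bound}.} The entry bound is Fact~\ref{fact:sk_facts}. Each $(\mj\1_d)_i$ is Gaussian with variance at most $1$, so a union bound over $i$ gives the column bound.

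\textbf{Item~\ref{item:l2l1_bound}.} Rerun the shelling decomposition of Lemma~\ref{lem:comparison_mixednorm} with $\sigma = \lceil 1/q\rceil$, now feeding in the sharper item~\ref{item:subset_opnorm} instead of \eqref{eq:alphabound}. The proof of Lemma~\ref{lem:comparison_mixednorm} gives
\[|\vu^\top\mj\vu| \le 8\normkop{\mj}{2\sigma}\norm{\vu}_2^2 = 8q\,\normkop{\mj}{2\sigma}\norm{\vu}_1^2.\]
Item~\ref{item:subset_opnorm} at sparsity $2\sigma \approx 2/q$ bounds $\normkop{\mj}{2\sigma}$ by
\[C\sqrt{\frac{q^{-1}\log(edq) + \log\frac d\delta}{d}},\]
since $\log\frac{ed}{s} \approx \log(edq)$ when $s\approx 1/q$. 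Multiplying by $q$ and splitting the square root with $\sqrt{a+b}\le\sqrt a+\sqrt b$ produces exactly the two stated terms:
\[q\sqrt{\frac{q^{-1}\log(edq)}{d}} = \sqrt{\frac{q\log(edq)}{d}},\qquad q\sqrt{\frac{\log\frac d\delta}{d}}.\]
The only care needed is the ceiling and factor-$2$ adjustments when $\sigma$ is small or exceeds $d$ (then cap at $s=d$). These are routine and change only constants.

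\textbf{Item~\ref{item:hadamard_bound}, the main obstacle.} The matrix $\mh$ has independent, centered, off-diagonal entries
\[\mh_{ij} = \mj_{ij}^2 - \tfrac 1 d,\]
which are sub-exponential with scale $\frac 1 d$. Its diagonal is zero, since we take $\mj_{ii}=0$. I would first apply a truncation: condition on the event of item~\ref{item:entry_column_bound}, so that $|\mh_{ij}| \le C\log(d/\delta)/d$. After recentering, this loses only a negligible mean shift of order $\mathrm{poly}(\delta/d)$, which is absorbed into the bound. Then apply matrix Bernstein (or Bandeira--van Handel, Corollary 3.9 of \cite{BandeiraVH16}) to the symmetric sum over $i<j$ of $\mh_{ij}(\ve_i\ve_j^\top+\ve_j\ve_i^\top)$. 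The inputs are:
\begin{itemize}
\item the variance proxy, $\max_i \sum_j \E\mh_{ij}^2 \approx d\cdot\frac{2}{d^2} = \frac 2 d$;
\item the uniform bound $L = C\log(d/\delta)/d$.
\end{itemize}
This yields, with probability $\ge 1-\delta$,
\[\normsop{\mh} \lesssim \sqrt{\frac{\log(d/\delta)}{d}} + \frac{\log^2(d/\delta)}{d}.\]
The obstacle is the extra logarithm in the second term relative to the claimed $\frac{\log(d/\delta)}{d}$. To remove it I would use the Bandeira--van Handel bound, whose additive term scales like $\max_{ij}\|\mh_{ij}\|_\infty\sqrt{\log d}$. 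Alternatively, a direct moment-method argument using the $\psi_1$ norm of the entries gives the term $\frac 1 d\log\frac d\delta$. If neither works cleanly, I would accept the $\log^2$ term, since it is dominated by the first term whenever $\log^3(d/\delta)\le d$.
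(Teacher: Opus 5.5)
Your Items~\ref{item:subset_opnorm}--\ref{item:l2l1_bound} follow essentially the paper's route. You use a per-support concentration bound, union-bounded with $\binom{d}{s}\le (ed/s)^s$; the paper cites Corollary 3.9 of \cite{BandeiraVH16} where you use Gaussian Lipschitz concentration. Item~\ref{item:entry_column_bound} uses Gaussian maxima, and Item~\ref{item:l2l1_bound} uses the shelling decomposition with $\sigma=\lceil 1/q\rceil$, fed by Item~\ref{item:subset_opnorm}. Your capping at $s=d$ handles the edge case that the paper handles via Fact~\ref{fact:sk_facts}.

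\textbf{The gap is in Item~\ref{item:hadamard_bound}.} The argument you actually carry out is truncation at level $\asymp \log(d/\delta)/d$ followed by bounded matrix Bernstein. This yields $\sqrt{\log(d/\delta)/d}+\log^2(d/\delta)/d$, which is not the stated bound, and neither proposed repair closes the gap.
\begin{itemize}
\item Bandeira--van Handel applied after truncation has its additive and tail terms governed by $\sigma_*\sqrt{\log(d/\delta)}$, where $\sigma_*$ is the truncation level. It therefore still leaves a term of order $\log^{3/2}(d/\delta)/d$.
\item The fallback of accepting the $\log^2$ term proves the lemma only when $\log^3(d/\delta)\le d$. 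The lemma is stated for every $\delta\in(0,\frac12)$, and the paper uses it under \eqref{eq:delta_condition}, which (as remarked after Theorem~\ref{thm:sk_trickle}) allows $\log\frac d\delta = cd$ for small $c$. There the claimed bound is $O(\sqrt c)$ while yours is of order $c^2 d$, which would break the residual estimate in Lemma~\ref{lem:improved_rho}.
\item ``Conditioning on the event of Item~\ref{item:entry_column_bound}'' also destroys independence of the entries. You would need entrywise truncation instead.
\end{itemize}

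\textbf{The fix: sub-exponential matrix Bernstein without truncation.} This is Theorem 6.2 of \cite{Tropp12}, and it is what the paper does. Write $\mh=\sum_{i<j}X_{ij}$ with $X_{ij}\defeq(\mj_{ij}^2-\frac1d)\me_{ij}$ and $\me_{ij}\defeq\ve_i\ve_j^\top+\ve_j\ve_i^\top$.
\begin{itemize}
\item The power $\me_{ij}^p$ equals $\me_{ij}$ or $\ve_i\ve_i^\top+\ve_j\ve_j^\top$ according to the parity of $p$.
\item We have $\pm\me_{ij}\preceq \ve_i\ve_i^\top+\ve_j\ve_j^\top$.
\item For $g\sim\Nor(0,1)$, $\E|g^2-1|^p\le c^p p!$.
\end{itemize}
Together these give $\E X_{ij}^p\preceq\frac{p!}{2}R^{p-2}A_{ij}^2$ with $R=O(\frac1d)$ and $A_{ij}^2=O(\frac1{d^2})(\ve_i\ve_i^\top+\ve_j\ve_j^\top)$. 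Hence $\sigma^2=\normop{\sum_{i<j}A_{ij}^2}=O(\frac1d)$. The tail bound $\Pr[\normop{\mh}\ge t]\le 2d\exp\bigl(-\frac{t^2/2}{\sigma^2+Rt}\bigr)$ then gives exactly $t\asymp\sqrt{\log(d/\delta)/d}+\log(d/\delta)/d$. This is the ``moment method with $\psi_1$ norms'' you listed as an alternative; it should be the main argument rather than a fallback.
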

\begin{proof}
We allot a $\frac \delta 3$ failure probability for Items~\ref{item:subset_opnorm},~\ref{item:entry_column_bound}, and~\ref{item:hadamard_bound}, and Item~\ref{item:l2l1_bound} will follow from Item~\ref{item:subset_opnorm}.

Item~\ref{item:subset_opnorm} follows from the  calculation in \eqref{eq:alphabound}, using the tighter estimate $\binom{d}{s} = \exp(O(s \log \frac {ed} s))$.

Both parts of Item~\ref{item:entry_column_bound} follow from standard bounds on the maximum of $\poly(d)$ i.i.d.\ Gaussians, where the variance of each entry of $\mj\1_d$ is at most $1$.

Item~\ref{item:l2l1_bound} follows from the same shelling decomposition argument as in Lemma~\ref{lem:comparison_mixednorm}. Concretely, perform the same decomposition into blocks $B_1, B_2, \ldots$ of size at most $\sigma = \lceil\frac 1 q\rceil$. Then the same argument as in \eqref{eq:shell_quadform}, combined with the estimate on $\normkop{\mj}{2\sigma}$ already derived in Item~\ref{item:subset_opnorm}, yields
\[\Abs{\vu^\top \mj \vu} \le 8C \Par{\sqrt{\frac{\log (edq)}{qd}} + \sqrt{\frac{\log \frac d \delta}{d}}}\norm{\vu}_2^2.\]
The claim then follows by adjusting the constant $C$, and substituting the definition of $q$. There is an edge case when $2\sigma \ge d$, but in this case $\normop{\mj}$ satisfies the required bound (see Fact~\ref{fact:sk_facts}).

Finally, Item~\ref{item:hadamard_bound} asks to bound the operator norm of a matrix with i.i.d.\ sub-exponential entries. Concretely, let $\me_{ij} \defeq \ve_i\ve_j^\top + \ve_j\ve_i^\top$ for each $1 \le i < j \le d$. Then $\mh = \sum_{1 \le i < j \le d} (\mj_{ij}^2 - \frac 1 d)\me_{ij}$. A straightforward calculation shows that the sub-exponential matrix Bernstein inequality (e.g., Theorem 6.2 in \cite{Tropp12} with $\sigma^2 = R = O(\frac 1 d)$) now applies, which concludes the proof.
\end{proof}

We now use Items~\ref{item:entry_column_bound},~\ref{item:l2l1_bound}, and~\ref{item:hadamard_bound} of Lemma~\ref{lem:conditions_sk} to derive estimates on the parameters $\rho(\mk)$ and $m(\mk)$ defined in Lemma~\ref{lem:main_sgt}, when $\mk = 4\beta\mj$ as derived in Lemma~\ref{lem:ising_trickle}.

\begin{lemma}\label{lem:improved_rho}
Assume the success of the events in Lemma~\ref{lem:conditions_sk}, and let $\beta \in \R^+$, $\bar{\beta} \defeq \max(\beta, 1)$, and $\mk \defeq 4\beta\mj$. Then for a sufficiently large universal constant $C' > 0$, assuming
\begin{equation}\label{eq:delta_condition}\Delta \defeq \sqrt{\frac{\log \frac d \delta}{d}} + \frac{\log \frac d \delta} d \le \frac 1 {C' \bar{\beta}^2\log(e\bar{\beta})},\end{equation}
we have
\[\rho(\mk)\le \frac{C'\bar{\beta}^2\log(e\bar{\beta})}{d},\quad m(\mk) \le C'\bar{\beta}^2 \Delta. \]
\end{lemma}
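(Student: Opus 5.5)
The bound on $m(\mk)$ is immediate from Item~\ref{item:entry_column_bound} of Lemma~\ref{lem:conditions_sk}: $m(\mk) = 4\beta \max_{i \ne j}|\mj_{ij}| \le 4C\beta\sqrt{\log(d/\delta)/d} \le C'\bar\beta^2\Delta$. For $\rho(\mk)$ I would follow the template of Lemma~\ref{lem:mixed_quadratic}. The goal is to show that for every $\vu$ with $\nnz(\vu)>1$,
\[
\vu^\top \mx \vu \le \tfrac12\norm{\vu}_2^2 + A\norm{\vu}_1^2, \qquad A \defeq \frac{C'\bar\beta^2\log(e\bar\beta)}{2d}.
\]
Condition \eqref{eq:delta_condition} forces $d \gtrsim \bar\beta^4\log(e\bar\beta)$, so $A \le \frac12$. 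The target then gives $\vu^\top(\mx-\id_d)\vu \le -\frac12\norm{\vu}_2^2 + A\norm{\vu}_1^2 \le A(\norm{\vu}_1^2-\norm{\vu}_2^2)$. This avoids any worry about $q \defeq \norm{\vu}_2^2/\norm{\vu}_1^2$ being close to $1$.

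To prove the target, I would expand $\mx_{ij} = \mk_{ij} + \frac12\mk_{ij}^2 + \mr_{ij}$ with $|\mr_{ij}| \le m e^{m}\mk_{ij}^2/6$, where $m \defeq m(\mk) \le 1$. The remainder is thus entrywise dominated by $O(m)$ times the quadratic term, and I absorb it there via $|\vu^\top\mr\vu| \le |\vu|^\top|\mr||\vu|$. This matters: bounding the cubic term directly by $m^3\norm{\vu}_1^2$ leaves an extra $\log\frac d\delta$, which is exactly what we want to avoid. For the quadratic term I write $\mk\circ\mk = 16\beta^2\Par{\mh + \frac1d(\1_d\1_d^\top - \id_d)}$ and apply it to $\vv \in \{\vu, |\vu|\}$. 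Item~\ref{item:hadamard_bound} gives $|\vv^\top\mh\vv| \le C\Delta\norm{\vu}_2^2$, and the rank-one part contributes at most $\frac1d\norm{\vu}_1^2$. Hence the quadratic-plus-remainder contribution is at most $O(\beta^2\Delta)\norm{\vu}_2^2 + O(\beta^2/d)\norm{\vu}_1^2$. The first piece is at most $\frac18\norm{\vu}_2^2$ by \eqref{eq:delta_condition}, and the second fits inside $\frac12 A\norm{\vu}_1^2$.

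The main obstacle is the linear term $4\beta\vu^\top\mj\vu$, handled by Item~\ref{item:l2l1_bound}. Its second part is $4C\beta q\sqrt{\log(d/\delta)/d}\,\norm{\vu}_1^2 = 4C\beta\sqrt{\log(d/\delta)/d}\,\norm{\vu}_2^2 \le \frac18\norm{\vu}_2^2$, again by \eqref{eq:delta_condition}. The first part, $4C\beta\sqrt{q\log(edq)/d}\,\norm{\vu}_1^2$, is where the $\log$-free rate must come from, so I would split on $x \defeq dq$. If $x \ge C''\bar\beta^2\log(e\bar\beta)$, then $x \ge 256C^2\beta^2\log(ex)$, since $x/\log(ex)$ is increasing. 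This is equivalent to $4C\beta\sqrt{q\log(edq)/d} \le \frac q4$, so the term is at most $\frac14\norm{\vu}_2^2$. If instead $x < C''\bar\beta^2\log(e\bar\beta)$, monotonicity in $q$ bounds the term by $4C\beta\sqrt{x\log(ex)}/d\cdot\norm{\vu}_1^2 = O(\bar\beta^2\log(e\bar\beta)/d)\norm{\vu}_1^2$, which fits inside $\frac12A\norm{\vu}_1^2$ once $C'$ is chosen large. Summing the $\norm{\vu}_2^2$ coefficients gives $\frac18+\frac18+\frac14 = \frac12$, which yields the target inequality.
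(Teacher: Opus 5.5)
Your proposal is correct and follows essentially the same route as the paper: the same $\mx = \mk + (\text{higher order})$ split, Item~\ref{item:l2l1_bound} for the linear term, and Item~\ref{item:hadamard_bound} together with the $\frac{1}{d}(\1_d\1_d^\top - \id_d)$ decomposition applied to $|\vu|$ for the remainder, concluding with $\frac{C'\bar{\beta}^2\log(e\bar{\beta})}{d} \le \frac12$. The differences are cosmetic. The paper bounds $0 \le \exp(\mk_{ij}) - 1 - \mk_{ij} \le \mk_{ij}^2$ in one step instead of splitting off a cubic remainder. It also invokes the scalar inequality $4\beta C\sqrt{s\log(es)} \le \frac{s}{4} + C''\bar{\beta}^2\log(e\bar{\beta})$ (for $s = dq \ge 1$), which is exactly what your case split on $x = dq$ proves.
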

\begin{proof}
The bound on $m(\mk)$ follows from the first condition in Item~\ref{item:entry_column_bound} of Lemma~\ref{lem:conditions_sk}, and any $C' \ge 4C$. 

To bound $\rho(\mk)$, we follow the notation of Lemma~\ref{lem:main_sgt}, so $\mx_{ij} = \exp(\mk_{ij}) - 1$ entrywise. We also decompose $\mx = \mk + \mr$ as in Lemma~\ref{lem:mixed_quadratic}. For the linear term,
letting $\vu \in \R^d$ have $q \defeq \frac{\norm{\vu}_2^2}{\norm{\vu}_1^2}$,
\begin{equation}\label{eq:linear_sk_bound}
\begin{aligned}\Abs{\vu^\top \mk \vu} = 4\beta\Abs{\vu^\top \mj \vu} &\le 4\beta C\norm{\vu}_1^2\Par{\sqrt{\frac{q\log(edq)}{d}} + q\sqrt{\frac{\log \frac d\delta}{d}}} \\
&\le \norm{\vu}_1^2\Par{\frac q 4 + 4\beta Cq\Delta +  \frac{C''\bar{\beta}^2 \log(e\bar{\beta})}{d}},
\end{aligned}
\end{equation}
where the second line used the scalar inequality, for an appropriate $C''$ depending on $C$,
\[4\beta C \sqrt{s\log(es)} \le 4\bar{\beta}C\sqrt{s \log (es)} \le \frac s 4 + C''\bar{\beta}^2\log(e\bar{\beta}),\]
valid for any $s, \bar{\beta} \ge 1$.
For the residual term, we first establish the entrywise bound
\[0 \le \exp(\mk_{ij}) - \mk_{ij} - 1 = \mr_{ij} \le \mk_{ij}^2 \le 16\beta^2 \mj_{ij}^2,\]
as we have shown $m(\mk) \le C'\bar{\beta}^2 \Delta \le 1$ already. Thus, letting $\vw_i \defeq |\vu_i|$ for all $i \in [d]$,
\begin{equation}\label{eq:residual_sk_bound}
\begin{aligned}\Abs{\vu^\top \mr \vu} &\le 16\beta^2 \vw^\top \Par{\mj \circ \mj} \vw \\
&= 16\beta^2 \vw^\top \mh \vw + \frac {16\beta^2} d \vw^\top\Par{\1_d \1_d^\top - \id_d}\vw \\
&\le 16\beta^2C\Delta\norm{\vu}_2^2+ \frac{16\beta^2\norm{\vu}_1^2}{d} \end{aligned}
\end{equation}
where the second line used the definition of $\mh$ from Item~\ref{item:hadamard_bound}, and the last line applied Item~\ref{item:hadamard_bound} and our definition of $\Delta$. Finally, by combining \eqref{eq:linear_sk_bound} and \eqref{eq:residual_sk_bound},
\begin{align*}\vu^\top \mx \vu - \norm{\vu}_2^2 &\le \Par{-\frac 3 4 + 4\beta C \Delta+ 16\beta^2 C \Delta }\norm{\vu}_2^2+ \Par{\frac{C''\bar{\beta}^2 \log(e\bar{\beta}) + 16\beta^2}{d}}\norm{\vu}_1^2 \\
&\le -\half \norm{\vu}_2^2 + \frac{C' \bar{\beta}^2\log(e\bar{\beta})}{d}\norm{\vu}_1^2,
\end{align*}
for an appropriate $C'$. We now obtain the desired bound on $\rho(\mk)$ by noting $\frac{C' \bar{\beta}^2\log(e\bar{\beta})}{d} \le \half$.
\end{proof}

\begin{theorem}\label{thm:sk_trickle}
Let $\delta \in (0, \half)$, and let $\pi$ be induced by a fixed-magnetization Ising model \eqref{eq:ising_slice}, where
\[k = O\Par{\frac{d}{\bar{\beta}^2 \log(e\bar{\beta})}}\]
for a sufficiently small constant, defining $\bar{\beta} \defeq \max(1, \beta)$. Also, assume that the condition \eqref{eq:delta_condition} holds. 
With probability $\ge 1 - \delta$ over the SK model (Model~\ref{model:sk}), if
\[T = \Omega\Par{k\log \frac 1 \delta + k^2\Par{\log(d) + \beta\Par{\norm{\vh}_\infty + \sqrt{\log \frac d \delta}}}}\]
for a sufficiently large constant, we have for any $\pi_0 \in \calP(\calX^d_k)$,
\[\TV{\Par{\lazy(\TDU{\pi})}^T \pi_0, \pi} \le \delta.\]
\end{theorem}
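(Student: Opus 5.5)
The proof combines Lemma~\ref{lem:improved_rho} with Lemma~\ref{lem:ising_trickle} to obtain a Poincar\'e inequality, and then converts it into a mixing time bound via Lemma~\ref{lem:variance_decay} with an explicit estimate of the initial $\chi^2$ divergence. First, condition on the events of Lemma~\ref{lem:conditions_sk}, which hold with probability $\ge 1-\delta$. Under \eqref{eq:delta_condition}, Lemma~\ref{lem:improved_rho} gives $\rho(\mk) \le C'\bar\beta^2\log(e\bar\beta)/d$ and $m(\mk) \le C'\bar\beta^2\Delta \le 1$, where $\mk = 4\beta\mj$. Hence
\[\rho(\mk)\exp(m(\mk)) \le \frac{eC'\bar\beta^2\log(e\bar\beta)}{d} \le \frac{1}{2(k-1)},\]
provided the constant in $k = O(d/(\bar\beta^2\log(e\bar\beta)))$ is small enough. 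Lemma~\ref{lem:ising_trickle} then shows that $\TDU{\pi}$ satisfies a $\frac{1}{2k}$-Poincar\'e inequality. Lemma~\ref{lem:variance_decay} applied to $\lazy(\TDU{\pi})$ gives $\chi^2(\pi_T\|\pi) \le (1-\frac{1}{4k})^T\chi^2(\pi_0\|\pi)$. Since $\TV{\pi_T,\pi}\le \frac12\sqrt{\chi^2(\pi_T\|\pi)}$, it suffices to take $T \gtrsim k(\log\frac1\delta + \log\chi^2(\pi_0\|\pi))$.

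The remaining step, which is the main obstacle, is to bound $\log \chi^2(\pi_0\|\pi)$ for an arbitrary $\pi_0$. In the worst case $\pi_0$ is a point mass, so $\chi^2 \le 1/\min_S \pi(S)$, and it suffices to show $\log(1/\min_S\pi(S)) = O(k\log d + k\beta(\norm{\vh}_\infty + \sqrt{\log(d/\delta)}))$. Multiplying by the factor $k$ then produces exactly the $k^2(\cdots)$ term in $T$. To prove this bound, write the log-density up to a constant as
\[\beta\Big(2\1_S^\top\mj\1_S + 2(\vh - \mj\1_d)^\top\1_S\Big),\]
using $\mj_{ii}=0$, and note that $\log\pi(S) \ge -\log\binom{d}{k} - (\max_{S'} f(S') - f(S))$ where $f$ denotes this exponent. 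The oscillation of $f$ over $\calX^d_k$ is controlled term by term. The field term is bounded by $4k\beta\norm{\vh}_\infty$. The term $\mj\1_d$ is bounded by $4k\beta\norm{\mj\1_d}_\infty \le 4Ck\beta\sqrt{\log(d/\delta)}$ via Item~\ref{item:entry_column_bound}. For the quadratic term, Item~\ref{item:subset_opnorm} gives $|\1_S^\top\mj\1_S| \le k\normop{\mj_{S\times S}} \lesssim k\sqrt{(k\log(ed/k)+\log(d/\delta))/d}$, so $\beta$ times this is at most $O(\beta k\sqrt{\log(d/\delta)} + k\log d)$, since $\beta\sqrt{k/d} = O(1)$ under the assumed range of $k$. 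Finally, $\log\binom dk \le k\log d$. Summing these bounds gives the claimed estimate.

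Combining the two steps yields $T = \Omega(k\log\frac1\delta + k^2(\log d + \beta(\norm{\vh}_\infty + \sqrt{\log(d/\delta)})))$. The delicate points are that the quadratic-term estimate must be absorbed using the smallness of $k$ relative to $d/\bar\beta^2$, and that the constants in Lemma~\ref{lem:improved_rho} must be checked so that the $\frac{1}{2(k-1)}$ threshold is met.
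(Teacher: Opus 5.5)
Your proposal is correct and follows the paper's proof essentially step for step. You get the $\frac{1}{2k}$-Poincar\'e inequality from Lemmas~\ref{lem:improved_rho} and~\ref{lem:ising_trickle}, obtain $\chi^2$ contraction of the lazy walk from Lemma~\ref{lem:variance_decay}, and bound $\log(1/\pi_{\min})$ using Items~\ref{item:subset_opnorm} and~\ref{item:entry_column_bound} of Lemma~\ref{lem:conditions_sk}. The only cosmetic differences are that you use the tighter $\chi^2(\pi_0\|\pi)\le 1/\pi_{\min}$ where the paper uses $1/\pi_{\min}^2$, and that you absorb the quadratic term via $\beta\sqrt{k/d}=O(1)$ where the paper uses $\Delta = O(1)$; neither changes the final bound on $T$.
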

\begin{proof}
The failure probability comes from Lemma~\ref{lem:conditions_sk}, so henceforth condition on its success. Combining Lemmas~\ref{lem:main_sgt},~\ref{lem:ising_trickle}, and~\ref{lem:improved_rho} implies that $\TDU{\pi}$ satisfies a $\frac 1 {2k}$-PI, for the assumed range on $k$. 
Lemma~\ref{lem:variance_decay} now shows the $\chi^2$ divergence of the lazy down-up walk contracts by a factor of $\Omega(\frac 1 k)$ in each iteration. 
The conclusion follows if
we can bound the initial $\chi^2$ divergence:
\[\chi^2\Par{\pi_0 \| \pi} = \E_{\pi}\Brack{\Par{\frac{\pi_0}{\pi}}^2} - 1 \le \frac{1}{\pi_{\min}^2},\text{ where } \pi_{\min} \defeq \min_{\vx \in \calX^d_k} \pi(\vx).\]
Thus it remains to control $\pi_{\min}$. Identify $\vx \in \calX^d_k$ with $S \in \binom{[d]}{k}$, so $\vx = 2\1_S - \1_d$. Then for
\[F(\vx) = 2\1_S^\top \mj \1_S- 2\1_S^\top \mj \1_d + 2\vh^\top \1_S,\]
the Ising measure is $\propto \exp(\beta F)$. On the events in Items~\ref{item:subset_opnorm} and~\ref{item:entry_column_bound} of Lemma~\ref{lem:conditions_sk},
\[\Abs{2\1_S^\top \mj \1_S } = O\Par{k(1 + \Delta)} = O(k),\quad \Abs{2\1_S^\top \mj\1_d} =O\Par{k\sqrt{\log \frac d \delta}},\]
where we used that \eqref{eq:delta_condition} gives $\Delta = O(1)$.
Hence, by bounding the range of $\beta F$ over $\calX^d_k$,
\begin{align*}\pi_{\min} &\ge \frac 1 {\binom{[d]}{k}}\exp\Par{-O\Par{k + \beta k \Par{\norm{\vh}_\infty +\sqrt{\log \frac d \delta}}}} \\
&= \exp\Par{-O\Par{k \log(d) + \beta k \Par{\norm{\vh}_\infty +\sqrt{\log \frac d \delta}}}}.\end{align*}
The conclusion follows by taking $T = \Omega(k\log \frac 1 {\pi_{\min}})$ and simplifying.
\end{proof}

We make two brief remarks on the statement of Theorem~\ref{thm:sk_trickle}. First, the condition \eqref{eq:delta_condition} is a lower bound on the failure probability $\delta$ that must apply for Theorem~\ref{thm:sk_trickle} to hold. This condition is relatively mild, and even permits taking $\delta = \exp(-\Omega(d))$ for appropriate constants. Second, the use of $\bar{\beta} = \max(\beta, 1)$ in the statement makes the $k$ upper bound uniform over all $\beta \ge 0$. In particular, if the assumptions hold at a target $\beta^\star \ge 0$, they hold simultaneously for every intermediate $\beta \in [0, \beta^\star]$, which becomes relevant in our applications of annealing in Section~\ref{sec:bm_sampling_anneal}.
\section{Annealing}\label{sec:bm_sampling_anneal}
Let $f: \calX^d \to \R$ and $\beta \ge 0$ parameterize a Gibbs measure $\pi_\beta \propto \exp(\beta f)$. In this section, we develop a general framework for sampling from bounded-magnetization restrictions over $\calX^d$, 
\begin{equation}\label{eq:gibbs_sparse}
\pi_{\le k, \beta}(\vx) \propto \exp\Par{\beta f(\vx)} \cdot \ind_{\vx \in \calX^d_{\le k}},
\end{equation}
leveraging samplers for fixed-magnetization measures for $0 \le s \le k$,
\begin{equation}\label{eq:gibbs_fixed}\pi_{s, \beta}(\vx) \propto \exp\Par{\beta f(\vx)} \cdot \ind_{\vx \in \calX^d_{s}},\end{equation}
e.g., those constructed in Sections \ref{sec:dobrushin} and \ref{sec:trickledown}, as black boxes. Our approach decomposes the task into two stages: we first construct an approximate sampler for the cardinality $s$, and then, conditioned on this size, invoke the corresponding fixed-size sampler for $\pi_{s, \beta}$ to obtain a sample $\sim \pi_{\le k, \beta}$.

In Section~\ref{ssec:est_normalizing_const}, we employ an annealing-based technique (adapted from \cite{Kolmogorov18}) to estimate the normalizing constants of the fixed-magnetization measures $\pi_{s, \beta}$. In Section~\ref{ssec:sparse_subsets_annealing}, we integrate these estimates into a unified framework for approximately sampling from bounded-magnetization measures $\pi_{\le k, \beta}$. Finally, in Section~\ref{ssec:bm_ising}, we instantiate our framework for bounded-magnetization variants of the Ising models in Sections~\ref{sec:dobrushin} and~\ref{sec:trickledown}, and derive the resulting mixing time bounds.

\subsection{Estimating normalizing constants}
\label{ssec:est_normalizing_const}

We first recall an estimation procedure for normalization constants of a discretely-supported measure $\pi \in \calP(\Omega)$, based directly on \cite{Kolmogorov18}. For some fixed $f: \Omega \to \R$, where $|\Omega| < \infty$, let
\begin{equation}\label{eq:Zdef}
Z(\beta) \defeq \sum_{\omega \in \Omega} \exp(\beta f(\omega))
\end{equation}
to be the normalizing constant 
of the tempered Gibbs distribution $\pi_{\beta} \propto \exp(\beta f)$ at inverse temperature $\beta \ge 0$. Observe that $Z(0) = |\Omega|$ is known exactly. For a target inverse temperature $\beta^\star$ and error tolerance $\eps \in (0, 1)$, our goal is to produce an estimate $\hZ$ such that
\begin{equation}\label{eq:Zgoal}(1 - \eps)Z(\beta^\star)\le \hZ \le (1 + \eps)Z(\beta^\star).\end{equation}

We now state a consequence of the estimation procedure of \cite{Kolmogorov18}.

\begin{proposition}\label{prop:adaptive_sche_range_f}
Let $f: \Omega \to [-R, R]$ for $R \ge 0$ and let $\beta^\star \ge 0$, $(\delta, \eps) \in (0, \half)^2$. There is an algorithm $\EstZ(f, \beta^\star, \delta, \eps, \calA)$, where $\calA$ is an algorithm that samples from
\[\pi_\beta \in \calP(\Omega) \propto \exp\Par{\beta f}\]
for any $\beta \in [0, \beta^\star]$. The output of $\EstZ$ satisfies \eqref{eq:Zgoal} with probability $\ge 1 - \delta$. Further, it uses 
\begin{equation}
    \label{eq:sample_complexity_est_const}
    N = O\Par{\frac{1 + \beta^\star R}{\eps^2}\log\Par{\frac 1 \delta}}
\end{equation}

calls to $\alg$.
\end{proposition}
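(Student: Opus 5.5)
We reduce to Kolmogorov's adaptive annealing estimator (Theorem 6, \cite{Kolmogorov18}). That result concerns partition functions $Z(\beta) = \sum_\omega \exp(-\beta H(\omega))$ where the Hamiltonian $H$ takes values in $[0, n]$ and $Z(0)$ is known. Given sampling access to $\pi_\beta$ on $[0,\beta_{\max}]$, it outputs a $(1\pm\eps)$-approximation to the ratio $Q \defeq Z(\beta_{\max})/Z(0)$ with constant probability. It uses $O(\frac{q}{\eps^2})$ sampler calls in expectation (or a comparable worst-case bound with an extra $\log$), where $q \defeq \log Q$.

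\textbf{Step 1 (normalization).} Set $H(\omega) \defeq R - f(\omega) \in [0, 2R]$. Then $\exp(\beta f) = e^{\beta R}\exp(-\beta H)$, so $Z_f(\beta) = e^{\beta R} Z_H(\beta)$. Sampling from $\pi_\beta \propto \exp(\beta f)$ is the same as sampling from the Gibbs measure of $H$, so $\calA$ serves directly as the oracle for $\beta\in[0,\beta^\star]$. Since $Z(0)=|\Omega|$ is known, estimating $Z_f(\beta^\star)$ to $(1\pm\eps)$ is equivalent to estimating $Q = Z_H(\beta^\star)/Z_H(0)$ to $(1\pm\eps)$.

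\textbf{Step 2 (bounding $q$).} Because $H\ge 0$, we have $Q\le 1$. Because $H\le 2R$, we have $Q \ge e^{-2\beta^\star R}$. Hence $|\log Q| \le 2\beta^\star R$. Kolmogorov's complexity bound, stated in terms of $q$ (the log-ratio rather than $\log|\Omega|$ when paired with the range of $H$), is then $O\!\big(\frac{1+\beta^\star R}{\eps^2}\big)$ per constant-probability run. The $+1$ accounts for the degenerate case $\beta^\star R$ small, where a single step suffices.

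I expect the main obstacle to be confirming exactly which parameter governs the cost in \cite{Kolmogorov18}. Their theorem is phrased with the range of $H$ scaled to $[0,n]$ and $q=\log Q$. If the bound there also involves $\log n$ or the TPA-type schedule, I would rescale $\beta \gets \beta\cdot 2R$ so that $H/(2R)\in[0,1]$ with $\beta_{\max}=2R\beta^\star$. Then $q\le \beta_{\max}$ and the stated complexity becomes $O(\frac{1+\beta^\star R}{\eps^2})$. If instead their bound carries an additional logarithmic factor, it would have to be absorbed or the statement adjusted.

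\textbf{Step 3 (boosting).} Run the estimator $O(\log\frac1\delta)$ times independently, each succeeding with probability at least $3/4$. Output the median; by a Chernoff bound, the median lies in $[(1-\eps)Z,(1+\eps)Z]$ with probability at least $1-\delta$. Multiplying back by $e^{\beta^\star R}$ (and $|\Omega|$) recovers the estimate of $Z(\beta^\star)$, giving \eqref{eq:sample_complexity_est_const}. If Kolmogorov's guarantee is only in expectation, truncate each run at a constant multiple of its expected cost and count truncation as failure via Markov's inequality. This preserves the per-run success probability at a constant.
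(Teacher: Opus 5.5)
\textbf{There is a genuine gap in Steps 1--2.} The normalization $H = R - f \in [0,2R]$ does not put you in a setting where Kolmogorov's bound is $O(q/\eps^2)$.

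Theorem 6 of \cite{Kolmogorov18} is stated for Hamiltonians taking values in $\{0\}\cup[1,n]$. Its oracle complexity is $O(q\log n/\eps^2)$, not $O(q/\eps^2)$ as you wrote it. Your $H$ can take positive values arbitrarily close to $0$, and so can your fallback $H/(2R)\in[0,1]$, so the hypothesis fails. If you rescale so that the smallest nonzero value of $H$ equals $1$, then $n = 2R/\min_{\omega:\, H(\omega)>0} H(\omega)$. This depends on the fine structure of $f$ and is not bounded by any function of $\beta^\star R$. So your claim that ``$q\le\beta_{\max}$'' yields $O(\frac{1+\beta^\star R}{\eps^2})$ is unjustified. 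The logarithmic factor you flag at the end is exactly what occurs, and it cannot simply be absorbed.

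\textbf{The missing idea is to shift the Hamiltonian so it is bounded away from zero.} This costs nothing, because a constant shift only multiplies the partition function by a known factor. The paper handles $\beta^\star R = 0$ separately by outputting $|\Omega|$. Otherwise it takes
\[
H(\omega) \defeq \frac{2R - f(\omega)}{R}\in[1,3], \qquad t^\star \defeq R\beta^\star, \qquad Z_H(t) = e^{-2t}Z(t/R).
\]
The oracle for $H$ at $t\in[0,t^\star]$ is just $\calA$ at $\beta = t/R\in[0,\beta^\star]$. Now $n=3$, and
\[
q = \log\frac{Z_H(0)}{Z_H(t^\star)} = \log\Par{e^{2R\beta^\star}\frac{Z(0)}{Z(\beta^\star)}}\in[t^\star,3t^\star].
\]
So Kolmogorov's bound gives $O(\frac{1+R\beta^\star}{\eps^2})$ calls per constant-probability run. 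Your Step 3 then goes through exactly as in the paper: truncate the expected cost via Markov's inequality and take a median over $O(\log\frac1\delta)$ runs.
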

\begin{proof}
We first obtain \eqref{eq:Zgoal} with probability $> \half$ using $O(\frac{1 + R\beta^\star}{\eps^2})$ calls to $\calA$, at which point the result follows by taking the median of $O(\log(\frac 1 \delta))$ copies via a standard Chernoff bound argument.

If $\beta^\star = 0$ or $R = 0$, the claim is immediate by outputting $|\Omega|$. Otherwise, define
\[H(\omega) \defeq \frac{2R-f(\omega)}{R},\quad t^\star \defeq R\beta^\star,\]
so that if we define $Z_H(t) \defeq \sum_{\omega \in \Omega} \exp(-tH(\omega))$ to be the corresponding normalizing constant for $-H$ at inverse temperature $t$, then
\[Z_H(t) = \exp\Par{-2t}Z\Par{\frac t R},\text{ for all } t \in [0, t^\star].\]
Thus to produce the required estimate \eqref{eq:Zgoal}, it is enough to estimate
\[Q \defeq \frac{Z_H(0)}{Z_H(t^\star)} = \exp\Par{2R\beta^\star} \frac{Z(0)}{Z(\beta^\star)}\]
to multiplicative error $1 \pm \eps$. Also, observe that $q \defeq \log Q$ satisfies $q \in [t^\star, 3t^\star]$, because the range of $H(\omega)$ is $[1, 3]$. Theorem 6 in \cite{Kolmogorov18} with $n = 3$ and $q = O(1 + R\beta^\star)$, and its suggested parameters $d = 64$, $m = O(1)$, and $r = O(\eps^{-2})$, now gives the claim. We note that Theorem 6 in \cite{Kolmogorov18} is stated with an expected query complexity, but Markov's inequality converts this into a deterministic runtime with a constant failure probability that can be folded into the estimator's failure.
\end{proof}

\subsection{Approximate bounded-magnetization sampling}
\label{ssec:sparse_subsets_annealing}

In this section, we develop a framework for approximate sampling from $\pi_{\le k, \beta^\star}$, defined in~\eqref{eq:gibbs_sparse}, assuming access to samplers for the densities $\pi_{s,\beta}$ \eqref{eq:gibbs_fixed} for all $0 \le s \le k$ and $0 \le \beta \le \beta^\star$. 

The key observation is that $\pi_{\le k, \beta^\star}$ admits the decomposition
\begin{equation}\label{eq:mixture}
\pi_{\le k, \beta^\star}
= \sum_{i=0}^k \alpha_i \, \pi_{i, \beta^\star},
\quad
\alpha_i = \frac{Z_i(\beta^\star)}{\sum_{j=0}^k Z_j(\beta^\star)},\quad Z_i(\beta) \defeq \sum_{\vx \in \calX^d_i}\exp\Par{\beta f(\vx)}.
\end{equation}
We first establish Lemma~\ref{lem:tv_bound_mixtures}, which shows that accurate estimates of the mixture weights $\alpha$ and component distributions $\pi_{i,\beta^\star}$ suffice to guarantee an accurate approximation of $\pi_{\le k, \beta^\star}$.

\begin{lemma}
\label{lem:tv_bound_mixtures}
Let $\calI$ be an index set and assume that $\pi, \pi' \in \calP(\Omega)$ admit the following decompositions: 
\begin{equation*}
    \pi = \bbE_{i \sim \rho}[\mu_i], \quad \pi' = \bbE_{j\sim \rho'}[\mu'_j], \quad \rho, \rho' \in \calP(\calI),\quad \mu_i, \mu'_i \in \calP(\Omega)\text{ for all } i \in \calI.
\end{equation*}
Then, 
\begin{equation*}
    \TV{\pi, \pi'} \le \TV{\rho, \rho'} + \sup_{i \in \calI}\TV{\mu_i, \mu'_i}
\end{equation*}
\end{lemma}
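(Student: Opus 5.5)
The plan is to pass through the intermediate mixture $\pi'' \defeq \bbE_{i \sim \rho}[\mu'_i]$, which keeps the mixing weights of $\pi$ and the components of $\pi'$, and then apply the triangle inequality for total variation:
\[\TV{\pi, \pi'} \le \TV{\pi, \pi''} + \TV{\pi'', \pi'}.\]
Each of the two terms will then be bounded separately.

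For the first term, $\pi$ and $\pi''$ share the weights $\rho$ and differ only in their components. Using the definition $\TV{\cdot,\cdot} = \half\int|\cdot|$ and the triangle inequality inside the integral,
\[\half\int_\Omega\Abs{\sum_i \rho(i)\Par{\mu_i(\omega) - \mu'_i(\omega)}}\dd\omega \le \sum_i \rho(i)\,\TV{\mu_i,\mu'_i} \le \sup_{i\in\calI}\TV{\mu_i,\mu'_i}.\]
This is just joint convexity of total variation, and it is immediate.

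For the second term, $\pi''$ and $\pi'$ share the components $\{\mu'_i\}$ and differ only in their weights. I would use the event characterization: for any event $\calE \subseteq \Omega$,
\[\pi''(\calE) - \pi'(\calE) = \sum_i \Par{\rho(i) - \rho'(i)}\mu'_i(\calE).\]
Since $\mu'_i(\calE) \in [0,1]$, this sum is at most $\sum_{i:\rho(i) > \rho'(i)}(\rho(i) - \rho'(i)) = \TV{\rho, \rho'}$. Alternatively, this step is a data processing inequality: draw $i \sim \rho$, couple it optimally with $j \sim \rho'$, and output a draw from $\mu'_i$ or $\mu'_j$, using the same draw whenever $i = j$.

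Adding the two bounds gives the claim. There is no real obstacle. The only point needing care is to route through $\pi''$ with the weights $\rho$ (rather than $\rho'$) attached to the components $\mu'$, so that the component discrepancy is averaged against a probability measure and the weight discrepancy is tested only against functions with values in $[0,1]$. If $\calI$ is infinite, the sums become integrals against $\rho$ and $\rho'$, and the argument is unchanged.
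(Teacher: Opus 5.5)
Your proposal is correct and matches the paper's proof: the paper also routes through the intermediate mixture $\pi_m = \bbE_{i\sim\rho}[\mu'_i]$, bounds the first term by convexity of $|\cdot|$, and bounds the second by optimally coupling the indices and reusing the same component draw when they agree. Your event-based bound on the second term is an equally valid, slightly more explicit alternative to that coupling step.
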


\begin{proof}
Let $\pi_m \defeq \bbE_{i\sim \rho}[\mu'_i]$.
By the triangle inequality of TV distance, 
\begin{align*}
\TV{\pi, \pi'} \le \TV{\pi, \pi_m} + \TV{\pi_m, \pi'}.
\end{align*}
For the first term, convexity of $|\cdot|$ yields
\[
\TV{\pi, \pi_m} = \frac{1}{2}\sum_{\omega \in \Omega}\Abs{\bbE_{i \sim \rho}[\mu_i(\omega) - \mu'_i(\omega)]} \le \bbE_{i\sim\rho}\Brack{\frac{1}{2}\sum_{\omega \in \Omega}\Abs{\mu_i(\omega) - \mu'_i(\omega)}} \le \sup_{i\in\calI}\TV{\mu_i, \mu'_i}.
\]
For the second term, couple $i \sim \rho$, $i' \sim \rho'$ to minimize $\Pr[i \neq i']$, and then couple the draws from $\mu'_i = \mu'_{i'}$ whenever $i = i'$. This produces different samples with probability $\le \TV{\rho, \rho'}$.
\end{proof}

We now present our sampler for bounded-magnetization measures \eqref{eq:gibbs_sparse} in Algorithm~\ref{alg:adaptive_bounded_size_sampler}, and establish correctness in Lemma~\ref{lem:annealing_sampler_gen_ada}. An obstacle to directly applying Proposition~\ref{prop:adaptive_sche_range_f} is that only approximate samplers $\hat{\pi}_{i,\beta}$ are available in place of $\pi_{i,\beta}$; this is addressed via a coupling argument.

\begin{lemma}
\label{lem:annealing_sampler_gen_ada}
Let $f: \calX^d_{\le k} \to [-R, R]$ for $R \ge 0$ and let $\beta^\star \ge 0$, $\delta\in (0, \half)$. For all $0 \le s \le k$, $0 \le \beta \le \beta^\star$, assume that algorithm $\calA_s(\beta, \delta')$ returns a sample within $\delta'$ total variation distance of $\pi_{s, \beta}$ defined in \eqref{eq:gibbs_fixed}. 
 Algorithm~\ref{alg:adaptive_bounded_size_sampler} uses $N$ calls, each to some $\calA_s(\beta, \frac \delta {6N})$ where 
 \begin{equation}\label{eq:overall_N}N = O\Par{\frac{k(1 + \beta^\star R  )\log(\frac{k}{\delta})}{\delta^2}}.\end{equation}
 Further, its output $\vx$ satisfies
\[
\TV{\Law(\vx), \pi_{\le k, \beta^\star}} \le \delta.
\]
\end{lemma}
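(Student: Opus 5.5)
The plan is to instantiate Algorithm~\ref{alg:adaptive_bounded_size_sampler} in the natural way. For each $0 \le s \le k$, run $\EstZ$ from Proposition~\ref{prop:adaptive_sche_range_f} on $\Omega = \calX^d_s$, with $f$ restricted to $\calX^d_s$ (range still within $[-R,R]$). Use accuracy $\eps = \delta/6$ and failure probability $\delta/(6(k+1))$, with sampler $\calA_s(\cdot, \frac{\delta}{6N})$ in place of an exact sampler for $\pi_{s,\beta}$. This produces estimates $\hZ_s$ of $Z_s(\beta^\star)$. Next draw $s \sim \MN(\hZ)$, and finally return one call to $\calA_s(\beta^\star, \frac{\delta}{6N})$. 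By \eqref{eq:sample_complexity_est_const}, each estimator uses $O(\frac{1+\beta^\star R}{\delta^2}\log\frac{k}{\delta})$ calls. Summing over the $k+1$ values of $s$ and adding the final call gives \eqref{eq:overall_N}.

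For correctness I first analyze an idealized algorithm in which every call to $\calA_s(\beta,\cdot)$ is replaced by an exact draw from $\pi_{s,\beta}$. Each $\calA_s(\beta, \frac{\delta}{6N})$ output is within $\frac{\delta}{6N}$ of its ideal counterpart. The calls may be adaptive, since later $\beta$ values depend on earlier outputs, so I couple sequentially: conditioned on the history being equal so far, couple the next call optimally with its ideal version. The maximal coupling exists for each conditional law. A union bound over the $N$ calls then shows that the real and idealized runs agree except with probability $\le N \cdot \frac{\delta}{6N} = \frac{\delta}{6}$. Hence the law of the real output is within $\delta/6$ in TV of the idealized output's law. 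I expect this step to need the most care: the adaptivity in Kolmogorov's scheme means I must argue on the whole transcript rather than on independent samples.

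In the idealized run, Proposition~\ref{prop:adaptive_sche_range_f} and a union bound over $s$ give, with probability $\ge 1-\delta/6$, that $\hZ_s \in [(1-\eps)Z_s, (1+\eps)Z_s]$ for all $s$. On this event, the normalized weights $\hat\alpha_s$ satisfy $\hat\alpha_s/\alpha_s \in [\frac{1-\eps}{1+\eps}, \frac{1+\eps}{1-\eps}]$. This yields $\TV{\hat\alpha,\alpha} \le \frac{2\eps}{1-\eps} \le 3\eps = \delta/2$; alternatively, apply Lemma~\ref{lem:tv-tanh} with $\Delta = \log\frac{1+\eps}{1-\eps}$. The final draw in the idealized run is exact from $\pi_{s,\beta^\star}$. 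Lemma~\ref{lem:tv_bound_mixtures} with the mixture decomposition \eqref{eq:mixture} then gives conditional TV $\le \delta/2$ on the good event. Adding the $\delta/6$ probability of the bad event and the $\delta/6$ coupling error totals at most $\delta$; the constants can be tuned if needed.
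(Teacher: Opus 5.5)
Your proposal is correct and follows essentially the same argument as the paper. You couple the approximate oracle calls with exact ones and take a union bound over the $N$ calls, invoke Proposition~\ref{prop:adaptive_sche_range_f} with $\eps = \delta/6$ plus a union bound over the $k+1$ slices, and then combine the error in the mixture weights with Lemma~\ref{lem:tv_bound_mixtures}. The differences are cosmetic: your sequential coupling handles the adaptivity of the annealing queries more explicitly than the paper does, and you bound the normalized weights directly instead of using Lemma~\ref{lem:tv-tanh}.
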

\begin{proof}
For simplicity, in this proof we denote $\hat{\pi}_{s,\beta} \defeq \Law(\alg_s(\beta, \frac{\delta}{6N}))$. 
Fix an optimal coupling between $\hat{\pi}_{s,\beta}$ and $\pi_{s,\beta}$ for each oracle call to $\alg_s$. By assumption, each call produces an exact sample from $\pi_{s,\beta}$ with probability at least $1 - \frac{\delta}{6N}$. A union bound over the $N$ calls implies that, with probability at least $1 - \frac{\delta}{6}$, all oracle samples are exact.

Next, note that our choice of $N$ satisfies \eqref{eq:sample_complexity_est_const} with $\eps \gets \frac \delta 6$ and $\delta \gets \frac{\delta}{12k}$. Thus, all of the $k + 1 \le 2k$ estimates $\{\hZ_s\}_{s = 0}^k$ computed on Line~\ref{line:estZ} are correct with probability at least $1 - \frac \delta 6$. Altogether, by Proposition~\ref{prop:adaptive_sche_range_f}, we have that with probability at least $1 - \frac \delta 3$ that each $\hZ_s$ satisfies
\[
\frac{\hat{Z}_s}{Z_s} \in \left[1 - \frac{\delta}{6}, 1 + \frac{\delta}{6}\right].
\]
Applying Lemma~\ref{lem:tv-tanh} and the estimate $x \in [\pm \frac \delta 6] \implies \log(1 + x) \in [\pm \frac \delta 3]$ yields
\[
\TV{\Law(s), \MN(Z)} \le \frac{\delta}{3},
\]
where $s$ is the sampled index on Line~\ref{line:multinomial}, and $Z_i \defeq Z_i(\beta^\star)$ for all $0 \le i \le k$ as defined in \eqref{eq:mixture}.
Combining this with Lemma~\ref{lem:tv_bound_mixtures} and the failure probability of $\frac{\delta}{3}$ on the final sample, we obtain
\[
\TV{\Law(\vx), \pi_{\le k, \beta^\star}} \le \frac{2\delta}{3}
\]
on the above high-probability event.
Finally, the total failure probability from earlier was $\frac \delta 3$, and contributes additively in total variation. Therefore,
\[
\TV{\Law(\vx), \pi_{\le k, \beta^\star}} \le \delta.
\]
\end{proof}
\begin{algorithm}[ht]
\DontPrintSemicolon
    \caption{$\BMGS(k, \beta^\star, f, \cbra{\calA_s}_{s=0}^k, \delta)$}
    \label{alg:adaptive_bounded_size_sampler}
    \textbf{Input:} Magnetization bound $k \in [d]$, target inverse temperature $\beta^\star \ge 0$, $f: \calX^d_{\le k} \to [-R, R]$, approximate samplers $\{\calA_s\}_{s=0}^k$ such that for all $0 \le \beta \le \beta^\star$, $\delta' \in (0, \half)$, $\calA_s(\beta, \delta')$ returns a sample $\vx$ with $\TV{\Law(\vx), \pi_{s,\beta}} \le \delta'$, failure probability $\delta \in (0, \half)$\;
    \textbf{Output:} Sample $\vx$ such that $\TV{\Law(\vx), \pi_{\le k, \beta^\star}} \le \delta$ where $\pi_{\le k, \beta^\star} \propto \exp(\beta^\star f) \cdot \ind_{\cdot \in \calX^d_{\le k}}$\;
    \For{$i = 0, 1, \ldots, k$}{
        $\hZ_i \gets \EstZ(f_i, \beta^\star, \frac \delta {12k}, \frac \delta 6, \alg_i(\cdot, \frac \delta {6N}))$ where $f_i = f$ with domain ${\calX^d_i}$, and $N$ is as defined in \eqref{eq:overall_N}\label{line:estZ}
    }
    $s \sim \MN(\hZ)$\;\label{line:multinomial}
    \Return {$\vx \sim \calA_s(\beta^\star, \frac \delta 3)$}
\end{algorithm}

\subsection{Bounded-magnetization Ising models}
\label{ssec:bm_ising}
In this section, we show how to apply Lemma~\ref{lem:annealing_sampler_gen_ada} to bounded-magnetization Ising models, where
\[
f(\vx) = \frac{1}{2}\vx^\top \mj \vx + \vh^\top \vx
\]
as in~\eqref{eq:gibbs_sparse}. To apply our framework, we require an upper bound on $|f|$ over the domain $\calX^d_{\le k}$. We derive such a bound, which is slightly tightened by the observation that shifting the potential by a constant does not affect the Gibbs measure. The same strategy can be applied in any setting where 
$f$ has large magnitude but small variation over its domain. 

\begin{lemma}\label{lem:range_ising}
    We have
    \begin{equation*}
        \Abs{f(\vx) - f(-\vone_d)}
        \le
        (2k + 4d)\norm{\mj}_{2k,\textup{op}}
        +
        2\norm{\vh}_{k,1},
        \qquad
        \text{for all } \vx \in \calX^d_{\le k}.
    \end{equation*}
\end{lemma}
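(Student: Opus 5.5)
Write $\vx = -\1_d + 2\1_S$ for $S = \set(\vx)$ with $|S| = s \le k$, and expand $f(\vx) - f(-\1_d)$ directly. We have
\[
\half \vx^\top \mj \vx = \half \1_d^\top \mj \1_d - 2\,\1_S^\top \mj \1_d + 2\,\1_S^\top \mj \1_S,
\qquad
\vh^\top \vx = -\vh^\top \1_d + 2\vh^\top \1_S.
\]
Since $f(-\1_d) = \half \1_d^\top \mj \1_d - \vh^\top \1_d$, the constant terms cancel, and
\[
f(\vx) - f(-\1_d) = 2\,\1_S^\top \mj \1_S - 2\,\1_S^\top \mj \1_d + 2\vh^\top \1_S .
\]
The plan is to bound each of the three terms separately by the triangle inequality.

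\textbf{Field term.} Since $|S| \le k$, we get $|\vh^\top \1_S| \le \norm{\vh_S}_1 \le \norm{\vh}_{k,1}$. (If $|S|<k$, pad $S$ to a $k$-set; this only increases the $\ell_1$ mass.) This gives the $2\norm{\vh}_{k,1}$ contribution.

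\textbf{Quadratic term.} We have $|\1_S^\top \mj \1_S| \le \norm{\1_S}_2^2\,\normop{\mj_{S\times S}} \le s\norm{\mj}_{s,\textup{op}} \le k\norm{\mj}_{2k,\textup{op}}$, using monotonicity of sparse operator norms in the sparsity level (any $s$-sparse unit vector is $2k$-sparse). This yields the $2k\norm{\mj}_{2k,\textup{op}}$ term.

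\textbf{Cross term $\1_S^\top \mj \1_d$.} This is the main obstacle, since $\1_d$ is not sparse. The plan is to partition $[d]\setminus S$ into blocks $B_1,\ldots,B_m$, each of size at most $k$, with $m = \lceil (d-s)/k\rceil$. For each block, $\1_S + \1_{B_\ell}$ is supported on at most $2k$ coordinates. Using the polarization bound $|\va^\top \mj \vb| \le \norm{\va}_2\norm{\vb}_2\,\normop{\mj_{(A\cup B)\times(A\cup B)}}$ for vectors supported on $A, B$ with $|A\cup B| \le 2k$, we get
\[
|\1_S^\top \mj \1_{B_\ell}| \le \sqrt{s|B_\ell|}\,\norm{\mj}_{2k,\textup{op}}.
\]
By Cauchy--Schwarz, $\sum_\ell \sqrt{|B_\ell|} \le \sqrt{m(d-s)}$. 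Together with the $S\times S$ piece bounded as in the quadratic term, this gives
\[
|\1_S^\top \mj \1_d| \le \Bigl(s + \sqrt{s\,m\,(d-s)}\Bigr)\norm{\mj}_{2k,\textup{op}}.
\]
With $m \le (d-s)/k + 1$, the quantity $\sqrt{s\,m\,(d-s)}$ is at most about $\sqrt{s/k}\,(d-s) + \sqrt{s(d-s)}$, which is $\le 2d$ after using $s \le k$ and AM--GM. A cruder route, bounding each block by $k\norm{\mj}_{2k,\textup{op}}$ and using $m \le d/k+1$, gives $\lesssim (d+k)\norm{\mj}_{2k,\textup{op}}$ directly.

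Either way we obtain $|\1_S^\top \mj \1_d| \le (2d + \text{lower order})\norm{\mj}_{2k,\textup{op}}$. Multiplying by $2$ and adjusting how the $s$ and $k$ contributions are allotted between the quadratic and cross terms produces the stated coefficient $2k + 4d$. The careful step is keeping this constant bookkeeping exact so the final coefficients match.
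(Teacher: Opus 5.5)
Your proposal is correct and follows essentially the same route as the paper: the same expansion into quadratic, cross, and field terms, the same triangle inequality, and a partition into $k$-sparse blocks for the non-sparse cross term. The paper partitions all of $[d]$ into $m \le 2d/k$ blocks, each contributing at most $k\norm{\mj}_{2k,\textup{op}}$, which gives $|\1_d^\top \mj \1_S| \le 2d\norm{\mj}_{2k,\textup{op}}$ directly. Your deferred bookkeeping does close the gap to the stated coefficient: the crude route gives $(4k+2d)\norm{\mj}_{2k,\textup{op}} \le (2k+4d)\norm{\mj}_{2k,\textup{op}}$ because $k \le d$, and your refined estimate gives $s + \sqrt{s m (d-s)} \le \tfrac{3d}{2}$, hence $(2k+3d)\norm{\mj}_{2k,\textup{op}}$.
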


\begin{proof}
    Let $S \defeq \set(\vx)$ and $\vv \defeq \vone_S$. Because \( \vx - (-\vone_d) = 2\vv, \vx + (-\vone_d) = 2\vv - 2\vone_d,\) we obtain
    \begin{equation*}
        \Abs{f(\vx) - f(-\vone_d)}
        \le
        2\Abs{\vv^\top \mj \vv}
        +
        2\Abs{\Par{\vh-\mj\vone_d}^\top\vv}
        \le
        2k\norm{\mj}_{2k,\textup{op}}
        +
        2\norm{\vh}_{k,1}
        +
        2\Abs{\vone_d^\top\mj\vv},
    \end{equation*}
    where the last step exploits the fact that $\vv$ is $k$-sparse and
    $\twonorm{\vv}\le\sqrt{k}$. Finally, the conclusion follows from the bound
    \[
        \Abs{\vone_d^\top\mj\vv}
        \le
        \sum_{i\in[m]}
        \Abs{\vone_{S_i}^\top\mj\vv}
        \le
        2d\norm{\mj}_{2k,\textup{op}},
    \]
    for an arbitrary partition of $[d]$ into $k$-sparse sets
    $S_1\cup S_2\cup\cdots\cup S_m$, where \( m\le \frac{d}{k}+1\le\frac{2d}{k}\).
\end{proof}

By instantiating Lemma~\ref{lem:annealing_sampler_gen_ada} with the fixed-magnetization samplers from Sections~\ref{sec:dobrushin} and~\ref{sec:trickledown}, we derive samplers for the analogous bounded-magnetization Gibbs measures. For brevity, we only present the bounded-magnetization generalization of Theorem~\ref{thm:sk_trickle} here.

\begin{corollary}\label{cor:fast_mixing_sparse_set_dobrushin_gen}
In the setting of Theorem~\ref{thm:sk_trickle}, let $\pi$ be as in \eqref{eq:ising_slice} with restriction set $\calX^d_{\le k}$ instead of $\calX^d_k$. There is an algorithm that returns a sample $\vx$ with
$\TV{\Law(\vx), \pi} \le \delta$, with probability $\ge 1 - \delta$ over the SK model (Model~\ref{model:sk}),
in time
\[O\Par{\frac{dk\rho\log\Par{\frac k \delta}}{\delta^2} \cdot \Par{k \log \Par{\frac{k\rho}{\delta}} + \beta k^2\Par{\norm{\vh}_\infty + \log \frac d \delta}}}, \text{ where } \rho \defeq d + \beta\norm{\vh}_{k, 1}.\]
\end{corollary}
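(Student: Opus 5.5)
The plan is to instantiate Lemma~\ref{lem:annealing_sampler_gen_ada} with $f(\vx) = \half \vx^\top \mj \vx + \vh^\top \vx - f(-\1_d)$. The constant shift does not change any $\pi_{s,\beta}$ or the mixture weights in \eqref{eq:mixture}. The fixed-magnetization oracles $\calA_s(\beta, \delta')$ will be the lazy down-up walk from Theorem~\ref{thm:sk_trickle}, run on the slice $\calX^d_s$ at inverse temperature $\beta$.

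\textbf{Step 1: validity of the oracles.} Condition once on the high-probability event of Lemma~\ref{lem:conditions_sk} (failure probability $\delta$). As remarked after Theorem~\ref{thm:sk_trickle}, that event does not depend on $s$ or $\beta$. Moreover, the hypotheses $k = O(d/(\bar\beta^2\log(e\bar\beta)))$ and \eqref{eq:delta_condition} are monotone, so they hold for every $s \le k$ and every $\beta \in [0,\beta^\star]$. The cases $s \in \{0,1\}$ are trivial: $s=0$ is a point mass, and for $s=1$ a single up step samples exactly. Hence Theorem~\ref{thm:sk_trickle} applies uniformly to every oracle. Its proof bounds the mixing via $\log(1/\pi_{\min})$, so an oracle with accuracy $\delta'$ needs
\[T = O\Par{k\log\tfrac{1}{\delta'} + k^2\Par{\log d + \beta\Par{\norm{\vh}_\infty + \sqrt{\log \tfrac d \delta}}}}\]
steps. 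By Remark~\ref{rem:runtime}, each step costs $O(d)$ after $O(d^2)$ preprocessing, and that preprocessing is shared across all calls.

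\textbf{Step 2: range bound $R$.} Apply Lemma~\ref{lem:range_ising}, which gives $R \le (2k+4d)\normkop{\mj}{2k} + 2\norm{\vh}_{k,1}$. On the SK event we have $\normkop{\mj}{2k} \le \normop{\mj} = O(1)$ by Item~\ref{item:subset_opnorm} and Fact~\ref{fact:sk_facts}. Therefore $R = O(d + \norm{\vh}_{k,1})$, and so $1 + \beta^\star R = O(\bar\beta(d + \norm{\vh}_{k,1}))$. This should match $\rho$ up to the $\beta$ factor, which I expect to absorb by noting that the annealing cost depends on $\beta^\star R$ and that $\beta d$ can be folded into the mixing-time term. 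Matching the stated $\rho \defeq d + \beta\norm{\vh}_{k,1}$ exactly is where I expect the bookkeeping to be the fiddliest. If needed, I would instead redefine $R$ so that the $\beta$ sits on the field term only, using the bound $\beta = O(\sqrt{d/k})$ implied by the hypotheses.

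\textbf{Step 3: assemble.} Lemma~\ref{lem:annealing_sampler_gen_ada} uses $N = O(k\rho\log(k/\delta)/\delta^2)$ oracle calls, each with accuracy $\delta' = \delta/(6N)$. This gives $\log(1/\delta') = O(\log(k\rho/\delta))$. Multiplying $N$ by the per-call cost $O(d)\cdot T$ yields
\[O\Par{\frac{dk\rho\log(k/\delta)}{\delta^2}\Par{k\log\tfrac{k\rho}{\delta} + \beta k^2\Par{\norm{\vh}_\infty + \log\tfrac d\delta}}}.\]
To reach this form, absorb $\sqrt{\log(d/\delta)} \le \log(d/\delta)$, and fold the $k^2\log d$ term into the others. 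The overall TV bound $\delta$ comes directly from the lemma, and the probability $1-\delta$ over $\mj$ comes from the single conditioning in Step 1. The main obstacle is therefore not conceptual. It is verifying that one SK event certifies Theorem~\ref{thm:sk_trickle} simultaneously for all $(s,\beta)$, and tracking the $\beta$ factor in $R$ so that it matches the stated $\rho$.
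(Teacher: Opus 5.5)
Your architecture matches the paper's: run Algorithm~\ref{alg:adaptive_bounded_size_sampler} with lazy down-up oracles certified by Theorem~\ref{thm:sk_trickle} on a single event of Lemma~\ref{lem:conditions_sk}, bound $R$ via Lemma~\ref{lem:range_ising}, and multiply $N$ by $O(d)\cdot T$ using Remark~\ref{rem:runtime}. However, Step~2 has a genuine gap, and it is exactly the one substantive step in the paper's proof.

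Bounding $\normkop{\mj}{2k}\le\normop{\mj}=O(1)$ only gives $\beta R=O(\beta d+\beta\norm{\vh}_{k,1})$. So $N$, and hence the total runtime, is inflated by a factor up to $\bar\beta$, which can be polynomially large in $d$ under the hypotheses. This factor cannot be ``folded into the mixing-time term'', because the cost is the product $N\cdot T\cdot O(d)$, not a sum. Your fallback does not repair it either. Pairing $\beta=O(\sqrt{d/k})$ with $\normkop{\mj}{2k}=O(1)$ makes the interaction contribution to $\beta R$ of order $d\sqrt{d/k}$, which is worse still.

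The missing idea is to use the full strength of Item~\ref{item:subset_opnorm} at scale $2k$:
\[\normkop{\mj}{2k}\le C\sqrt{\frac{2k\log\frac{ed}{2k}+\log\frac d\delta}{d}}.\]
This scales like $\sqrt{k/d}$ rather than $O(1)$. Two hypotheses then control the two terms.
\begin{itemize}
\item For the first term, the hypothesis $k\le cd/(\bar\beta^2\log(e\bar\beta))$ gives $\beta^2\cdot\frac{2k}{d}\log\frac{ed}{2k}=O(1)$. Since $x\mapsto x\log\frac ex$ is increasing on $(0,1]$, it suffices to check the largest allowed value $x_0=2c/(\bar\beta^2\log(e\bar\beta))$. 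There $\log\frac e{x_0}=O(\log(e\bar\beta))$, which the $\log(e\bar\beta)$ in the hypothesis absorbs. Note that the weaker consequence $\beta=O(\sqrt{d/k})$ alone would only give $O(\sqrt{\log(ed/k)})$.
\item For the second term, \eqref{eq:delta_condition} gives $\beta\sqrt{\log(d/\delta)/d}\le\beta\Delta\le1$.
\end{itemize}
Hence $\beta\normkop{\mj}{2k}=O(1)$. Lemma~\ref{lem:range_ising} then yields $\beta R\le(2k+4d)\cdot O(1)+2\beta\norm{\vh}_{k,1}=O(\rho)$, so $N=O(k\rho\log(k/\delta)/\delta^2)$ as claimed. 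With this replacement, your Step~3 goes through as written.

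Separately, absorbing the $k^2\log d$ term into $\beta k^2\log\frac d\delta$ implicitly requires $\beta\gtrsim 1$. The paper's proof makes the same simplification (``merging terms for simplicity''), so this is not a discrepancy with the paper.
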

\begin{proof}
The algorithm is Algorithm~\ref{alg:adaptive_bounded_size_sampler} with $\beta^\star \gets \beta$, where we use
\[\Par{\lazy\Par{\TDU{\pi_{s, \beta}}}}^T \pi_0 \]
as $\calA_s(\beta, \frac \delta {6N})$ for all $0 \le s \le k$, for an arbitrary $\pi_0 \in \calP(\calX^d_s)$. Theorem~\ref{thm:sk_trickle} states that we need
\[T = O\Par{k\log \frac N \delta +  \beta k^2\Par{\norm{\vh}_\infty + \log \frac d \delta}},\]
merging terms for simplicity. Under the success of Item~\ref{item:subset_opnorm} in Lemma~\ref{lem:conditions_sk}, the assumed range on $k$ in Theorem~\ref{thm:sk_trickle}, and the condition \eqref{eq:delta_condition}, we have $\beta\normkop{\mj}{2k} = O(1)$. Thus,
Lemma~\ref{lem:range_ising} gives 
\[\beta^\star R = O(\rho) \implies N = O\Par{\frac{k\rho\log\Par{\frac{k}{\delta}}}{\delta^2}}\]
in our application of Lemma~\ref{lem:annealing_sampler_gen_ada}. The conclusion follows from the implementation in Remark~\ref{rem:runtime}.
\end{proof}
\section{Bayesian Sparse Linear Regression}\label{sec:slr}
In this section, we apply the local-to-global framework developed earlier to design an approximate sampler for the spike-and-slab posterior in Model~\ref{model:sas_basic}.  As in \cite{KumarSTZ25}, we first use a sparse recovery preprocessing step to remove coordinates whose inclusion is determined from the observations, up to negligible posterior mass.  We then sample directly from the preprocessed support posterior.

In Section~\ref{ssec:slr_setup}, we recall the preprocessing reduction of \cite{KumarSTZ25}.  In Section~\ref{ssec:sample_exact}, we prove rapid mixing of every fixed-size restriction of the exact support posterior and lift these samplers to the bounded-size posterior using tools from Section~\ref{sec:bm_sampling_anneal}.  Finally, Section~\ref{ssec:main_slr} combines the pieces.

\subsection{Setup}\label{ssec:slr_setup}

Throughout this section, we follow the shorthand
\[\me \defeq \mx^\top \mx - \id_d,\quad L \defeq \log \frac d \delta\]
to simplify the statement of bounds, where $\delta$ is a specified failure probability. We begin with a technical lemma used to decompose the output of the \cite{KumarSTZ25} reduction.

\begin{lemma}\label{lem:restricted_short}
Let $\mm \in \Sym^{d \times d}$ and let $\vv \in \R^d$ satisfy $|\supp(\vv)| \le r$. Then following the notation \eqref{eq:sparse_norm_def},
\[\norm{\mm \vv}_{s, 2} \le \normkop{\mm }{r + s} \norm{\vv}_2 \text{ for all } s \in [d - r].\]
\end{lemma}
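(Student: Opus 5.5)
The plan is to reduce the top-$s$ norm to a bilinear form over a small index set and then apply the variational characterization of the sparse operator norm in \eqref{eq:sparse_norm_def}.

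First, fix $s \in [d - r]$ and let $S \subseteq [d]$ with $|S| = s$ achieve the supremum defining $\norm{\mm\vv}_{s,2}$. Write $T \defeq \supp(\vv)$, so $|T| \le r$. By duality of the $\ell_2$ norm,
\[\norm{(\mm\vv)_S}_2 = \sup_{\vw \in \R^d:\, \supp(\vw) \subseteq S,\, \norm{\vw}_2 \le 1} \vw^\top \mm \vv.\]
Both $\vw$ and $\vv$ are supported on $U \defeq S \cup T$, which has $|U| \le r + s$. Hence $\vw^\top \mm \vv = \vw_U^\top \mm_{U \times U} \vv_U$. If $|U| < r+s$, enlarge $U$ to any superset of size exactly $r + s$; this is possible since $r + s \le d$.

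Next, bound the bilinear form by the operator norm of the principal submatrix:
\[\Abs{\vw_U^\top \mm_{U \times U} \vv_U} \le \normop{\mm_{U \times U}}\norm{\vw}_2\norm{\vv}_2 \le \normkop{\mm}{r+s}\norm{\vv}_2.\]
The last inequality uses the second characterization in \eqref{eq:sparse_norm_def}, namely $\normkop{\mm}{r+s} = \sup_{|U| = r+s}\normop{\mm_{U \times U}}$. That characterization holds because $\mm$ is symmetric, so $\mm_{U\times U}$ is symmetric and its operator norm equals its largest absolute quadratic form value.

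Taking the supremum over $\vw$ and then over $S$ gives the claim. There is no genuine obstacle here. The only point needing care is that the definition of $\normkop{\cdot}{k}$ is phrased via quadratic forms $\vv^\top\mm\vv$ rather than bilinear forms. This is resolved by passing to the symmetric principal submatrix $\mm_{U \times U}$, whose operator norm controls bilinear forms as well.
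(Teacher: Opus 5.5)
Your proposal is correct and takes essentially the same route as the paper: both restrict to the principal submatrix on $U = S \cup \supp(\vv)$, of size at most $r+s$, bound by $\normop{\mm_{U \times U}}\norm{\vv}_2$, and then by $\normkop{\mm}{r+s}$. Your duality step is just an equivalent way of writing $\norm{(\mm\vv)_S}_2 \le \norm{\mm_{U\times U}\vv_U}_2$. Your padding of $U$ to size exactly $r+s$ spells out a monotonicity that the paper uses implicitly.
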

\begin{proof}
Fix $T \subseteq [d]$ with $|T| \le s$, and let $S \defeq T \cup \supp(\vv)$ so $|S| \le r + s$. Then,
\[\norm{\Brack{\mm \vv}_T}_2 \le \normop{\mm_{S \times S}} \norm{\vv}_2 \le \normkop{\mm }{r + s} \norm{\vv}_2.\]
\end{proof}

We next state a variant of the preprocessing strategy used by \cite{KumarSTZ25}. To obtain our improved sample complexity, we require a somewhat more fine-grained guarantee on its output than the $\ell_\infty$ error bound used in prior works \cite{KumarSTZ25, ChenLTZ26}. We state our required property in the form of a decomposition \eqref{eq:short_flat}, which splits the preprocessing output vector $\vz$ into two terms: a \emph{short} vector (with bounded sparse $\ell_2$ norm), and a \emph{flat} vector (with bounded $\ell_\infty$ norm). Notably, this strategy is reminiscent of a similar decomposition used algorithmically by \cite{KelnerLLST23}.

\begin{proposition}[Section 3.1, Lemma 8, \cite{KumarSTZ25} and Theorem 3, \cite{ChenLTZ26}]
\label{prop:approx_post_i}
In the setting of Model~\ref{model:sas_basic}, let $\delta \in (0, 1)$. Then with probability $\ge 1 - \delta$ over the randomness in Model~\ref{model:sas_basic}, if $k \defeq 24(\bark + \log \frac 1 \delta)$, $\mx\simiid \calN(0, \frac 1 n)$ and $n = \Omega(k \log \frac{d}{\delta})$ for a large enough constant, there is a subset $\calU \subseteq [d]$ and a vector $\vz \in \R^d$ that can be computed in time $O(nd\log \frac {d}{\delta\min\{1, \sig\}})$, satisfying 
\[|\calU^c| = O(k),\quad \TV{\hpisupp, \pisupp} \le \delta,\]
where $\hpisupp$ is supported on $S \cup \calU^c$ where $S \subseteq \calU$, $\calU^c \defeq [d] \setminus \calU$, with
\begin{equation}\label{eq:nudef_general}
    \hpisupp(S \cup \calU^c)
    \propto
    \Par{\prod_{i \in S} \frac{\vq_i}{1 - \vq_i}}
    \exp\Par{\half\norm{\vz_{S \cup \calU^c}}_{\ma_{S \cup \calU^c}^{-1}}^2}
    \frac 1 {\sqrt{\det \ma_{S \cup \calU^c}}}
    \cdot \ind(|S| \le k).
\end{equation}
Moreover, $\vz$ admits a decomposition $\vz = \vz^{(0)} + \vz^{(1)}$, such that for any constant $a$, there exist constants $C_0, C_1 > 0$ (where $C_1$ depends only on $a$) with
\begin{equation}\label{eq:short_flat}\norm{\vz^{(0)}}_\infty \le C_0 \Par{\sig + \frac 1 \sig} \sqrt{L},\quad \norm{\vz^{(1)}}_{ak, 2} \le  \frac{C_1 kL}{\sig \sqrt{n}}.\end{equation}
\end{proposition}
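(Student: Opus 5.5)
Most of the statement is inherited from \cite{KumarSTZ25, ChenLTZ26}. The new content is the short/flat decomposition \eqref{eq:short_flat}, so the plan is to take the preprocessing output as a black box and then analyze the vector $\vz$ it produces.

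\textbf{Recalling the reduction.} The preprocessing step computes a sparse-recovery estimate $\vhth$ of $\vths$, for example via the $\ell_\infty$-robust estimator of Theorem 3, \cite{ChenLTZ26}. It sets $\calU^c$ to be the coordinates where $|\vhth_i|$ exceeds a threshold of order $(\sig + \frac 1 \sig)\sqrt{L}$ scaled appropriately. On the high-probability event that $|\supp(\vths)| \le k$ and the estimate is $\ell_\infty$-accurate, we get $|\calU^c| = O(k)$. Moreover, every $S$ with nonnegligible posterior mass contains $\calU^c$, which gives $\TV{\hpisupp, \pisupp} \le \delta$ exactly as in Section 3.1 and Lemma 8 of \cite{KumarSTZ25}. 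In that reduction, $\vz$ is the vector $\vb$ from \eqref{eq:pisupp_def}, possibly recentered by the known part. Concretely, $\vz = \frac 1 {\sig^2}\mx^\top \vy = \frac 1{\sig^2}(\mx^\top\mx\vths + \mx^\top\vxi)$. The runtime and the conditions $n = \Omega(k\log\frac d\delta)$ and $k = 24(\bark + \log\frac1\delta)$ (a Chernoff bound on $|\supp(\vths)|$) are taken verbatim.

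\textbf{Decomposing $\vz$.} Write $\mx^\top\mx = \id_d + \me$ and split
\[\vz = \underbrace{\tfrac{1}{\sig^2}\Par{\vths + \mx^\top\vxi}}_{\vz^{(0)}} + \underbrace{\tfrac1{\sig^2}\me\vths}_{\vz^{(1)}},\]
with the $\vths$ term adjusted to account for the normalization used in \cite{KumarSTZ25}.

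\emph{Flat part.} $\mx^\top\vxi$ has coordinates $\Nor(0,\sig^2\norm{\mx_{:i}}_2^2)$ with $\norm{\mx_{:i}}_2^2 = O(1)$ w.h.p. Each $\vths_i$ is either zero or $\Nor(0,1)$. A union bound then gives $\norm{\vz^{(0)}}_\infty \lesssim (\sig + \frac1\sig)\sqrt L$, after the rescaling by $\sig$ used in the paper's conventions.

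\emph{Short part.} Apply Lemma~\ref{lem:restricted_short} with $\mm = \me$, $\vv = \vths$, $r = k$, and $s = ak$. This gives $\norm{\me\vths}_{ak,2} \le \normkop{\me}{(a+1)k}\norm{\vths}_2$. By Fact~\ref{fact:ghop_facts}-style RIP, $\normkop{\me}{(a+1)k} \lesssim \sqrt{(a+1)kL/n}$ whenever $n \gtrsim kL$. A Gaussian norm bound on the $\le k$ nonzero slab entries gives $\norm{\vths}_2 \lesssim \sqrt{kL}$. Multiplying yields $\frac{C_1 kL}{\sig\sqrt n}$ with $C_1$ depending only on $a$, again after the paper's $\sig$-normalization. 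The event holds simultaneously for all $a$ in a fixed constant range because only constantly many sparsity levels are involved.

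\textbf{Main obstacle.} The delicate step is matching the exact definition of $\vz$ in the \cite{KumarSTZ25} reduction. There $\vz$ is a recentered version of $\vb$, using the estimate $\vhth$ to remove large-signal coordinates. I must check that the recentering residual $\vths - \vhth$ is handled correctly: it is $O(k)$-sparse and has $\ell_2$ norm of order $\sqrt{kL}$ times the noise scale. This is what lets the $\me(\cdot)$ term stay short rather than merely flat. If the residual is only $\ell_\infty$-controlled, I would instead use the support bound $|\calU^c| = O(k)$ together with the $\ell_\infty$ guarantee to bound its $\ell_2$ norm by $\sqrt{k}\cdot\ell_\infty$, then rerun the Lemma~\ref{lem:restricted_short} argument.
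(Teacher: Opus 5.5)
Your handling of the inherited parts (the estimator, $|\calU^c| = O(k)$, the TV bound) matches the paper's black-box use of \cite{KumarSTZ25, ChenLTZ26}. However, the decomposition you actually carry out for the new claim \eqref{eq:short_flat} is applied to the wrong vector, and it does not prove the claim.

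With $\vz = \vb = \frac{1}{\sig^2}\mx^\top\vy$ and your split, $\vz^{(0)}$ contains $\frac{1}{\sig^2}\vths$, whose entries on $\supp(\vths)$ have size of order $\sig^{-2}$. Your short bound is only $\frac{kL}{\sig^2\sqrt n}$. Both miss \eqref{eq:short_flat} by a factor of order $\frac 1 \sig$ when $\sig < 1$.

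Appealing to ``the paper's $\sig$-normalization'' does not help, because the bounds are explicit in $\sig$. After the rescaling $\vs = \sqrt{\gamma\tau}\,\vz$ used downstream, your short bound becomes $\frac{kL}{\sig(1+\sig^2)\sqrt n}$, which blows up as $\sig \to 0$. In fact no decomposition of $\vb$ can work at small $\sig$. On $\supp(\vths)$ we have $\vb \approx \sig^{-2}\vths$, so subtracting any flat part of size $O(\sqrt L/\sig)$ leaves a $k$-sparse remainder of $\ell_2$ norm $\gtrsim \sig^{-2}\norm{\vths}_2$.

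The fix is the recentering you mention at the end. It is essential rather than a contingency, and it changes the flat part as well as the short part.

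The paper sets $\calU^c = \supp(\vhth)$ and $\vz \defeq \frac{1}{\sig^2}\mx^\top(\vy - \mx\vhth) - \vhth$. For $T \supseteq \supp(\vhth)$ this gives $\vz_T = \vb_T - \ma_T\vhth_T$. Hence $\norm{\vz_T}^2_{\ma_T^{-1}} - \norm{\vb_T}^2_{\ma_T^{-1}}$ does not depend on $T$, and \eqref{eq:nudef_general} is unaffected. This is also why $\calU^c$ must contain all of $\supp(\vhth)$ rather than a thresholded subset, as you suggested.

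The decomposition is then $\vz^{(0)} = \frac{1}{\sig^2}(\mx^\top\vxi + \vths - \vhth) - \vhth$ and $\vz^{(1)} = \frac{1}{\sig^2}\me(\vths - \vhth)$.

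\emph{Flat part.} You need the noise-level rate $\norm{\vhth - \vths}_\infty = O(\sig\sqrt L)$ from the Gaussian-design analysis of \cite{ChenLTZ26}. Combine it with $\norm{\mx^\top\vxi}_\infty = O(\sig\sqrt L)$ and $\norm{\vhth}_\infty \le \norm{\vths}_\infty + O(\sig\sqrt L) = O((1+\sig)\sqrt L)$.

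\emph{Short part.} Apply Lemma~\ref{lem:restricted_short} with $\vv = \vths - \vhth$, which is $bk$-sparse for a constant $b$ and has $\norm{\vv}_2 = O(\sig\sqrt{kL})$. Use $\normkop{\me}{(a+b)k} = O(\sqrt{kL/n})$. The factor $\sig$ in $\norm{\vv}_2$ is exactly what removes the extra $\frac 1\sig$.

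With these changes your argument becomes the paper's.
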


\begin{proof}
We explain how to derive this result from \cite{KumarSTZ25, ChenLTZ26}, as it is not stated in this form. First, $\calU^c$ is set to $\supp(\vhth)$ where $\vhth$ is the estimator used by Lemma 6 of \cite{KumarSTZ25} satisfying
\begin{equation}\label{eq:inf_error}
    \norm{\vhth - \vths}_\infty = O\Par{\sig \sqrt{L}},
\end{equation}
with probability $\ge 1 - \frac \delta 8$. Also, $\norm{\vths}_\infty = O(\sqrt{L})$ under Model~\ref{model:sas_basic} with probability $\ge 1 - \frac \delta 8$, so
\begin{equation}\label{eq:hat_inf_bound}
\norm{\vhth}_\infty = O\Par{(1 + \sig)\sqrt L}.
\end{equation}
The existence of such an estimator $\vhth$ that takes inputs $(\mx, \vy)$, runs within the stated runtime, and satisfies \eqref{eq:inf_error} and $|\supp(\vhth)| = O(k)$ follows from Theorem 3, \cite{ChenLTZ26} for Gaussian ensembles. We note that the bound in \eqref{eq:inf_error} is obtained by using the tighter error bound for Gaussian observation matrices, discussed at the end of Page 15, \cite{ChenLTZ26}. 
The closeness of $\pisupp$ and $\hpisupp$ then follows from Lemma 6 in \cite{KumarSTZ25}, where the form of $\hpisupp$ comes from Fact~\ref{fact:pisupp}. 

Next, following Eq.\ (23) in \cite{KumarSTZ25}, we let
\begin{align*}
\vz &\defeq \frac 1 {\sig^2} \mx^\top\Par{\mx\vths + \vxi - \mx \vhth} - \vhth \\
&= \underbrace{\frac 1 {\sig^2}\Par{ \mx^\top \vxi + \vths - \vhth} - \vhth}_{\defeq \vz^{(0)}} + \underbrace{\frac 1 {\sig^2}\me \Par{\vths - \vhth}}_{\defeq \vz^{(1)}}.
\end{align*}
Clearly $\vz$ can be computed given knowledge of $\vhth$, because $\vy = \mx \vths + \vxi$ is given as an input. We now verify the conditions \eqref{eq:short_flat}. For $\vz^{(0)}$, the stated bound in \eqref{eq:short_flat} follows by combining Lemma 1 of \cite{KumarSTZ25}, which gives $\norms{\mx^\top \vxi}_\infty = O(\sig \sqrt{L})$ except with probability $\frac \delta 4$, with \eqref{eq:inf_error} and \eqref{eq:hat_inf_bound}. 

For $\vz^{(1)}$, we first condition on $\nnz(\vths) \le k$, which occurs with probability $\ge 1 - \frac \delta 4$ by Corollary 1, \cite{KumarSTZ25}. Thus, $\vths - \vhth$ is $bk$-sparse for a constant $b$. The bound in \eqref{eq:short_flat} then follows from Lemma~\ref{lem:restricted_short} with $\vv \gets \vths - \vhth$, and $s \gets ak$, where
\[\norm{\vths - \vhth}_2 = O(\sig\sqrt{kL}),\quad \normkop{\me}{(a + b)k} = O\Par{\sqrt{\frac{kL}{n}}}.\]
The first inequality above follows from \eqref{eq:inf_error} and our sparsity bound. For the second, following the proof of Corollary~\ref{cor:trickle_ghop} up until \eqref{eq:two_term_conc}, and adjusting the failure probability to union bound over subsets as in that proof, shows that for all $s \in [d]$, with probability $\ge 1 - \frac \delta 4$,
\begin{equation}\label{eq:kop_error}\normkop{\me}{s} = O\Par{\sqrt{\frac{sL}{n}} + \frac{sL}{n}}.\end{equation}
Finally, the claim follows from a union bound over all five random events in the proof.
\end{proof}

\begin{remark}\label{rem:slr_fixed_core}
For notational simplicity, in  Section~\ref{ssec:sample_exact} we present the argument assuming $\calC \defeq \calU^c = \emptyset$ in Proposition~\ref{prop:approx_post_i}. The general case of $\calC$ is identical up to a universal constant-factor enlargement of the sparsity $k$. Indeed, write the full-support potential in \eqref{eq:nudef_general} as
\[
    F(T)
    \defeq
    \sum_{i \in T}\log\frac{\vq_i}{1-\vq_i}
    +
    \half \vz_T^\top \ma_T^{-1}\vz_T
    -
    \half \log\det\ma_T \text{ for all } \calC \subseteq T \subseteq [d].
\]
Then the posterior on the collapsed support $\emptyset \subseteq S \subseteq \calU$ is $\propto \exp(F_\calC(S))\ind(|S| \le k)$, where $F_\calC(S) \defeq F(\calC \cup S)$. All conclusions in Section~\ref{ssec:sample_exact} go through unchanged after lifting by $\calC$ and adjusting constant factors to account for $|\calC| = O(k)$. For example, Lemma~\ref{lem:slr-schur-reduction} applies verbatim after replacing $R \gets \calC \cup R$, and since we only considered $|R| \le k$, the same sparsity bounds hold up to constant factors after this adjustment. Similarly, Lemma~\ref{lem:combine_params} considers differences between potentials $F(S)$, which goes through unchanged after replacing $F \gets F_{\calC}$ and $S \gets \calC \cup S$. 
\end{remark}

In the rest of the section, we fix the universe $\calU$ returned by Proposition~\ref{prop:approx_post_i}. We write \(\calU_{\le k}\defeq\bigcup_{s=0}^k\calU_s\), analogously to \eqref{eq:Uk_def}. Conditioned on the success of Proposition~\ref{prop:approx_post_i}, it is enough to sample from
the density $\hpisupp$ in \eqref{eq:nudef_general}. 
We write the (unnormalized) density as $\exp(F(S))$, where we define
\[F(S)
    =
    \sum_{i\in S}\log\frac{\vq_i}{1-\vq_i}
    +
    \frac12\vz_S^\top\ma_S^{-1}\vz_S
    -
    \frac12\log\det\ma_S.\]
    We now manipulate this expression to be on a more convenient scale. In particular, let
    \begin{equation}\label{eq:gamma-tau}
    \gamma\defeq\frac{\sig^2}{1+\sig^2},
    \quad
    \tau\defeq\frac{1}{1+\sig^2},
    \quad
    \vs\defeq\sqrt{\gamma\tau}\,\vz,
    \quad
    \ml_S\defeq\gamma\ma_S=\id_S+\tau\me_S\text{ for all } S \subseteq [d].
\end{equation}
Under this scaling, and defining $\vs^{(0)} \defeq \sqrt{\gamma\tau}\vz^{(0)}$ and $\vs^{(1)} \defeq \sqrt{\gamma\tau}\vz^{(1)}$, \eqref{eq:short_flat} implies
\begin{equation}\label{eq:short_flat_scaled}
\norm{\vs^{(0)}}_\infty \le C_0 \sqrt{L},\quad \norm{\vs^{(1)}}_{ak, 2} \le \frac{C_1 k L }{\sqrt n}.
\end{equation}
Moreover, again using the notation \eqref{eq:gamma-tau} and combining all linear terms, we obtain
\begin{equation}\label{eq:slr-F-normalized}
    F(S)
    =
    \vh^\top\vone_S
    +
    \frac{1}{2\tau}\vs_S^\top\ml_S^{-1}\vs_S
    -
    \frac12\log\det\ml_S,
    \text{ where }
    \vh_i \defeq \log\frac{\vq_i}{1-\vq_i}+\frac12\log\gamma.
\end{equation}
Finally, for $\beta\in[0,1]$ and $0\le s\le k$, we define
\begin{equation}\label{eq:slr-tempered}
    \nu_{\le s,\beta}(S)
    \propto
    \exp(\beta F(S))\ind(S\in\calU_{\le s}),
    \quad
    \nu_{s,\beta}(S)
    \propto
    \exp(\beta F(S))\ind(S\in\calU_s).
\end{equation}
With this notation, our target support posterior is $\nu=\nu_{\le k,1}$. 

\subsection{Trickle down for support posterior}
\label{ssec:sample_exact}
In this section, we apply the trickle down framework of Section~\ref{sec:trickledown} to sampling under Model~\ref{model:sas_basic}.
We begin by collecting several additional estimates we require of our draw from the model.

\begin{lemma}
\label{lem:slr-gaussian-profile}
Under Model~\ref{model:sas_basic}, assume $n = \Omega(kL)$ for an appropriate constant. Under the success of the event in Proposition~\ref{prop:approx_post_i}, we have the following additional guarantee, for a universal constant $C > 0$, and $\alpha \defeq \sqrt{L/n}$. Simultaneously for every
$R\in \calU_{\le k}$, $B \defeq \calU\setminus R$, and 
$\vu,\vv \in \R^d$,
\[
    \normkop{\me}{2k}\le C\alpha\sqrt{k},
    \quad
    \max_{(i,j)\in[d]\times[d]} \Abs{\me_{ij}} \le C\alpha,
    \quad
    \norm{\me_{R\times B}\vu_B}_2
    \le
    C\alpha\left(
        \sqrt{k}\twonorm{\vu}+\onenorm{\vu}
    \right),
\]
and
\begin{equation}\label{eq:mixnorm_mixvec}
    \Abs{\vu^\top\me\vv}
    \le
    C\Par{
        \alpha\left(
            \onenorm{\vu}\twonorm{\vv}
            +
            \twonorm{\vu}\onenorm{\vv}
        \right)
        +
        \alpha^2\onenorm{\vu}\onenorm{\vv}
    }.
\end{equation}
\end{lemma}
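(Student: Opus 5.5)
The four bounds all come from one event: the sparse operator norm estimate \eqref{eq:kop_error}, which already holds on the success event of Proposition~\ref{prop:approx_post_i}. The fourth bound \eqref{eq:mixnorm_mixvec} also needs a bilinear version of the shelling argument of Lemma~\ref{lem:comparison_mixednorm}.

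\textbf{First two bounds.} Take $s = 2k$ in \eqref{eq:kop_error}. This gives $\normkop{\me}{2k} = O(\alpha\sqrt{k}\cdot\sqrt{2} + 2k\alpha^2)$. Since $n = \Omega(kL)$ forces $\alpha\sqrt{k} = O(1)$, we have $k\alpha^2 \le \alpha\sqrt{k}$, which is the first bound. For the entrywise bound, use \eqref{eq:kop_error} with $s = 2$: then $|\me_{ij}| \le \normkop{\me}{2} = O(\alpha + \alpha^2) = O(\alpha)$, because $\alpha \le 1$. The diagonal entries are handled by $s = 1$, or directly by \eqref{eq:two_term_conc}.

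\textbf{Bilinear mixed-norm bound \eqref{eq:mixnorm_mixvec}.} I would first prove the analogue of \eqref{eq:size_reg_bound}: for all $s$, $\normkop{\me}{s} \le C(\alpha\sqrt{s} + \alpha^2 s)$. This is exactly \eqref{eq:kop_error}. For bilinear forms, $\normkop{\cdot}{s}$ controls $|\vu_S^\top \me \vv_T|$ whenever $|S|+|T| \le s$, via the symmetric submatrix on $S\cup T$. Next, shell $\vu$ into blocks of size $\sigma_u = \lceil \norm{\vu}_1^2/\norm{\vu}_2^2\rceil$ and $\vv$ into blocks of size $\sigma_v$ defined the same way. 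As in Lemma~\ref{lem:comparison_mixednorm}, $\sum_\ell \norm{\vu_{B_\ell}}_2 \le 2\norm{\vu}_2$, and similarly for $\vv$. Summing block pairs gives
\[|\vu^\top \me \vv| \le 4C\Par{\alpha\sqrt{\sigma_u+\sigma_v} + \alpha^2(\sigma_u+\sigma_v)}\norm{\vu}_2\norm{\vv}_2.\]
Now use $\sqrt{\sigma_u+\sigma_v} \le \sqrt{\sigma_u}+\sqrt{\sigma_v}$ and $\sqrt{\sigma_u}\norm{\vu}_2 \lesssim \norm{\vu}_1$. The $\alpha$ terms become $\alpha(\norm{\vu}_1\norm{\vv}_2 + \norm{\vu}_2\norm{\vv}_1)$.

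The $\alpha^2$ term needs more care, and I expect it to be the main obstacle. We have $\sigma_u\norm{\vu}_2\norm{\vv}_2 \lesssim \norm{\vu}_1 \cdot \sqrt{\sigma_u}\norm{\vv}_2$, which is not obviously at most $\norm{\vu}_1\norm{\vv}_1$ when $\sigma_u \gg \sigma_v$. To fix this, shell both vectors at the common scale $\sigma = \min(\sigma_u,\sigma_v)$, say $\sigma = \sigma_v$. The bound $\sum_\ell\norm{\vu_{B_\ell}}_2 \le \norm{\vu}_2 + \sigma^{-1/2}\norm{\vu}_1$ still holds at this scale. The product $(\alpha\sqrt{\sigma}+\alpha^2\sigma)(\norm{\vu}_2 + \sigma^{-1/2}\norm{\vu}_1)\cdot 2\norm{\vv}_2$ then expands into terms each dominated, using $\sqrt{\sigma}\norm{\vv}_2 \lesssim \norm{\vv}_1$ and $\norm{\vu}_2 \le \norm{\vu}_1$, by the right-hand side of \eqref{eq:mixnorm_mixvec}. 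For instance, $\alpha^2\sigma\norm{\vu}_2\norm{\vv}_2 \le \alpha^2 \norm{\vu}_1 \cdot \sqrt\sigma\norm{\vv}_1/\sqrt\sigma$. If $2\sigma \ge d$, the full operator norm bound, i.e.\ \eqref{eq:kop_error} with $s=d$, covers the case.

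\textbf{Third bound.} Write $\norm{\me_{R\times B}\vu_B}_2 = \sup_{\vw}\vw_R^\top \me \vu_B$ over unit vectors $\vw$ supported on $R$. Apply \eqref{eq:mixnorm_mixvec} with $\norm{\vw}_1 \le \sqrt{k}$ and $\norm{\vw}_2 = 1$. This gives $C(\alpha(\sqrt{k}\norm{\vu}_2 + \norm{\vu}_1) + \alpha^2\sqrt{k}\norm{\vu}_1)$, and $\alpha\sqrt{k} = O(1)$ absorbs the last term into $\alpha\norm{\vu}_1$. Every statement is deterministic given \eqref{eq:kop_error}, so the bounds hold uniformly over all $R$ and all vectors.
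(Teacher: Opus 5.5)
Your proposal is correct. The first two bounds and the third bound follow the paper's proof exactly. The paper plugs $s=2k$ and $s=2$ into \eqref{eq:kop_error}. It then tests \eqref{eq:mixnorm_mixvec} against a unit vector supported on $R$ with $\ell_1$ norm at most $\sqrt{k}$, absorbing $\alpha\sqrt{k}=O(1)$.

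You differ in how you prove \eqref{eq:mixnorm_mixvec}. The paper does not redo any shelling. It sets $t^2 \defeq \norm{\vv}_1/\norm{\vu}_1$ and polarizes,
\[\vu^\top\me\vv = \tfrac14 (t\vu + t^{-1}\vv)^\top \me (t\vu + t^{-1}\vv) - \tfrac14 (t\vu - t^{-1}\vv)^\top\me(t\vu - t^{-1}\vv),\]
then applies the quadratic bound of Lemma~\ref{lem:comparison_mixednorm} to each term. The balancing factor $t$ gives
\[\norm{t\vu\pm t^{-1}\vv}_1^2 \le 4\norm{\vu}_1\norm{\vv}_1, \qquad \norm{t\vu\pm t^{-1}\vv}_1\norm{t\vu\pm t^{-1}\vv}_2 \le 2\Par{\norm{\vu}_1\norm{\vv}_2+\norm{\vu}_2\norm{\vv}_1}.\]
This is exactly what removes the $\ell_1$-scale mismatch you identified as the main obstacle. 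It also shows that any symmetric matrix satisfying the quadratic bound \eqref{eq:mixed_norm_bound} satisfies the bilinear one, with no further use of sparse operator norms.

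Your direct bilinear shelling at the common scale $\sigma=\min(\sigma_u,\sigma_v)$ is also valid and self-contained. It makes clear why the smaller scale is forced, at the cost of repeating the shelling computation.

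One small point: for the term $\alpha^2\sigma\norm{\vu}_2\norm{\vv}_2$, the facts you list ($\sqrt{\sigma}\norm{\vv}_2\lesssim\norm{\vv}_1$ and $\norm{\vu}_2\le\norm{\vu}_1$) still leave an extra factor of $\sqrt{\sigma}$. What you actually need is $\sqrt{\sigma}\norm{\vu}_2\le\sqrt{\sigma_u}\norm{\vu}_2\le\sqrt{2}\norm{\vu}_1$. This holds precisely because $\sigma\le\sigma_u$, so state that step explicitly.
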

\begin{proof}
The only event we condition on in this proof is that \eqref{eq:kop_error} holds.
The first two inequalities then follow by plugging $s = 2k$ and $s = 2$ into \eqref{eq:kop_error}, and using our lower bound on $n$. 
For inequality \eqref{eq:mixnorm_mixvec}, assume $\vu, \vv \neq \0_d$, else the claim is immediate. Then let 
\[t^2 \defeq \frac{\norm{\vv}_1}{\norm{\vu}_1} \implies t\norm{\vu}_1 + \frac 1 t \norm{\vv}_1 = 2\sqrt{\norm{\vu}_1\norm{\vv}_1}.\]
Since
\[\vu^\top \me \vv = \frac 1 4\Par{t\vu + \frac 1 t \vv}^\top \me\Par{t\vu + \frac 1 t \vv} - \frac 1 4\Par{t\vu - \frac 1 t \vv}^\top \me \Par{t\vu - \frac 1 t \vv},\]
the claim follows from Lemma~\ref{lem:comparison_mixednorm} applied to the two terms, the triangle inequality, and
\begin{align*}
\Par{t\norm{\vu}_1 + \frac 1 t \norm{\vv}_1}\Par{t\norm{\vu}_2 + \frac 1 t \norm{\vv}_2} &= 2\Par{\norm{\vu}_1\norm{\vv}_2 + \norm{\vu}_2\norm{\vv}_1}, \\
\Par{t\norm{\vu}_1 + \frac 1 t \norm{\vv}_1}^2 &= 4\norm{\vu}_1\norm{\vv}_1.
\end{align*}
For the third inequality, applying \eqref{eq:mixnorm_mixvec} and the fact that $\norm{\me_{R \times B}\vu}_2 = \vv^\top \me \vu_B$ for some unit vector $\vv$ supported only on $R$, such that $\norm{\vv}_1 \le \sqrt{k}$, yields the claim upon simplifying with $\alpha\sqrt{k} = O(1)$.
\end{proof}
Conditioned on the success of the event in Lemma~\ref{lem:slr-gaussian-profile}, we next show how to apply our framework in Section~\ref{sec:trickledown} to sample from the posterior density $\nu_{\le k , 1}$ \eqref{eq:slr-tempered}. We begin with an exact characterization of the potential gain by including a set of coordinates.

\begin{lemma}
\label{lem:slr-schur-reduction}
For a fixed $R \subseteq \calU$, let $B \defeq \calU\setminus R$ and define
\begin{equation}\label{eq:rdef}
    \mr_R
    \defeq
    \me_{B\times B}
    -
    \tau\me_{B\times R}\ml_R^{-1}\me_{R\times B},
    \quad
    \vr_R
    \defeq
    \vs_B
    -
    \tau\me_{B\times R}\ml_R^{-1}\vs_R.
\end{equation}
Then, for every $T\subseteq B$, letting $\mr_T \defeq [\mr_R]_{T\times T}$ and $\vr_T \defeq [\vr_R]_T$,
\[
    F(R\cup T)-F(R)
    =
    \vh^\top\vone_T
    +
    \frac{1}{2\tau}
    \vr_T^\top
    \left(\id_T+\tau\mr_T\right)^{-1}
    \vr_T
    -
    \frac12
    \log\det\left(\id_T+\tau\mr_T\right).
\]
\end{lemma}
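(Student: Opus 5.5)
This is a pure linear-algebra identity. The plan is to write $\ml_{R\cup T}$ in block form with respect to the partition $R \sqcup T$ and apply the Schur complement formulas for both the quadratic form and the determinant. Since $\ml_S = \id_S + \tau\me_S$ and $\me_{S \times S}$ is a principal submatrix of $\me$,
\[\ml_{R\cup T} = \begin{pmatrix} \ml_R & \tau\me_{R\times T} \\ \tau\me_{T\times R} & \id_T + \tau\me_{T\times T}\end{pmatrix}.\]
The Schur complement of the $\ml_R$ block is
\[\id_T + \tau\me_{T\times T} - \tau^2\me_{T\times R}\ml_R^{-1}\me_{R\times T} = \id_T + \tau[\mr_R]_{T\times T} = \id_T + \tau\mr_T.\]
This uses $T \subseteq B$, so that restricting \eqref{eq:rdef} to $T\times T$ simply restricts the rows and columns of $\me_{B\times R}$ and $\me_{B\times B}$. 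Invertibility of $\ml_R$ and $\ml_{R\cup T}$ follows from $\ma_S \succeq \id_S$, so $\ml_S = \gamma\ma_S \succ \mzero$. Hence the Schur complement is also positive definite.

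First, the linear term is immediate: $\vh^\top\vone_{R\cup T} - \vh^\top\vone_R = \vh^\top\vone_T$, since $R$ and $T$ are disjoint. Second, for the log-determinant, the block determinant formula gives
\[\det\ml_{R\cup T} = \det\ml_R\cdot\det(\id_T+\tau\mr_T),\]
so the $-\half\log\det$ terms differ by exactly $-\half\log\det(\id_T+\tau\mr_T)$. Third, for the quadratic term, I will use the standard identity for a positive definite block matrix $\begin{pmatrix}\mathbf{P} & \mathbf{Q}\\ \mathbf{Q}^\top & \mathbf{S}\end{pmatrix}$ with Schur complement $\mathbf{M} = \mathbf{S} - \mathbf{Q}^\top\mathbf{P}^{-1}\mathbf{Q}$:
\[\begin{pmatrix}\va\\ \vb\end{pmatrix}^\top\begin{pmatrix}\mathbf{P} & \mathbf{Q}\\ \mathbf{Q}^\top & \mathbf{S}\end{pmatrix}^{-1}\begin{pmatrix}\va\\ \vb\end{pmatrix} = \va^\top\mathbf{P}^{-1}\va + (\vb - \mathbf{Q}^\top\mathbf{P}^{-1}\va)^\top\mathbf{M}^{-1}(\vb - \mathbf{Q}^\top\mathbf{P}^{-1}\va).\]
This follows from the block $\mathbf{L}\mathbf{D}\mathbf{L}^\top$ factorization. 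I apply it with $\va = \vs_R$, $\vb = \vs_T$, $\mathbf{P} = \ml_R$, and $\mathbf{Q} = \tau\me_{R\times T}$. The shifted vector is then $\vs_T - \tau\me_{T\times R}\ml_R^{-1}\vs_R = [\vr_R]_T = \vr_T$. Subtracting $\frac1{2\tau}\vs_R^\top\ml_R^{-1}\vs_R$, which is the quadratic part of $F(R)$, leaves exactly $\frac1{2\tau}\vr_T^\top(\id_T+\tau\mr_T)^{-1}\vr_T$.

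There is no real obstacle here. The only care needed is bookkeeping: the powers of $\tau$ must match, with $\tau^2$ in the Schur complement becoming $\tau$ times the $\tau$ inside $\mr_R$ and a single $\tau$ in $\vr_R$. I also need the transpose convention $\me^\top_{T\times R} = (\me_{R\times T})^\top$ and the symmetry of $\me$, and I must check that the case $R = \emptyset$, where the formulas degenerate, is consistent. Finally, combining the three pieces gives the stated formula for $F(R\cup T) - F(R)$.
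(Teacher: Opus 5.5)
Your proposal is correct and follows the paper's proof: both write $\ml_{R\cup T}$ in block form, identify the Schur complement of the $\ml_R$ block as $\id_T+\tau\mr_T$, and apply the block determinant factorization and the block-inverse quadratic-form identity before substituting into \eqref{eq:slr-F-normalized}. Your explicit justification of invertibility via $\ml_S=\gamma\ma_S\succeq\gamma\id_S$ is a detail the paper leaves implicit.
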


\begin{proof}
With the coordinates ordered as $R,T$,
\[
    \ml_{R\cup T}
    =
    \begin{pmatrix}
        \ml_R & \tau\me_{R\times T}\\
        \tau\me_{T\times R} & \ml_T
    \end{pmatrix}.
\]
Its Schur complement with respect to the $R$ block is
\[
\begin{aligned}
    \ml_T
    -
    \tau^2\me_{T\times R}\ml_R^{-1}\me_{R\times T}=
    \id_T+\tau\mr_T.
\end{aligned}
\]
Consequently,
$\det\ml_{R\cup T}
    =
    \det(\ml_R)
    \det\left(\id_T+\tau\mr_T\right)$.
The block inverse formula also gives
\[
\begin{aligned}
    \vs_{R\cup T}^\top\ml_{R\cup T}^{-1}\vs_{R\cup T}
    &=
    \vs_R^\top\ml_R^{-1}\vs_R+
    \left(
        \vs_T
        -
        \tau\me_{T\times R}\ml_R^{-1}\vs_R
    \right)^\top
    \left(\id_T+\tau\mr_T\right)^{-1}
    \left(
        \vs_T
        -
        \tau\me_{T\times R}\ml_R^{-1}\vs_R
    \right)\\
    &=
    \vs_R^\top\ml_R^{-1}\vs_R
    +
    \vr_T^\top
    \left(\id_T+\tau\mr_T\right)^{-1}
    \vr_T.
\end{aligned}
\]
Substituting both of the above displays into \eqref{eq:slr-F-normalized} now proves the claim.
\end{proof}

We now give the main technical result in this section, which provides various estimates required to apply the parameter bounds from Lemma~\ref{lem:mixed_quadratic} to an appropriate interaction matrix.

\begin{lemma}\label{lem:combine_params}
Assume the events of Proposition~\ref{prop:approx_post_i} and Lemma~\ref{lem:slr-gaussian-profile} hold. Let $R \in \calU_{\le k}$ and $B \defeq \calU \setminus R$, 
and following notation of Lemma~\ref{lem:slr-schur-reduction}, let $\mk \in \R^{B \times B}$ have zero diagonal and satisfy
\[\mk_{ij} \defeq F\Par{R \cup \{i, j\}} - F\Par{R \cup \{i\}} - F\Par{R \cup \{j\}} + F(R) \text{ for all } (i, j) \in B \times B,\; i\neq j.\]
Then the following bounds hold for a universal constant $C' > 0$, letting $\alpha \defeq \sqrt{L/n}$.
\begin{enumerate}
\item\label{item:quadform_bound} For all $\vu \in \R^B$, following the notation \eqref{eq:rdef},
\[\Abs{\sum_{\substack{(i, j) \in B \times B \\ i \neq j}} \mr_{ij}\vu_i\vu_j} \le C'\Par{\alpha^2 k \norm{\vu}_2^2 + \alpha\norm{\vu}_1\norm{\vu}_2 + \alpha^2\norm{\vu}_1^2 }.\]
\item For all $(i, j) \in B \times B$, $i \neq j$,
\[\mk_{ij} = -\mr_{ij}\vr_i\vr_j + \eps_{ij}, \text{ for } \Abs{\eps_{ij}} \le C'\Par{\alpha^2 + \alpha^4k^2}\Par{1 + \vr_i^2 + \vr_j^2}.\]\label{item:entry_perturb_bound}
\item For all $\vu \in \R^B$,
\[\norm{\vr \circ \vu}_1 \le C'\sqrt{L}\Par{\norm{\vu}_1 + \frac{kL}{\sqrt n}\norm{\vu}_2},\quad \norm{\vr \circ \vu}_2 \le C'\Par{\sqrt{L} + \frac{kL}{\sqrt n}}\norm{\vu}_2,\]
and
\[\sum_{i \in B} |\vu_i| \vr_i^2 \le C'\Par{L\norm{\vu}_1 + \frac{k^2L^2}{n}\norm{\vu}_2}.\]\label{item:reweight_u}
\end{enumerate}
\end{lemma}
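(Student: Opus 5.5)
We prove the three items in the order 1, 3, 2, since Item~\ref{item:entry_perturb_bound} uses the bounds on $\vr$ from Item~\ref{item:reweight_u}. Throughout, we work on the events of Lemma~\ref{lem:slr-gaussian-profile} with $\alpha\sqrt{k} = O(1)$ (from $n = \Omega(kL)$). We also use $\tau \le 1$ and $\normop{\ml_R^{-1}} = O(1)$ for $|R| \le k$, which holds because $\normop{\tau\me_{R\times R}} \le C\alpha\sqrt k \le \half$.

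\textbf{Item~\ref{item:quadform_bound}.} Split $\mr_R = \me_{B\times B} - \tau\me_{B\times R}\ml_R^{-1}\me_{R\times B}$ and treat the two parts separately.

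For the first part, apply \eqref{eq:mixnorm_mixvec} with $\vu = \vv$. This gives the $\alpha\norm{\vu}_1\norm{\vu}_2 + \alpha^2\norm{\vu}_1^2$ terms. Removing the diagonal costs at most $\max|\me_{ii}|\,\norm{\vu}_2^2 \le C\alpha\norm{\vu}_2^2$, which is dominated by $\alpha\norm{\vu}_1\norm{\vu}_2$.

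For the Schur part, write the quadratic form as $\tau\,\vw^\top\ml_R^{-1}\vw$ with $\vw \defeq \me_{R\times B}\vu$. The third bound of Lemma~\ref{lem:slr-gaussian-profile} gives $\norm{\vw}_2 \le C\alpha(\sqrt k\norm{\vu}_2 + \norm{\vu}_1)$. Squaring yields terms $\alpha^2k\norm{\vu}_2^2 + \alpha^2\norm{\vu}_1^2$. We then subtract the diagonal of this rank-$|R|$ matrix, whose entries are $O(\norm{\me_{R\times\{i\}}}_2^2) = O(\alpha^2k)$. This is absorbed into $\alpha^2k\norm{\vu}_2^2$.

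\textbf{Item~\ref{item:reweight_u}.} Decompose $\vr_R = \vs^{(0)}_B + \vs^{(1)}_B - \tau\me_{B\times R}\ml_R^{-1}\vs_R$. The first term is flat, with $\norm{\cdot}_\infty \le C_0\sqrt L$ by \eqref{eq:short_flat_scaled}. For the second, \eqref{eq:short_flat_scaled} gives sparse $\ell_2$ mass $kL/\sqrt n$ on every $ak$-set.

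For the third term, note $\norm{\vs_R}_2 \le \sqrt{k}\,C_0\sqrt L + C_1kL/\sqrt n$, and $\ml_R^{-1}$ is bounded. Lemma~\ref{lem:restricted_short} with $\normkop{\me}{2k} \le C\alpha\sqrt k$ then bounds its $\ell_2$ norm on any $k$-set by $O(\alpha\sqrt k(\sqrt{kL} + kL/\sqrt n)) = O(\sqrt L + kL/\sqrt n)$, up to logs absorbed by $n \gtrsim kL$. Its entries are bounded by $C\alpha\norm{\ml_R^{-1}\vs_R}_1 = O(\alpha\sqrt k\,(\sqrt{kL} + \ldots)) = O(\sqrt L)$, so this term is also flat up to a short piece.

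So $\vr = \vr^{\rm flat} + \vr^{\rm short}$ with $\norm{\vr^{\rm flat}}_\infty = O(\sqrt L)$ and $\norm{\vr^{\rm short}}_{k,2} = O(kL/\sqrt n)$. Each claimed bound then follows from a shelling decomposition of $\vu$ into blocks of size $k$, as in Lemma~\ref{lem:comparison_mixednorm}. Pairing each block of $\vr^{\rm short}$ against $\vu$ by Cauchy--Schwarz gives $\norm{\vr^{\rm short}\circ\vu}_1 \le \norm{\vr^{\rm short}}_{k,2}\sum_\ell\norm{\vu_{B_\ell}}_2$. Bounding this sum by $O(\norm{\vu}_2 + \norm{\vu}_1/\sqrt k)$ gives the stated form after absorbing constants. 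The $\ell_2$ and squared versions are analogous, using that $\vr^{\rm short}$ restricted to any single coordinate is bounded by its $k$-sparse norm.

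\textbf{Item~\ref{item:entry_perturb_bound}.} Apply Lemma~\ref{lem:slr-schur-reduction} with $T \in \{\{i\},\{j\},\{i,j\}\}$. The linear terms $\vh^\top\1_T$ cancel in the second difference. For $T = \{i,j\}$, write $\id_T + \tau\mr_T = \md + \me'$, where $\md$ is diagonal with entries $1+\tau\mr_{ii}$ (note $|\tau\mr_{ii}| = O(\alpha)$) and $\me'$ has off-diagonal entries $\tau\mr_{ij}$ with $|\mr_{ij}| = O(\alpha + \alpha^2 k)$. The log-determinant second difference is $-\half\log(1 - \tau^2\mr_{ij}^2/\ldots) = O(\alpha^2 + \alpha^4k^2)$.

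For the quadratic term, invert the $2\times2$ matrix explicitly. Its off-diagonal contribution is $\frac1{2\tau}\cdot(-2\tau\mr_{ij}\vr_i\vr_j)/\det = -\mr_{ij}\vr_i\vr_j(1 + O(\alpha+\alpha^2k))$. The diagonal corrections produce terms $\vr_i^2\,O(\mr_{ij}^2) = O((\alpha^2 + \alpha^4k^2)\vr_i^2)$. Finally, the cross error $\mr_{ij}\vr_i\vr_j\cdot O(\alpha)$ is handled by AM--GM and absorbed into $(\alpha^2 + \alpha^4k^2)(\vr_i^2 + \vr_j^2)$.

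\textbf{Main obstacle.} The hardest step is the bookkeeping in Item~\ref{item:reweight_u}, in particular showing that the Schur correction $\tau\me_{B\times R}\ml_R^{-1}\vs_R$ splits cleanly into flat and short parts uniformly over all cores $R$. I would first try bounding its entries directly via the $\max|\me_{ij}|$ bound together with $\norm{\ml_R^{-1}\vs_R}_1 \le \sqrt k\norm{\ml_R^{-1}\vs_R}_2$.
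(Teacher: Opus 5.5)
Your proof is correct in substance. Items~1 and~2 follow the paper closely: you split $\mr_R$ into $\me_{B\times B}$ plus the Schur term and pay for the two diagonals, and you invert the $2\times 2$ block explicitly. In that inversion, the identity $d_id_j - D_{ij} = \tau^2\mr_{ij}^2$ is what cancels the $\frac{1}{2\tau}$ prefactor in the diagonal corrections. Item~3 uses the same flat/short split $\vr_R = \vs^{(0)}_B + \vd$ as the paper.

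The genuine difference is the final summation in Item~3. The paper sorts coordinates, uses $d_{(i)}^2 \le \norm{\vd}_{k,2}^2/\min(i,k)$, and applies Cauchy--Schwarz against $\sum_i \frac 1 i$. This costs a $\sqrt{\log d}$ factor, which it absorbs into $\sqrt L$. Your shelling of $\vu$ into $k$-blocks uses $\sum_\ell \norm{\vu_{B_\ell}}_2 \le \norm{\vu}_2 + \norm{\vu}_1/\sqrt k$. This avoids the logarithmic loss, giving $\frac{kL}{\sqrt n}\norm{\vu}_2$ instead of $\frac{kL^{1.5}}{\sqrt n}\norm{\vu}_2$ in the $\ell_1$ bound. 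The paper's rearrangement argument is a bit shorter.

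For the squared bound, you need to pair $\norm{\vu_{B_\ell}}_\infty$ with $\norm{\vr^{\mathrm{short}}_{B_\ell}}_2^2$, so that the block sum is $\norm{\vu}_2 + \norm{\vu}_1/k$. The single-coordinate bound $\norm{\vr^{\mathrm{short}}}_\infty \le \norm{\vr^{\mathrm{short}}}_{k,2}$ alone would only give $\frac{k^2L^2}{n}\norm{\vu}_1$. That fits inside $L\norm{\vu}_1$ only when $n \gtrsim k^2 L$.

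Two slips need fixing, though neither is load-bearing.

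\textbf{Entries of the Schur correction.} The entries of $\tau\me_{B\times R}\ml_R^{-1}\vs_R$ are not $O(\sqrt L)$. Your own estimate gives $C\alpha\sqrt k\,\norm{\ml_R^{-1}\vs_R}_2 = O(\alpha k\sqrt L) = O(kL/\sqrt n)$. This exceeds $\sqrt L$ whenever $n \ll k^2 L$, which is exactly the $n \approx k^{1.5}$ regime of Proposition~\ref{prop:slr-exact-link-profile}. So the strategy in your ``main obstacle'' paragraph, making this term entrywise flat, fails there. It is also unnecessary. Your expression $O(\alpha\sqrt k(\sqrt{kL} + kL/\sqrt n))$ from Lemma~\ref{lem:restricted_short} already equals $O(kL/\sqrt n)$ once $n \gtrsim kL$, not $O(\sqrt L + kL/\sqrt n)$. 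The whole correction therefore goes into the short part, as the paper does.

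\textbf{Size of $\tau\mr_{ii}$.} This quantity is $O(\alpha + \alpha^2 k)$, not $O(\alpha)$, because $\mr_{ii}$ contains $\tau\me_{\{i\}\times R}\ml_R^{-1}\me_{R\times\{i\}}$, and $\alpha^2 k \gg \alpha$ in the relevant regime. The cross error in Item~2 is then $O((\alpha+\alpha^2k)^2)\Abs{\vr_i\vr_j}$. This still fits inside the claimed $(\alpha^2 + \alpha^4 k^2)(1 + \vr_i^2 + \vr_j^2)$.
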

\begin{proof}
We proceed with the three claims in order. First, observe that by Lemma~\ref{lem:slr-gaussian-profile} and $|R| \le k$,
\begin{equation}\label{eq:inv_op_bound}\normop{\ml_R^{-1}} \le \Par{1 - C\tau\alpha\sqrt k}^{-1} \le 2,\end{equation}
using $\tau \le 1$ and taking $n$ large enough.
Also, again by applying Lemma~\ref{lem:slr-gaussian-profile},
\begin{equation}\label{eq:entrywise}
\begin{aligned}
\norm{\mr}_{\max} \defeq \max_{(i, j) \in B \times B} \Abs{\mr_{ij}} &\le \max_{(i, j) \in B \times B}\Abs{\me_{ij}} +  \normop{\ml_R^{-1}} \norm{\me_{R \times \{i\}}}_2 \norm{\me_{R \times \{j\}}}_2 \\
&\le C\alpha + 2\Par{C\alpha (\sqrt k + 1)}^2 \le C\alpha + 8C^2\alpha^2 k \le \half.
\end{aligned}
\end{equation}
The last inequality again used our lower bound on $n$. Hence,
\begin{align*}
\Abs{\sum_{\substack{(i, j) \in B \times B \\ i \neq j}} \mr_{ij}\vu_i\vu_j} &\le \Abs{\vu^\top \mr \vu} + \norm{\mr}_{\max}\norm{\vu}_2^2 \\
&\le \Abs{\vu^\top \mr \vu} + 8C^2 \alpha^2 k \norm{\vu}_2^2 + C\alpha\norm{\vu}_1\norm{\vu}_2,
\end{align*}
and by again applying Lemma~\ref{lem:slr-gaussian-profile}, we have
\begin{align*}
\Abs{\vu^\top \mr \vu} &\le \Abs{\vu^\top \me \vu} + 2\norm{\me_{R \times B}\vu}_2^2 \\
&\le 2C\Par{\alpha\norm{\vu}_1\norm{\vu}_2 + \alpha^2\norm{\vu}_1^2} + 4C^2\Par{\alpha^2k \norm{\vu}_2^2 +\alpha^2\norm{\vu}_1^2}.
\end{align*}
Combining the above two displays proves Item~\ref{item:quadform_bound}. 

Next, for all $i \in B$, let $d_i \defeq 1 + \tau\mr_{ii}$, so that $d_i \in [1 \pm \tau\norm{\mr}_{\max}] \subseteq [\half, \frac 3 2]$. Also, let 
\[D_{ij} \defeq \det\Par{\id_{\{i, j\}} + \tau\mr_{\{i,j\}}} = d_id_j - \tau^2\mr^2_{ij}\]
so that
\[\Par{\id_{\{i, j\}} + \tau\mr_{\{i,j\}}}^{-1} = \begin{pmatrix} d_i & \tau\mr_{ij} \\ \tau\mr_{ij} & d_j \end{pmatrix}^{-1} = \frac{1}{D_{ij}}\begin{pmatrix} d_j & -\tau\mr_{ij} \\ -\tau\mr_{ij} & d_i \end{pmatrix}.\]
Then, applying Lemma~\ref{lem:slr-schur-reduction} with $T = \emptyset, \{i\}, \{j\}, \{i, j\}$, the linear terms cancel, so
\begin{equation}\label{eq:k_entry_derive}
\begin{aligned}\mk_{ij} &= \frac 1 {2\tau D_{ij}}\Par{d_j \vr_i^2 + d_i\vr_j^2 - 2\tau\mr_{ij}\vr_i\vr_j} - \frac 1 {2\tau}\Par{\frac{\vr_i^2}{d_i} + \frac{\vr_j^2}{d_j}} - \half\log\Par{D_{ij}} + \half\log\Par{d_id_j}
\\
&= \frac{\tau\mr_{ij}^2\Par{\frac{\vr_i^2}{d_i} + \frac{\vr_j^2}{d_j}} - 2\mr_{ij}\vr_i\vr_j}{2D_{ij}} - \half\log\Par{1 - \frac{\tau^2\mr_{ij}^2}{d_id_j}} \\
&= -\frac{\mr_{ij}\vr_i\vr_j}{D_{ij}} + \frac{\tau\mr_{ij}^2}{2D_{ij}}\Par
{\frac{\vr_i^2}{d_i} + \frac{\vr_j^2}{d_j}} - \half\log\Par{1 - \frac{\tau^2\mr_{ij}^2}{d_id_j}}.
\end{aligned}
\end{equation}
Using the estimates $|d_i - 1|, |d_j - 1|, |D_{ij} - 1| = O(\norm{\mr}_{\max}) = O(\alpha + \alpha^2 k)$ and Taylor expanding each error term (since \eqref{eq:entrywise} bounds $\norm{\mr}_{\max}$ by an arbitrary constant), gives for large enough $C'$,
\begin{align*}
\Abs{\mr_{ij}\vr_i\vr_j} \cdot \Abs{\frac 1 {D_{ij}} - 1} &\le \frac{C'}{6}\Par{\alpha + \alpha^2 k}^2\Par{\vr_i^2 + \vr_j^2},\\
\frac{\tau\mr_{ij}^2}{2D_{ij}}\Par
{\frac{\vr_i^2}{d_i}+ \frac{\vr_j^2}{d_j}} &\le \frac{C'}{6}\Par{\alpha + \alpha^2 k}^2\Par{\vr_i^2 + \vr_j^2},\\
\Abs{\log\Par{1 - \frac{\tau^2\mr_{ij}^2}{d_id_j}}} &\le \frac {C'} 6 \Par{\alpha + \alpha^2 k}^2.
\end{align*}
Combining these bounds within \eqref{eq:k_entry_derive}, and using $(a + b)^2 \le 2(a^2 + b^2)$, then yields Item~\ref{item:entry_perturb_bound}.

To conclude, let $\vc \defeq \me_{B \times R}\ml_R^{-1} \vs_R$ and $\vd \defeq \vs^{(1)}_B - \tau\vc$, so that following Lemma~\ref{lem:slr-schur-reduction} and $\vs = \vs^{(0)} + \vs^{(1)}$, we have $\vr_R = \vs^{(0)}_B + \vd$. Recall from \eqref{eq:short_flat_scaled} that $\vs^{(0)}_B$ has all entries bounded by $O(\sqrt{L})$, such that 
\[\norms{\vs^{(0)} \circ \vu}_1 = O(\sqrt{L})\norm{\vu}_1, \quad \norms{\vs^{(0)} \circ \vu}_2 = O(\sqrt{L})\norm{\vu}_2, \quad \sum_{i \in B} |\vu_i| (\vs_i^{(0)})^2 = O\Par{L}\norm{\vu}_1.\]
By choosing $C'$ large enough in Item~\ref{item:reweight_u}, all of the above contributions fit asymptotically within the claimed budgets. It thus suffices to prove Item~\ref{item:reweight_u} using $\vd$ to reweight $\vu$ rather than $\vr$.

Next, we bound $\norm{\vd}_{k, 2}$. Let $T \subseteq B$ index the largest $k$ coordinates of $\vd$ by magnitude.
By \eqref{eq:short_flat_scaled} and $n = \Omega(kL)$, for every $R \in \calU_{\le k}$, 
\[\norm{\vs_R}_2 \le \sqrt{k}\norm{\vs^{(0)}}_\infty + \norm{\vs^{(1)}}_{k, 2} = O(\sqrt{kL}).\]
Then, \eqref{eq:inv_op_bound} and the third bound in Lemma~\ref{lem:slr-gaussian-profile} imply, for a large enough constant $C''$,
\[\norm{\vc_T}_2 \le 2C\alpha\sqrt{k} \cdot O(\sqrt{kL}) \le kL\sqrt{\frac{C''}{n}}.\]Combining with \eqref{eq:short_flat_scaled} and $\tau \le 1$ gives $\norm{\vd_T}_2 \le kL\sqrt{\frac{C''}{n}}$ after adjusting $C''$. 

Thus, letting the $i^{\text{th}}$ largest magnitude amongst $\vd$'s coordinates be denoted $d_{(i)}$, we have shown
\begin{equation}\label{eq:d_bounds}d_{(i)}^2 \le \frac{C'' k^2 L^2}{n} \cdot \frac 1 {\min(i, k)},\text{ for all } i \in |B|.\end{equation}
Now, similarly denoting the $i^{\text{th}}$ largest magnitude amongst $\vu$'s coordinates by $u_{(i)}$,
\begin{equation}\label{eq:l1_weight_bound}
\begin{aligned}\norm{\vd \circ \vu}_1 &\le \sqrt{\frac{C'' k^2 L^2}{n}} \cdot \Par{\sum_{i \in [|B|]} \frac{u_{(i)}}{\sqrt{i}} + \frac{u_{(i)}}{\sqrt{k}}} \\
&\le C'\sqrt{L}\norm{\vu}_1 + \sqrt{\frac{C'' k^2 L^2}{n}} \cdot \sqrt{\Par{\sum_{i \in [|B|]}\frac 1 i}\norm{\vu}_2^2} \le C'\sqrt{L}\norm{\vu}_1 + \frac{C' kL^{1.5}}{\sqrt n}\norm{\vu}_2, \end{aligned}
\end{equation}
where the second inequality was by Cauchy-Schwarz. Similarly,
\begin{align*}
\norm{\vd \circ \vu}_2^2 \le \frac{C'' k^2 L^2}{n} \cdot \Par{\sum_{i \in [|B|]} \frac{u_{(i)}^2}{i} + \frac{u_{(i)}^2}{k}} \le \frac{C'' k^2 L^2}{n}\norm{\vu}_2^2.
\end{align*}
Finally, recalling $n = \Omega(kL)$, and applying the Cauchy-Schwarz inequality as in \eqref{eq:l1_weight_bound},
\begin{align*}
\sum_{i \in B} |\vu_i| \vd_i^2 \le \sum_{i \in [|B|]} u_{(i)}d_{(i)}^2 \le \frac{C'' k^2 L^2}{n} \cdot \Par{\sum_{i \in [|B|]} \frac{u_{(i)}}{i} + \frac{u_{(i)}}{k}} \le  \frac{C' k^2 L^2}{n}\norm{\vu}_2 + C' L \norm{\vu}_1 .
\end{align*}
\end{proof}

We are finally ready to give our application of the trickle down framework.

\begin{proposition}\label{prop:slr-exact-link-profile}
Assume the events of Proposition~\ref{prop:approx_post_i} and Lemma~\ref{lem:slr-gaussian-profile} hold, and that 
\[n = \Omega\Par{k^{1.5} L^2 + kL^3}\]
for a sufficiently large constant. Then for all $2 \le s \le k$ and $\beta \in [0, 1]$, following the notation \eqref{eq:slr-tempered}, $\TDU{\nu_{s,\beta}}$ satisfies a $\frac 1 {2k}$-Poincar\'e inequality.
\end{proposition}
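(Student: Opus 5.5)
We apply the trickle down theorem (Lemma~\ref{lem:trickledown}) to $\nu_{s,\beta}$, viewed as a measure on $\calU_s$. For it we need, for every core $R \in \calU_{s-2}$ (so $|R| \le k$), that the link random walk matrix $\mpp_R$ satisfies $\lam_2(\mpp_R) \le \frac{1}{2(s-1)}$. We will obtain this from Lemma~\ref{lem:main_sgt}, by exhibiting the link weights in the form \eqref{eq:induced_w}.

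\textbf{Step 1 (link weights).} Fix $R$ and $B \defeq \calU \setminus R$. The link weight of a pair $i \neq j$ in $B$ is
\[\exp\Par{\beta F(R\cup\{i,j\})} \propto \va_i\va_j\exp\Par{\beta\mk_{ij}},\quad \va_i \defeq \exp\Par{\beta\Par{F(R\cup\{i\}) - F(R)}},\]
with $\mk$ exactly the matrix of Lemma~\ref{lem:combine_params}. Thus Lemma~\ref{lem:main_sgt} applies to $\beta\mk$. Since $\beta \le 1$, it suffices to verify the hypothesis of Lemma~\ref{lem:mixed_quadratic} for $\beta\mk$, with some $\tau'$ satisfying $9\tau'^2 \le \frac{1}{2k}$. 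Because the conclusion of Lemma~\ref{lem:mixed_quadratic} only uses \eqref{eq:mixed_norm_bound}, the bound on $m$, and the entrywise Taylor remainder, it carries over verbatim to a general zero-diagonal $\mk$. Concretely, we will show for all $\vu \in \R^B$
\[\Abs{\vu^\top \mk \vu} \le \tau'\norm{\vu}_1\norm{\vu}_2 + \tau'^2\norm{\vu}_1^2, \qquad \tau' \defeq c\,k^{-1/2},\]
for a small constant $c$. Multiplying by $\beta \le 1$ preserves this bound.

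\textbf{Step 2 (the quadratic form bound).} By Item~\ref{item:entry_perturb_bound} of Lemma~\ref{lem:combine_params}, we split
\[\vu^\top\mk\vu = -\sum_{i\neq j}\mr_{ij}(\vr_i\vu_i)(\vr_j\vu_j) + \sum_{i\neq j}\eps_{ij}\vu_i\vu_j.\]
For the main term, apply Item~\ref{item:quadform_bound} with $\vu \gets \vr\circ\vu$, then Item~\ref{item:reweight_u}. This gives a bound of order
\[\alpha^2 k\Par{L + \tfrac{k^2L^2}{n}}\norm{\vu}_2^2 + \alpha L\Par{\norm{\vu}_1 + \tfrac{kL}{\sqrt n}\norm{\vu}_2}\norm{\vu}_2 + \alpha^2 L\Par{\norm{\vu}_1 + \tfrac{kL}{\sqrt n}\norm{\vu}_2}^2.\]
Each term is then converted to the mixed norm shape, using $\norm{\vu}_2 \le \norm{\vu}_1$ and Young's inequality. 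Two terms are the binding ones:
\begin{itemize}
\item The term $\alpha L\norm{\vu}_1\norm{\vu}_2 = \frac{L^{1.5}}{\sqrt n}\norm{\vu}_1\norm{\vu}_2$ needs $\frac{L^{1.5}}{\sqrt n} \lesssim k^{-1/2}$, i.e.\ $n \gtrsim kL^3$.
\item The $\norm{\vu}_2^2$ terms, e.g.\ $\frac{k^2L^2}{n}\cdot\alpha\norm{\vu}_2^2$, have no room. They are absorbed into $\tau'\norm{\vu}_1\norm{\vu}_2$ (needing coefficient $\lesssim k^{-1/2}$), or more cleanly into the slack $-\frac12\norm{\vu}_2^2$ available in the proof of Lemma~\ref{lem:mixed_quadratic}. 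Here $\alpha^2 k L = kL^2/n$ must be $\ll 1$, and the cross term $\alpha L\cdot\frac{kL}{\sqrt n}\norm{\vu}_2^2 = \frac{kL^{2.5}}{n}\norm{\vu}_2^2$ must be small as well.
\end{itemize}
For the error term, $\abs{\sum\eps_{ij}\vu_i\vu_j} \le C'(\alpha^2+\alpha^4k^2)\Par{\norm{\vu}_1^2 + 2\norm{\vu}_1\sum_i|\vu_i|\vr_i^2}$. The last sum is bounded by Item~\ref{item:reweight_u}, giving $(\alpha^2 + \alpha^4k^2)\Par{L\norm{\vu}_1^2 + \frac{k^2L^2}{n}\norm{\vu}_1\norm{\vu}_2}$. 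The $\norm{\vu}_1^2$ coefficient $\alpha^2L + \alpha^4k^2L$ must be $\lesssim \frac1k$. With $\alpha^2 = L/n$, the requirement $\frac{k^2L^3}{n^2} \lesssim \frac1k$ gives $n \gtrsim k^{1.5}L^{1.5}$, and $\frac{L^2}{n} \lesssim \frac1k$ gives $n \gtrsim kL^2$. The mixed term's coefficient $\frac{L}{n}\cdot\frac{k^2L^2}{n}$ must be $\lesssim k^{-1/2}$, which holds when $n \gtrsim k^{1.25}L^{1.5}$.

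\textbf{Step 3 (conclude).} All requirements are implied by $n \gtrsim k^{1.5}L^2 + kL^3$, which is where the $k^{1.5}$ enters. Then Lemmas~\ref{lem:mixed_quadratic} and~\ref{lem:main_sgt} give $\lam_2(\mpp_R) \le 9\tau'^2 \le \frac{1}{2k} \le \frac{1}{2(s-1)}$, and Lemma~\ref{lem:trickledown} yields the $\frac{1}{2s} \ge \frac{1}{2k}$-Poincar\'e inequality.

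\textbf{Main obstacle.} We expect the main difficulty to be the careful bookkeeping that routes every $\norm{\vu}_2^2$ contribution into the negative $-\norm{\vu}_2^2$ slack of $\rho$ rather than into $\tau'$. This needs a mild re-derivation of Lemma~\ref{lem:mixed_quadratic} allowing an additional $\eta\norm{\vu}_2^2$ term with $\eta \le \frac14$. A secondary difficulty is handling the exponential remainder $e^{\beta\mk_{ij}}-1-\beta\mk_{ij}$. We would first try to control it via $m(\mk) \lesssim \alpha(1+L)$ from Item~\ref{item:entry_perturb_bound}, which is small enough for the entrywise $\norm{\vu}_1^2$ bound, and fall back on a Hadamard-square argument if it is not.
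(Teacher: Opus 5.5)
Your architecture matches the paper's: trickle down through Lemma~\ref{lem:main_sgt}, the split $\mk_{ij} = -\mr_{ij}\vr_i\vr_j + \eps_{ij}$, Item~\ref{item:quadform_bound} applied to $\vr\circ\vu$, then Item~\ref{item:reweight_u}. However, two steps fail as written. First, Steps 1 and 3 reduce to Lemma~\ref{lem:mixed_quadratic} with $\tau' \asymp k^{-1/2}$, and that hypothesis is false in the stated regime. The main term carries $\norm{\vu}_2^2$ coefficients such as $\frac{kL^{2.5}}{n}$, $\frac{k^2L^3}{n^{1.5}}$ and $\frac{k^3L^3}{n^2}$. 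None of these is $O(k^{-1/2})$ when $n \asymp k^{1.5}L^2$ and $L \ll \sqrt{k}$. Your middle term also drops the $\frac{kL}{\sqrt n}\norm{\vu}_2$ part of $\norm{\vr\circ\vu}_2$. Restoring it yields $\frac{kL^2}{n}\norm{\vu}_1\norm{\vu}_2$, which is precisely the term that forces $n \gtrsim k^{1.5}L^2$. So, as you anticipate, you must instead prove a one-sided bound $\vu^\top\mk\vu \le c\norm{\vu}_2^2 + O(\frac 1 k)\norm{\vu}_1^2$ with a constant $c<1$, and then bound $\rho(\beta\mk)$ directly. The paper obtains $\frac{9}{16}$ and $\frac{1}{8k}$ this way, absorbing every $\norm{\vu}_2^2$ term into the $-\norm{\vu}_2^2$ slack.

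Second, this forfeits the lemma's internal bound $m \le 2\tau$. So $m(\beta\mk)$ and the Taylor remainder $\exp(\beta\mk_{ij}) - 1 - \beta\mk_{ij}$ need their own argument, and your primary plan for them does not work. The available estimates are only $\norm{\vr_R}_\infty \lesssim \sqrt{L} + \frac{kL}{\sqrt n}$ and $\norm{\mr}_{\max} \lesssim \alpha + \alpha^2 k$, not $m(\mk) \lesssim \alpha(1+L)$. An entrywise bound $m(\mk)^2\norm{\vu}_1^2$ would then need, for example, $(\alpha\frac{k^2L^2}{n})^2 \lesssim \frac 1 k$, i.e.\ $n \gtrsim k^{5/3}L^{5/3}$, which exceeds the assumption when $L \ll \sqrt k$.

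The missing idea, which is what the paper does, is to square the decomposition of Item~\ref{item:entry_perturb_bound}. This gives $\mk_{ij}^2 \lesssim A\vr_i^2\vr_j^2 + A^2(1 + \vr_i^4 + \vr_j^4)$ with $A \defeq \alpha^2 + \alpha^4k^2$. The dominant part of $\sum_{i,j}|\vu_i||\vu_j|\mk_{ij}^2$ is then the rank-one quantity $A(\sum_i |\vu_i|\vr_i^2)^2$. By the last bound of Item~\ref{item:reweight_u}, this is at most a constant times $A(L\norm{\vu}_1 + \frac{k^2L^2}{n}\norm{\vu}_2)^2$. This charges the few large coordinates of $\vr$ to $\norm{\vu}_2^2$, where the slack absorbs them. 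What remains is an $\norm{\vu}_1^2$ coefficient $AL^2 = \frac{L^3}{n} + \frac{k^2L^4}{n^2} \lesssim \frac 1 k$. You must also verify separately that $m(\mk) \le \frac{1}{16}$, which holds under the stated $n$. This justifies $|\exp(x) - 1 - x| \lesssim x^2$ entrywise and controls the $\exp(m)$ factor in Lemma~\ref{lem:main_sgt}.
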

\begin{proof}
Let $R \in \calU_{s - 2}$, and define $\mk \in \R^{B \times B}$ following the notation in Lemma~\ref{lem:combine_params}. We first claim 
\begin{equation}\label{eq:quadform_bound_slr}\vu^\top \mk \vu \le \frac 9 {16} \norm{\vu}_2^2 + \frac 1 {8k}\norm{\vu}_1^2, \text{ for all } \vu \in \R^B. \end{equation}
To see this, decompose $\mk$ as in Item~\ref{item:entry_perturb_bound} of Lemma~\ref{lem:combine_params}, and by homogeneity, assume that $\norm{\vu}_1 = 1$ and $\norm{\vu}_2 = q$. By combining the estimates in Items~\ref{item:quadform_bound} and~\ref{item:reweight_u}, and using that 
\begin{align*}
\Abs{\sum_{\substack{(i, j) \in B \times B \\ i \neq j}} \mr_{ij}\vr_i\vr_j \vu_i\vu_j } &= O\Par{\alpha^2 k\Par{L + \frac{k^2 L^2}{n}} q^2} \\
&+ O\Par{\alpha \sqrt{L}\Par{1 + \frac{kqL}{\sqrt n}}\Par{\sqrt L + \frac{kL}{\sqrt n}} q} \\
&+ O\Par{\alpha^2 L\Par{1 + \frac{k^2 L^2 q^2}{n}}} \\
&= O\Par{\Par{\frac{kL^2}{n} + \frac{k^3 L^3}{n^2} + \frac{k L^{2.5}}{n} + \frac{k^2 L^{3}}{n^{1.5}} + \frac{k^2 L^4}{n^2}} q^2} \\
&+ O\Par{\Par{\frac{L^{1.5}}{\sqrt n} + \frac{k L^{2}}{n}}q} + O\Par{\frac{L^2}{n}} \\
&\le \frac 1 {4} q^2 + \frac 1 {16\sqrt k} q + \frac 1 {32k} \le \frac 1 {2} q^2 + \frac 1 {16k},
\end{align*}
where the hidden constants above depend only on $C'$. The last line then follows by taking $n$ large enough as stated. Next, for the error term in Item~\ref{item:entry_perturb_bound}, recalling $\norm{\vu}_1 = 1$,
\begin{equation}\label{eq:error_quadform}
\begin{aligned}\Abs{\sum_{\substack{(i, j) \in B \times B \\ i \neq j}} \eps_{ij} \vu_i\vu_j} &= O\Par{\alpha^2 + \alpha^4k^2} \cdot \sum_{\substack{(i, j) \in B \times B \\ i \neq j}} |\vu_i\vu_j| \Par{1 + \vr_i^2 + \vr_j^2} \\
&= O\Par{\alpha^2 + \alpha^4 k^2} \cdot \Par{1 + 2\sum_{i \in B} |\vu_i|\vr_i^2} \\
&= O\Par{\Par{\alpha^2 + \alpha^4 k^2} \Par{L + \frac{k^2 L^2 q}{n}}} \\
&= O\Par{\Par{\frac{k^2 L^3}{n^2} + \frac{k^4 L^4 }{n^3}}q}+ O\Par{\frac {L^2} n + \frac{k^2 L^3}{n^2}} \\
&\le \frac 1 {16\sqrt k} q + \frac 1 {32k} \le \frac 1 {16} q^2 + \frac 1 {16k},
\end{aligned}
\end{equation}
where we used Item~\ref{item:reweight_u} in the third line, and again simplified by taking $n$ large enough. Combining the above two displays gives the desired \eqref{eq:quadform_bound_slr}. 

Next, we observe that the link graph induced by $R$, in the sense of Lemma~\ref{lem:trickledown}, has edge weight matrix $\mw$ exactly given in the form required by Lemma~\ref{lem:main_sgt}, with $\mk$ defined in Lemma~\ref{lem:combine_params} and
\[\va_i \defeq \exp\Par{\beta\Par{F\Par{R \cup \{i\}} - \half F(R)}} \text{ for all } i \in B.\]
Indeed, for $(i, j) \in B \times B$ with $i \neq j$,
\[\va_i\va_j\exp\Par{\beta \mk_{ij}} = \exp\Par{\beta F(R \cup \{i, j\})} \]
as is required by the link graph. We now provide bounds on $m(\beta\mk)$ and $\rho(\beta\mk)$ as used in Lemma~\ref{lem:main_sgt}.

To bound $m(\beta\mk)$, recall from the proof of Lemma~\ref{lem:combine_params} (i.e., $\vr_R = \vs_B^{(0)} + \vd$, \eqref{eq:short_flat_scaled}, and \eqref{eq:d_bounds}) that we showed $\norm{\vr_R}_\infty = O(\sqrt L + \frac{kL}{\sqrt n})$. Thus, combining with \eqref{eq:entrywise} and Item~\ref{item:entry_perturb_bound} shows that
\[m(\beta\mk) \le m(\mk) = O\Par{\Par{\alpha + \alpha^2 k}\Par{L + \frac{k^2L^2}{n}}} + O\Par{\Par{\alpha^2 + \alpha^4 k^2}\Par{L + \frac{k^2L^2}{n}}} \le \frac 1 {16}.\]
To bound $\rho(\beta\mk)$, it suffices to restrict to $\norm{\vu}_1 = 1$, $\norm{\vu}_2 = q$ by homogeneity. We follow the proof of Lemma~\ref{lem:mixed_quadratic}, and write $\mx = \beta\mk + \mq$ where $\mx_{ij} = \exp(\beta\mk_{ij}) - 1$ entrywise. By Taylor expansion, we have that $|\mq_{ij}| \le \beta^2\mk_{ij}^2$. Again, using Item~\ref{item:entry_perturb_bound}, denoting $A \defeq \alpha^2 + \alpha^4 k^2 = \frac L n + \frac{k^2 L^2}{n^2} \le 1$,
\[\mk_{ij}^2 = O\Par{A\vr_i^2\vr_j^2 + A^2\Par{1 + \vr_i^4 + \vr_j^4}}.\]
Thus, applying the bounds from Item~\ref{item:reweight_u}, and simplifying,
\begin{align*}\Abs{\vu^\top \mq \vu} &= O\Par{A\Par{L^2 + \frac{k^4 L^4q^2}{n^2}} + A^2\Par{1 + \norm{\vr_R}^2_\infty}\Par{1 + \sum_{i \in B} |\vu_i|\vr_i^2} } \\
&= O\Par{A\Par{L^2 + \frac{k^4 L^4q^2}{n^2}} + A^2\Par{L + \frac{k^2 L^2}{n}}\Par{L + \frac{k^2 L^2 q}{n}} } \\
&= O\Par{\frac{Ak^4L^4}{n^2} q^2} + O\Par{\Par{\frac{A^2 k^2 L^3}{n} + \frac{A^2 k^4 L^4}{n^2}}q} \\
&+ O\Par{AL^2 + A^2 L^2 + \frac{A^2 k^2 L^3}{n}} \\
&\le \frac{1}{32} q^2 + \frac 1 {16\sqrt k} q+ \frac 1 {32k} \le \frac 1 {16} q^2 + \frac 1 {16k}.
\end{align*}
Above, the first line used a similar simplification as in \eqref{eq:error_quadform}, as well as our bound on $\norm{\vr_R}_\infty$. 
The last inequality used our lower bound on $n$ to simplify the various terms.

In conclusion, combining with \eqref{eq:quadform_bound_slr}, we have shown that
\[\vu^\top \Par{\mx - \id_B} \vu \le \frac 1 {4k}\Par{\norm{\vu}_1^2 - \norm{\vu}_2^2} \implies \rho(\beta\mk) \le \frac 1 {4k} .\]
The rest of the proof follows analogously to Lemma~\ref{lem:ising_trickle}, using our bounds on $\rho(\beta\mk)$, $m(\beta\mk)$.
\end{proof}

We conclude the section by bounding the range of the potential, for use with Section~\ref{sec:bm_sampling_anneal}.

\begin{lemma}
\label{lem:slr-exact-range}
Assume the events of Proposition~\ref{prop:approx_post_i} and Lemma~\ref{lem:slr-gaussian-profile} hold, and that
\[n = \Omega\Par{k^{1.5} + kL^3}\]
for a sufficient constant. Then following the notation \eqref{eq:slr-F-normalized},
\[
    \max_{S\in\calU_{\le k}} |F(S)|
    = O\Par{k\Par{1 + \sig^2}L + k\norm{\vh}_\infty}.
\]
\end{lemma}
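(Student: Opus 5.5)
We bound the three terms of $F(S)$ in \eqref{eq:slr-F-normalized} separately, for an arbitrary $S \in \calU_{\le k}$ with $|S| = s \le k$.

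The linear term is immediate: $|\vh^\top \vone_S| \le s\norm{\vh}_\infty \le k\norm{\vh}_\infty$.

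For the log-determinant term we use $\ml_S = \id_S + \tau\me_S$ with $\tau \le 1$. Lemma~\ref{lem:slr-gaussian-profile} gives $\normop{\tau\me_S} \le C\alpha\sqrt{k}$, and this is at most $\half$ once $n = \Omega(kL)$, which is implied by the assumed lower bound on $n$. Hence every eigenvalue of $\ml_S$ lies in $[\half, \frac 3 2]$, and $|\log\det\ml_S| \le s\log 2 = O(k)$.

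The quadratic term is the only substantive step. Since \eqref{eq:inv_op_bound} gives $\normop{\ml_S^{-1}} \le 2$ (the same argument holds for any $|S| \le k$), we have
\[0 \le \frac{1}{2\tau}\vs_S^\top\ml_S^{-1}\vs_S \le \frac{1}{\tau}\norm{\vs_S}_2^2.\]
We bound $\norm{\vs_S}_2$ using the short/flat decomposition \eqref{eq:short_flat_scaled} with $a = 1$:
\[\norm{\vs_S}_2 \le \sqrt{k}\,C_0\sqrt{L} + \frac{C_1 kL}{\sqrt n} = O\big(\sqrt{kL}\big),\]
where the last step uses $n = \Omega(kL)$, so that $\frac{kL}{\sqrt n} \le \sqrt{kL}$. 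Therefore $\norm{\vs_S}_2^2 = O(kL)$, and dividing by $\tau = \frac{1}{1+\sig^2}$ gives $O(k(1+\sig^2)L)$.

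Summing the three bounds yields $|F(S)| = O(k(1+\sig^2)L + k\norm{\vh}_\infty)$, as claimed. The only care needed is the $1/\tau$ factor. It arises from the scaling in \eqref{eq:gamma-tau}, since $\vs = \sqrt{\gamma\tau}\,\vz$ already absorbs the $\sig$-dependence of $\vz$, and it is exactly the source of the $(1+\sig^2)$ factor. Under Remark~\ref{rem:slr_fixed_core}, the general case $\calC \neq \emptyset$ only enlarges the sparsity by a constant factor, so the same bounds hold with adjusted constants. The stronger assumption $n = \Omega(k^{1.5} + kL^3)$ is more than this argument needs; it implies $n = \Omega(kL)$ up to a logarithmic slack that is absorbed by the constant.
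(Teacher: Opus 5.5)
Your proposal is correct and follows the paper's proof essentially line for line. It bounds the linear term by $k\norm{\vh}_\infty$, the log-determinant via the spectral bound $\normop{\tau\me_S} \le \half$ from Lemma~\ref{lem:slr-gaussian-profile}, and the quadratic term via $\normop{\ml_S^{-1}} \le 2$ together with $\norm{\vs_S}_2 = O(\sqrt{kL})$ from \eqref{eq:short_flat_scaled}. One small remark: $n = \Omega(kL^3)$ already gives $n = \Omega(kL)$ directly, since $L = \log\frac{d}{\delta} > \log 2$, so no ``logarithmic slack'' is involved.
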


\begin{proof}
For every $S \in \calU_{\le k}$, Lemma~\ref{lem:slr-gaussian-profile} gives $\normsop{\me_{S \times S}} \le \half$, so the spectrum of $\ml_S^{-1}$ is in $[\frac 2 3, 2]$. Moreover, \eqref{eq:short_flat_scaled} and $n = \Omega(kL)$ give $\twonorm{\vs_S} = O(\sqrt{kL})$. Therefore,
\[
    \frac{1}{2\tau}
    \vs_S^\top\ml_S^{-1}\vs_S
    \le
    \frac{1}{\tau}\twonorm{\vs_S}^2
   = O\Par{k (1 + \sig^2) L}.
\]
Further, $|\vh^\top \1_S| \le k\norm{\vh}_\infty$. The claim follows, as $\Abs{\log\det\ml_S} = O(k)$ using our spectrum bound.
\end{proof}

For completeness, we record that under the event of Lemma~\ref{lem:slr-gaussian-profile}, combining the mixing guarantee in Lemma~\ref{lem:variance_decay}, the spectral gap in Proposition~\ref{prop:slr-exact-link-profile}, and the Lemma~\ref{lem:slr-exact-range}, shows that it suffices to take
\begin{equation}\label{eq:mixing_time_slr}
T = \Omega\Par{k^2\Par{\Par{1 + \sig^2}L + \norm{\vh}_\infty} \log\Par{\frac 1 {\delta'}}}
\end{equation}
steps of $\lazy(\TDU{\nu_{s,\beta}})$, 
for a sufficient constant, to sample from within $\delta'$ TV from $\nu_{s, \beta}$ \eqref{eq:slr-tempered}.
\subsection{Main result}
\label{ssec:main_slr}

In this section, we finally put together the pieces to give our main sampling result for spike-and-slab posterior densities under Model~\ref{model:sas_basic}. We use the notation $F_\calC(S) \defeq F(\calC \cup S)$ defined in Remark~\ref{rem:slr_fixed_core}, and for notational simplicity, we also let $\nu^{\calC}_{s,\beta}(S) \propto \exp(\beta F_{\calC}(S)) \ind(S \in \calU_{s})$.

\begin{algorithm}
\caption{$\SAS(\mx,\vy,\sig,\vq,\delta)$}
\label{alg:post_sample}
\DontPrintSemicolon
\textbf{Input:}
$\mx\in\R^{n\times d}$, $\vy\in\R^n$, $\sig>0$,
$\vq\in(0,1)^d$, and $\delta\in(0,\half)$\;
\textbf{Output:}
Sample $\vth$ such that $\TV{\Law(\vth), \pi(\cdot \mid \mx, \vy)} \le \delta$ with probability $\ge 1 - \delta$ under Model~\ref{model:sas_basic}\;
$(\vz,\calU)\gets$ output of
Proposition~\ref{prop:approx_post_i}, with error parameter $\frac \delta 3 $\;
$k\gets24(\bark+\log(\frac 3 \delta))$\;
$\calC \gets \calU^c$\;
$\alg_0 \gets $ algorithm that always outputs $\emptyset$\;
\For{$s \in [k]$}{
    $\alg_s(\beta,\delta')\gets (\lazy(\TDU{\nu^{\calC}_{s, \beta}}))^T$ applied to an arbitrary start, for $T$ as in \eqref{eq:mixing_time_slr}\;
}
$S\sim\BMGS(k,1,F_\calC,\{\alg_s\}_{s=0}^k,\frac \delta 3)$\;
$\widetilde S\gets S\cup\calU^c$\;
\Return
$\vth\sim\Nor(
    \ma_{\widetilde S}^{-1}\vb_{\widetilde S},
    \ma_{\widetilde S}^{-1}
)$ following Fact~\ref{fact:pisupp}\;
\end{algorithm}

\begin{theorem}
\label{thm:sas_post_sampler}
Let $\delta\in(0,\half)$, and following the notation of Model~\ref{model:sas_basic}, let $k=24(\bark+\log(\frac 3 \delta))$. Suppose 
\[n = \Omega\left( k^{1.5}\log^2\Par{\frac{d}{\delta}} + k\log^3\Par{\frac d \delta} \right)\]
for a sufficiently large constant, and assume that $\vq \in [\eta, 1- \eta]^d$ for $\eta > 0$.  Then, with probability $\ge 1-\delta$ over the randomness of Model~\ref{model:sas_basic}, the output of Algorithm~\ref{alg:post_sample} satisfies
\[
    \TV{
        \Law(\vth),
        \pi(\cdot\mid\mx,\vy)
    }
    \le
    \delta.
\]
Defining $Q \defeq k(1 + \sig^2)\log(\frac d \delta) + k\log (\frac 1 \eta + \frac 1 {\eta\sig})$, the algorithm runs in time
\[
    O\left(
        \frac{ndk^3Q^2}{\delta^2}
        \log\Par{\frac{k}{\delta}}\log\Par{\frac{kQ}{\delta}}
    \right).
\]
\end{theorem}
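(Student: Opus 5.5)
We prove correctness by chaining the error budgets of the three stages of Algorithm~\ref{alg:post_sample}, each allotted $\frac{\delta}{3}$, and then account for the runtime separately.

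\textbf{Preprocessing.} Invoke Proposition~\ref{prop:approx_post_i} with error parameter $\frac \delta 3$. With probability $\ge 1 - \frac \delta 3$ over the model, this yields $\calU$ and $\vz$ with $\TVs{\hpisupp, \pisupp} \le \frac \delta 3$ and the short/flat decomposition \eqref{eq:short_flat}. On the same event, via \eqref{eq:kop_error}, Lemma~\ref{lem:slr-gaussian-profile} holds. Our assumption on $n$ supplies $n = \Omega(kL)$, and indeed the stronger $n = \Omega(k^{1.5}L^2 + kL^3)$.

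\textbf{Sampling the collapsed support.} By Remark~\ref{rem:slr_fixed_core}, the collapsed support posterior with $\calC = \calU^c$ is $\nu^\calC_{\le k,1}$, and the analysis of Section~\ref{ssec:sample_exact} applies with constants enlarged. Proposition~\ref{prop:slr-exact-link-profile} gives a $\frac{1}{2k}$-Poincaré inequality for every $\TDU{\nu^\calC_{s,\beta}}$ with $2 \le s \le k$ and $\beta \in [0,1]$. The cases $s \in \{0,1\}$ are trivial: for $s=1$ the down-up walk resamples exactly from $\nu_{1,\beta}$ in one step, and for $s=0$ the output is the single set $\emptyset$.

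Lemma~\ref{lem:variance_decay} with the bound $\chi^2(\pi_0\|\pi) \le \pi_{\min}^{-2}$ gives the mixing time. We bound $\log \frac{1}{\pi_{\min}} = O(k\log d + \max|F_\calC|)$ using Lemma~\ref{lem:slr-exact-range}. Here $\norm{\vh}_\infty = O(\log\frac1\eta + |\log\gamma|) = O(\log(\frac1\eta + \frac{1}{\eta\sig}))$, which gives \eqref{eq:mixing_time_slr} with $\delta' = \frac{\delta}{18N}$. This makes each $\calA_s$ a valid oracle for Lemma~\ref{lem:annealing_sampler_gen_ada}, applied with $R = O(Q)$, $\beta^\star = 1$, and tolerance $\frac \delta 3$. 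It outputs $S$ within $\frac\delta3$ of $\nu^\calC_{\le k,1}$, which is the law of $\hpisupp$ after lifting by $\calC$.

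\textbf{Final Gaussian draw.} Drawing $\vth \mid \widetilde S$ exactly via Fact~\ref{fact:pisupp} preserves TV by data processing. Applying the triangle inequality across the three stages gives total error $\le \delta$.

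\textbf{Runtime.} By \eqref{eq:overall_N}, there are $N = O(kQ\log(\frac k\delta)/\delta^2)$ oracle calls. Each call runs $T = O(kQ\log(\frac{kQ}{\delta}))$ steps, since $k^2((1+\sig^2)L + \norm{\vh}_\infty) = O(kQ)$ and $\log\frac{1}{\delta'} = O(\log\frac{kQ}{\delta})$.

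Each down-up step must evaluate $\exp(\beta F(R\cup\{i\}))$ for all $i$. This can be done by maintaining $\ml_R^{-1}$ and $\me_{B\times R}$, so that by the Schur formula in Lemma~\ref{lem:slr-schur-reduction} each candidate costs $O(k^2)$ after precomputing the columns $\mx^\top\mx_{:R}$ in $O(ndk)$ time per step. This gives $O(ndk + dk^2)$ per step. We expect this per-step bookkeeping to be the main obstacle, namely matching the stated $ndk^3Q^2$ exactly with a careful implementation. The crude bound $N \cdot T \cdot O(ndk)$ already gives
$O\bigl(ndk^3Q^2\delta^{-2}\log\frac k\delta\log\frac{kQ}{\delta}\bigr)$. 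Adding the preprocessing time $O(nd\log\frac{d}{\delta\min(1,\sig)})$ and the final $O(k^\omega + nk^2)$ Gaussian draw, both lower order, gives the stated runtime.
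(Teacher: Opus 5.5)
Your proposal is correct and matches the paper's proof. As there, you condition on the event of Proposition~\ref{prop:approx_post_i}, which already contains \eqref{eq:kop_error} and hence Lemma~\ref{lem:slr-gaussian-profile}; you charge $\frac\delta3$ to preprocessing and $\frac\delta3$ to $\BMGS$, with the exact Gaussian draw costing nothing; and you bound the runtime by $N\cdot T\cdot O(ndk)$, using Lemma~\ref{lem:slr-exact-range} for the potential range and the Schur-complement formula of Lemma~\ref{lem:slr-schur-reduction} for each up step. Your extra details are correct and only make explicit what the paper leaves implicit: the trivial $s\in\{0,1\}$ cases not covered by Proposition~\ref{prop:slr-exact-link-profile}, the bound $\norm{\vh}_\infty = O(\log(\frac1\eta+\frac1{\eta\sig}))$ linking \eqref{eq:mixing_time_slr} to $Q$, and the lower-order preprocessing and final-draw costs.
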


\begin{proof}
Throughout the proof, condition on the success of Proposition~\ref{prop:approx_post_i} and the event in Lemma~\ref{lem:slr-gaussian-profile}, which give the failure probability over Model~\ref{model:sas_basic}. Under these events, the extended draw $\widetilde{S}$ in Algorithm~\ref{alg:post_sample} is within total variation distance $\delta$ from the support posterior $\pi(\supp(\cdot)\mid \mx, \vy)$. This is because the density $\hpisupp$ in \eqref{eq:nudef_general} is within TV $\frac \delta 3$ of the support posterior by Proposition~\ref{prop:approx_post_i} and Remark~\ref{rem:slr_fixed_core}, and 
the guarantees of $\BMGS$ in Lemma~\ref{lem:annealing_sampler_gen_ada} imply $S$ is within TV $\frac \delta 3$ of an exact sample from $\hpisupp$. Finally, the sample $\vth \mid \widetilde{S}$ is exact (Fact~\ref{fact:pisupp}) and cannot increase TV.

It remains to bound the implementation cost. Observe that Lemma~\ref{lem:slr-exact-range} shows that the potential $F_{\calC}$ is bounded by $O(Q)$ under the assumption $\vq \in [\eta, 1-\eta]^d$, so Lemma~\ref{lem:annealing_sampler_gen_ada} uses
\[N = O\Par{\frac{kQ\log(\frac{k}{\delta})}{\delta^2}}\]
calls to algorithms $\calA_s(\beta, \frac \delta {6N})$, each using $T$ steps of the down-up walk where $T$ is defined in \eqref{eq:mixing_time_slr}. We claim that each step can be implemented in $O(ndk)$ time, which gives the runtime claim.

To prove the implementation cost of $\alg_s$ for $s \in [k]$, fix some $R \in \binom{\calU}{s-1}$ that is the result of dropping an element (i.e., after Line~\ref{line:downstep} of Algorithm~\ref{alg:DU_walk} has completed). For simplicity, we consider the case when $\calC = \emptyset$. In the general case every core $R$ appearing below is replaced by $\calC \cup R$, whose size remains $O(k)$. To implement Line~\ref{line:upstep}, Lemma~\ref{lem:slr-schur-reduction} gives
\[
    F_\calC(R\cup\{j\})-F_\calC(R)
    =
    \vh_j
    +
    \frac{1}{2\tau}
    \frac{\vr_j^2}{d_j}
    -
    \frac12\log d_j,
\]
where we followed the notation of \eqref{eq:k_entry_derive}, so
\[
    r_j
    =
    \vs_j
    -
    \tau
    \me_{\{j\}\times R}
    \ml_R^{-1}\vs_R,\quad  d_j \defeq 1 + \tau\mr_{jj} = 1 + \tau\me_{jj} - \tau^2 \me_{\{j\} \times R} \ml_R^{-1} \me_{R \times \{j\}}.
\]
We can compute $\me_{B \times R} = [\mx^\top \mx - \id_d]_{B \times R}$ in time $O(ndk)$, and similarly we can compute and invert $\ml_R = \id_R + \tau \me_R$ in this time. We can also compute all diagonal entries of $\me$ in time $O(nd)$. Thus, computing all of the $r_j$ takes time $O(ndk)$, and computing each $d_j$ takes time $O(k^2) = O(nk)$ to compute the relevant quadratic form, so computing all of them takes time $O(ndk)$ as well. 

\end{proof}

\section*{Acknowledgments}
We thank Thuy-Duong (June) Vuong for her participation at an earlier stage of this project, as well as Sidhanth Mohanty for several helpful conversations. We also thank an anonymous FOCS reviewer for making a suggestion that led to our strategy in Section~\ref{sec:trickledown}. SK gratefully acknowledges funding support from the Amazon AI PhD Fellowship. KT and YZ thank the NSF AI Institute for Foundations of Machine Learning (IFML) for supporting this project.

\section*{AI Disclosure}
\phantomsection
\addcontentsline{toc}{section}{AI Disclosure}

A preliminary version of this paper, consisting of Sections~\ref{sec:dobrushin},~\ref{sec:bm_sampling_anneal}, and~\ref{sec:slr} (at a measurement complexity $n \gtrsim k^2$), was previously submitted to FOCS, with all ideas contributed by the authors. Based on a reviewer's suggestion, the authors used GPT 5.6 Pro to explore applications of the trickle down theorem to sharpen our results. Specifically, the key perturbation strategy in Lemma~\ref{lem:main_sgt} was suggested by GPT, which led to a weaker variant of Theorem~\ref{thm:informal-trickledown} (Corollary~\ref{cor:near_proportional_sk}). The authors then built on this approach in our final applications  in Theorems~\ref{thm:informal-trickledown} and~\ref{thm:informal-slr}. We also acknowledge the use of LLMs in understanding the literature on the Almeida-Thouless line, as well as the prior works \cite{Carlson2022, KuchukovaPappikPerkinsYap2025}, which helped us prepare Appendices~\ref{sec:sk_at_line} and~\ref{app:critical}. The manuscript was written solely by the authors, who take full responsibility for the organization and presentation of all results.

\bibliographystyle{alpha}
\bibliography{refs}

@inproceedings{AlaouiMS22,
  author       = {Ahmed El Alaoui and
                  Andrea Montanari and
                  Mark Sellke},
  title        = {Sampling from the Sherrington-Kirkpatrick Gibbs measure via algorithmic
                  stochastic localization},
  booktitle    = {63rd {IEEE} Annual Symposium on Foundations of Computer Science, {FOCS}
                  2022},
  pages        = {323--334},
  publisher    = {{IEEE}},
  year         = {2022}
}

@inproceedings{alev2020improved,
  title={Improved analysis of higher order random walks and applications},
  author={Alev, Vedat Levi and Lau, Lap Chi},
  booktitle={Proceedings of the 52nd annual ACM SIGACT symposium on theory of computing},
  pages={1198--1211},
  year={2020}
}

@article{Aizenman_2014,
   title={Random Currents and Continuity of Ising Model’s Spontaneous Magnetization},
   volume={334},
   ISSN={1432-0916},
   url={http://dx.doi.org/10.1007/s00220-014-2093-y},
   DOI={10.1007/s00220-014-2093-y},
   number={2},
   journal={Communications in Mathematical Physics},
   publisher={Springer Science and Business Media LLC},
   author={Aizenman, Michael and Duminil-Copin, Hugo and Sidoravicius, Vladas},
   year={2014},
   month=jul, pages={719–742} }

@inproceedings{AlmanDWXXZ25,
  author       = {Josh Alman and
                  Ran Duan and
                  Virginia {Vassilevska Williams} and
                  Yinzhan Xu and
                  Zixuan Xu and
                  Renfei Zhou},
  title        = {More Asymmetry Yields Faster Matrix Multiplication},
  booktitle    = {Proceedings of the 2025 Annual {ACM-SIAM} Symposium on Discrete Algorithms,
                  {SODA} 2025},
  pages        = {2005--2039},
  publisher    = {{SIAM}},
  year         = {2025}
}

@article{AmitGS85,
  title={Storing infinite numbers of patterns in a spin-glass model of neural networks},
  author={Amit, Daniel J and Gutfreund, Hanoch and Sompolinsky, Haim},
  journal={Physical review letters},
  volume={55},
  number={14},
  pages={1530},
  year={1985},
  publisher={APS}
}

@article{AmitGS87,
  title={Statistical mechanics of neural networks near saturation},
  author={Amit, Daniel J and Gutfreund, Hanoch and Sompolinsky, Haim},
  journal={Annals of physics},
  volume={173},
  number={1},
  pages={30--67},
  year={1987},
  publisher={Elsevier}
}

@inproceedings{AnariLGV19,
  author       = {Nima Anari and
                  Kuikui Liu and
                  Shayan Oveis Gharan and
                  Cynthia Vinzant},
  editor       = {Moses Charikar and
                  Edith Cohen},
  title        = {Log-concave polynomials {II:} high-dimensional walks and an {FPRAS}
                  for counting bases of a matroid},
  booktitle    = {Proceedings of the 51st Annual {ACM} {SIGACT} Symposium on Theory
                  of Computing, {STOC} 2019},
  pages        = {1--12},
  publisher    = {{ACM}},
  year         = {2019}
}

@inproceedings{AnariJKPV22,
  author       = {Nima Anari and
                  Vishesh Jain and
                  Frederic Koehler and
                  Huy Tuan Pham and
                  Thuy{-}Duong Vuong},
  title        = {Entropic independence: optimal mixing of down-up random walks},
  booktitle    = {{STOC} '22: 54th Annual {ACM} {SIGACT} Symposium on Theory of Computing},
  pages        = {1418--1430},
  publisher    = {{ACM}},
  year         = {2022}
}

@inproceedings{AnariKV24,
  author       = {Nima Anari and
                  Frederic Koehler and
                  Thuy{-}Duong Vuong},
  title        = {Trickle-Down in Localization Schemes and Applications},
  booktitle    = {Proceedings of the 56th Annual {ACM} Symposium on Theory of Computing,
                  {STOC} 2024},
  pages        = {1094--1105},
  publisher    = {{ACM}},
  year         = {2024}
}

@book{AndersonGZ10,
  title={An introduction to random matrices},
  author={Anderson, Greg W and Guionnet, Alice and Zeitouni, Ofer},
  year={2010},
  publisher={Cambridge university press}
}

@misc{BandeiraElAlaouiRodder26,
      title={Mixing of Glauber Dynamics on High Overlap Gibbs Measures}, 
      author={Afonso S. Bandeira and Ahmed El Alaoui and Almut Rödder},
      year={2026},
      eprint={2607.06813},
      archivePrefix={arXiv},
      primaryClass={math.PR},
      url={https://arxiv.org/abs/2607.06813}, 
}

@article{BarraG08,
  title={About the ergodic regime in the analogical Hopfield neural networks: moments of the partition function},
  author={Barra, Adriano and Guerra, Francesco},
  journal={Journal of mathematical physics},
  volume={49},
  number={12},
  year={2008},
  publisher={AIP Publishing}
}

@article{Bonati2014,
   title={The Peierls argument for higher dimensional Ising models},
   volume={35},
   ISSN={1361-6404},
   url={http://dx.doi.org/10.1088/0143-0807/35/3/035002},
   DOI={10.1088/0143-0807/35/3/035002},
   number={3},
   journal={European Journal of Physics},
   publisher={IOP Publishing},
   author={Bonati, Claudio},
   year={2014},
   month=mar, pages={035002} }

@article{BaiRG21,
    title = {Spike-and-Slab Meets LASSO: A Review of the Spike-and-Slab LASSO},
    author = {Ray Bai and Veronika Rockova and Edward I.\ George},
    journal = {Handbook of Bayesian Variable Selection},
    pages = {81--108},
    year = {2021}
}

@article{BandeiraVH16,
  author  = {Afonso S. Bandeira and Ramon van Handel},
  title   = {Sharp Nonasymptotic Bounds on the Norm of Random Matrices with Independent Entries},
  journal = {The Annals of Probability},
  volume  = {44},
  number  = {4},
  pages   = {2479--2506},
  year    = {2016}
}

@article{BauerschmidtBodineauDagallier2024,
  author  = {Roland Bauerschmidt and Thierry Bodineau and Benoit Dagallier},
  title   = {Kawasaki Dynamics Beyond the Uniqueness Threshold},
  journal = {Probability Theory and Related Fields},
  volume  = {192},
  number  = {1--2},
  pages   = {267--302},
  year    = {2024},
  doi     = {10.1007/s00440-024-01326-9}
}

@article{BovierEN99,
  title={Stochastic symmetry-breaking in a Gaussian Hopfield model},
  author={Bovier, Anton and van Enter, Aernout CD and Niederhauser, Beat},
  journal={Journal of statistical physics},
  volume={95},
  number={1},
  pages={181--213},
  year={1999},
  publisher={Springer}
}

@article{brennecke2022replica,
  title={The replica symmetric formula for the SK model revisited},
  author={Brennecke, Christian and Yau, Horng-Tzer},
  journal={Journal of Mathematical Physics},
  volume={63},
  number={7},
  year={2022},
  publisher={AIP Publishing}
}

@inproceedings{BrunaH24,
  author       = {Joan Bruna and
                  Jiequn Han},
  title        = {Provable Posterior Sampling with Denoising Oracles via Tilted Transport},
  booktitle    = {Advances in Neural Information Processing Systems 38: Annual Conference
                  on Neural Information Processing Systems 2024, NeurIPS 2024},
  year         = {2024}
}

@article{CandesRT06,
  title={Robust uncertainty principles: Exact signal reconstruction from highly incomplete frequency information},
  author={Cand{\`e}s, Emmanuel J and Romberg, Justin and Tao, Terence},
  journal={IEEE Transactions on information theory},
  volume={52},
  number={2},
  pages={489--509},
  year={2006},
  publisher={IEEE}
}

@inproceedings{Carlson2022,
author = {Carlson, Charlie and Davies, Ewan and Kolla, Alexandra and Perkins, Will},
title = {Computational thresholds for the fixed-magnetization Ising model},
year = {2022},
isbn = {9781450392648},
publisher = {Association for Computing Machinery},
url = {https://doi.org/10.1145/3519935.3520003},
doi = {10.1145/3519935.3520003},
booktitle = {Proceedings of the 54th Annual ACM SIGACT Symposium on Theory of Computing},
pages = {1459–1472},
series = {STOC 2022}
}

@article{CandesT05,
  title={Decoding by linear programming},
  author={Candes, Emmanuel J and Tao, Terence},
  journal={IEEE transactions on information theory},
  volume={51},
  number={12},
  pages={4203--4215},
  year={2005},
  publisher={IEEE}
}

@article{CandesT06,
  title={Near-optimal signal recovery from random projections: Universal encoding strategies?},
  author={Candes, Emmanuel J and Tao, Terence},
  journal={IEEE transactions on information theory},
  volume={52},
  number={12},
  pages={5406--5425},
  year={2006},
  publisher={IEEE}
}

@inproceedings{CarvalhoPS09,
  author       = {Carlos M. Carvalho and
                  Nicholas G. Polson and
                  James G. Scott},
  title        = {Handling Sparsity via the Horseshoe},
  booktitle    = {Proceedings of the Twelfth International Conference on Artificial
                  Intelligence and Statistics, {AISTATS} 2009},
  series       = {{JMLR} Proceedings},
  volume       = {5},
  pages        = {73--80},
  publisher    = {JMLR.org},
  year         = {2009}
}

@article{castillo2012needles,
  author  = {Castillo, Isma{\"e}l and van der Vaart, Aad},
  title   = {Needles and Straw in a Haystack: Posterior Concentration for Possibly Sparse Sequences},
  journal = {The Annals of Statistics},
  volume  = {40},
  number  = {4},
  pages   = {2069--2101},
  year    = {2012},
  month   = aug,
  doi     = {10.1214/12-AOS1029}
}

@article{Chipman96,
    title = {Bayesian variable selection with related predictors},
    author = {H. Chipman},
    journal = {The Canadian Journal of Statistics},
    volume = {24},
    pages = {17--36},
    year = {1996}
}

@article{ChenLTZ26,
  title={Separating Oblivious and Adaptive Models of Variable Selection},
  author={Chen, Ziyun and Li, Jerry and Tian, Kevin and Zhu, Yusong},
  journal={arXiv preprint arXiv:2602.16568},
  year={2026}
}

@inproceedings{CryanGM19,
  author       = {Mary Cryan and
                  Heng Guo and
                  Giorgos Mousa},
  title        = {Modified log-Sobolev Inequalities for Strongly Log-Concave Distributions},
  booktitle    = {60th {IEEE} Annual Symposium on Foundations of Computer Science, {FOCS}
                  2019},
  pages        = {1358--1370},
  publisher    = {{IEEE} Computer Society},
  year         = {2019}
}

@article{CSHVdV15,
 ISSN = {00905364},
 URL = {http://www.jstor.org/stable/43818568},
 author = {Castillo, Ismael and Schmidt-Hieber, Johannes and Van der Vaart, Aad},
 journal = {The Annals of Statistics},
 number = {5},
 pages = {1986-2018},
 publisher = {Institute of Mathematical Statistics},
 title = {BAYESIAN LINEAR REGRESSION WITH SPARSE PRIORS},
 urldate = {2026-03-31},
 volume = {43},
 year = {2015}
}

@article{DaviesLSS26,
  author       = {Ewan Davies and
                  Holden Lee and
                  Juspreet Singh Sandhu and
                  Jonathan Shi},
  title        = {Potential Hessian Ascent {III:} Sampling the Sherrington-Kirkpatrick
                  Model at Beta {\textless} 1/2},
  journal      = {CoRR},
  volume       = {abs/2605.03718},
  year         = {2026}
}

@article{deAlmeidaThouless78,
  title={Stability of the Sherrington-Kirkpatrick solution of a spin glass model},
  author={de Almeida, Jairo RL and Thouless, David J},
  journal={Journal of Physics A: Mathematical and General},
  volume={11},
  number={5},
  pages={983--990},
  year={1978}
}

@article{Dobruschin68,
  title={The description of a random field by means of conditional probabilities and conditions of its regularity},
  author={Dobruschin, PL},
  journal={Theory of Probability \& Its Applications},
  volume={13},
  number={2},
  pages={197--224},
  year={1968},
  publisher={SIAM}
}

@article{Donoho06,
  title={Compressed sensing},
  author={Donoho, David L},
  journal={IEEE Transactions on information theory},
  volume={52},
  number={4},
  pages={1289--1306},
  year={2006},
  publisher={IEEE}
}

@article{DP21,
  title={Approximately counting independent sets of a given size in bounded-degree graphs},
  author={Davies, Ewan and Perkins, Will},
  journal={SIAM Journal on Computing},
  volume={52},
  number={2},
  pages={618--640},
  year={2023},
  publisher={SIAM}
}

@article{EldanKZ22,
  title={A spectral condition for spectral gap: fast mixing in high-temperature Ising models},
  author={Eldan, Ronen and Koehler, Frederic and Zeitouni, Ofer},
  journal={Probability theory and related fields},
  volume={182},
  number={3},
  pages={1035--1051},
  year={2022},
  publisher={Springer}
}

@book{Ellis12,
  title={Entropy, large deviations, and statistical mechanics},
  author={Ellis, Richard S},
  year={2012},
  publisher={Springer Science \& Business Media}
}

@inproceedings{GarnaevG84,
  title={The widths of a Euclidean ball},
  author={Garnaev, Andrei Yur'evich and Gluskin, Efim Davydovich},
  booktitle={Doklady Akademii Nauk},
  volume={277},
  pages={1048--1052},
  year={1984},
  organization={Russian Academy of Sciences}
}

@article{GeorgeM93,
  author  = {Edward I George and Robert E McCulloch},
  title   = {Variable selection via Gibbs sampling},
  journal = {Journal of the American Statistical Association},
  volume  = {88},
  number  = {423},
  pages   = {881--889},
  year    = {1993}
}

@article{Geweke96,
    title = {Variable selection and model comparison in regression},
    author = {J. Geweke},
    journal = {Bayesian Statistics},
    volume = {5},
    pages = {609--620},
    year = {1996}
}

@inproceedings{gotlib2023nowhere,
  title={Nowhere to go but high: a perspective on high-dimensional expanders},
  author={Gotlib, Roy and Kaufman, Tali},
  booktitle={International Congress of Mathematicians},
  pages={4842--4871},
  year={2023},
  organization={European Mathematical Society-EMS-Publishing House GmbH}
}

@article{HamzeRPBK20,
  title={Wishart planted ensemble: A tunably rugged pairwise Ising model with a first-order phase transition},
  author={Hamze, Firas and Raymond, Jack and Pattison, Christopher A and Biswas, Katja and Katzgraber, Helmut G},
  journal={Physical Review E},
  volume={101},
  number={5},
  pages={052102},
  year={2020},
  publisher={APS}
}

@article{Hopfield82,
  title={Neural networks and physical systems with emergent collective computational abilities},
  author={Hopfield, John J},
  journal={Proceedings of the national academy of sciences},
  volume={79},
  number={8},
  pages={2554--2558},
  year={1982}
}

@article{IshwaranS11,
    title = {Consistency of spike and slab regression},
    author = {Hemant Ishwaran and J. Sunil Rao},
    journal = {Statistics \& Probability Letters},
    volume = {81},
    number = {12},
    pages = {1920--1928},
    year = {2011}
}

@article{JagannathTobasco17,
  title={Some properties of the phase diagram for mixed p-spin glasses},
  author={Jagannath, Aukosh and Tobasco, Ian},
  journal={Probability Theory and Related Fields},
  volume={167},
  number={3},
  pages={615--672},
  year={2017},
  publisher={Springer}
}

@inproceedings{JMPV23,
  title={Optimal mixing of the down-up walk on independent sets of a given size},
  author={Jain, Vishesh and Michelen, Marcus and Pham, Huy Tuan and Vuong, Thuy-Duong},
  booktitle={2023 IEEE 64th Annual Symposium on Foundations of Computer Science (FOCS)},
  pages={1665--1681},
  year={2023},
  organization={IEEE}
}

@InProceedings{JM24,
  author =	{Jain, Vishesh and Mizgerd, Clayton},
  title =	{{Rapid Mixing of the Down-Up Walk on Matchings of a Fixed Size}},
  booktitle =	{Approximation, Randomization, and Combinatorial Optimization. Algorithms and Techniques (APPROX/RANDOM 2024)},
  pages =	{63:1--63:13},
  series =	{Leibniz International Proceedings in Informatics (LIPIcs)},
  ISBN =	{978-3-95977-348-5},
  ISSN =	{1868-8969},
  year =	{2024},
  volume =	{317}
}

@article{JohnstoneS04,
    title = {Needles and Straw in Haystacks: Empirical Bayes Estimates of Possibly Sparse Sequences},
    author = {Iain M. Johnstone and Bernard W. Silverman},
    journal = {Annals of Statistics},
    volume = {32},
    number = {4},
    pages = {1594--1649},
    year = {2004}
}

@article{Kashin77,
  title={Diameters of some finite-dimensional sets and classes of smooth functions},
  author={Kashin, Boris Sergeevich},
  journal={Izvestiya Rossiiskoi Akademii Nauk. Seriya Matematicheskaya},
  volume={41},
  number={2},
  pages={334--351},
  year={1977},
  publisher={Russian Academy of Sciences, Steklov Mathematical Institute of Russian~…}
}

@article{KaufmanO20,
  title={High order random walks: Beyond spectral gap},
  author={Kaufman, Tali and Oppenheim, Izhar},
  journal={Combinatorica},
  volume={40},
  number={2},
  pages={245--281},
  year={2020},
  publisher={Springer}
}

@article{Kawasaki66,
  title={Diffusion constants near the critical point for time-dependent Ising models. II},
  author={Kawasaki, Kyozi},
  journal={Physical Review},
  volume={148},
  number={1},
  pages={375},
  year={1966},
  publisher={APS}
}

@inproceedings{KelnerLLST23,
  author       = {Jonathan A. Kelner and
                  Jerry Li and
                  Allen Liu and
                  Aaron Sidford and
                  Kevin Tian},
  title        = {Semi-Random Sparse Recovery in Nearly-Linear Time},
  booktitle    = {The Thirty Sixth Annual Conference on Learning Theory, {COLT} 2023},
  series       = {Proceedings of Machine Learning Research},
  volume       = {195},
  pages        = {2352--2398},
  publisher    = {{PMLR}},
  year         = {2023}
}

@inproceedings{Kolmogorov18,
  author       = {Vladimir Kolmogorov},
  title        = {A Faster Approximation Algorithm for the Gibbs Partition Function},
  booktitle    = {Conference On Learning Theory, {COLT} 2018},
  series       = {Proceedings of Machine Learning Research},
  volume       = {75},
  pages        = {228--249},
  publisher    = {{PMLR}},
  year         = {2018}
}

@article{KuchukovaPappikPerkinsYap2025,
  author  = {Aiya Kuchukova and Marcus Pappik and Will Perkins and Corrine Yap},
  title   = {Fast and Slow Mixing of the Kawasaki Dynamics on Bounded-Degree Graphs},
  journal = {Random Structures \& Algorithms},
  volume  = {67},
  number  = {4},
  year    = {2025},
  doi     = {10.1002/rsa.70038}
}

@inproceedings{KumarSTZ25,
  author       = {Symantak Kumar and
                  Purnamrita Sarkar and
                  Kevin Tian and
                  Yusong Zhu},
  title        = {Spike-and-Slab Posterior Sampling in High Dimensions},
  booktitle    = {The Thirty Eighth Annual Conference on Learning Theory},
  series       = {Proceedings of Machine Learning Research},
  volume       = {291},
  pages        = {3407--3462},
  publisher    = {{PMLR}},
  year         = {2025}
}

@article{kusuoka2026quantitative,
  title={A quantitative replica-symmetric bound of Sherrington--Kirkpatrick model in the entire de Almeida--Thouless region},
  author={Kusuoka, Seiichiro and Nakajima, Shuta},
  journal={arXiv preprint arXiv:2608.23413},
  year={2026}
}

@book{LevinPW09,
	author = {David Asher Levin and Yuval Peres and Elizabeth Wilmer},
	title = {Markov Chains and Mixing Times},
	publisher = {American Mathematical Society},
	year = {2009}
}

@article{Little74,
  title={The existence of persistent states in the brain},
  author={Little, William A},
  journal={Mathematical biosciences},
  volume={19},
  number={1-2},
  pages={101--120},
  year={1974},
  publisher={Elsevier}
}

@inproceedings{LiuMRW24,
  author       = {Kuikui Liu and
                  Sidhanth Mohanty and
                  Amit Rajaraman and
                  David X. Wu},
  title        = {Fast Mixing in Sparse Random Ising Models},
  booktitle    = {65th {IEEE} Annual Symposium on Foundations of Computer Science, {FOCS}
                  2024},
  pages        = {120--128},
  publisher    = {{IEEE}},
  year         = {2024}
}

@article{Lopatto26,
  title={Replica symmetry up to the de Almeida-Thouless line in the Sherrington-Kirkpatrick model},
  author={Lopatto, Patrick},
  journal={arXiv preprint arXiv:2604.11921},
  year={2026}
}

@article{Lyons89,
  title={The Ising model and percolation on trees and tree-like graphs},
  author={Lyons, Russell},
  journal={Communications in Mathematical Physics},
  volume={125},
  number={2},
  pages={337--353},
  year={1989},
  publisher={Springer}
}

@article{MitchellB88,
  author  = {Toby J Mitchell and John J Beauchamp},
  title   = {Bayesian variable selection in linear regression},
  journal = {Journal of the American Statistical Association},
  volume  = {83},
  number  = {404},
  pages   = {1023--1032},
  year    = {1988}
}

@article{MontanariW26,
  title={Provably efficient posterior sampling for sparse linear regression via measure decomposition},
  author={Montanari, Andrea and Wu, Yuchen},
  journal={Journal of the American Statistical Association},
  pages={1--19},
  year={2026},
  publisher={Taylor \& Francis}
}

@misc{Mossel06,
  author       = {Elchanan Mossel},
  title        = {Ising Model on Trees},
  howpublished = {Lecture notes for STAT 206A, University of California, Berkeley},
  year         = {2006},
  note         = {Lecture 20},
  url          = {https://www.stat.berkeley.edu/~mossel/teach/206af06/scribes/oct31.pdf}
}

@article{MS22,
  author = {Sumit Mukherjee and Subhabrata Sen},
  title = {Variational inference in high-dimensional linear regression},
  journal = {Journal of Machine Learning Research},
  volume = {23},
  pages = {1--56},
  year = {2022}
}

@article{oppenheim2018local,
  title={Local spectral expansion approach to high dimensional expanders part I: Descent of spectral gaps},
  author={Oppenheim, Izhar},
  journal={Discrete \& Computational Geometry},
  volume={59},
  number={2},
  pages={293--330},
  year={2018},
  publisher={Springer}
}

@inproceedings{PanC99,
  author       = {Victor Y. Pan and
                  Zhao Q. Chen},
  title        = {The Complexity of the Matrix Eigenproblem},
  booktitle    = {Proceedings of the Thirty-First Annual {ACM} Symposium on Theory of
                  Computing},
  pages        = {507--516},
  publisher    = {{ACM}},
  year         = {1999}
}

@article{PasturF77,
  title={Exactly soluble model of a spin glass},
  author={Pastur, Leonid A and Figotin, Alexander L},
  journal={Soviet Journal of Low Temperature Physics},
  volume={3},
  number={6},
  pages={378--383},
  year={1977},
  publisher={American Institute of Physics}
}

@article{PolsonS19,
    title = {Bayesian $\ell_0$-regularized least squares},
    author= {Nicholas G. Polson and Lei Sun},
    journal = {Applied Stochastic Models in Business and Industry},
    volume = {35},
    number = {3},
    pages = {717--731},
    year = {2019}
}

@inproceedings{rajaraman2026markov,
  title={Markov Chains Approximate Message Passing},
  author={Rajaraman, Amit and Wu, David X},
  booktitle={Proceedings of the 58th Annual ACM Symposium on Theory of Computing},
  pages={1192--1199},
  year={2026}
}

@article{RG14,
  title={EMVS: The EM approach to Bayesian variable selection},
  author={Veronika Ročková and Edward I. George},
  journal={Journal of the American Statistical Association},
  volume={109},
  number={506},
  pages={828--846},
  year={2014},
  publisher={Taylor & Francis}
}

@article{RG18,
  author  = {Veronika Ročková and Edward I. George},
  title   = {The spike-and-slab LASSO},
  journal = {Journal of the American Statistical Association},
  volume  = {113},
  number  = {521},
  pages   = {431--444},
  year    = {2018}
}

@article{Roc18,
    author = {Veronika Rockova},
    title = {Bayesian estimation of sparse signals with a continuous spike-and-slab prior},
    journal = {Annals of Statistics},
    volume = {46},
    number = {1},
    pages = {401--437},
    year = {2018}
}

@article{RS22,
  author = {Kolyan Ray and Botond Szabó},
  title = {Variational Bayes for high-dimensional linear regression with sparse priors},
  journal = {Journal of the American Statistical Association},
  volume = {117},
  number = {539},
  pages = {1270--1281},
  year = {2022}
}

@article{SherringtonK75,
  title={Solvable model of a spin-glass},
  author={Sherrington, David and Kirkpatrick, Scott},
  journal={Physical review letters},
  volume={35},
  number={26},
  pages={1792},
  year={1975},
  publisher={APS}
}

@article{Strassen69,
    title = {Gaussian elimination is not optimal},
    author = {Volker Strassen},
    journal = {Numerische Mathematik},
    volume = {13},
    pages = {354--356},
    year = {1969}
}

@book{talagrand2010mean,
  title={Mean field models for spin glasses: Volume I: Basic examples},
  author={Talagrand, Michel},
  volume={54},
  year={2010},
  publisher={Springer Science \& Business Media}
}

@book{Talagrand11,
author = {Talagrand, Michel},
year = {2011},
month = {01},
pages = {},
title = {Mean Field Models for Spin Glasses: Volume II: Advanced Replica-Symmetry and Low Temperature},
volume = {55},
isbn = {978-3-642-22252-8},
journal = {Ergebnisse der Mathematik und ihrer Grenzgebiete},
doi = {10.1007/978-3-642-22253-5},
publisher={Springer Science \& Business Media}
}

@article{Tropp12,
  title={User-friendly tail bounds for sums of random matrices},
  author={Tropp, Joel A},
  journal={Foundations of computational mathematics},
  volume={12},
  number={4},
  pages={389--434},
  year={2012},
  publisher={Springer}
}

@book{Vershynin18,
  title={High-dimensional probability: An introduction with applications in data science},
  author={Vershynin, Roman},
  volume={47},
  year={2018},
  publisher={Cambridge university press}
}

@article{WainwrightJ08,
  title={Graphical models, exponential families, and variational inference},
  author={Wainwright, Martin J and Jordan, Michael I},
  journal={Foundations and Trends{\textregistered} in Machine Learning},
  volume={1},
  number={1-2},
  pages={1--305},
  year={2008},
  publisher={Emerald Publishing Limited}
}

@book{Wainwright19,
  title={High-dimensional statistics: A non-asymptotic viewpoint},
  author={Wainwright, Martin J},
  volume={48},
  year={2019},
  publisher={Cambridge university press}
}

@article{Weiss07,
  author  = {Weiss, Pierre},
  title   = {L'hypothèse du champ moléculaire et la propriété ferromagnétique},
  journal = {Journal de Physique Théorique et Appliquée},
  volume  = {6},
  number  = {1},
  pages   = {661--690},
  year    = {1907},
  doi     = {10.1051/jphystap:019070060066100}
}

@article{Wu06,
   title={Poincaré and transportation inequalities for Gibbs measures under the Dobrushin uniqueness condition},
   volume={34},
   ISSN={0091-1798},
   url={http://dx.doi.org/10.1214/009117906000000368},
   DOI={10.1214/009117906000000368},
   number={5},
   journal={The Annals of Probability},
   publisher={Institute of Mathematical Statistics},
   author={Wu, Liming},
   year={2006},
   }

@article{YangWJ16,
  title={On the computational complexity of high-dimensional Bayesian variable selection},
  author={Yang, Yun and Wainwright, Martin J and Jordan, Michael I},
  journal = {The Annals of Statistics},
  year={2016}
}

@article{yang1952,
  title={The spontaneous magnetization of a two-dimensional Ising model},
  author={Yang, Chen Ning},
  journal={Physical Review},
  volume={85},
  number={5},
  pages={808},
  year={1952},
  publisher={APS}
}

\newpage
\appendix

\section{Sampling Near the Almeida–Thouless Line}
\label{sec:sk_at_line}

In this section, we give an application to sampling near the \emph{Almeida-Thouless (AT) line}. The AT line is parameterized by  $\beta > 0$ and a \emph{field strength} $h > 0$, and considers the specialized SK model 
\begin{equation}
    \label{eq:sk_uniform_field}
    \pi_{\beta,h}(\vx)
    \propto
    \exp\Par{
        \frac{\beta}{2}\vx^\top\mj\vx
        +
        h\1_d^\top\vx
    },
    \quad
    \vx\in\calX^d,
\end{equation}
where $\mj\sim\operatorname{GOE}(d)$ follows
Model~\ref{model:sk}, and we set the diagonal of $\mj$ to zero without
loss of generality.
Thus, $h$ denotes the coefficient of the linear term $\propto \1_d$. 

The AT line delineates a region in $\R^2_{\ge 0}$, representing a pair of parameters $(\beta, h)$ in \eqref{eq:sk_uniform_field}. This line was originally derived in \cite{deAlmeidaThouless78} via the replica method, with the prediction that models induced by $h$ above the line (i.e., large enough as a function of $\beta$) are \emph{replica symmetric}, and that pairs below the line exhibit \emph{replica symmetry breaking}. This prediction was recently established rigorously for the SK model with a homogeneous external field \cite{Lopatto26}. However, our application requires a stronger quantitative concentration estimate, so our results hold above the slightly more restrictive \emph{weak AT line} (Definition~\ref{def:wat_cond}), leveraging bounds by \cite{Talagrand11, JagannathTobasco17} as presented by \cite{rajaraman2026markov}.

For large $h$, \eqref{eq:sk_uniform_field} favors $\vx$ with more positive spins. Following this convention, we write
\[
    N_-(\vx)
    \defeq
    \Abs{\Brace{i\in[d]:x_i=-1}},
    \qquad
    \calX_{\le k}^{d,-}
    \defeq
    \Brace{\vx\in\calX^d:N_-(\vx)\le k}.
\]
Our earlier results use the number of positive spins as the sparsity
parameter; the two conventions are equivalent under the global spin flip
$\vx\to-\vx$. Moreover, on every fixed-magnetization slice, $h\1_d^\top \vx$ is constant, so all fixed-size mixing analyses for \eqref{eq:sk_uniform_field} are
independent of $h$.

We now define the regions determined by the AT line.

\begin{definition}[AT condition, \cite{deAlmeidaThouless78}]
    \label{def:at_cond}
    For $\beta,h>0$, let $q=q(\beta,h)$ be the unique solution to
    \[
        q
        =
        \bbE\Brack{
            \tanh^2
            \Par{
                \beta\sqrt q Z+h
            }
        },
        \qquad
        Z\sim\mathcal N(0,1).
    \]
    Define
    \[
        \alpha_{\rm AT}(\beta,h)
        \defeq
        \beta^2
        \bbE\Brack{
            \operatorname{sech}^4
            \Par{
                \beta\sqrt q Z+h
            }
        }.
    \]
    We say that $(\beta,h)$ satisfies the AT condition if
    $\alpha_{\rm AT}(\beta,h)\le1$. The AT boundary
    $h_{\rm AT}(\beta)$ is characterized by
    $\alpha_{\rm AT}(\beta,h_{\rm AT}(\beta))=1$. We also define $q\AT(\beta) \defeq q(\beta, h\AT(\beta))$.
\end{definition}

\begin{definition}[Weak AT condition, \cite{brennecke2022replica}]
    \label{def:wat_cond}
    With $q=q(\beta,h)$ as in Definition~\ref{def:at_cond}, define
    \[
        \alpha_{\rm wAT}(\beta,h)
        \defeq
        \beta^2
        \bbE\Brack{
            \operatorname{sech}^2
            \Par{
                \beta\sqrt q Z+h
            }
        }
        =
        \beta^2(1-q).
    \]
    We say that $(\beta,h)$ satisfies the weak AT condition if
    $\alpha_{\rm wAT}(\beta,h) \le 1$. The weak AT boundary
    $h_{\rm wAT}(\beta)$ is characterized by
    $\alpha_{\rm wAT}(\beta,h_{\rm wAT}(\beta))=1$. We also define $q\wAT(\beta) \defeq q(\beta, h\wAT(\beta))$
\end{definition}

Since $\operatorname{sech}^4(u)\le\operatorname{sech}^2(u)$ for all $u$, we have that $\alpha_{\rm AT}(\beta,h)
    \le
    \alpha_{\rm wAT}(\beta,h)$ for all $(\beta, h)$. Thus, the weak AT region (above the weak AT line) is a subset of the AT region.

\subsection{Preliminaries}
\label{ssec:at_wat_scales}

In this section, we prove two key preliminary results. The first (Lemma~\ref{lem:at_line}) computes asymptotics of the boundaries $h\AT$, $h\wAT$ as a function of $\beta$. The second (Lemma~\ref{lem:gaussian_cut_concentration}) formalizes a reduction from large field strength $h$ to concentration on high-magnetization states.

\textbf{Boundary asymptotics.} We first recall a common Laplace estimate that will be used repeatedly. 

\begin{lemma}
    \label{lem:sech_laplace}
    Let \( \phi(t) \defeq \frac{1}{\sqrt{2\pi}} \exp(-\frac{t^2}{2})\) be the Gaussian density, and let $p\in\Brace{2,4}$. Suppose that there are positive $\{q_\beta, h_\beta\}_{\beta > 0}$ satisfying, as $\beta \to \infty$,
    $q_\beta\to1$, $b_\beta \defeq \frac{h_\beta} \beta \to\infty$, and
    $\frac{b_\beta} \beta\to0$. Then
    \begin{equation}
        \label{eq:sech_laplace}
        \bbE\Brack{
            \operatorname{sech}^p
            \Par{
                \beta\sqrt{q_\beta}Z+\beta b_\beta
            }
        }
        =
        \frac{
            I_p
        }{
            \beta\sqrt{q_\beta}
        }
        \phi\Par{
            \frac{
                b_\beta
            }{
                \sqrt{q_\beta}
            }
        }
        \Par{1+o(1)},
    \end{equation}
    where
    \[
        I_2
        =
        \int_{\R}\operatorname{sech}^2(u)\,\mathrm du
        =
        2,
        \qquad
        I_4
        =
        \int_{\R}\operatorname{sech}^4(u)\,\mathrm du
        =
        \frac{4}{3}.
    \]
\end{lemma}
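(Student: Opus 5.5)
The plan is to reduce the expectation to a one-dimensional Laplace-type integral by a change of variables centered at the zero of the argument of $\operatorname{sech}$. Writing the expectation against the Gaussian density,
\[
\bbE\Brack{\operatorname{sech}^p\Par{\beta\sqrt{q_\beta}Z+\beta b_\beta}} = \int_{\R}\operatorname{sech}^p\Par{\beta\sqrt{q_\beta}t+\beta b_\beta}\phi(t)\,\mathrm dt,
\]
we substitute $u \defeq \beta\sqrt{q_\beta}\,t+\beta b_\beta$, so that $t = t_0 + \frac{u}{\beta\sqrt{q_\beta}}$ with $t_0 \defeq -\frac{b_\beta}{\sqrt{q_\beta}}$. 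The integral becomes
\[
\frac{1}{\beta\sqrt{q_\beta}}\int_{\R}\operatorname{sech}^p(u)\,\phi\Par{t_0+\frac{u}{\beta\sqrt{q_\beta}}}\,\mathrm du .
\]
By symmetry of $\phi$, $\phi(t_0)=\phi(b_\beta/\sqrt{q_\beta})$. It therefore suffices to show
\[
\int_{\R}\operatorname{sech}^p(u)\,\frac{\phi\Par{t_0+u/(\beta\sqrt{q_\beta})}}{\phi(t_0)}\,\mathrm du \to I_p .
\]

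Next we expand the ratio. Setting $\epsilon \defeq \frac{1}{\beta\sqrt{q_\beta}}$, we have
\[
\frac{\phi(t_0+\epsilon u)}{\phi(t_0)} = \exp\Par{-t_0\epsilon u - \tfrac12\epsilon^2u^2} = \exp\Par{\frac{b_\beta u}{\beta q_\beta} - \frac{u^2}{2\beta^2 q_\beta}} \le \exp\Par{\frac{b_\beta}{\beta q_\beta}|u|}.
\]
The hypotheses $q_\beta\to1$, $\beta\to\infty$, and $b_\beta/\beta\to0$ give, for each fixed $u$, that $b_\beta u/(\beta q_\beta)\to0$ and $u^2/(2\beta^2q_\beta)\to0$. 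Hence the ratio converges pointwise to $1$.

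The main step is domination, and it is the only point where $b_\beta/\beta\to0$ is essential. For $\beta$ large, $\frac{b_\beta}{\beta q_\beta}\le 1$, so the integrand is bounded by $\operatorname{sech}^p(u)e^{|u|}\le 2^p e^{-(p-1)|u|}$, which is integrable since $p\ge2$. Dominated convergence then yields the limit $\int\operatorname{sech}^p = I_p$. The hypothesis $b_\beta\to\infty$ is not needed for the limit itself; it only makes the statement meaningful in the regime of interest.

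Finally we verify the constants. For $I_2$, we use $\int\operatorname{sech}^2 = [\tanh]_{-\infty}^{\infty}=2$. For $I_4$, we write $\operatorname{sech}^4=\operatorname{sech}^2(1-\tanh^2)$ and substitute $v=\tanh u$, giving $\int_{-1}^1(1-v^2)\,\mathrm dv=\frac43$.
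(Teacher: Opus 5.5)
Your proposal is correct and follows the paper's proof: the same change of variables $u=\beta(\sqrt{q_\beta}t+b_\beta)$, pointwise convergence of $\exp\Par{\frac{b_\beta u}{\beta q_\beta}-\frac{u^2}{2\beta^2 q_\beta}}$ to $1$, and dominated convergence. The only differences are cosmetic. You dominate by $e^{|u|}$ rather than the paper's $e^{|u|/2}$, and you also verify the values of $I_2$ and $I_4$, which the paper states without derivation.
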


\begin{proof}
We first perform a change of variables
$u=\beta(\sqrt{q_\beta}z+b_\beta)$, which gives
\begin{equation}\label{eq:cov_integral}
    \bbE\Brack{
        \operatorname{sech}^p
        \Par{
            \beta\sqrt{q_\beta}Z+\beta b_\beta
        }
    }
    =
    \frac{
        \phi\Par{b_\beta/\sqrt{q_\beta}}
    }{
        \beta\sqrt{q_\beta}
    }
    \int_{\R}
    \operatorname{sech}^p(u)
    \exp\Par{
        \frac{
            b_\beta u
        }{
            \beta q_\beta
        }
        -
        \frac{
            u^2
        }{
            2\beta^2q_\beta
        }
    }
    \mathrm du.
\end{equation}
Since $q_\beta\to1$ and $\frac{b_\beta} \beta\to0$, the factor $\exp(\frac{b_\beta u}{\beta q_\beta} - \frac{u^2}{2\beta^2q_\beta})$
converges pointwise to one. Moreover, for sufficiently large
$\beta$, $\frac{b_\beta}{\beta q_\beta} \to 0$, so we have
\[
    \exp\Par{
        \frac{
            b_\beta u
        }{
            \beta q_\beta
        }
        -
        \frac{
            u^2
        }{
            2\beta^2q_\beta
        }
    }
    \le
    \exp\Par{\frac{\Abs{u}}{2}}.
\]
For $p\in\Brace{2,4}$,
$\operatorname{sech}^p(u)\exp(\frac{|u|}{2})$ is integrable. Applying dominated
convergence then gives \eqref{eq:sech_laplace}.
\end{proof}

We now derive the (identical) asymptotics of $h\AT$ and $h\wAT$.

\begin{lemma}
    \label{lem:at_line}
    As $\beta\to\infty$,
    \[
        h_{\rm AT}(\beta)
        =
        \sqrt{2}\,\beta\sqrt{\log\beta}
        \Par{1+O\Par{\frac{1}{\log\beta}}},
        \quad
        h_{\rm wAT}(\beta)
        =
        \sqrt{2}\,\beta\sqrt{\log\beta}
        \Par{1+O\Par{\frac{1}{\log\beta}}}.
    \]
\end{lemma}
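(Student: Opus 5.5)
The plan is to work with the reparameterization $h = \beta b$ and to treat both boundaries as the solutions of fixed-point equations. For each of them I will first get coarse a priori control, and then refine it with Lemma~\ref{lem:sech_laplace}.

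\textbf{Step 1: a priori behavior.} The function $\alpha_{\rm wAT}(\beta,h)=\beta^2\bbE[\sech^2(\beta\sqrt q Z+h)]$ is decreasing in $h$ (intuitively, $q$ grows with $h$), and the same is expected of $\alpha_{\rm AT}$. So on either boundary we need $\bbE[\sech^2(\cdots)]\approx \beta^{-2}$, which forces $h\to\infty$. Since $1-q=\bbE[\sech^2(\cdots)]\le \bbE[\sech^2]$, we also get $q\to1$ along the boundary; on the weak AT line this is exact, as $1-q=\beta^{-2}$. Next I will show $b=h/\beta\to\infty$ and $b/\beta\to0$, which are the hypotheses of Lemma~\ref{lem:sech_laplace}.
\begin{itemize}
\item If $b$ stayed bounded, the Gaussian $\beta\sqrt q Z$ would put mass $\Omega(1/\beta)$ near $-h$. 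The expectation would then be $\gtrsim 1/\beta\gg\beta^{-2}$, a contradiction.
\item If $b\gtrsim\beta$, then $\sech^2(h)\lesssim e^{-2h}$ together with the Gaussian tail at distance $h/(\beta\sqrt q)$ would make the expectation at most $e^{-c\beta^2}\ll\beta^{-2}$, again a contradiction.
\end{itemize}
I would make both bounds crude but explicit.

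\textbf{Step 2: Laplace asymptotics.} With the hypotheses in place, Lemma~\ref{lem:sech_laplace} turns the boundary equation $\alpha(\beta,h)=1$ into
\[
\beta^2\cdot\frac{I_p}{\beta\sqrt q}\,\phi\Par{\frac{b}{\sqrt q}}(1+o(1)) = 1,
\]
with $p=4$ for the AT line and $p=2$ for the weak AT line. Taking logarithms gives
\[
\frac{b^2}{2q} = \log\beta + \log\frac{I_p}{\sqrt{2\pi}} - \tfrac12\log q + o(1) = \log\beta + O(1).
\]
Because $q\to1$, this yields $b=\sqrt{2\log\beta}\,(1+O(1/\log\beta))$, and multiplying by $\beta$ gives the claimed form of $h$.

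\textbf{Step 3: controlling $1-q$.} The expansion above needs $\sqrt q = 1+O(1/\log\beta)$. On the weak AT line this is immediate, since $1-q=\beta^{-2}$. On the AT line I will apply Lemma~\ref{lem:sech_laplace} with $p=2$ to the identity $1-q=\bbE[\sech^2]$, which gives $1-q\asymp \beta^{-1}\phi(b/\sqrt q)\asymp\beta^{-2}$. In both cases the correction to $b^2$ is $O(\beta^{-2}\log\beta)$, which is negligible.

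\textbf{Main obstacle.} I expect the hard part to be Step 1 on the AT line: establishing $b\to\infty$ and $b/\beta\to0$, and monotonicity or uniqueness of $h_{\rm AT}$, without circularity, because $q$ depends implicitly on $(\beta,h)$. My first attempt will be to sandwich the $\sech^4$ expectation between the $\sech^2$ expectation and a constant multiple of it restricted to a window. I would then reuse the weak AT bounds, noting that $h_{\rm AT}\le h_{\rm wAT}$ and that the $\sech^4$ condition at $h$ of order $\beta$ already forces $\alpha_{\rm AT}<1$.
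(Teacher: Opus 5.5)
Your treatment of the weak AT line and your Steps 2--3 match the paper's argument. Step 1 on the AT line, however, has a genuine gap, which you flag yourself and do not close.

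The claim that on either boundary $\E[\sech^2(\cdot)]\approx\beta^{-2}$ is only justified on the weak AT line. On the AT line the boundary equation fixes $\E[\sech^4(\cdot)]=\beta^{-2}$. The trivial comparison $\E[\sech^4]\le\E[\sech^2]$ then bounds $1-q$ in the wrong direction. Without an a priori lower bound on $q_{\rm AT}$, your first bullet is unjustified: the window $|u|\le 1$ corresponds to $Z\approx -b/\sqrt q$, which can carry far less than $\beta^{-1}$ mass when $q$ is small. For the same reason, Lemma~\ref{lem:sech_laplace} cannot yet be invoked.

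Your proposed workaround does not supply this lower bound. Deducing $h_{\rm AT}\le h_{\rm wAT}$ needs monotonicity of $\alpha_{\rm AT}$ in $h$, which you have not proved. Even granted, an upper bound on $h_{\rm AT}$ gives no lower bound on $q_{\rm AT}$. The window sandwich only controls the contribution of $\{|u|\le 1\}$ to $\E[\sech^2]$, not the tails, since $\sech^2/\sech^4=\cosh^2$ is unbounded.

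The missing idea is a one-line Cauchy--Schwarz step, which is exactly what the paper uses. On the AT line,
\[1-q_{\rm AT}=\E\Brack{\sech^2\Par{\beta\sqrt{q_{\rm AT}}Z+h_{\rm AT}}}\le\Par{\E\Brack{\sech^4\Par{\beta\sqrt{q_{\rm AT}}Z+h_{\rm AT}}}}^{1/2}=\beta^{-1}.\]
This gives $q_{\rm AT}\to1$ directly from the boundary equation, with no monotonicity or uniqueness of $h_{\rm AT}$ needed.

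After that, your two bullets go through verbatim for $p\in\{2,4\}$; this is how the paper handles both lines simultaneously. The same bound already suffices in Step 3, since $2(1-q_{\rm AT})\log\beta=O(\beta^{-1}\log\beta)=o(1)$. Your sharper estimate $1-q_{\rm AT}\asymp\beta^{-2}$ via the $p=2$ Laplace asymptotics is valid once the hypotheses are verified, but it is unnecessary.
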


\begin{proof}
We begin by verifying the conditions of Lemma~\ref{lem:sech_laplace}. For $\star\in\Brace{\mathrm{AT},\mathrm{wAT}}$, write
\[
    q_\star\defeq q_\star(\beta),
    \quad
    b_\star\defeq \frac{h_\star(\beta)}{\beta}.
\]
We also drop the index $\beta$ from these two sequences for simplicity.
First, along the weak AT boundary,
$1-q_{\rm wAT}=\beta^{-2}$, so $q_{\rm wAT}\to1$. Similarly, along the AT boundary,
\[
    \bbE\Brack{
        \operatorname{sech}^4
        \Par{
            \beta\sqrt {q_{\rm AT}} Z+h_{\rm AT}(\beta)
        }
    }
    =
    \frac{1}{\beta^2}.
\]
Since
$1-q_{\rm AT}=\bbE[\operatorname{sech}^2(\beta\sqrt {q_{\rm AT}} Z+h_{\rm AT}(\beta))]$,
Cauchy-Schwarz gives $1-q_{\rm AT}\le\beta^{-1}$, so
$q_{\rm AT}\to1$.

Next, for either value of $\star$, the boundary equation also implies
$b_\star \to\infty$. Indeed, if $b_\star$
were bounded along a subsequence, then restricting the integral \eqref{eq:cov_integral} to
$u\in[-1,1]$ would already give a lower bound of order $\beta^{-1}$, because $q_\star \to 1$, $b_\star$ is bounded, and $\sech^p(u) = \Omega(1)$ for $u \in [-1, 1]$. This would
contradict the boundary equations, which say that \eqref{eq:cov_integral} evaluates to $\beta^{-2}$.

Finally, we claim that $\frac{b_\star}{\beta} \to0$.  Otherwise, $b_\star\ge\eps\beta$ along a subsequence
for some $\eps>0$. Now, split the integral \eqref{eq:cov_integral} along the events $Z \ge -b_\star / 2\sqrt{q_\star}$ or $Z \le -b_\star / 2\sqrt{q_\star}$. In the former region, the change of variables gives $u \ge \frac{\eps\beta^2}{2}$, so the corresponding integral is $\exp(-\Omega(\beta^2))$. Similarly, the latter region has a probability bounded by $\exp(-\Omega(\beta^2))$, and $\sech$ is pointwise bounded. Thus, the entire integral is $\exp(-\Omega(\beta^2))$, again contradicting that it equals $\beta^{-2}$ by definition.

Lemma~\ref{lem:sech_laplace} therefore applies and we obtain that 
\[
    \phi\Par{
        \frac{
            b_{\rm AT}
        }{
            \sqrt {q_{\rm AT}}
        }
    }
    =
    \frac{
        3\sqrt {q_{\rm AT}}
    }{
        4\beta
    }
    \Par{1+o(1)}, 
    \quad 
    \phi\Par{
        \frac{
            b_{\rm wAT}
        }{
            \sqrt {q_{\rm wAT}}
        }
    }
    =
    \frac{
        \sqrt {q_{\rm wAT}}
    }{
        2\beta
    }
    \Par{1+o(1)}.
\]
Expanding with the definition of $\phi$, and then taking logarithms, then gives for
$\star\in\Brace{\mathrm{AT},\mathrm{wAT}}$,
\[
    \frac{b_\star^2}{2q_\star}
    =
    \log\beta+O(1) \implies b_\star^2 = 2q_\star \log\beta + O(1) = 2\log\beta - 2(1-q_\star)\log\beta + O(1).
\]
Since $1-q_{\rm AT}\le\beta^{-1}$ and
$1-q_{\rm wAT}=\beta^{-2}$, the desired claims follow:
\[
    b_\star^2
    =
    2\log\beta+O(1) \implies
    b_\star
    =
    \sqrt{2\log\beta}
    \Par{1+O\Par{\frac{1}{\log\beta}}}.
\]
\end{proof}

\textbf{Magnetization from field strength.} To conclude the section, we show that taking $h$ large in \eqref{eq:sk_uniform_field} implies that $\pi_{\beta, h}$ is concentrated on high-magnetization states. 
For $s\in[0,1]$, we let
$H_2(s)
    \defeq
    -s\log s-(1-s)\log(1-s)$
denote the binary entropy, with the convention $0\log 0=0$.

\begin{lemma}\label{lem:gaussian_cut_concentration}
    Fix $\delta\in(0,\half)$. With probability at least $1-\delta$
    over $\mj$ in Model~\ref{model:sk}, the following holds simultaneously for every
    $\rho\in(0,\half)$ and $\gamma\ge0$. If
    \begin{equation}
        \label{eq:gaussian_cut_field_condition}
        h
        \ge
        \frac{H_2(\rho)}{2\rho}
        +
        \beta
        \sqrt{
            \frac{2(1-\rho)}{\rho}
            \Par{
                H_2(\rho)
                +
                \frac 1 d\log \frac d \delta
            }
        }
        +
        \gamma,
    \end{equation}
    then
    \[
       \Pr_{\vx \sim \pi_{\beta, h}}\Brack{N_-(\vx) > \rho d}
        \le
        d\exp\Par{-2\gamma\rho d}.
    \]
\end{lemma}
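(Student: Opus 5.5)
We compare the Gibbs weight of a configuration $\vx$ with many minus spins against the all-plus configuration $\1_d$, slice by slice. Write $\vx = \1_d - 2\1_T$ with $T \defeq \{i : \vx_i = -1\}$ and $|T| = m$. Expanding the exponent relative to $\1_d$ gives
\[
\frac{\beta}{2}\vx^\top\mj\vx + h\1_d^\top\vx = \frac{\beta}{2}\1_d^\top\mj\1_d + hd - 2hm - 2\beta\,\1_T^\top\mj\1_{T^c},
\]
using $\mj_{ii}=0$. Indeed, $\vx^\top\mj\vx - \1_d^\top\mj\1_d = -4\1_T^\top\mj\1_{T^c}$. Hence
\[
\frac{\pi_{\beta,h}(\vx)}{\pi_{\beta,h}(\1_d)} = \exp\Par{-2hm - 2\beta\,\1_T^\top\mj\1_{T^c}}.
\]
The whole problem therefore reduces to uniformly controlling the Gaussian cut values $\1_T^\top\mj\1_{T^c}$ over all sets $T$ of a given size.

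\textbf{Concentration step.} For fixed $T$ with $|T|=m$, the cut $\1_T^\top\mj\1_{T^c}$ is a sum of $m(d-m)$ independent $\Nor(0,\frac1d)$ entries. It is therefore centered Gaussian with variance $\frac{m(d-m)}{d}$. We union bound over the $\binom{d}{m} \le \exp(dH_2(m/d))$ sets of size $m$, and then over $m \in [d]$, allocating failure probability $\delta/d$ to each $m$. With probability $\ge 1-\delta$, simultaneously for all $m$ and all $|T|=m$,
\[
\Abs{\1_T^\top\mj\1_{T^c}} \le \sqrt{\frac{2m(d-m)}{d}\Par{dH_2(m/d) + \log\frac d\delta}}.
\]
This event does not depend on $\rho$ or $\gamma$, which is why the conclusion holds simultaneously for all of them.

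\textbf{Summing the tail.} On this event, for each $m > \rho d$,
\[
\pi_{\beta,h}(N_- = m) \le \binom{d}{m}\exp\Par{-2hm + 2\beta\sqrt{\cdot}} \le \exp\Par{dH_2(s) - 2hsd + 2\beta d\sqrt{2s(1-s)(H_2(s) + \tfrac1d\log\tfrac d\delta)}},
\]
where $s \defeq m/d$. Here we used $\pi(\1_d)\le 1$, i.e., we bound the ratio directly. We need the exponent to be at most $-2\gamma s d \le -2\gamma\rho d$. Dividing by $2sd$, it suffices that
\[
h \ge \frac{H_2(s)}{2s} + \beta\sqrt{\frac{2(1-s)}{s}\Par{H_2(s)+\tfrac1d\log\tfrac d\delta}} + \gamma \quad \text{for all } s\ge\rho.
\]

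\textbf{Main obstacle: monotonicity.} The hypothesis \eqref{eq:gaussian_cut_field_condition} is stated only at $s=\rho$, so we must show the right-hand side is nonincreasing in $s$.
\begin{enumerate}
\item The term $\frac{H_2(s)}{s}$ is decreasing on $(0,1]$, because $H_2$ is concave with $H_2(0)=0$.
\item The term $\frac{(1-s)H_2(s)}{s}$ is a product of the decreasing nonnegative factors $\frac{H_2(s)}{s}$ and $1-s$, so it is decreasing.
\item The term $\frac{1-s}{s}\cdot\frac1d\log\frac d\delta$ is decreasing in $s$.
\end{enumerate}
Hence the condition at $\rho$ implies it for all $s\in[\rho,1]$. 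For $s>\frac12$ the same computation applies, since nothing above used $s\le\frac12$.

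Finally, summing over at most $d$ values of $m$ gives the bound $d\exp(-2\gamma\rho d)$.
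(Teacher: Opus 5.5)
Your proposal is correct and follows essentially the same route as the paper. Both compare each configuration to the all-plus state, which reduces the problem to the Gaussian cut $\1_T^\top\mj\1_{T^c}\sim\Nor(0,\frac{m(d-m)}{d})$, union bound over layers using $\binom{d}{m}\le\exp(dH_2(m/d))$, and show the threshold is nonincreasing in $s$ before summing at most $d$ layers; you get the monotonicity of $H_2(s)/s$ from concavity of $H_2$, whereas the paper differentiates it.

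One small point: only the one-sided bound $\1_T^\top\mj\1_{T^c}\ge -t$ is needed. If you keep the absolute value, use the sharp tail $\Pr[|Z|\ge t]\le e^{-t^2/2}$ rather than $2e^{-t^2/2}$, so the total failure probability stays at $\delta$.
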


\begin{proof}
For a fixed $S\subseteq[d]$, define \( Y_S\defeq\sum_{\substack{i\in S\\j\notin S}}\mj_{ij}\). We claim that simultaneously for all $S \subseteq [d]$,
\begin{equation}\label{eq:y_lb}Y_S \ge -t\Par{\frac{|S|}{d}},\text{ where } t(s) \defeq d\sqrt{2s(1 - s)\Par{H_2(s) + \frac 1d \log \frac d \delta}},\end{equation}
with probability $\ge 1 - \delta$. To see this, fix some $r \in [d]$. 
Under Model~\ref{model:sk}, we have $Y_S \sim \mathcal N(0, ds(1-s))$ for all $|S| = r$ and $s \defeq \frac r d$. Then the standard Gaussian tail bound gives
\[
    \Pr[-Y_S\ge t(s)]
    \le
    \exp\Par{-dH_2(s)}
    \frac{\delta}{d}.
\]
Since $\binom dr\le\exp(dH_2(s))$, we conclude that \eqref{eq:y_lb} holds for all $|S| = r$ with probability $\ge 1 - \frac \delta d$, and then a union bound over all nontrivial layers $r \in [d]$ gives the claim.

Next, letting $w$ be the unnormalized weight in \eqref{eq:sk_uniform_field}, a direct calculation gives
$\log\frac{w(S)}{w(\varnothing)}
    =
    -2h|S|-2\beta Y_S$.
Thus, again letting $s = \frac r d$ for some $r \in [d]$,
\begin{equation}\label{eq:one_layer_weight}
    \frac{\sum_{S:|S|=r}w(S)}{w(\varnothing)}
    \le
    \exp\Par{
        2sd
        \Par{
            \frac{H_2(s)}{2s}
            +
            \frac{\beta t(s)}{sd}
            -
            h
        }
    }.
\end{equation}
It remains to bound the right-hand side uniformly over $s \ge \rho$. First, we can directly check that $\frac{\mathrm d}{\mathrm ds}(\frac{H_2(s)}{s}) = \frac{\log(1 - s)}{s^2} < 0$, so $\frac{H_2(s)}{s}$ is decreasing in $s$. This implies
\[\frac{t(s)}{sd} = \sqrt{\frac{2(1-s)}{s}\Par{H_2(s) + \frac 1 d \log \frac d \delta}}\]
is decreasing in $s$, because every term in the square root is decreasing. Hence for every $s \ge \rho$, the assumed lower bound \eqref{eq:gaussian_cut_field_condition} implies
\[\frac{H_2(s)}{2s} + \frac{\beta t(s)}{sd} - h \le \frac{H_2(\rho)}{2\rho} + \frac{\beta t(\rho)}{\rho d} -h \le -\gamma.\]
Summing \eqref{eq:one_layer_weight} over all $r>\rho d$ and using that the
partition function is $\ge w(\varnothing)$ proves the claim.
\end{proof}

\subsection{Sampling around $\1_d$}
\label{ssec:all_one_sampling}
\label{ssec:gaussian_cut_concentration}

In this section, we give the basic variant of our result for sampling from \eqref{eq:sk_uniform_field}. This variant shows that as $\beta \to \infty$, when the field strength $h$ exceeds the threshold $h\AT(\beta)$ in Lemma~\ref{lem:at_line} by roughly a $\sqrt{2}$ factor, we can sample from $\pi_{\beta, h}$ in polynomial time.

\begin{theorem}
\label{thm:unrestricted_high_field_sk}
    Let $\beta\ge1$ and $\delta\in(0,\half)$, and suppose that
    \eqref{eq:delta_condition} holds with $\delta\gets \frac \delta 2$.
    If     \begin{equation}
        \label{eq:high_field_sampling_threshold}
        h
        \ge
        \frac{H_2(\rho_\beta)}{2\rho_\beta}
        +
        \beta
        \sqrt{
            2\frac{1-\rho_\beta}{\rho_\beta}
            \Par{
                H_2(\rho_\beta)
                +
                \frac 1 d\log \frac{2d}{\delta}
            }
        }
        +
        \frac{1}{2\rho_\beta d}
        \log\frac{2d}{\delta},
    \end{equation}
    where $\rho_\beta \defeq \frac{c}{\beta^2 \log(e\beta)}$ for a universal constant $c > 0$,
    then with
    probability at least $1-\delta$ over the SK model (Model~\ref{model:sk}), there is a
    polynomial-time algorithm that outputs $\vx$ satisfying $        \TV{\Law(\vx),\pi_{\beta,h}}
        \le
        \delta$.
    Furthermore, for $\delta = \poly(\frac 1 d)$ and fixed $\beta$, as $d \to \infty$, the right-hand side
    of~\eqref{eq:high_field_sampling_threshold} converges to
    \begin{equation}
        \label{eq:thermodynamic_sampling_curve}
        h_{\rm HM}(\beta) \defeq \frac{H_2(\rho_\beta)}{2\rho_\beta}
    +
    \beta
    \sqrt{
        2(1-\rho_\beta)
        \frac{H_2(\rho_\beta)}{\rho_\beta} 
    } = 2\beta\sqrt{\log\beta}(1 + o(1)).
    \end{equation}
\end{theorem}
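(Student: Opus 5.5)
The plan is to reduce sampling from $\pi_{\beta,h}$ to sampling from its restriction to $\calX^{d,-}_{\le k}$ with $k \defeq \lfloor \rho_\beta d\rfloor$, and to handle that restriction with Corollary~\ref{cor:fast_mixing_sparse_set_dobrushin_gen} after a global spin flip. Concretely, let $\pi^{\le k}$ denote $\pi_{\beta,h}$ conditioned on $N_-(\vx)\le k$.

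\textbf{Step 1 (concentration).} Apply Lemma~\ref{lem:gaussian_cut_concentration} with failure probability $\frac\delta 2$, $\rho\gets\rho_\beta$, and $\gamma\gets \frac{1}{2\rho_\beta d}\log\frac{2d}{\delta}$. Then \eqref{eq:high_field_sampling_threshold} is exactly \eqref{eq:gaussian_cut_field_condition}, so with probability $\ge 1-\frac\delta 2$ over $\mj$ we get
\[\Pr_{\pi_{\beta,h}}[N_-(\vx)>\rho_\beta d]\le d\exp(-2\gamma\rho_\beta d)=\tfrac{\delta}{2}.\]
Since conditioning on an event of probability $1-\eta$ moves TV by at most $\eta$, we obtain $\TVs{\pi^{\le k},\pi_{\beta,h}}\le \frac\delta 2$. (If $\gamma$ needs a small constant-factor adjustment, we will absorb it into the allotted budget.)

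\textbf{Step 2 (sampling the restriction).} Under $\vx\mapsto-\vx$, the measure $\pi^{\le k}$ becomes a bounded-magnetization SK model on $\calX^d_{\le k}$ with the same $\mj$ (the quadratic form is even) and external field $\vh=-\frac h\beta \1_d$, which is arbitrary as permitted. Choosing $c$ small makes $k=\rho_\beta d\le \frac{c d}{\bar\beta^2\log(e\bar\beta)}$, since $\bar\beta=\beta$ for $\beta \ge 1$. This lies in the range required by Theorem~\ref{thm:sk_trickle}, and \eqref{eq:delta_condition} holds by assumption. So Corollary~\ref{cor:fast_mixing_sparse_set_dobrushin_gen}, run with accuracy $\frac\delta 2$ and failure probability $\frac\delta 2$, outputs $\vx$ within TV $\frac\delta 2$ of $\pi^{\le k}$ in time polynomial in $d,\frac1\delta,\beta,h$; note $\norm{\vh}_\infty=\frac h\beta$. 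Union bounding the two failure events and applying the triangle inequality in TV gives the claim.

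\textbf{Step 3 (asymptotics).} With $\delta=\poly(\frac1d)$, the quantities $\frac1d\log\frac{2d}\delta$ and $\frac{1}{\rho_\beta d}\log\frac{2d}{\delta}$ tend to $0$, which yields \eqref{eq:thermodynamic_sampling_curve}. As $\beta\to\infty$ we have $\rho\to 0$ and $H_2(\rho)=\rho\log\frac1\rho(1+o(1))$, so
\[\beta\sqrt{2H_2(\rho)/\rho}\approx \beta\sqrt{2\log(\beta^2\log\beta/c)}=\beta\sqrt{4\log\beta}(1+o(1))=2\beta\sqrt{\log\beta}(1+o(1)),\]
while $\frac{H_2(\rho)}{2\rho}=O(\log\beta)$ is lower order.

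\textbf{Main obstacle.} I expect the main difficulty to be bookkeeping rather than anything conceptual: checking that the constant $c$ in $\rho_\beta$ matches the implicit constant in Theorem~\ref{thm:sk_trickle}, including the rounding of $k$. I also need to confirm that the polynomial runtime survives the $\frac h\beta$ field term; it does, since the runtime is polynomial in $\norm{\vh}_\infty$.
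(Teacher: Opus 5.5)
Your proposal is correct and follows the paper's proof essentially step for step. The paper makes the same choices $k=\lfloor\rho_\beta d\rfloor$ and $\gamma=\frac{1}{2\rho_\beta d}\log\frac{2d}{\delta}$, applies Lemma~\ref{lem:gaussian_cut_concentration} with failure probability $\frac\delta2$, then uses the global spin flip together with Corollary~\ref{cor:fast_mixing_sparse_set_dobrushin_gen} at accuracy and failure probability $\frac\delta2$, and derives the asymptotics from $\frac{H_2(\rho)}{\rho}=\log\frac1\rho+1+O(\rho)$. The only extra detail the paper records is the trivial edge case $k=0$, where the conditioned measure is the point mass at $\1_d$ and one simply outputs $\1_d$.
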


\begin{proof}
Throughout this proof, set
\[k\defeq\lfloor\rho_\beta d\rfloor,\quad \gamma \defeq \frac 1 {2\rho_\beta d}\log \frac {2d} \delta.\] 
Applying
Lemma~\ref{lem:gaussian_cut_concentration} with failure probability
$\frac \delta 2$, sparsity lower bound parameter $\rho_\beta$, and $\gamma$ as defined above then implies that with probability $\ge 1 - \frac \delta 2$ over $\mj$, 
\begin{equation*}
    \Pr_{\vx \sim \pi_{\beta, h}}\Par{N_-(\vx) > k} \le d\exp(-2\gamma\rho_\beta d) \le \frac \delta 2.
\end{equation*}
Next, to sample from $\pi_{\beta, h}$ conditioned on $N_-(\vx) \le k$, assume $k \ge 1$, else it suffices to output $\1_d$. By choosing $c$ sufficiently small,
$k\le \rho_\beta d$ is in the range required by
Corollary~\ref{cor:fast_mixing_sparse_set_dobrushin_gen}. Then applying Corollary~\ref{cor:fast_mixing_sparse_set_dobrushin_gen} with
accuracy and failure probability $\frac \delta 2$ gives the desired sample $\vx$, upon flipping $1$s and $-1$s consistently with our convention in this section. A union bound then gives both the total failure probability of $\delta$ over the draw $\mj$, and the overall accuracy $\delta$ to the target $\pi_{\beta, h}$.

Finally, we prove the asymptotic claims.  For fixed $\beta$ and $\delta = \poly(\frac 1 d)$, the finite-$d$ corrections in \eqref{eq:high_field_sampling_threshold} vanish as $d\to\infty$, so the limit is
\begin{equation*}
    h_{\rm HM}(\beta)
    =
    \frac{H_2(\rho_\beta)}{2\rho_\beta}
    +
    \beta
    \sqrt{
        2(1-\rho_\beta)
        \frac{H_2(\rho_\beta)}{\rho_\beta}
    }.
\end{equation*}
The conclusion follows because as $\beta \to \infty$, we have 
\[
    \frac{H_2(\rho_\beta)}{\rho_\beta}
    =
    \log\frac1{\rho_\beta}
    +
    1
    +
    O(\rho_\beta),
    \quad
    \log\frac1{\rho_\beta}
    =
    2\log\beta+o(\log\beta).
\]
\end{proof}

\subsection{Sampling around the mean}
\label{ssec:oracle_centering}

We next give a stronger variant of Theorem~\ref{thm:unrestricted_high_field_sk} that removes the $\sqrt{2}$ factor overhead in our lower bound on $h$, assuming the ability to compute the signs of the mean magnetization vector. Concretely, Theorem~\ref{thm:unrestricted_high_field_sk} is not adapted to the actual realization of $\mj$. Instead, we show that knowledge of
\[
    \vtau_i^\star
    \defeq
    \operatorname{sign}(\vm_i),\text{ where } 
    \vm_i
    \defeq
    \bbE_{\pi_{\beta,h}}\Brack{\vx_i}\text{ for all } i \in [d],
\]
for a fixed $\mj$, where $\operatorname{sign}(0) \defeq 1$, allows us to recenter the algorithm and improve our $h$ range. Roughly speaking, the idea is to use overlap concentration to redefine our notion of sparsity as disagreement with $\vtau_i^\star$, rather than disagreement with $\1_d$ as used in Section~\ref{ssec:all_one_sampling}.

We next set up some notation for this section. For
$\vtau\in\calX^d$, let
\[
    N_{\vtau}(\vx)
    \defeq
    \Abs{
        \Brace{
            i\in[d]:
            \vx_i\neq\vtau_i
        }
    }.
\]
Also, for two i.i.d.\ draws $(\vx^{(1)}, \vx^{(2)}) \sim \pi_{\beta, h}^{\otimes 2}$, we define the overlap quantity
\[R_{1, 2} \defeq \frac{\inprod{\vx^{(1)}}{\vx^{(2)}}}{d}.\]
Observe that independence gives $\E[R_{1, 2}] = \frac 1 d \norm{\vm}_2^2$, which is a quantity depending on the realized $\mj$. We next state a stronger result from \cite{rajaraman2026markov} which shows that above the weak AT line, $R_{1, 2}$ concentrates around the deterministic quantity $q(\beta, h)$, independent of $\mj$.

\begin{proposition}[Lemma~4.14, \cite{rajaraman2026markov}]
    \label{prop:wat_overlap_concentration}
    Suppose that $(\beta,h)$ satisfies the weak AT condition in
    Definition~\ref{def:wat_cond}. Then there exists
    $C_{\beta,h}>0$ such that
    \begin{equation}
        \label{eq:wat_exponential_overlap_concentration}
        \bbE_{\mj}
        \bbE_{\pi_{\beta,h}^{\otimes2}}
        \Brack{
            \exp\Par{
                \frac{
                    d
                    \Par{
                        R_{1,2}-q(\beta,h)
                    }^2
                }{
                    C_{\beta,h}
                }
            }
        }
        \le
        2.
    \end{equation}
\end{proposition}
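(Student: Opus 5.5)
The plan is to follow the Guerra--Toninelli/Latała ``quadratic replica coupling'' strategy, as presented in Chapter 13 of \cite{Talagrand11} and adapted in \cite{JagannathTobasco17}. The idea is to control the exponential moment in \eqref{eq:wat_exponential_overlap_concentration} through the free energy of a two-replica system tilted by $\lambda d(R_{1,2}-q)^2$. For $\lambda>0$ small, set
\[
\Phi_d(\lambda)\defeq \frac1d\bbE_{\mj}\log \sum_{\vx^{(1)},\vx^{(2)}} w(\vx^{(1)})w(\vx^{(2)})\exp\Par{\lambda d\Par{R_{1,2}-q}^2},
\]
where $w$ is the weight in \eqref{eq:sk_uniform_field}. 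The target is
\[
\Phi_d(\lambda)-\Phi_d(0)\le \frac{c}{d}
\]
for some fixed $\lambda=\lambda(\beta,h)>0$. Combining this with Jensen's inequality and Gaussian concentration of $\log Z$ gives \eqref{eq:wat_exponential_overlap_concentration}, with $C_{\beta,h}\approx\lambda^{-1}$ after adjusting constants.

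To get the free-energy bound, I will linearize the square by a Gaussian (Hubbard--Stratonovich) identity,
\[
\exp\Par{\lambda d(R-q)^2}=\bbE_g\exp\Par{\sqrt{2\lambda d}\,g\,(R-q)}.
\]
I will then run Guerra's RS interpolation for the coupled pair. At time $t$, the Hamiltonian is $\sqrt t\,\beta\times$(SK part) on both replicas, plus cavity fields $\sqrt{1-t}\,\beta\sqrt q\,z_i$, plus the coupling. Differentiating in $t$ and applying Gaussian integration by parts gives a derivative that is a quadratic form in the overlaps $R_{1,2}-q$, $R_{1,1'}-q$, etc., of the coupled system. The plan is to show this derivative is at most $-(\text{const})\lambda$ times the coupled quantity plus $O(1/d)$, and then integrate to obtain $\Phi_d(\lambda)\le 2\,\mathrm{RS}(\beta,h)+O(\lambda^2)$-type bounds. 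At $t=0$ the system decouples into i.i.d.\ spins with fields $\beta\sqrt q z_i+h$. There $R_{1,2}-q$ is a normalized sum of independent centered terms with variance $\bbE[1-\tanh^4]$-type quantities, so its exponential moment is computable explicitly.

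The main obstacle is the sign of the error term in the interpolation. Guerra's identity yields $-\frac{\beta^2}{4}\bbE\langle (R-q)^2\rangle$, but once the tilt is present the tilted overlap also shifts the effective fixed point. This is exactly where the condition $\alpha\wAT=\beta^2(1-q)\le 1$ should enter. Second-order expansion of the $t=0$ decoupled system in the tilt parameter produces a multiplier of the form
\[
\frac{1}{1-2\lambda\,\bbE[\sech^2(\cdot)]\cdots}.
\]
Closing the bound requires $\beta^2\bbE[\sech^2(\beta\sqrt qZ+h)]<1$, i.e.\ strictly inside the weak AT region. I would first try the strict interior, where a fixed $\lambda$ works. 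I would then handle the boundary case $\alpha\wAT=1$ by the weaker but still sufficient bound in \cite{brennecke2022replica}, or by letting $C_{\beta,h}$ blow up as the boundary is approached. Finally, since the proposition is cited verbatim from Lemma~4.14 of \cite{rajaraman2026markov}, I would check their constants against this sketch rather than re-derive them.
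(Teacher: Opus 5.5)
The paper gives no proof of its own here. It imports Lemma~4.14 of \cite{rajaraman2026markov}, which rests on \cite{Talagrand11, JagannathTobasco17}, so deferring to the citation, as your last sentence suggests, is exactly what the paper does. Taken as a self-contained argument, however, your sketch breaks at the step you treat as routine: passing from $\Phi_d(\lambda)-\Phi_d(0)\le c/d$ to \eqref{eq:wat_exponential_overlap_concentration}. Let $Z_\lambda$ be the tilted two-replica partition function and $X_{\mj}\defeq\log Z_\lambda-\log Z_0=\log\langle\exp(\lambda d(R_{1,2}-q)^2)\rangle_{\mj}\ge 0$. Your free-energy bound controls the quenched quantity $\E_{\mj}X_{\mj}\le c$, whereas the proposition is the annealed statement $\E_{\mj}e^{X_{\mj}}\le 2$. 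Jensen's inequality points the wrong way, and $\E X_{\mj}\le c$ alone gives only Markov tails, so you need $X_{\mj}$ to have $O(1)$ exponential fluctuations over the disorder. Gaussian concentration cannot supply this. In the variables $g_{ij}=\sqrt d\,\mj_{ij}$, the gradients of $\log Z_\lambda$, $\log Z_0$, and a priori their difference are bounded only by $O(\beta\sqrt d)$. The resulting tails $\exp(-t^2/(C\beta^2 d))$ are vacuous at deviations of order one. Indeed, for $h>0$ the first Wiener chaos of $\log Z$ already has variance of order $d$, so concentrating the two terms separately is hopeless. An $O(1)$ Lipschitz bound on the difference would require tilted and untilted two-point functions to agree to $O(d^{-1/2})$, which is essentially the claim itself. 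Free energy plus concentration works only at the large-deviation scale, where a constrained two-replica free energy drops by $\epsilon d$ and $\sqrt d$ fluctuations are harmless. The Gaussian scale $|R_{1,2}-q|\lesssim d^{-1/2}$ needs an argument run directly on $\nu_t(f)=\E_{\mj}\langle f\rangle_t$. One example is Lata\l a's method of differentiating $\nu_t(\exp(\lambda_t d(R_{1,2}-q)^2))$ along the interpolation with a $t$-dependent $\lambda_t$; another is Talagrand's cavity estimates near $q$.

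Separately, you assert rather than derive that the weak AT condition is what closes the interpolation, and the direct computation points elsewhere. For the tilted pair, the diagonal covariance term in Guerra's derivative contributes $+\frac{\beta^2}{2}\E\langle(R_{1,2}-q)^2\rangle_t$. After discarding the nonpositive cross-copy terms, this term must be absorbed by raising the tilt to $\lambda+\frac{\beta^2}{2}$ at $t=0$. At $t=0$, conditionally on $z$, the variance of $\sigma_i\tau_i$ is $1-\tanh^4(Y_i)=2\sech^2(Y_i)-\sech^4(Y_i)$ with $Y_i\defeq\beta\sqrt q z_i+h$, not $\sech^2(Y_i)$. The requirement you actually obtain is therefore $2\alpha\wAT-\alpha\AT<1$. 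This is strictly stronger than $\alpha\wAT\le 1$, since $\alpha\AT<\alpha\wAT$ for $h>0$. On the weak AT boundary, $2\alpha\wAT-\alpha\AT\to\frac43$ as $\beta\to\infty$ by Lemma~\ref{lem:sech_laplace}. So covering the whole weak AT region, even its interior near the boundary, needs input your sketch does not provide. Finally, letting $C_{\beta,h}$ blow up as the boundary is approached does not yield a finite constant on the boundary itself.
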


As a corollary, we upgrade Proposition~\ref{prop:wat_overlap_concentration} into a high-probability distance bound to $\vtau^\star$, replacing the direct sparsity notion from Lemma~\ref{lem:gaussian_cut_concentration}. Our strategy is to first relate the random quantity $\norm{\vm}_2^2$ to $q(\beta, h)$ using Proposition~\ref{prop:wat_overlap_concentration}, and then to relate $\norm{\vm}_2^2$ to the Hamming distance $N_{\vtau^\star}$ using \eqref{eq:oracle_hamming_expectation}.

\begin{lemma}\label{lem:oracle_center_condition}
    Suppose that $(\beta,h)$ satisfies the weak AT condition. Then
    there exists $C_{\beta,h}>0$ such that for
    every $\rho,\delta\in(0,\half)$, with probability at least
    $1-\delta$ over $\mj$,
    \begin{equation}
        \label{eq:oracle_center_condition_delta}
        \pi_{\beta,h}
        \Par{
            N_{\vtau^\star}>\rho d
        }
        \le
        \frac{
            1-q(\beta,h)
        }{
            2\rho
        }
        +
        \frac{1}{2\rho}
        \sqrt{
            \frac{
                C_{\beta,h}
                \log\frac{2}{\delta}
            }{d}
        }.
    \end{equation}
\end{lemma}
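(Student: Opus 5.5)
The plan has three steps: convert the exponential-moment bound of Proposition~\ref{prop:wat_overlap_concentration} into a high-probability bound over $\mj$ on $|\frac 1 d\norm{\vm}_2^2 - q(\beta,h)|$; relate $\E_{\pi_{\beta,h}}[N_{\vtau^\star}]$ exactly to $\norm{\vm}_1$; and finish with Markov's inequality.

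\textbf{Step 1: concentration of $\norm{\vm}_2^2$.} Fix $\mj$ and let $\Phi(\mj) \defeq \E_{\pi_{\beta,h}^{\otimes 2}}[\exp(d(R_{1,2}-q)^2/C_{\beta,h})]$. Since $\E_{\mj}[\Phi(\mj)] \le 2$, Markov's inequality gives $\Phi(\mj) \le \frac 2 \delta$ with probability at least $1-\delta$ over $\mj$. On this event, Jensen's inequality applied to the convex map $x \mapsto \exp(dx^2/C_{\beta,h})$ gives
\[\exp\Par{\frac{d(\E[R_{1,2}]-q)^2}{C_{\beta,h}}} \le \frac 2 \delta.\]
Because $\E[R_{1,2}] = \frac 1 d\norm{\vm}_2^2$, this yields
\[\Abs{\tfrac 1 d\norm{\vm}_2^2 - q} \le \sqrt{C_{\beta,h}\log\tfrac 2\delta/d}.\]

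\textbf{Step 2: expected Hamming distance.} For each coordinate, $\Pr[\vx_i \ne \vtau^\star_i] = \frac{1-|\vm_i|}{2}$. This holds including when $\vm_i = 0$, where $\mathrm{sign}(0)=1$ and the probability is $\frac 1 2$. Summing over $i$,
\begin{equation}\label{eq:oracle_hamming_expectation}\E_{\pi_{\beta,h}}[N_{\vtau^\star}] = \frac{d - \norm{\vm}_1}{2}.\end{equation}
Since $|\vm_i| \le 1$, we have $|\vm_i| \ge \vm_i^2$, so $\norm{\vm}_1 \ge \norm{\vm}_2^2$. Therefore
\[\E[N_{\vtau^\star}] \le \frac{d - \norm{\vm}_2^2}{2} \le \frac d2\Par{1-q+\sqrt{C_{\beta,h}\log\tfrac 2\delta/d}}.\]

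\textbf{Step 3: Markov.} Applying Markov's inequality, $\pi_{\beta,h}(N_{\vtau^\star} > \rho d) \le \E[N_{\vtau^\star}]/(\rho d)$, which is exactly the claimed bound.

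The only delicate point is Step 1. The concentration in Proposition~\ref{prop:wat_overlap_concentration} is averaged over $\mj$, so it must be converted into a per-$\mj$ statement before Jensen can be applied. Applying Markov's inequality to $\Phi(\mj)$ first, and only then Jensen inside the Gibbs expectation, handles this without increasing the constant $C_{\beta,h}$. The inequality $\norm{\vm}_1 \ge \norm{\vm}_2^2$ is routine.
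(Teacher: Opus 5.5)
Your proof is correct and follows the paper's argument: you use the identity $\E_{\pi_{\beta,h}}[N_{\vtau^\star}] = \frac12\sum_i(1-|\vm_i|)$ with $|\vm_i|\ge \vm_i^2$, then concentration of $\frac1d\norm{\vm}_2^2$ around $q(\beta,h)$ derived from Proposition~\ref{prop:wat_overlap_concentration}, then Markov's inequality under $\pi_{\beta,h}$. The one cosmetic difference is that the paper applies Jensen pointwise in $\mj$ before the tail bound over $\mj$, and it only needs the one-sided deviation; so, contrary to your closing remark, the order of Jensen and Markov is immaterial.
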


\begin{proof}
Set \( q_d(\mj) \defeq \frac{\twonorm{\vm}^2}{d}\), and recall that $\bbE_{\pi_{\beta,h}^{\otimes2}}[R_{1,2}] = q_d(\mj)$. We first derive
\begin{equation}
    \label{eq:oracle_hamming_expectation}
    \frac{1}{d}
    \bbE_{\pi_{\beta,h}}
    \Brack{
        N_{\vtau^\star}(\vx)
    }
    =
    \frac{1}{2d}
    \sum_{i=1}^d
    \Par{
        1-\Abs{m_i}
    }
    \le
    \frac{1}{2}
    \Par{
        1 - q_d(\mj)
    }.
\end{equation}
Applying Jensen's inequality conditionally on $\mj$ to
\eqref{eq:wat_exponential_overlap_concentration} then gives
\[
    \bbE_{\mj}
    \Brack{
        \exp\Par{
            \frac{
                d
                \Par{
                    q_d(\mj)-q(\beta,h)
                }^2
            }{
                C_{\beta,h}
            }
        }
    }
    \le
    2.
\]
Hence, for every $t>0$,
\[
    \Pr_{\mj}
    \Brack{
        q_d(\mj)
        <
        q(\beta,h)-t
    }
    \le
    2
    \exp\Par{
        -\frac{dt^2}{C_{\beta,h}}
    }.
\]
On the complementary event,
\eqref{eq:oracle_hamming_expectation} and Markov's inequality under
$\pi_{\beta,h}$ give
\[
    \pi_{\beta,h}
    \Par{
        N_{\vtau^\star}>\rho d
    }
    \le
    \frac{
        1-q(\beta,h)+t
    }{
        2\rho
    }.
\]
Taking \( t = \sqrt{ \frac{ C_{\beta,h} \log\frac{2}{\delta} }{d}}\) proves the claim.
\end{proof}

Lemma~\ref{lem:oracle_center_condition} shows that to apply our bounded-magnetization SK sampler (Corollary~\ref{cor:fast_mixing_sparse_set_dobrushin_gen}), we have reduced the problem to making $1 - q(\beta, h)$ smaller than the sparsity parameter $\rho_\beta \approx (\beta^2 \log \beta)^{-1}$. The weak AT condition only gives $1 - q(\beta, h) \le \beta^{-2}$, so we choose a slightly larger field strength, still asymptotic to the weak AT scale in Lemma~\ref{lem:at_line}:
\begin{equation}
    \label{eq:oracle_field}
    h_{\rm OC}(\beta)
    \defeq
    \beta
    \sqrt{
        2\log\beta
        +
        2\log\log\beta
        +
        2\log\log\log\beta
    },
    \quad
    q\OC(\beta)
    \defeq
    q\Par{
        \beta,
        h_{\rm OC}(\beta)
    }.
\end{equation}

Lemma~\ref{lem:oracle_field_scale} next shows that the field strength in \eqref{eq:oracle_field} satisfies the weak AT condition, and thus we can bound the sparsity of its induced model using Lemma~\ref{lem:oracle_center_condition}.

\begin{lemma}
    \label{lem:oracle_field_scale}
    There exist universal constants $C,\beta_0>0$ such that, for every
    $\beta\ge\beta_0$, and $h \ge h\OC(\beta)$,
    \begin{equation}
        \label{eq:oracle_q_rate}
        1-q(\beta, h)
        \le
        \frac{C}{
            \beta^2
            \log\beta
            \log\log\beta
        }.
    \end{equation}
    Consequently, $(\beta,h)$ satisfies the weak AT
    condition.
\end{lemma}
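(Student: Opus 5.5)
The plan is to control $1-q(\beta,h) = \bbE[\sech^2(\beta\sqrt q Z + h)]$ directly, using the fixed-point equation from Definition~\ref{def:at_cond} together with the Laplace estimate of Lemma~\ref{lem:sech_laplace}. I would first note that $1-q(\beta,h)$ is nonincreasing in $h$. This can be argued from monotonicity of the fixed point: the map $q \mapsto \bbE[\tanh^2(\beta\sqrt q Z+h)]$ increases pointwise in $h$ for $h\ge 0$, by symmetry of $Z$ and since $\tanh^2$ is even and increasing in $|\cdot|$. The unique fixed point therefore increases as well. With this in hand, it suffices to treat $h = h\OC(\beta)$.

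Write $b \defeq h\OC(\beta)/\beta = \sqrt{2\log\beta + 2\log\log\beta + 2\log\log\log\beta}$. The first step is a crude a priori bound $q \to 1$. One way is the bound $\sech^2(u) \le 4e^{-2|u|}$, which gives
\[1-q \le \Pr[\beta\sqrt q Z+h \le h/2] + O(e^{-h}).\]
This is small once $\sqrt q$ is bounded below. To bound $\sqrt q$ below, observe that $q \ge \tanh^2(h)\cdot$(constant) is at least a constant by the same symmetry argument, since $\bbE\tanh^2(\beta\sqrt q Z+h)\ge \tanh^2(h)$ for symmetric $Z$. Feeding $q \ge c$ back in gives $1-q \le \exp(-\Omega(b^2))\to 0$. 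Together with $b\to\infty$ and $b/\beta\to 0$, this verifies the hypotheses of Lemma~\ref{lem:sech_laplace} with $p=2$.

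Lemma~\ref{lem:sech_laplace} then gives
\[1-q = \frac{2}{\beta\sqrt q}\,\phi\Par{b/\sqrt q}(1+o(1)) \le \frac{2}{\beta\sqrt{2\pi}}\,e^{-b^2/2}(1+o(1)),\]
using $q\le 1$ so that $b/\sqrt q \ge b$. Substituting $b$ yields $e^{-b^2/2} = (\beta\log\beta\log\log\beta)^{-1}$, hence $1-q \lesssim (\beta^2\log\beta\log\log\beta)^{-1}$, which is \eqref{eq:oracle_q_rate}. The universal $o(1)$ is absorbed into $C$ for $\beta\ge\beta_0$. Since the right-hand side is at most $\beta^{-2}$ for large $\beta$, we get $\alpha\wAT = \beta^2(1-q)\le 1$, which is the weak AT condition.

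The main obstacle is making the Laplace step uniform, both in $h \ge h\OC$ (handled by the monotonicity reduction) and with constants independent of $\beta$. Lemma~\ref{lem:sech_laplace} is only asymptotic along a sequence, so I would either argue by contradiction along subsequences $\beta\to\infty$ or redo its dominated-convergence estimate with explicit bounds. The explicit route is to bound the integrand factor by $e^{|u|/2}$ and $\sech^2$ by $4e^{-2|u|}$, giving an absolute constant in place of $I_2(1+o(1))$. If the monotonicity-in-$h$ claim proves delicate, a fallback is to apply the same computation directly at arbitrary $h\ge h\OC$: the bound $e^{-b^2/2}$ only improves as $b$ grows, except that the condition $b/\beta\to0$ fails for huge $h$. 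That regime can be handled separately by the crude tail bound $1-q\le 4e^{-h}+\Pr[Z<-h/(2\beta\sqrt q)]$.
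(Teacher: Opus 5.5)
Your proposal is correct and follows the paper's main line. You reduce to $h = h\OC(\beta)$ and show $q \to 1$ by splitting $\bbE[\sech^2(\beta\sqrt q Z + h)]$ according to whether $\beta\sqrt q Z + h \ge h/2$. You then apply Lemma~\ref{lem:sech_laplace} with $p = 2$ and substitute $b^2/2 = \log\beta + \log\log\beta + \log\log\log\beta$.

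The one genuinely different step is the monotonicity reduction. The paper writes $T(q,h) \defeq \bbE\tanh^2(\beta\sqrt q Z + h)$ and shows $q'(h) > 0$ by implicit differentiation. That needs $\partial_q T \le \beta^2(1-q) < 1$, which is the strict weak AT condition itself. So the paper's reduction is really a continuation argument that closes only once the bound at $h\OC(\beta)$ is in hand. Your comparison argument is unconditional and needs no differentiability. It uses only that $T(q,\cdot)$ is increasing on $h \ge 0$ and that the fixed point is unique, as asserted in Definition~\ref{def:at_cond}. Do spell out the last step: for $h_2 > h_1$ one has $T(q(h_1), h_2) \ge q(h_1)$ and $T(1, h_2) < 1$. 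Continuity in $q$ then gives a fixed point in $[q(h_1), 1)$, which by uniqueness is $q(h_2)$.

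There are a few slips, none fatal.
\begin{itemize}
\item In the a priori step the direction is reversed. The bound $\Pr[\beta\sqrt q Z + h \le h/2] = \Pr[Z \le -b/(2\sqrt q)] \le \Pr[Z \le -b/2]$ uses only $q \le 1$, so no lower bound on $q$ is needed.
\item The inequality $\bbE\tanh^2(\beta\sqrt q Z + h) \ge \tanh^2(h)$ is false in general. Since $\tanh^2$ is concave near large $h$, a small-variance perturbation lowers the mean. What is true is $\ge \frac12\tanh^2(h)$, by restricting to $Z \ge 0$, but the step can simply be dropped.
\item The uniformity worry is moot. Since $h\OC(\beta)$ and $q(\beta, h\OC(\beta))$ are functions of $\beta$ alone, the $(1+o(1))$ factor from Lemma~\ref{lem:sech_laplace} is at most $2$ for all $\beta \ge \beta_0$, giving a universal $C$.
\item Your monotonicity argument makes the large-$h$ fallback unnecessary.
\item The prefactor $1/\sqrt q$ is controlled by $q \to 1$, not by $q \le 1$; you have this, but it is worth stating.
\end{itemize}
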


\begin{proof}
We first claim that $q(\beta, h)$ is nondecreasing in $h$ if $(\beta, h)$ strictly satisfies the weak AT condition, so it suffices to prove the result for $h = h\OC(\beta)$. Let $T(q, h) \defeq \E \tanh^2(\beta\sqrt{q}Z + h)$, so that by definition, $q = T(q, h)$. Then, performing a Gaussian integration by parts gives
\[\frac{\partial}{\partial q} T(q, h) = \beta^2 \E\Brack{\sech^2\Par{\beta\sqrt{q}Z + h} \Par{1 - 3\tanh^2\Par{\beta\sqrt{q}Z + h}}} \le \beta^2(1 - q) < 1,\]
where the last inequality used the weak AT condition. Similarly,
\[\frac{\partial}{\partial h} T(q, h) = 2\E\Brack{\tanh\Par{\beta\sqrt{q}Z + h}\sech^2\Par{\beta\sqrt{q}Z + h}} > 0,\]
because $\tanh(\cdot)\sech^2(\cdot)$ is odd and positive on $\R_{> 0}$, and $\beta\sqrt{q}Z + h$ is centered around the positive value $h > 0$.
Now, implicit differentiation gives
\[q'(h) = \frac{\partial}{\partial q} T(q(h), h) q'(h) + \frac{\partial}{\partial h} T(q(h), h) \implies q'(h) = \frac{\frac{\partial}{\partial h} T(q(h), h)}{1 - \frac{\partial}{\partial q} T(q(h), h)} > 0,\]
as desired. For the rest of the proof we take $h = h\OC(\beta)$.

Write $b_\beta\defeq \frac{h_{\rm OC}(\beta)}{\beta}$.
We first verify Lemma~\ref{lem:sech_laplace}'s hypotheses. From \eqref{eq:oracle_field}, the conditions $b_\beta \to \infty$ and $\frac{b_\beta}{\beta} \to 0$ are immediate. Moreover, from the fixed-point equation defining $q_\beta \defeq q\OC(\beta)$,
\[1 - q_\beta = \E\Brack{\sech^2\Par{\beta\sqrt{q_\beta} Z + \beta b_\beta}} \le \sech^2\Par{\frac{\beta b_\beta}{2}} + \Pr\Brack{Z < -\frac{b_\beta}{2}} \to 0,\]
where the inequality split the expectation based on whether $Z \ge -\frac{b_\beta}{2}$ or not, and used that $\sech^2 \le 1$ pointwise and $q_\beta \le 1$. Thus,
Lemma~\ref{lem:sech_laplace} applies with $p = 2$, yielding
\[
    1-q_\beta
    =
    \frac{2}{\beta\sqrt{q_\beta}}
    \phi\Par{
        \frac{b_\beta}{\sqrt{q_\beta}}
    }\Par{1 + o(1)} \le \frac{C}{\beta} \phi(b_\beta),
\]
for a universal constant $C$,
where the last inequality used that $q_\beta \le 1$, $\phi$ is decreasing on $\R_{\ge 0}$, and $q_\beta \to 1$ as $\beta \to \infty$.
Finally, substituting the definition \eqref{eq:oracle_field} gives
\[
    \phi(b_\beta)
    =
    \frac{1}{
        \sqrt{2\pi}
        \beta
        \log\beta
        \log\log\beta
    },
\]
and combining the above two displays proves \eqref{eq:oracle_q_rate}. Finally,
$\beta^2(1-q_\beta)\le \frac{C}{\log\beta\log\log\beta} <1$ as $\beta \to \infty$, which verifies the weak AT condition with strict inequality. Since we earlier showed $q'(h) > 0$ whenever $\beta^2(1 - q(h)) < 1$, all $h > h\OC(\beta)$ also strictly satisfy the weak AT condition.
\end{proof}

We are finally ready to give our main result.

\begin{theorem}
\label{thm:oracle_centered_sampling}
Let $\beta \ge 1$ and $\delta \in (0, \half)$, and suppose that \eqref{eq:delta_condition} holds with $\delta\gets \frac \delta 2$.
    Assume that $h \ge h\OC(\beta)$ defined in \eqref{eq:oracle_field}, and define $\rho_\beta \defeq \frac{c}{\beta^2\log(e\beta)}$ for a universal constant $c > 0$. Also assume
    \begin{equation}\label{eq:parameter_range}\log\log\beta\ge \frac{2C}{\delta},
    \quad
    d\ge \frac{4C_{\beta, h}\log \frac 4 \delta}{\delta^2 \rho_\beta^2},\end{equation}
    where $C$ is a universal constant and $C_{\beta, h}$ is from Lemma~\ref{lem:oracle_center_condition}, and that we are given
    \[\vtau^\star_i \defeq \sign\Par{\E_{\pi_{\beta, h}}\Brack{\vx_i}} \text{ for all } i \in [d].\]
    Then with probability at least $1-\delta$ over the SK model (Model~\ref{model:sk}), there is a polynomial-time algorithm that outputs $\vx$ satisfying \(\TV{\Law(\vx),\pi_{\beta,h}}\le\delta\). Further,
    \[h\OC(\beta) = \sqrt{2} \beta\sqrt{\log \beta}\Par{1 + o(1)}.\]
\end{theorem}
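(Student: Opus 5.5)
The approach mirrors the proof of Theorem~\ref{thm:unrestricted_high_field_sk}, with the all-ones reference point $\1_d$ replaced by $\vtau^\star$. The key observation is that the SK measure \eqref{eq:sk_uniform_field} is gauge-covariant. Let $\md_\tau \defeq \diag{\vtau^\star}$ and substitute $\vx = \md_\tau \vy$. Then $\vy$ follows an Ising model with interaction $\mj' \defeq \md_\tau \mj \md_\tau$ and field $\frac h \beta \vtau^\star$. Moreover $N_{\vtau^\star}(\vx) = N_-(\vy)$, so sparsity about $\vtau^\star$ becomes sparsity in the usual sense for $\vy$.

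Two facts make this substitution usable. First, $\mj'$ has exactly the same entrywise magnitudes, sparse operator norms $\normkop{\cdot}{s}$, the same Hadamard square $\mj'\circ\mj' = \mj\circ\mj$, and the same mixed-norm quadratic-form bounds up to sign flips of $\vu$. Hence every event in Lemma~\ref{lem:conditions_sk} holds for $\mj'$ whenever it holds for $\mj$; the only item that needs a check is $\norm{\mj\1_d}_\infty$, which becomes $\norm{\mj\vtau^\star}_\infty$ and can be bounded crudely by $\sqrt d\normop{\mj}$, costing only polynomial runtime. Second, Theorem~\ref{thm:sk_trickle} and Corollary~\ref{cor:fast_mixing_sparse_set_dobrushin_gen} allow an \emph{arbitrary} external field. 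This is exactly why the argument still works even though $\vtau^\star$ depends on $\mj$: we never need $\mj'$ to be GOE, only the deterministic events. The algorithm is therefore: run the bounded-magnetization sampler of Corollary~\ref{cor:fast_mixing_sparse_set_dobrushin_gen} on $\vy$ restricted to $N_-(\vy)\le k \defeq \lfloor \rho_\beta d\rfloor$, after a global spin flip to match the positive-spin convention, and output $\md_\tau\vy$. Choosing $c$ small enough puts $k$ in the admissible range $k = O(d/(\bar\beta^2\log(e\bar\beta)))$.

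The remaining task is to show that $\pi_{\beta,h}(N_{\vtau^\star} > \rho_\beta d) \le \frac\delta 2$ with probability $\ge 1-\frac\delta2$ over $\mj$. By Lemma~\ref{lem:oracle_field_scale}, $h\ge h\OC(\beta)$ satisfies the weak AT condition, and $1-q(\beta,h)\le C/(\beta^2\log\beta\log\log\beta)$. Apply Lemma~\ref{lem:oracle_center_condition} with $\rho\gets\rho_\beta$ and failure probability $\frac\delta4$. The first term of \eqref{eq:oracle_center_condition_delta} is at most $\frac{C\log(e\beta)}{2c\log\beta\log\log\beta}$, which is $\le \frac\delta4$ by the first condition in \eqref{eq:parameter_range} once the constant $C$ there is adjusted to absorb $c$. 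The second term is at most $\frac\delta4$ by the lower bound on $d$ in \eqref{eq:parameter_range}.

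Write $\tilde\pi$ for $\pi_{\beta,h}$ conditioned on $N_{\vtau^\star}\le k$. Then $\TV{\pi_{\beta,h}, \tilde\pi}\le\frac\delta2$, and the sampler is $\frac\delta2$-close to $\tilde\pi$, so by the triangle inequality and a union bound over the failure events in $\mj$ (this one, plus Lemma~\ref{lem:conditions_sk} under \eqref{eq:delta_condition} with $\delta\gets\frac\delta2$) we get the claimed total variation bound with total failure probability $\delta$.

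For the asymptotic claim, expand $\sqrt{2\log\beta+2\log\log\beta+2\log\log\log\beta} = \sqrt{2\log\beta}\,(1+o(1))$ directly from \eqref{eq:oracle_field}.

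The main obstacle is the gauge transfer. One has to check that $\vtau^\star$ being $\mj$-dependent does not invalidate any step. This holds because all conditions used downstream (Lemmas~\ref{lem:improved_rho}, \ref{lem:range_ising}, and the $\pi_{\min}$ bound in Theorem~\ref{thm:sk_trickle}) are deterministic consequences of sign-invariant events, apart from the column-sum term. That term enters only through runtime and the range $R$, so the weaker $\sqrt d$ bound suffices for polynomial time.
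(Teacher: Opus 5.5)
Your proposal is correct and follows essentially the same route as the paper. You conjugate by $\diag{\vtau^\star}$ and note that every event of Lemma~\ref{lem:conditions_sk} except the column-sum bound is sign-invariant, with the column-sum term costing only a polynomial runtime factor; you then bound $\pi_{\beta,h}(N_{\vtau^\star}>\rho_\beta d)\le\frac\delta2$ via Lemmas~\ref{lem:oracle_center_condition} and~\ref{lem:oracle_field_scale} and finish as in Theorem~\ref{thm:unrestricted_high_field_sk}. One small bookkeeping fix: the lower bound on $d$ in \eqref{eq:parameter_range} is calibrated for invoking Lemma~\ref{lem:oracle_center_condition} with failure probability $\frac\delta2$ (which produces $\log\frac4\delta$). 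With your choice of $\frac\delta4$, the second term is only bounded by $\frac\delta4\sqrt{\log(8/\delta)/\log(4/\delta)}$, slightly above $\frac\delta4$; using $\frac\delta2$ there restores the bound and still gives total failure probability $\frac\delta2+\frac\delta2=\delta$.
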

\begin{proof}
Let $\md \defeq \diag{\vtau^\star}$ throughout. We first claim that the fixed-magnetization sampler in Theorem~\ref{thm:sk_trickle} (and hence, the bounded-magnetization sampler in Corollary~\ref{cor:fast_mixing_sparse_set_dobrushin_gen}) is invariant to replacing $\mj \gets \md\mj\md$. To see this, all but one of the estimates in Lemma~\ref{lem:conditions_sk} remain unchanged under this replacement, because $\md \vu$ has the same $\ell_2$ and $\ell_1$ norms as $\vu$ for any vector $\vu$, and $(\md \mj \md) \circ (\md \mj \md) = \mj \circ \mj$. The only difference is that $\norm{\md\mj\md\1_d}_\infty$ could be larger because $\md \1_d$ is no longer independent of $\mj$, but the same bound holds up to a $\sqrt{d}$ factor. This factors into Theorem~\ref{thm:sk_trickle}'s initial $\chi^2$ divergence bound, which only affects the claimed runtime by a polynomial factor after taking logarithms. 

To complete the proof, take $k \defeq \lfloor \rho_\beta d\rfloor$. Lemmas~\ref{lem:oracle_center_condition} and \ref{lem:oracle_field_scale}, and our assumed bounds \eqref{eq:parameter_range}, imply
\[\Pr_{\vx \sim \pi_{\beta, h}}\Brack{N_{\vtau^\star}(\vx) > k} \le \frac \delta 2.\]
The rest of the proof is identical to Theorem~\ref{thm:unrestricted_high_field_sk}, under the change of variables $\vx \gets \md \vx$.
\end{proof}

In summary, Theorem~\ref{thm:oracle_centered_sampling} covers a range of $h$ with a lower bound $h\OC(\beta)$ roughly a $\sqrt{2}$ factor smaller than Theorem~\ref{thm:unrestricted_high_field_sk}'s $h\HM(\beta)$. In particular, up to a $1 + o(1)$ factor, $h\OC(\beta)$ matches the AT and weak AT thresholds $h\AT(\beta)$ and $h\wAT(\beta)$ derived in Lemma~\ref{lem:at_line}. In comparison, a recent work by \cite{BandeiraElAlaouiRodder26} derives a similar polynomial-time sampling result, but examining their proof (particularly, Corollary 1.2 combined with the improvement in Remark 3.3) implies their threshold on the field strength $h$ scales as $h = \Omega(\beta^2 \sqrt{\log \beta})$, i.e., roughly a $\beta$ factor larger than $h\AT(\beta)$.

We remark that Theorem~\ref{thm:oracle_centered_sampling} is a conditional result that requires access to $\vtau^\star$, the signs of the mean magnetization vector; we leave the efficient computation of $\vtau^\star$ as an important open problem. Additionally, our definition of $h\OC(\beta)$ includes potentially unnecessary low-order terms, arising due to a discrepancy between the weak AT region's definition (which gives $1 - q(\beta, h) \le \beta^{-2}$) with the requirements of our sampler in Corollary~\ref{cor:fast_mixing_sparse_set_dobrushin_gen} (which requires a sparsity parameter $\rho \approx (\beta^2 \log \beta)^{-1}$). It is also an interesting open problem to improve the thresholds imposed by our approach, with the goal of efficient sampling across the entire AT region.
\section{Scaling of Infinite $\Delta$-Regular Tree Threshold}\label{app:critical}

In this section, we provide a calculation that explains asymptotics induced by the ``tree threshold'' $\beta_c(\Delta)$, which parameterizes the results of  \cite{Carlson2022, KuchukovaPappikPerkinsYap2025}. Concretely, $\beta_c(\Delta) = \log(\frac \Delta {\Delta - 2})$ is a critical inverse temperature under which the Ising model on the infinite $\Delta$-regular tree undergoes a phase transition. When $\beta < \beta_c(\Delta)$, there is a unique fixed point of a certain message-passing recursion on the infinite tree, and when $\beta > \beta_c(\Delta)$, two new fixed points appear. We defer an overview to \cite{Lyons89, Mossel06}; here we calculate consequences of this threshold asymptotically.

The main algorithmic result of \cite{Carlson2022} at low temperatures $\beta > \beta_c(\Delta)$ is their Theorem 2(a). They show that there is a corresponding \emph{critical magnetization} $\eta_{\Delta, \beta, 1}^+$ at which the fixed-magnetization Ising model undergoes a computational phase transition. This $\eta_{\Delta, \beta, 1}^+$ is the expected spin at the root of the infinite $\Delta$-regular tree, at one of the new fixed points emerging when $\beta > \beta_c(\Delta)$ (see Section 4, \cite{Carlson2022} for these calculations).
Note that \cite{Carlson2022} defines the magnetization $\eta \in [-1, 1]$ as (in our notation) $-1 + \frac{2k}{d}$, so that $k = d$ corresponds to $\eta = 1$ (and $k = 0$ corresponds to $\eta = -1$). Rearranging, Theorem 2(a) of \cite{Carlson2022} applies at sparsity levels
\[\eta < -\eta_{\Delta, \beta, 1}^+ \implies k < \frac d 2 \Par{1 - \eta_{\Delta, \beta, 1}^+}. \]
It remains to understand $1 - \eta_{\Delta, \beta, 1}^+$. From Section 1.1, \cite{Carlson2022}, letting $L$ be the largest root of
\[L = \Par{\Delta - 1}\textup{arctanh} \Par{\tanh(L)\tanh\Par{\frac \beta 2 }},\]
denoting $\eta_\beta \defeq \eta_{\Delta, \beta, 1}^+$ for short,
\[\eta_\beta = \tanh\Par{L + \textup{arctanh}\Par{\tanh(L)\tanh\Par{\frac \beta 2 }}}.\]
Let $A \defeq \textup{arctanh} (\tanh(L)\tanh(\frac \beta 2)) = \frac 1 {\Delta - 1} L$. Then,
\[\tanh A = \tanh\Par{(\Delta - 1)A} \tanh\Par{\frac \beta 2} \implies \tanh\Par{\frac \beta 2} = \frac{\tanh A}{\tanh\Par{(\Delta - 1)A}}.\]
Now using the approximation $\tanh(c) = 1 - \Theta(\exp(-2c))$ for large $c$, we obtain
$A = \frac \beta 2 + O_\Delta(1)$.
Finally, plugging this back into our definition of $\eta_\beta$, we have
\[1 - \eta_\beta = 1 - \tanh\Par{\Delta A} = \Theta\Par{\exp\Par{-2\Delta A}} = \exp\Par{-\Delta\beta + O_\Delta(1)}.\]
Therefore, for the sparsity regime
\begin{equation}\label{eq:logd_sparsity_regime}k < d\exp\Par{-\Delta\beta + O_\Delta(1)}\end{equation}
to be meaningful (i.e., the inequality above does not hold only when $k = 0$), the result of \cite{Carlson2022} is limited to inverse temperatures of $\beta = O(\log d)$ for constant-degree graphs. 

The main algorithmic result of the subsequent work \cite{KuchukovaPappikPerkinsYap2025} is their Theorem 1.1, which is parameterized at a slightly different \emph{critical magnetization} $\eta_{\beta, a}$, which is at least as large as the $\eta_\beta$ from before. Therefore, Theorem 1.1 of \cite{KuchukovaPappikPerkinsYap2025} is also restricted to the regime \eqref{eq:logd_sparsity_regime}. 

Finally, we note that the comparison in this section is purely a statement about the allowable temperatures that our sparsity-aware framework tolerates (vs.\ the prior works \cite{Carlson2022, KuchukovaPappikPerkinsYap2025}), for our models of interest. For example, directly applying our results to the graph-based Ising models in these prior works only permits rapid mixing in the high-temperature regime $\beta = O(\frac 1 k)$. However, for other well-studied models, e.g., Models~\ref{model:sk} and~\ref{model:hopfield}, our results allow taking inverse temperatures as large as $\beta = \poly(d)$ when $k$ is sufficiently small.
\end{document}